\newtheorem{prop}{Proposition}
\newtheorem{corol}{Corollary}
\NewDocumentCommand{\newauthornote}{mmm}{
      \NewDocumentCommand{#1}{s m}{%
            \textcolor{#3}{\IfBooleanTF{##1}%
                  {\emph{\footnotesize \textbf{#2}:~##2}} 
                  {{##2}}
      }}
}
\renewcommand\onecolumngrid{%
  \do@columngrid{one}{\@ne}%
  \def\set@footnotewidth{\onecolumngrid}
  \def\footnoterule{\kern-6pt\hrule width 1.5in\kern6pt}
}
\renewcommand\twocolumngrid{%
  \def\footnoterule{
    \dimen@\skip\footins
    \divide\dimen@\thr@@
    \kern-\dimen@\hrule width.5in\kern\dimen@
  }
  \do@columngrid{mlt}{\tw@}%
}
\renewcommand{\Tilde}{\widetilde}
\let\save@mathaccent\mathaccent
\newcommand*\if@single[3]{%
   \setbox0\hbox{${\mathaccent"0362{#1}}^H$}%
   \setbox2\hbox{${\mathaccent"0362{\kern0pt#1}}^H$}%
   \ifdim\ht0=\ht2 #3\else #2\fi
   }
\newcommand*\rel@kern[1]{\kern#1\dimexpr\macc@kerna}
\newcommand*\widebar[1]{\@ifnextchar^{{\wide@bar{#1}{0}}}{\wide@bar{#1}{1}}}
\newcommand*\wide@bar[2]{\if@single{#1}{\wide@bar@{#1}{#2}{1}}{\wide@bar@{#1}{#2}{2}}}
\newcommand*\wide@bar@[3]{%
   \begingroup
   \def\mathaccent##1##2{%
      \let\mathaccent\save@mathaccent
      \if#32 \let\macc@nucleus\first@char \fi
      \setbox\z@\hbox{$\macc@style{\macc@nucleus}_{}$}%
      \setbox\tw@\hbox{$\macc@style{\macc@nucleus}{}_{}$}%
      \dimen@\wd\tw@
      \advance\dimen@-\wd\z@
      \divide\dimen@ 3
      \@tempdima\wd\tw@
      \advance\@tempdima-\scriptspace
      \divide\@tempdima 10
      \advance\dimen@-\@tempdima
      \ifdim\dimen@>\z@ \dimen@0pt\fi
      \rel@kern{0.6}\kern-\dimen@
      \if#31
         \overline{\rel@kern{-0.6}\kern\dimen@\macc@nucleus\rel@kern{0.4}\kern\dimen@}%
         \advance\dimen@0.4\dimexpr\macc@kerna
         \let\final@kern#2%
         \ifdim\dimen@<\z@ \let\final@kern1\fi
         \if\final@kern1 \kern-\dimen@\fi
      \else
         \overline{\rel@kern{-0.6}\kern\dimen@#1}%
      \fi
   }%
   \macc@depth\@ne
   \let\math@bgroup\@empty \let\math@egroup\macc@set@skewchar
   \mathsurround\z@ \frozen@everymath{\mathgroup\macc@group\relax}%
   \macc@set@skewchar\relax
   \let\mathaccentV\macc@nested@a
   \if#31
      \macc@nested@a\relax111{#1}%
   \else
      \def\gobble@till@marker##1\endmarker{}%
      \futurelet\first@char\gobble@till@marker#1\endmarker
      \ifcat\noexpand\first@char A\else
         \def\first@char{}%
      \fi
      \macc@nested@a\relax111{\first@char}%
   \fi
   \endgroup
}
\renewcommand{\Bar}{\widebar}
\def\cA{{\mathcal{A}}}
\def\cB{{\mathcal{B}}}
\def\cE{{\mathcal{E}}}
\def\cN{{\mathcal{N}}}
\def\cO{{\mathcal{O}}}
\def\cV{{\mathcal{V}}}
\def\q{\qquad}
\def\be#1\ee{\begin{align}#1\end{align}}
\def\bg#1\eg{\begin{gather}#1\end{gather}}
\def\bsub#1\esub{\begin{subequations}#1\end{subequations}}
\def\p{\partial}
\def\wth{\text{with}}
\NewDocumentCommand{\PP}{s}{%
   \IfBooleanTF{#1}
      {\hat{\mathlarger{\Pi}}}    
      {\mathlarger{\Pi}}             
}
\NewDocumentCommand{\Pinv}{s}{%
   \IfBooleanTF{#1}
      {\hat{\mathlarger{\Pi}}_{\text{inv}}}    
      {\mathlarger{\Pi}_{\text{inv}}}      
}
\NewDocumentCommand{\Pphys}{s}{%
   \IfBooleanTF{#1}
      {{\PP*}_{\text{phys}}}    
      {{\PP}_{\text{phys}}}      
}
\g@addto@macro\appendix{%
  \renewcommand\section[1]{%
    \refstepcounter{section}%
    \vspace{1em}
    \centerline{\bfseries Appendix \Alph{section}:~#1}%
    \vspace{0.5em}
  }%
}
\let\oldsection\section
\renewcommand{\section}[1]{%
    \vspace{-0.8em} 
    \oldsection{#1} 
    \vspace{-1em}   
}
\def\H{\mathcal{H}}
\def\Hk{\mathcal{H}_\text{kin}}
\def\Hkin{\Hk}
\def\Hphys{\mathcal{H}_\text{phys}}
\def\Id{\mathds{1}}
\def\phys{{\text{phys}}}
\def\l{\left(}
\def\r{\right)}
\newauthornote{\francesco}{F}{Mahogany}
\begin{document}

\title{Relational entanglement entropies and quantum reference frames in gauge theories}

\author{Gonçalo Ara\'ujo-Regado}
\email{goncalo.araujo@oist.jp}
\author{Philipp A.\ H\"ohn}
\email{philipp.hoehn@oist.jp}
\author{Francesco Sartini}
\email{francesco.sartini@oist.jp}
\affiliation{Qubits and Spacetime Unit, Okinawa Institute of Science and Technology Graduate University, Onna, Okinawa 904-0495, Japan}

\begin{abstract}
It has been shown that defining gravitational entanglement entropies relative to quantum reference frames (QRFs) intrinsically regularizes them. Here, we demonstrate that such relational definitions also have an advantage in lattice gauge theories, where no ultraviolet divergences occur. To this end, we introduce QRFs for the gauge group via Wilson lines on a lattice with global boundary, realizing edge modes on the bulk entangling surface. Overcoming challenges of previous nonrelational approaches, we show that defining gauge-invariant subsystems associated with subregions \emph{relative} to such QRFs naturally leads to a factorization across the surface, yielding distillable relational entanglement entropies. Distinguishing between extrinsic and intrinsic QRFs, according to whether they are built from the region or its complement, leads to extrinsic and intrinsic relational algebras ascribed to the region. The ``electric center algebra'' of previous approaches is recovered as the algebra that all extrinsic QRFs agree on, or by incoherently twirling any extrinsic algebra over the electric corner symmetry group. Similarly, a generalization of previous proposals for a ``magnetic center algebra'' is obtained as the algebra that all intrinsic QRFs agree on, or, in the Abelian case, by incoherently twirling any intrinsic algebra over a dual magnetic corner group. Altogether, this leads to a compelling regional algebra and relative entropy hierarchy. Invoking the corner twirls, we find that the extrinsic/intrinsic relational entanglement entropies are upper bounded by the non-distillable electric/magnetic center entropies. Finally, using extrinsic QRFs, we discuss the influence of ``asymptotic'' symmetries on regional entropies. Our work thus unifies and extends previous approaches and reveals the interplay between entropies and regional symmetry structures.
\end{abstract}

\maketitle


\section{Introduction}
\label{sec:intro}

Entanglement entropy has become an omnipresent tool in modern physics to probe the structure of quantum correlations, with manifold applications and deep implications in quantum field theory \cite{Witten:2018zxz,Casini:2022rlv,Srednicki:1993im}, gravity \cite{Bombelli:1986rw,Jacobson:2015hqa,Chandrasekaran:2022cip,Jensen:2023yxy,Kudler-Flam:2023qfl,Kirklin:2024gyl}, and holography \cite{Ryu:2006bv,Hubeny:2007xt,VanRaamsdonk:2010pw,Nishioka:2009un,VanRaamsdonk:2016exw}. However, defining it requires a suitable subsystem partition and specifying one which reflects spacetime locality is \emph{a priori} a challenge in theories with gauge symmetry, where simple Hilbert space factorizations across the entangling surface are obstructed by Gau\ss's law. 

For example, insisting in lattice gauge theories on a spacetime background notion of locality leads to overlapping regional observable algebras (i.e.\ algebras with center) and non-distillable entropy contributions \cite{Buividovich:2008gq,Donnelly:2011hn,Donnelly:2014gva,Casini:2013rba,Ghosh:2015iwa,Soni:2015yga,VanAcoleyen:2015ccp,Delcamp:2016eya,Aoki:2015bsa,Hung:2015fla,Bianchi:2024aim,Dong:2023kyr}, entailing a mismatch with operational notions of entanglement \cite{Soni:2015yga,VanAcoleyen:2015ccp}. To sidestep the locality issues, many of these constructions rely on embedding the global physical Hilbert space into an \emph{auxiliary} extended Hilbert space to exploit its tensor product structure (TPS) for entropy computations \cite{Buividovich:2008gq,Donnelly:2011hn,Donnelly:2014gva,Ghosh:2015iwa,VanAcoleyen:2015ccp,Delcamp:2016eya,Aoki:2015bsa}. 

A key development came with the recognition that these questions are intimately intertwined with the appearance of \emph{edge modes} and physical corner symmetries on the entangling surface when defining local subsystems in gauge theory and gravity \cite{Donnelly:2016auv,Donnelly:2014fua,Donnelly:2015hxa,Geiller:2019bti,Blommaert:2018oue,Ball:2024hqe,Frenkel:2023yuw,Fliss:2025kzi,Akers:2024wab,Law:2025ktz,Ball:2024hqe,Balasubramanian:2023dpj,Ball:2024gti,Carrozza:2021gju,Araujo-Regado:2024dpr,Gomes:2018shn,Gomes:2019xto,Riello:2021lfl,Freidel:2023bnj,Ciambelli:2022vot,Ciambelli:2021vnn, Freidel:2020xyx,Freidel:2020svx,Freidel:2023bnj,Donnelly:2020xgu,Donnelly:2022kfs,Speranza:2017gxd,Blommaert:2018rsf,Klinger:2023tgi,Wieland:2017cmf,Pulakkat:2025eid,Carrozza:2022xut,Kabel:2023jve,Giesel:2024xtb}. Naturally, these edge modes contribute to regional \cite{Donnelly:2016auv,Donnelly:2014fua,Donnelly:2015hxa,Geiller:2019bti,Blommaert:2018oue,Ball:2024hqe,Frenkel:2023yuw,Fliss:2025kzi,Akers:2024wab,Law:2025ktz,Balasubramanian:2023dpj}, as well as asymptotic entanglement entropies \cite{Chen:2023tvj,Chen:2024kuq}. However, the interpretation of these edge modes remains debated, and their precise role in understanding entanglement structures is arguably not settled.

Here, we revisit these questions with a novel \emph{relational} approach to gauge-invariant subsystem locality, TPSs, and ultimately entropies \cite{Hoehn:2023ehz,AliAhmad:2021adn,DeVuyst:2024pop,DeVuyst:2024uvd}. To link with previous entropy constructions, we explore this approach in lattice gauge theories subject to any finite or compact Lie structure group. Our approach is based on quantum reference frames (QRFs) \cite{delaHamette:2021oex,Hoehn:2023ehz,AliAhmad:2021adn,Hoehn:2019fsy,Hoehn:2020epv,Carrozza:2024smc,DeVuyst:2024pop,DeVuyst:2024uvd,Vanrietvelde:2018dit,Vanrietvelde:2018pgb,Giacomini:2021gei,Hoehn:2021flk,delaHamette:2021piz,Castro-Ruiz:2019nnl,Suleymanov:2025nrr,Hoehn:2023axh,Giacomini:2017zju,Bartlett:2006tzx,Castro-Ruiz:2021vnq,Krumm:2020fws,Loveridge:2017pcv,Carette:2023wpz,Fewster:2024pur,delaHamette:2021iwx,delaHamette:2020dyi,Giacomini:2018gxh,Ballesteros:2020lgl}, which constitute a universal toolkit for dealing with symmetries in quantum theory and thus naturally apply in this context. Since we face both gauge and physical symmetries, it will be necessary to invoke also two different kinds of QRF frameworks, one for either type of symmetry. 

First, QRFs for the \emph{gauge} group are built according to the \emph{perspective-neutral} framework \cite{delaHamette:2021oex,Hoehn:2023ehz,AliAhmad:2021adn,Hoehn:2019fsy,Hoehn:2020epv,Carrozza:2024smc,DeVuyst:2024pop,DeVuyst:2024uvd,Vanrietvelde:2018dit,Vanrietvelde:2018pgb,Hoehn:2021flk,delaHamette:2021piz,Giacomini:2021gei,Castro-Ruiz:2019nnl,Suleymanov:2025nrr,Hoehn:2023axh} and can be realized, for example, with Wilson lines. In particular, for Wilson lines ending on the entangling surface, this provides a realization of edge modes as explained for the continuum in \cite{Carrozza:2021gju,Araujo-Regado:2024dpr}. In analogy to the relativity of simultaneity in special relativity, this framework shows that partitioning a quantum system gauge-invariantly into subsystems is not absolute but depends on the choice of QRF \cite{Hoehn:2023ehz}. This subsystem relativity \cite{AliAhmad:2021adn,delaHamette:2021oex,Hoehn:2023ehz,DeVuyst:2024pop,DeVuyst:2024uvd} is the root of the QRF-dependence of physical properties \cite{Giacomini:2017zju,Giacomini:2018gxh,Hoehn:2023ehz,Cepollaro:2024rss}, such as entanglement entropies \cite{Hoehn:2023ehz,Cepollaro:2024rss,Suleymanov:2025nrr,DeVuyst:2024pop,DeVuyst:2024uvd}. 

In particular, as shown in \cite{DeVuyst:2024pop,DeVuyst:2024uvd} (see also \cite{Kirklin:2024gyl}), this same relational approach has implicitly been invoked in the recent discussion of gravitational von Neumann algebras \cite{Chandrasekaran:2022cip,Jensen:2023yxy,Kudler-Flam:2023qfl}, which change type upon the inclusion of an ``observer'' (QRF), leading to an intrinsic regularization of otherwise divergent gravitational entanglement entropies. Here, we will see that the same approach also has advantages on a lattice where ultraviolet divergences are absent.

Applying this to the lattice, we will define gauge-invariant subsystems associated with subregions \emph{relative} to a choice of QRF. Following \cite{Araujo-Regado:2024dpr},  it will become important to distinguish \emph{extrinsic} and \emph{intrinsic} edge mode QRFs, according to whether they are built from Wilson lines in the entangling surface or in the regional complement. This results in qualitatively distinct extrinsic and intrinsic algebras of observables describing the regional physics \emph{relative} to the chosen QRF, with the extrinsic algebra encoding certain cross-boundary relational observables. Along the way, we also explain how Wilson loops and dressed electric fields can be written as relational observables relative to a QRF. As in the classical continuum \cite{Carrozza:2021gju,Araujo-Regado:2024dpr}, the electric corner symmetries are realized by \emph{reorientations} of the extrinsic edge mode QRF, while the intrinsic one possesses no such symmetries. This leads to a \emph{gauge-invariant} boundary QRF for the physical electric corner symmetries, which we identify with the regional Goldstone mode (GM) \cite{Araujo-Regado:2024dpr}. Its orientations parametrize a regional degeneracy with respect to certain cross-boundary relations.

Crucially, it turns out that relative to extrinsic QRFs, the physical Hilbert space and observable algebras always \emph{factorize} into what it ``sees'' as the region and its complement. The same is true for intrinsic QRFs for Abelian theories, realizing in that case a quantum lattice version of a classical phase space factorization observed in \cite{Ball:2024hqe,Araujo-Regado:2024dpr}. In either case, this leads to a TPS on the physical Hilbert space itself without any auxiliary structures and thereby to standard distillable  entanglement entropies for the region in a relational manner, as previously observed in toy models in \cite{Hoehn:2023ehz}. For extrinsic QRFs, it turns out that this distillable entanglement is related (via the dressing/gauge fixing equivalence) to ``gauge-fixed'' entropies discussed in \cite{Casini:2013rba,VanAcoleyen:2015ccp}. Missing their relationally local content, these entropies were somewhat dismissed in these works for their background nonlocality.

Akin to how all inertial observers in special relativity agree on Lorentz scalars, we then recover regional electric and magnetic center algebras of previous nonrelational constructions \cite{Casini:2013rba,Soni:2015yga,Delcamp:2016eya} as the intersection of all possible extrinsic and intrinsic relational algebras, respectively. That is, these center algebras constitute the observables on which all corresponding QRFs agree. This results in a compelling hierarchy of regional observable algebras with an associated hierarchy of relative entropies. For continuous structure groups, this magnetic center definition generalizes previous proposals \cite{Casini:2013rba,Delcamp:2016eya} that were only formulated for finite groups. Interestingly, it turns out that exactly for this generalized case, magnetic center entropies are physically rather arbitrary due to the absence of a physically distinguished trace.

Second, QRFs for the \emph{physical} corner symmetry groups are treated according to either the \emph{quantum information} \cite{Bartlett:2006tzx,Castro-Ruiz:2021vnq,Krumm:2020fws} or the \emph{operational} \cite{Loveridge:2017pcv,Carette:2023wpz,Fewster:2024pur} 
frameworks, which essentially agree for finite and compact groups for the purposes relevant to our discussion. This applies in particular to the GM for the electric symmetries. Incoherently twirling an extrinsic relational algebra over the GM's orientations leads to electric boundary charge superselection and recovers the electric center algebra. Similarly, twirling an intrinsic relational algebra over the orientations of a (Pontryagin) dual QRF for a dual magnetic corner symmetry group in the Abelian case leads to a magnetic boundary charge superselection and recovers the magnetic center algebra. We exploit these observations to prove that the relational entanglement entropies of the extrinsic/intrinsic algebras are upper bounded by the non-distillable entropies of the electric/magnetic center algebras.

Finally, working on a lattice with global boundary, we will also briefly discuss the influence of large gauge transformations on the regional entanglement entropies. As in the classical continuum, this is made possible by the fact that extrinsic QRFs ``pull in'' global boundary symmetries to the entangling surface \cite{Araujo-Regado:2024dpr}.

Altogether, our work opens up a novel relational perspective on entanglement entropies in theories with gauge symmetry, encompasses and unifies previous approaches, and reveals rich algebraic and Hilbert space structures made possible by resorting to QRFs that allow for a deeper understanding of entanglement in such theories. Indeed, restricting to nonrelational regional observable algebras would be akin to restricting to Lorentz scalars only in special relativity and missing out on all the interesting frame-covariant part of the physics.

\section{Kinematics of Hamiltonian lattice gauge theory}
\label{sec:LGT}

We adopt the Hamiltonian formulation of lattice gauge theories \cite{kogut1975,Kogut_Stephanov_2003,Kogut:1982ds}, which also constitutes the arena of previous entanglement entropy discussions \cite{Buividovich:2008gq,Donnelly:2011hn,Donnelly:2014gva,Casini:2013rba,Ghosh:2015iwa,Soni:2015yga,VanAcoleyen:2015ccp,Delcamp:2016eya,Aoki:2015bsa,Hung:2015fla}. We briefly recall its main features here and refer to App.~\ref{app:kin} for further details. 

We consider pure gauge theory with arbitrary compact (incl.~finite) structure Lie group $G$ (e.g., $\rm{SU}(3)$ for QCD, $\rm{U}(1)$ for QED) on a spatial lattice $\mathcal{L}$ of trivial topology and with a global boundary $\mathcal{B}$. $\mathcal{L}$ consists of a set $\mathcal{V}$ of vertices and a set $\mathcal{E}$ of edges, cf.~Fig.~\ref{fig:Lattice_2d}. Each edge $e\in\mathcal{E}$ is equipped with an orientation and a Hilbert space $\H_e=L^2(G)$ comprising the regular representation of $G$ and carrying the Wilson line link variables $\hat{g}_e$, taking value in some (typically the fundamental) representation $\rho$ of $G$. The link variables thus constitute the configuration variables. 
$L^2(G)$ admits two commuting unitary actions of $G$, defined by left $U_e(g)$ and right $V_e(g)$ multiplication on the group basis: $U_e(g_1)V_e(g_2)\ket{g}_e=\ket{g_1gg_2^{-1}}_e$, $\forall\,g,g_1,g_2\in G$. Hence, for continuous $G$, there are two types of generators on $\H_e$, which we call left $L^a_e$ and right $R_e^a$ electric fields and which comprise the conjugate variables, obeying $[L^a_e,R^b_{e'}]=0$ and
\begin{align}
    [L_e^a,\hat{g}_{e'}]&=\delta_{ee'}T^a_\rho\hat{g}_e\,,\,\,\,\,\,\qquad [R^a_e,\hat{g}_{e'}]=\delta_{ee'}\hat{g}_eT^a_\rho\,,\\
    [L_e^a,L^b_{e'}]&=-i\delta_{ee'} f^{abc}L^c_e\,,\,\,\,\, [R^a_e,R^b_{e'}]=i\delta_{ee'}f^{abc}R^c_e\,,
\end{align}
where the $T^a_\rho$ are the generators of $G$ in representation $\rho$ and $f^{abc}$ are the structure constants of the Lie algebra $\mathfrak{g}$. Left and right electric fields are related by ${R}^a_e=(\hat{g}_e^\text{Adj})^{ab}{L}^b_e$ and thus not independent, where $\hat{g}_e^\text{Adj}$ is the link variable in adjoint representation.
The kinematical Hilbert space is then given by ${\H_{\rm kin}=\bigotimes_{e\in\mathcal{E}}\H_e}\simeq L^2(G^E)$, where $E=|\mathcal{E}|$.\footnote{Boundary conditions are imposed on the $\H_e$ with $e\in\mathcal{B}$.}\\

\hspace{-12pt}\begin{minipage}{.48\textwidth}
    \centering
    \vspace{5pt}
    \captionsetup{hypcap=false} 
    \includegraphics[width=0.46\linewidth]{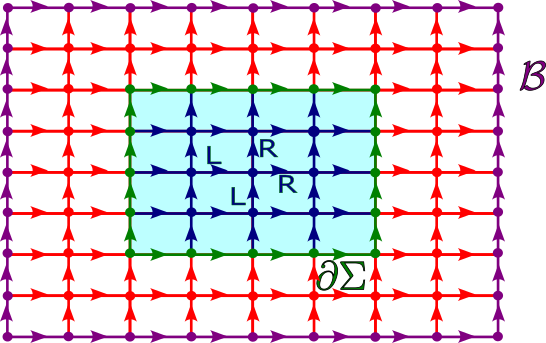}\hspace{16pt}
    \includegraphics[width=0.46\linewidth]{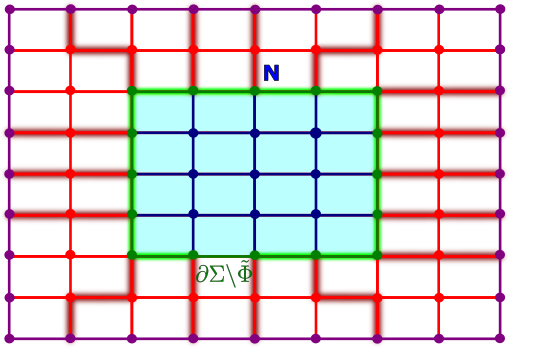}
    \captionof{figure}{\small \emph{Left:} 2d example of a regular oriented lattice with subregion $\Sigma$ (turquoise) and boundaries $\p\Sigma$ and $\mathcal{B}$. \emph{Right:} \emph{extrinsic} (thick red, $\mathcal{B}$-anchored) and \emph{intrinsic} (thick blue, rooted at $N$) Wilson line QRF  for nodes in $\p\Sigma$.}
\label{fig:Lattice_2d}
\end{minipage}\\

Small gauge transformations act independently at each of the bulk vertices $v\in\mathcal{V}_{\rm b}:=\mathcal{V}\setminus\left(\mathcal{V}\cap\mathcal{B}\right)$ and in our convention as 
\begin{equation} \label{eq:gaugetransfconv}U_v(h)=\bigotimes_{e\in\mathcal{E}_v^{\rm in}}U_e(h)\bigotimes_{e'\in\mathcal{E}_v^{\rm out}}V_e(h)\,, 
\end{equation}
where $\mathcal{E}_v^{\rm in/out}$ denote the sets of in- and outgoing edges at $v$. The (small) gauge group is thus $\mathcal{G}\simeq G^V$ with $V=|\mathcal{V}_{\rm b}|$, hence compact on a finite lattice, and acts on $\H_{\rm kin}$ as $U(\mathbf{g}):=\bigotimes_{v\in\mathcal{V}_{\rm b}} U_v(g_v)$, $\mathbf{g}\in\mathcal{G}$. Bulk link variables therefore transform as
\begin{equation}
     \hat{g}_e\mapsto U(\mathbf{h})\,\hat{g}_e \,U^\dag(\mathbf{h})=\rho(h^{-1}_f)\,\hat{g}_e\,\rho(h_i)\,,
\end{equation}
where $i,f$ label the initial and final vertex of $e$, while bulk electric fields transform in their respective adjoint representation $D$:
\begin{align}
   L^a_e&\mapsto U(\mathbf{h})\,L^a_e \,U^\dag(\mathbf{h})=D(h_f^{-1})^{ab}\,L^b_e\\
   R^a_e&\mapsto U(\mathbf{h})\,R^a_e\, U^\dag(\mathbf{h})=D(h_i^{-1})^{ab} \,R^b_e\,.
\end{align} 

The lattice analog of large gauge transformations acts exclusively on the vertices in the boundary $\mathcal{B}$ with the same convention as in Eq.~\eqref{eq:gaugetransfconv} and is assumed to leave the boundary conditions (left implicit) invariant. As in the continuum, we will treat these as \emph{physical} transformations; no Gauss law is imposed on boundary vertices. 

The reason we explicitly include $\mathcal{B}$ in our discussion\,---\,unlike previous works on entanglement entropy in lattice gauge theories \cite{Buividovich:2008gq,Donnelly:2011hn,Donnelly:2014gva,Casini:2013rba,Ghosh:2015iwa,Soni:2015yga,VanAcoleyen:2015ccp,Delcamp:2016eya,Aoki:2015bsa,Hung:2015fla}\,---\,is that it will permit us to define boundary-anchored Wilson line QRFs for the subregion of interest \cite{Carrozza:2021gju,Araujo-Regado:2024dpr}. Our discussion of entanglement entropy will thereby connect with the edge mode and corner symmetry program \cite{Donnelly:2016auv,Donnelly:2014fua,Donnelly:2015hxa,Geiller:2019bti,Ball:2024hqe,Ball:2024gti,Carrozza:2021gju,Araujo-Regado:2024dpr,Gomes:2018shn,Gomes:2019xto,Riello:2021lfl,Freidel:2023bnj,Ciambelli:2022vot,Ciambelli:2021vnn, Freidel:2020xyx,Freidel:2020svx,Freidel:2023bnj,Donnelly:2020xgu,Donnelly:2022kfs}. 

\section{Gauge QRFs on the lattice}
\label{sec:qrf}

Among the approaches to QRFs, it is the \emph{perspective-neutral} framework \cite{delaHamette:2021oex,Hoehn:2023ehz,AliAhmad:2021adn,Hoehn:2019fsy,Hoehn:2020epv,Carrozza:2024smc,DeVuyst:2024pop,DeVuyst:2024uvd,Vanrietvelde:2018dit,Vanrietvelde:2018pgb,Hoehn:2021flk,delaHamette:2021piz,Giacomini:2021gei,Castro-Ruiz:2019nnl,Suleymanov:2025nrr,Hoehn:2023axh}  that is built for systems with gauge symmetries and thus applies directly here (see \cite[Sec.~II]{Hoehn:2023ehz} for an introduction and comparison with special covariance). In particular, our construction will be based on its general group formulation in \cite{delaHamette:2021oex}. 

In this approach, a QRF $R$ for the gauge group $\mathcal{G}$ is, roughly speaking, a choice of split between redundant (the QRF) and non-redundant data, comprehensively called the system $S$. This split is typically accompanied by a corresponding factorization $\H_{\rm kin}\simeq\H_R\otimes\H_S$ and a unitary product representation of $\mathcal{G}$, $U(\mathbf{g})=U_R(\mathbf{g})\otimes U_S(\mathbf{g})$, $\mathbf{g}\in\mathcal{G}$. While such factorizations are generally unavailable in the continuum for quantum field theories, they do exist on the lattice, as we shall see. 

In order for $R$ to absorb all gauge redundancy, $\mathcal{G}$ must act regularly on its configurations, called \emph{orientations}, which are then in one-to-one correspondence with $\mathcal{G}$ and so can be labeled by group elements $\ket{\mathbf{g}}_R$. These orientation states are generalized coherent states, transforming covariantly $U_R(\mathbf{g}')\ket{\mathbf{g}}_R=\ket{\mathbf{g}'\mathbf{g}}_R$. If $\{\ket{\mathbf g}_R\}_{\mathbf{g}\in\mathcal{G}}$ furthermore furnishes a basis for $\H_R$, then $R$ is \emph{complete}. Note that the orientation states may also comprise a basis of $\H_R$ when $\mathcal{G}$ acts transitively, but not freely on them, i.e.\ when each orientation state is invariant under an isotropy subgroup of $\mathcal{G}$. In this case, $R$ cannot provide a complete parametrization of $\mathcal{G}$-orbits and, accordingly, it will be called an \emph{incomplete} QRF. Below, we will find both complete and incomplete QRFs on the lattice. 

The aim is to describe $S$ relative to $R$, which can be done equivalently by gauge fixing $R$, or by gauge-invariantly dressing $S$ with $R$. In general, there are many distinct $RS$ splits.

Since $\mathcal{G}\simeq G^V$ acts independently at each bulk node, a complete QRF for $\mathcal{G}$ for our lattice $\mathcal{L}$ is given by a ``field'' of complete QRFs $R_v$ for the structure group $G$ at each $v\in\mathcal{V}_{\rm b}$. Thus, $\H_R=\bigotimes_{v\in\mathcal{V}_{\rm b}}\H_{R_v}$, such that $U_v$ acts regularly on a basis of $\H_{R_v}$ (and trivially on any $\H_{R_{v'\neq v}}$); this requires a nonlocal refactorization of $\H_{\rm kin}$.

Such an $R$ is afforded by the Wilson lines in any spanning tree/forest $\mathcal{T}$: a graph without loops reaching all bulk nodes and rooted on (possibly multiple points
of) $\mathcal{B}$, where $\mathcal{G}$ acts trivially. To formalize this, consider a path $\gamma$ connecting the nodes $v_0,v_1,\ldots,v_n$ and refactorize $\bigotimes_{e\in\gamma}\H_e\to\bigotimes_{i=1}^n\H_{R_{v_i}}$ by performing a unitary change of basis from edge-wise to cumulative node-wise group variables associated with Wilson lines anchored on $v_0$. If the Wilson lines and $e\in\gamma$ are uniformly oriented, this reads $\ket{g_1}\otimes\cdots\otimes\ket{g_n}=\ket{g_{v_1}}\otimes\cdots\otimes\ket{g_{v_n}}$ with $g_{v_1}=g_1$ and $g_{v_i}=g_{i}g_{v_{i-1}}$ for $i>1$ (see Fig.~\ref{fig:Wilson_line_reparam} and App.~\ref{app_kinrefactor} for the associated reorganization of the electric fields); otherwise, for each edge pointing in opposite direction to the Wilson line, replace $g_i\to g_i^{-1}$ in the $g_{v_i}$.\\

\begin{minipage}{.45\textwidth}
    \centering
    \vspace{5pt}
    \captionsetup{hypcap=false} 
    \includegraphics[width=0.9\linewidth]{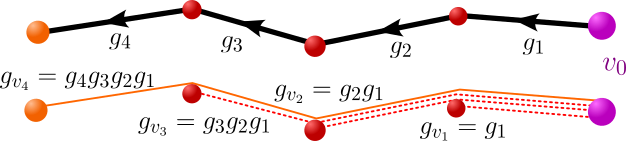}
    \captionof{figure}{\small Nonlocal reparametrization from edge-wise link variables to Wilson lines anchored on $v_0$ (violet).} 
\label{fig:Wilson_line_reparam}
\end{minipage}\\

The gauge transformations at the nodes act as $\otimes_{i=1}^n U_{v_1}(h_1)\triangleright\ket{g_{v_1},\ldots,g_{v_n}}=\ket{h_1 g_{v_1},\ldots,h_n g_{v_n}}$, and so $\{\ket{g_{v_i}}\}_{g_{v_i}\in G}$ furnishes an orthonormal orientation basis for the local $R_{v_i}$ with $\H_{R_{v_i}}\simeq L^2(G)$. Such QRFs with perfectly distinguishable orientation states are called \emph{ideal} \cite{delaHamette:2021oex,Hoehn:2023ehz,Hoehn:2019fsy}. Applying this procedure to all of $\mathcal{T}$ yields an ideal lattice QRF field $R$ for $\mathcal{G}$ with $\H_R\simeq L^2(G^V)$, while $S$ is given by the complement of $\mathcal{T}$.

The orthogonal projector onto the gauge-invariant kinematical subalgebra $\mathcal{A}^{\mathcal{G}}_{\rm kin}:=\mathcal{G}(\mathcal{B}(\H_{\rm kin}))\subset\mathcal{B}(\H_{\rm kin})$ is given by the $\mathcal{G}$-twirl (\emph{incoherent} group average), ${\mathcal{G}(\bullet):=1/\rm{Vol}(\mathcal{G})\int_\mathcal{G}\dd\mathbf{g}\, U(\mathbf{g})\bullet U^\dag(\mathbf{g})}$. $\mathcal{A}_{\rm kin}^\mathcal{G}$ contains the relational observables describing $S$ relative to $R$ \cite{delaHamette:2021oex,Hoehn:2019fsy,Hoehn:2021flk,Hoehn:2023ehz,DeVuyst:2024uvd}:
\begin{equation}\label{eq:relobs}
    O_{f_S|R}(\mathbf{g})=\mathcal{G}\left(\ket{\mathbf{g}}\!\bra{\mathbf{g}}_R\otimes f_S\right)\,,
\end{equation}
which measures $f_S$ conditional on $R$ being in orientation $\mathbf{g}\in\mathcal{G}$, where $f_S\in\mathcal{B}(\H_S)$. In App.~\ref{app:relational_obs}, we show that, when $f_S$ is some bulk Wilson line, $O_{f_S|R}$ is either a bulk Wilson loop or a Wilson line starting and ending on $\mathcal{B}$. Furthermore, when $f_S=L^a_e,R^a_e$ is a left or right electric field at some bulk vertex $v$, then $O_{f_S|R}$ is only non-zero when $R$'s Wilson line to $v$ is anchored on $\mathcal{B}$. Thus, the electric flux or color charge $f_S=E^a_e:=L^a_e-R^a_e$ along a bulk edge $e$ \cite{kogut1975} can only be dressed to the global boundary. On the other hand, consider two bulk vertices $v,w$ and $R$ such that its Wilson lines touching these vertices are not anchored on $\mathcal{B}$. Then objects of second order in the electric fields, such as 
 $f_S=\l\mathfrak{G}^v_e\r^a\l\mathfrak{G}^w_{e'}\r^b$, where $\mathfrak{G}^v_e$ is the appropriate choice of $L_e,R_e$ at vertex $v$, yield relational observables of the form $O_{f_S|R}\propto(\hat{\mathfrak{G}}^v_e)^a(\hat{g}_\ell^\text{Adj})^{ab}(\hat{\mathfrak{G}}^w_{e'})^b$, where $\ell$ is $R$'s unique bulk Wilson line connecting $v$ and $w$. All observables can be built from the ones we discussed. Thus, all pure bulk-supported relational observables are functions of two-sided dressings.

Let us now restrict attention to a finite region $\Sigma$ and QRF fields on its boundary $\p \Sigma$, cf.~Fig.~\ref{fig:Lattice_2d}; the details of $R$ elsewhere will not matter in what follows. These QRFs  on $\p\Sigma$ are lattice versions of edge modes \cite{Carrozza:2021gju,Araujo-Regado:2024dpr}. Echoing the classical discussion \cite{Araujo-Regado:2024dpr}, we distinguish two types, both of which are ideal here:
\begin{itemize}
  \item \emph{Extrinsic QRFs} $\Phi$, built with Wilson lines from $\mathcal{B}$ to each node in $\partial \Sigma$, which do not share  edges, see Fig.~\ref{fig:Lattice_2d}. Thus, $\H_\Phi=\bigotimes_{v\in\p\Sigma}\H_{R_v}\simeq L^2(G^{V_{\p\Sigma}})$ with each $v_0\in\mathcal{B}$ and $V_{\p\Sigma}=|\mathcal{V}\cap\p\Sigma|$. 
  \item \emph{Intrinsic QRFs} $\tilde\Phi$, constructed from Wilson lines along a spanning tree on $\partial \Sigma$ rooted at an arbitrary node $v_0\in\p\Sigma$, see Fig.~\ref{fig:intrinsic frame}. Hence, $\H_{\tilde\Phi}=\bigotimes_{\p\Sigma\ni v\neq v_0}\H_{\tilde{R}_v}\simeq L^2(G^{(V_{\p\Sigma}-1)})$.
\end{itemize}

$\Phi$ is therefore complete for gauge transformations on $\p\Sigma$, while $\tilde\Phi$ is not: it cannot deparametrize one factor of $G$, e.g.\ at its root $v_0$. Indeed, for both types, the local QRF $R_v$ at $v\in\p\Sigma$ transforms under gauge transformations at its root $v_0$ as $U_{v_0}(h_0)\triangleright\ket{g_v}=\ket{g_v h_0^{-1}}$. The difference is that, for extrinsic frames, $U_{v_0}(h_0)$ is a \emph{large} gauge transformation and thus physical, constituting a \emph{reorientation} of the QRF \cite{Carrozza:2021gju,Araujo-Regado:2024dpr,delaHamette:2021oex,Hoehn:2023ehz}, while for intrinsic frames it is \emph{small} and therefore redundancy. Reorientations change the relation between $\Phi$ and $\Sigma$, commute with small gauge transformations and, in the continuum, realize the corner symmetries of $\p\Sigma$ \cite{Carrozza:2021gju,Araujo-Regado:2024dpr}. The group $\mathbb{G}_{\p\Sigma}$ of $\Phi$'s reorientations is isomorphic to the gauge group on the corner, $\mathbb{G}_{\p\Sigma}\simeq G^{V_{\p\Sigma}}\simeq\mathcal{G}_{\p\Sigma}$, and is generated by the right electric fields $R^a_e$ on the $V_{\p\Sigma}$ edges inside $\Phi$'s Wilson lines that are incident on $\mathcal{B}$. We identify it as the lattice corner symmetry group; when all of $\Phi$'s anchor points on $\mathcal{B}$ are distinct, all of its reorientations can be realized by large gauge transformations. As in the continuum, $\tilde\Phi$ admits no reorientations \cite{Araujo-Regado:2024dpr}. 

\noindent\begin{minipage}{0.48\textwidth}
\centering
    \vspace{5pt}
    \captionsetup{hypcap=false} 
\includegraphics[width=0.8\linewidth]{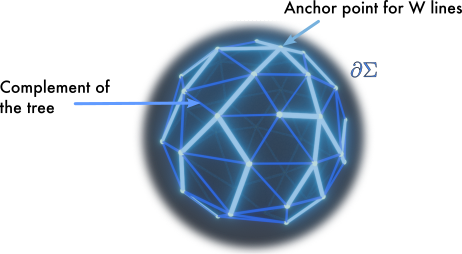}
    \captionof{figure}{\small Illustration of a spanning tree defining an intrinsic edge mode QRF $\tilde\Phi$ on a two-dimensional entangling surface $\p \Sigma$, in lighter color highlighted with respect to its complement.}
\label{fig:intrinsic frame}
\end{minipage}\\

Since only two-sided dressings matter for intrinsic frames, the incompleteness of $\tilde\Phi$ is not an obstruction to building small gauge-invariant observables on $\p\Sigma$ via Eq.~\eqref{eq:relobs}. The system $S_{\p\Sigma}$ on $\p\Sigma$ is given by the $|\mathcal{E}\cap\p\Sigma|-(V_{\p\Sigma}-1)$ edges in the complement of the tree defining $\tilde\Phi$. For corners of dimension $D\leq2$, the Euler characteristic  guarantees that these edges are in one-to-one correspondence with independent Wilson loops in $\p\Sigma$, which encode magnetic degrees of freedom intrinsic to $\p\Sigma$. The loops result from inserting the non-tree link variables into $f_S$ in Eq.~\eqref{eq:relobs} and using $\tilde\Phi$ in $R$, which glues two of its Wilson lines at $v_0$. The electric conjugate is given by inserting a quadratic combination of $L_e,R_e$ for $f_S$ instead; there is only one independent intrinsically dressed electric field per non-tree edge as all $\rho$-representation indices must be contracted to obtain a singlet (and $L_e,R_e$ are related by the link adjoint). Note that this is different for extrinsic dressings, where $\rho$-indices on $\mathcal{B}$ may remain open (cf.~App.~\ref{app:relational_obs}).  Thus, the incompleteness of $\tilde\Phi$ means that it yields less independent relational observables for $S_{\p\Sigma}$ than $\Phi$, except when $G$ is Abelian and $\rho$ an irrep (hence, one-dimensional). 

The extrinsic and intrinsic QRFs yield a change of tensor product structure (TPS) on $\H_{\rm kin}$. First, the regional boundary Hilbert space refactorizes as $\H_{\p\Sigma}\simeq \H_{\tilde\Phi}\otimes\H^{\rm phys}_{\p\Sigma\setminus\tilde\Phi|\tilde\Phi}$ with $\H^{\rm phys}_{\p\Sigma\setminus\tilde\Phi|\tilde\Phi}:=\bigotimes_{\text{loops}\in\p\Sigma} L^2(G)$, and so
\be \label{eq:kin_refact} 
\Hkin \simeq \H_\Phi \otimes \H_{\tilde{\Phi}} \otimes \H_{\mathring{\Sigma}} \otimes \H_{\Bar{\Sigma}\backslash\Phi} \otimes \H^{\rm phys}_{\p\Sigma\setminus\tilde\Phi|\tilde\Phi}\,. \ee 
Here,  
$\H_{\mathring{\Sigma}}=\bigotimes_{e\in\Sigma\setminus\p\Sigma}\H_e$ contains all internal regional degrees of freedom, while $\H_{\Bar{\Sigma}\backslash\Phi}$ contains all degrees of freedom of the complement that do not belong to $\Phi$; this part depends on the choice of extrinsic QRF.

\section{Physical Hilbert space factorizations}
\label{sec:phys}

The (small) gauge-invariant physical Hilbert subspace ${\H_{\rm phys}=\Pi_{\rm phys}\left(\H_{\rm kin}\right)}\simeq L^2(G^E/G^V)$ is the image of the orthogonal \emph{coherent} group averaging projector ${\Pi_{\rm phys}:=1/\rm{Vol}(\mathcal{G})\int_\mathcal{G}\dd\mathbf{g}\,U(\mathbf{g})}$. As a subspace, it generically does not inherit the {kinematical} TPS, which is the main obstacle to unambiguously assigning a physical subsystem to the kinematically defined $\Sigma$ \cite{VanAcoleyen:2015ccp,Donnelly:2011hn,Casini:2013rba,Donnelly:2014gva,Soni:2015yga,Ghosh:2015iwa}. 

A core insight of the internal QRF program is that, rather than invoking an unphysical \emph{kinematical} notion of subsystem decompositions and locality, one must define them in a physical, \emph{gauge-invariant} manner \cite{Hoehn:2023ehz,AliAhmad:2021adn}. All physical properties, specifically entanglement entropies, should be defined relative to a physical subsystem partition. Crucially, the internal perspective of any QRF \emph{is} exactly such a physical partition and for ideal complete QRFs further is a \emph{gauge-invariant TPS} on $\Hphys$ \cite{Hoehn:2023ehz,DeVuyst:2024uvd}.

The internal perspective of some QRF $R$ on its complement $S$ is given by the relational observables \eqref{eq:relobs} (evaluated on $\Hphys$, where they are complete \cite{delaHamette:2021oex}), or, equivalently, by gauge fixing its orientation with a (generalized) Page-Wootters (PW) reduction, ${\mathcal{R}_R(\mathbf{g}):=\bra{\mathbf{g}}_R\otimes \mathds{1}_S}:\Hphys\to\H_S$ \cite{delaHamette:2021oex,Hoehn:2019fsy}. This map is unitary onto its image, which for ideal complete QRFs is all of $\H_S$. In that case, when $\H_S=\bigotimes_\alpha\H_{S_\alpha}$ has subsystem structure, the unitary $\mathcal{R}_R$ constitutes a gauge-invariant TPS on $\Hphys\simeq\bigotimes_\alpha\H_{S_\alpha}$ (see App.~\ref{app:factor} and \cite{Hoehn:2023ehz,DeVuyst:2024uvd}). In particular, 
\begin{equation}\label{eq:obsred}
\mathcal{R}_R(\mathbf{g})O_{f_S|R}(\mathbf{g})\mathcal{R}^\dag_R(\mathbf{g})=f_S\in\bigotimes_\alpha\mathcal{B}(\H_{S_\alpha})\,, 
\end{equation}
so relational observables are local w.r.t.\ this TPS \cite{delaHamette:2021oex,Hoehn:2023ehz,Hoehn:2019fsy}. We have transitioned from a kinematical to a gauge-invariant notion of subsystem locality; it is the local subsystem structure ``as seen'' by $R$ and thus \emph{relational}. Indeed, different choices of QRF yield different TPSs on $\Hphys$ \cite{Hoehn:2023ehz,DeVuyst:2024uvd}, a phenomenon known as \emph{subsystem relativity} 
\cite{Hoehn:2023ehz,AliAhmad:2021adn,delaHamette:2021oex,DeVuyst:2024pop,DeVuyst:2024uvd,Castro-Ruiz:2021vnq,AliAhmad:2024wja} and responsible for QRF-dependent entanglement entropies \cite{Hoehn:2023ehz,DeVuyst:2024pop,DeVuyst:2024uvd}. 

This observation directly applies here: the global lattice QRF field $R$ is complete and ideal, and so $\mathcal{R}_R(\mathbf{g}):\Hphys\to\H_S=\bigotimes_{e\in\mathcal{E}\setminus\mathcal{T}}\H_e$ defines a physical TPS, including a factorization between what $R$ ``sees'' as the region $\H_{\Sigma|R}:=\bigotimes_{e\in(\mathcal{E}\setminus\mathcal{T})\cap(\Sigma\cup\p\Sigma)}\H_e$ and its complement $\H_{\bar\Sigma|R}:=\bigotimes_{e\in(\mathcal{E}\setminus\mathcal{T})\cap\bar\Sigma}\H_e$. \emph{Relative} to  $R$, we thus have everything we need for standard von Neumann entanglement entropy calculations. 

In fact, for our purposes it suffices to gauge fix only on the boundary $\p\Sigma$, since $\mathcal{G}$ acts on each bulk vertex independently. To this end, we may invoke the extrinsic edge mode frame $\Phi$, which is ideal and complete for gauge transformations on $\p\Sigma$. In App.~\ref{app:factor}, we show that its PW reduction $\mathcal{R}_\Phi(\mathbf{g}_{\p\Sigma})$ defines a (coarser) tensor factorization between inside and outside
\begin{equation}\label{eq:extfacthphys}
\Hphys\simeq\H^{\rm phys}_{\mathring{\Sigma}}\otimes\H_{\p\Sigma\setminus\tilde\Phi}\otimes\H_{\tilde\Phi}\otimes\H_{\bar\Sigma\setminus\Phi}^{\rm phys}\,,
\end{equation}
where the `phys' superscript denotes that all bulk gauge constraints are implemented, \emph{except} on the corner $\p\Sigma$. For later, let us combine the first three factors into $\H_{\Sigma}^{\rm phys}:=\H^{\rm phys}_{\mathring{\Sigma}}\otimes\H_{\p\Sigma\setminus\tilde\Phi}\otimes\H_{\tilde\Phi}
$. 

It will be convenient to isolate the regional degrees of freedom transforming non-trivially under the corner symmetries $\mathbb{G}_{\p\Sigma}$, which act non-trivially on all four factors above. In non-Abelian theories this turns out to be possible only on a subspace of $\mathcal{H}^\text{phys}_\Sigma$.
We explain in App.~\ref{app:GM} how to reorganize the regional factor $\H^{\rm phys}_{\Sigma}$ by redressing as many regional degrees of freedom as possible with the intrinsic frame $\tilde\Phi$ (which kinematically transforms trivially under $\mathbb{G}_{\p\Sigma}$), while dressing $\tilde\Phi$ itself with the extrinsic frame, yielding the relational observables $O_{\tilde\Phi|\Phi}$ in Eq.~\eqref{eq:relobs}, which encode their relative orientations. As in the continuum \cite{Araujo-Regado:2024dpr}, we identify this lattice field $O_{\tilde\Phi|\Phi}$ on $\p\Sigma$ as the corner-symmetry \emph{Goldstone mode} (GM) since, on the appropriate subspace of $\mathcal{H}^\text{phys}_\Sigma$, these are now the only regional degrees of freedom that transform nontrivially under $\mathbb{G}_{\p\Sigma}$. This is because they are the only regional degrees of freedom that encode (and parametrize) the relation between $\Sigma$ and $\bar{\Sigma}$. The end result is that this does not quite result in a new factorization of the full $\Hphys$, but rather we get
\begin{align}\label{eq:Hphysref}
&\Hphys\simeq\\
&\quad\l\l\mathcal{H}_\text{GM}\otimes\mathcal{H}_{\Sigma\setminus\tilde\Phi}^\text{int}\r\oplus\l \mathcal{H}_{\Tilde{\Phi}}\otimes P^\mathbb{H}(\mathcal{H}_{\Sigma\setminus\tilde\Phi}^\text{phys})^\perp\r\r\otimes \mathcal{H}_{\bar{\Sigma}\setminus\Phi}^\text{phys}\,,\nonumber
\end{align}
due to $\tilde\Phi$'s incompleteness, which entails that its orientation basis admits an isotropy subgroup $H\simeq G\subset\mathcal{G}_{\p\Sigma}$, leaving its orientations invariant (up to phase). This means that it can only distinguish other degrees of freedom that are invariant under its own isotropy group \cite{delaHamette:2021oex,Chataignier:2024eil}. It does not ``see'' all regional degrees of freedom and dressing some of them yields trivial observables. Indeed, we saw earlier that dressing linear electric field combinations with an intrinsic frame returns zero. This subspace that $\tilde\Phi$ does not ``see'' is given by the second summand in the orthogonal sum of Eq.~\eqref{eq:Hphysref}; it includes the regional degrees of freedom in the orthogonal complement of those invariant under its isotropy group.  

On the other hand, the subspace that $\tilde\Phi$ does ``see'' is given by the first summand in the direct sum of Eq.~\eqref{eq:Hphysref}. This subspace splits cleanly into a GM sector that captures all the regional degrees of freedom charged under reorientations and those (in $\H^{\rm int}_{\Sigma\setminus\tilde\Phi}$) that are invariant. Indeed, in this subspace, the corner group's unitary representation is of the form 
\begin{equation}
\mathcal{U}(\mathbb{G}_{\p\Sigma})=\mathbb{U}_{\rm GM}(\mathbb{G}_{\p\Sigma})\otimes\mathds{1}_{\Sigma\setminus\tilde\Phi}\otimes\mathbb{U}_{\bar\Sigma\setminus\Phi}(\mathbb{G}_{\p\Sigma})\,.
\end{equation}

As we show in App.~\ref{app:GM}, for Abelian theories, the second subspace in Eq.~\eqref{eq:Hphysref} is trivial and thus, in this case, one obtains a clean factorization  on the full $\Hphys$ between the regional GM sector, the regional reorientation-invariant data (the intrinsic frame-dressed information), and the complement of the region. For the first two factors, this is a lattice Hilbert space version of the continuum regional phase space factorization established for Maxwell theory in \cite{Ball:2024hqe,Araujo-Regado:2024dpr}. Perturbatively, it was also shown in the continuum (under certain assumptions) that such a clean factorization is not consistent with the dynamics in Yang-Mills theory \cite{Ball:2024gti}, being broadly consistent with our nonperturbative lattice argument above that in that case no such factorization exists in the quantum theory.

Importantly, the TPS in Eq.~\eqref{eq:extfacthphys} (and the reorganization in Eq.~\eqref{eq:Hphysref}) is \emph{relative to the extrinsic QRF $\Phi$}; two distinct choices $\Phi,\Phi'$ define two distinct TPSs and realizations of $\mathbb{G}_{\p\Sigma}$ (see App.~\ref{app:subrel} for details on the following). In particular, denote by $\mathcal{A}^{\rm phys}_{S_\alpha|\Phi}=\mathcal{R}_\Phi^\dag(\mathbf{g}_{\p\Sigma})\mathcal{A}_{S_\alpha}\mathcal{R}_\Phi(\mathbf{g}_{\p\Sigma})$ the algebra of relational observables on $\Hphys$, describing $S_\alpha$ relative to $\Phi$, where $S_\alpha$ can be any of $\mathring{\Sigma},\p\Sigma\setminus\tilde\Phi,\tilde\Phi,\bar\Sigma\setminus\Phi$ above, or any of their unions, and $\mathcal{A}_{S_\alpha}:=\mathcal{B}(\H^{\rm phys}_{S_\alpha})$. As $\mathcal{R}_\Phi$ is unitary and $\mathcal{A}_{S_\alpha}$ is a Type I factor, so is $\mathcal{A}_{S_\alpha|\Phi}^{\rm phys}$. For any two non-overlapping extrinsic frames, we have
\begin{equation}\label{eq:subrel}
    \mathcal{A}^{\rm phys}_{S_\alpha|\Phi}\cap\mathcal{A}^{\rm phys}_{S_\alpha|\Phi'}=\mathcal{A}^{\mathcal{G}_{\p\Sigma}}_{S_\alpha}\Pi^{\rm phys}_{\p\Sigma}\,,
\end{equation}
 where $\Pi_{\p\Sigma}^{\rm phys}$ denotes coherent group averaging over the remaining gauge transformations on the corner. Thus, $\Phi,\Phi'$ assign distinct relational algebras to $S_\alpha$ and share only those $S_\alpha$-observables, which are intrinsically invariant without dressing to either $\Phi,\Phi'$. The two extrinsically dressed algebras differ exactly in the cross-boundary observables, e.g.\ Wilson lines crossing the corner and hitting $\mathcal{B}$ with both ends, see Fig.~\ref{fig:algebras&sub_rel}. This is subsystem relativity on the lattice. In particular, $\mathcal{A}^{\rm phys}_{S_\alpha|\Phi}$ is $S_\alpha$-local in the TPS $\mathcal{R}_\Phi$ by \eqref{eq:obsred}, but not in $\mathcal{R}_{\Phi'}$.\\

The perspective of an \emph{intrinsic edge mode frame} $\tilde\Phi$, on the other hand, generally does not define a TPS on $\Hphys$, see App.~\ref{app:factor} for details. Again, this is a consequence of its incompleteness. Its perspective is a \emph{subspace} of  $\H_{\mathring{\Sigma}}^{\rm phys}\otimes\H_{\p\Sigma\setminus\tilde\Phi}\otimes\H_{\Phi}\otimes\H^{\rm phys}_{\bar\Sigma\setminus\Phi}$, defined by an average over $\tilde\Phi$'s isotropy group $
H$ that acts globally on all factors; $\tilde\Phi$ can only resolve what is invariant under its isotropy group \cite{delaHamette:2021oex,Chataignier:2024eil}. Interestingly, this is a lattice and quantum realization of \cite[Thm.~6.1]{Gomes:2019xto}, asserting that, given two regional classical solutions to the constraints in $\Sigma,\bar\Sigma$, their gluing to a global solution is fixed up to a constant isotropy subgroup of $\mathcal{G}_{\p\Sigma}$. 
An important exception is when $G$ is Abelian, in which case $\tilde\Phi$ \emph{does} define a TPS, factorizing between in- and outside. This is because, due to the two-sided dressings, the action of the constantly acting isotropy group cancels. 

In summary, it is the kinematically nonlocal construction of the QRFs that allows one to trade the inherent nonlocality of invariant lattice states for  relationally local TPSs on $\Hphys$. This contrasts to previous treatments of lattice regions, which embed $\Hphys$ into an \emph{auxiliary} Hilbert space that is factorized between in- and outside to define a partial trace and entropies\,---\,and which is the source of non-distillable entanglement contributions \cite{Buividovich:2008gq,Donnelly:2011hn,Donnelly:2014gva,Ghosh:2015iwa,VanAcoleyen:2015ccp,Delcamp:2016eya,Aoki:2015bsa}. It is also the reason why the algebra of all gauge-invariant observables \emph{kinematically} supported in $\Sigma$ features a nontrivial center, i.e.\ a nontrivial subalgebra commuting with all other elements \cite{Casini:2013rba,Soni:2015yga,VanAcoleyen:2015ccp,Delcamp:2016eya,Bianchi:2024aim}.\\

\begin{minipage}{.45\textwidth}
    \centering
    \vspace{5pt}
    \captionsetup{hypcap=false} 
    \includegraphics[width=0.48\linewidth]{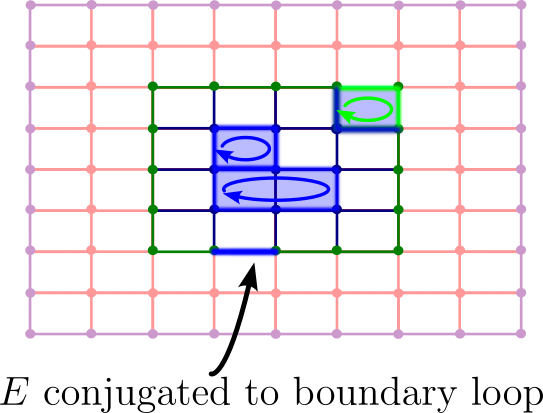}
    \includegraphics[width=0.48\linewidth]{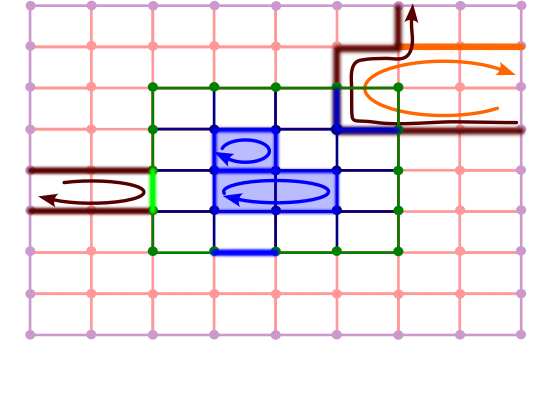}
    \captionof{figure}{\small \emph{Left:} Intrinsic algebra $\mathcal{A}_{\Sigma\setminus\tilde\Phi|\tilde\Phi}^{\rm phys}$ with loops and only one electric field on the boundary in the complement of $\tilde\Phi$; intrinsic frame dressing (light green). Distinct intrinsic frames define different algebras: they share all regional loops, but differ in the boundary tangential electric fields. \emph{Right:} Extrinsic algebra $\mathcal{A}^{\rm phys}_{\Sigma|\Phi}$ and subsystem relativity: two distinct extrinsic frames (fat orange and brown) dress regional variables (fat blue) to distinct cross-boundary observables (top right).}
\label{fig:algebras&sub_rel}
\end{minipage}\\

\section{A hierarchy of regional algebras}
\label{sec:alg}

Let us now put all of this into one coherent picture, with details in Apps.~\ref{app:alg_content}--\ref{app_magneticcenter}.

We have seen that, relative to any extrinsic frame, $\mathcal{A}^{\rm phys}_{S_\alpha|\Phi}$ is a factor, hence centerless. The total physical algebra factorizes accordingly
\begin{equation}
    \mathcal{A}_{\rm phys}:=\mathcal{B}(\Hphys)=\mathcal{A}^{\rm phys}_{\mathring{\Sigma}|\Phi}\otimes\mathcal{A}^{\rm phys}_{\p\Sigma\setminus\tilde\Phi|\Phi}\otimes\mathcal{A}^{\rm phys}_{\tilde\Phi|\Phi}\otimes\mathcal{A}^{\rm phys}_{\bar\Sigma\setminus\Phi|\Phi}\,,\nonumber
\end{equation}
with the first factors, ${\mathcal{A}^{\rm phys}_{\Sigma|\Phi}:=\mathcal{A}^{\rm phys}_{\mathring{\Sigma}|\Phi}\otimes\mathcal{A}^{\rm phys}_{\p\Sigma\setminus\tilde\Phi|\Phi}\otimes\mathcal{A}^{\rm phys}_{\tilde\Phi|\Phi}}$, constituting the subsystem that $\Phi$ associates with region $\Sigma$ (which can be equivalently reorganized according to Eq.~\eqref{eq:Hphysref}). Let us call it an \emph{extrinsic relational algebra} of $\Sigma$; it contains all relational observables describing $\Sigma$ relative to $\Phi$, including some cross-boundary data, and is subject to subsystem relativity, cf.~Fig.~\ref{fig:algebras&sub_rel}. In fact, it is generated by cross-boundary observables, namely open regional Wilson lines and linear electric fields $L_e,R_e$ $\Phi$-dressed to $\mathcal{B}$. As the representation indices on $\mathcal{B}$ are uncontracted, one can combine these generators to also obtain all gauge-invariant observables purely supported in $\Sigma$. 

The GM is a \emph{gauge-invariant} QRF-field for the \emph{physical} electric corner symmetries $\mathbb{G}_{\p\Sigma}$. Its orientations parametrize $\mathbb{G}_{\p\Sigma}$-orbits and thereby those relations of $\Sigma$ with $\bar\Sigma$ captured by the dressing with the extrinsic $\Phi$. Averaging over these orientations eliminates all cross-boundary relational data. As this is a physical symmetry, averaging is implemented by \emph{incoherently} twirling over $\mathbb{G}_{\p\Sigma}$. Indeed, this is the average of the quantum information \cite{Bartlett:2006tzx,Castro-Ruiz:2021vnq,Krumm:2020fws} and operational \cite{Loveridge:2017pcv,Carette:2023wpz,Fewster:2024pur} QRF frameworks, which are the appropriate ones for physical symmetries (here on $\Hphys$) and essentially agree for compact groups.
(By contrast, $\Phi,\,\tilde\Phi,\,R$ are non-invariant QRFs for the \emph{gauge} symmetries $\mathcal{G}_{\p\Sigma}$ on $\Hkin$, for which \emph{coherent} averaging and the perspective-neutral framework invoked so far are appropriate.)
The result is \emph{superselection} across $\mathbb{G}_{\p\Sigma}$'s irreps and the algebra of all gauge-invariant observables kinematically supported in $\Sigma$:
\begin{equation}\label{eq:electriccenterdef}
    \mathcal{A}_E:=\mathbb{G}_{\p\Sigma}\left(\mathcal{A}^{\rm phys}_{\Sigma|\Phi}\right)=\mathcal{A}^{\mathcal{G}_{\p\Sigma}}_{\Sigma}\,\Pi^{\rm phys}_{\p\Sigma}=\!\!\!\bigcap_{\text{{extrinsic QRFs }}\Phi}\mathcal{A}^{\rm phys}_{\Sigma|\Phi}\,.
\end{equation}
This recovers precisely the \emph{electric center algebra} of \cite{Casini:2013rba,Soni:2015yga,VanAcoleyen:2015ccp}; the center consists of the irrep Casimir operators, which are given by  the squared \emph{exterior and normal} electric fields on $\p\Sigma$.
The first equality holds for any extrinsic QRF. The last equality is implied by subsystem relativity in Eq.~\eqref{eq:subrel} and holds on any sufficiently large lattice that non-overlapping extrinsic QRFs exist. That is, the electric center algebra is exactly what all extrinsic QRFs agree on.

We noted that, owing to its incompleteness, an intrinsic QRF $\tilde\Phi$ can only resolve what is invariant under its isotropy group $H$. This is reflected in its relational observable algebra on $\Hphys$ describing its regional complement: 
\begin{equation}\label{eq:incintalg}
  {\mathcal{A}}^{\rm phys}_{\Sigma\setminus\tilde\Phi|\tilde\Phi}:=\mathcal{R}^\dag_{\tilde\Phi}(\mathring{\mathbf{g}}_{\p\Sigma})\left(\mathcal{A}_{\mathring{\Sigma}}\otimes\mathcal{A}_{\p\Sigma\setminus\tilde\Phi}\right)^H\mathcal{R}_{\tilde\Phi}(\mathring{\mathbf{g}}_{\p\Sigma})\,,
\end{equation}
where $\left(\mathcal{A}_{\mathring{\Sigma}}\otimes\mathcal{A}_{\p\Sigma\setminus\tilde\Phi}\right)^H$ is the \emph{jointly isotropized}, i.e.\ $H$-invariant (twirled) subalgebra of $\mathcal{A}_{\mathring{\Sigma}}\otimes\mathcal{A}_{\p\Sigma\setminus\tilde\Phi}=\mathcal{B}(\H^{\rm phys}_{\mathring{\Sigma}})\otimes\mathcal{B}(\H_{\p\Sigma\setminus\tilde\Phi})$. We call it an \emph{intrinsic relational algebra}; it contains \emph{all} relational observables describing $\tilde\Phi$'s regional complement $\Sigma\setminus\tilde\Phi$ relative to it. This algebra is generated by all regional Wilson loops and quadratic combinations of electric fields in $\Sigma\setminus\tilde\Phi$ dressed with regional Wilson lines. In contrast to extrinsic algebras, all representation indices must be contracted, in order to obtain singlets for gauge transformations. Thus, any intrinsic algebra contains ($\tilde\Phi$-dressed) tangential electric fields only in the complement of the tree defining $\tilde\Phi$, so ${\mathcal{A}}^{\rm phys}_{\Sigma\setminus\tilde\Phi|\tilde\Phi}\subsetneq\mathcal{A}_E$. Given their independence for different trees, ${\mathcal{A}}^{\rm phys}_{\Sigma\setminus\tilde\Phi|\tilde\Phi}$ is subject to subsystem relativity too, cf.~Fig.~\ref{fig:algebras&sub_rel}. For non-Abelian theories, intrinsic algebras possess a nontrivial center because the $H$-twirl introduces superselection. The center is given by its $\tilde\Phi$-dressed Casimir; it is the $\tilde\Phi$-dressing of the square of the \emph{total electric field tangential to  $\p\Sigma\setminus\tilde\Phi$ and interior transversal to $\p\Sigma$}. In the Abelian case, $H$ acts trivially, $\left(\mathcal{A}_{\mathring{\Sigma}}\otimes\mathcal{A}_{\p\Sigma\setminus\tilde\Phi}\right)^H=\mathcal{A}_{\mathring{\Sigma}}\otimes\mathcal{A}_{\p\Sigma\setminus\tilde\Phi}$, and the Casimir vanishes, so that the intrinsic algebra is a Type I factor. We show in App.~\ref{app:GM} that intrinsic algebras have a natural representation on the factor $\H_{\Sigma\setminus\tilde\Phi}^{\rm int}$ in Eq.~\eqref{eq:Hphysref}.

Lastly, the intersection of all intrinsic algebras,
\begin{equation}\label{eq:magneticcenterdef}
    \mathcal{A}_M:=\bigcap_{\text{{intrinsic QRFs }}\tilde\Phi}\mathcal{A}^{\rm phys}_{\Sigma\setminus\tilde\Phi|\tilde\Phi}\,\,\underset{\text{fin.\ Abel.}}{=}\hat{G}_{\p\Sigma}\l\mathcal{A}_{\Sigma\setminus\tilde\Phi|\tilde\Phi}^{\rm phys}\r\,,
\end{equation}
defines a \emph{magnetic center algebra}; its center consists of all independent Wilson loops in the corner $\p\Sigma$ (magnetic part) and of certain corner Wilson loop dressings of the \emph{total electric flux through} $\p\Sigma$ (electric part), see below. $\mathcal{A}_M$ is generated by all regional Wilson loops, regional Wilson line dressed pairs of electric fields and the center. 
For Abelian theories, the electric part vanishes since the total electric flux is zero, recovering exactly the magnetic center of \cite{Casini:2013rba}; $\mathcal{A}_M$ then constitutes the lattice counterpart to intrinsic phase spaces in the continuum, which lack boundary electric fields \cite{Araujo-Regado:2024dpr}. For non-Abelian theories, we show explicitly in App.~\ref{app:intersection_int_frame} that for one-dimensional corners $\p\Sigma$ the electric central element of $\mathcal{A}_M$ is given by the total electric flux through the corner, projected onto the unique Wilson loop group element via the Lie algebra inner product:
\begin{equation}\label{eq:elcentral}
Z_{1d}=\sum_{v\in\p\Sigma}\Tr_\rho\l \hat{g}_{W(v)} T_\rho^a\r\sum_{e\in\mathcal{E}_v\cap\mathring{\Sigma}}\l\mathfrak{G}^v_e\r^a\,.
\end{equation}
Here, $\hat{g}_{W(v)}$ denotes the Wilson loop holonomy from $v$ to itself before tracing. Together with the magnetic part, this reproduces exactly the generalized magnetic center of \cite{Delcamp:2016eya}, which too was formulated for one-dimensional corners (using fusion bases), however, unlike us, restricted to finite groups. 

Our magnetic center definition $\mathcal{A}_M$ thus constitutes a generalization of previous proposals, encompassing continuous groups and lattices of any dimension. In particular, we conjecture that $\mathcal{A}_M$ will also include an electric part in higher dimensions. A natural candidate for central elements in higher dimensions are objects as in Eq.~\eqref{eq:elcentral}  with $W$ a  \emph{Hamiltonian cycle} Wilson loop in the corner. These are Wilson loops that visit every vertex $v\in\p\Sigma$ once and only once. Indeed, such Hamiltonian cycle loops exist in planar graphs, and a spherical entangling surface in a lattice theory is such a planar graph. However, we leave investigating this an open question.
In summary, the magnetic center algebra is exactly what all intrinsic QRFs agree on. 

In App.~\ref{app_magneticcenter}, we show that, in the finite Abelian case, $\mathcal{A}_M$ can also be obtained by twirling any intrinsic algebra over a corner loop generated (Pontryagin) dual magnetic corner group $\hat{G}_{\p\Sigma}$, yielding the last equality in Eq.~\eqref{eq:magneticcenterdef}; in analogy to the electric case, this can be interpreted as averaging over the orientations of a dual magnetic corner QRF. For continuous $G$, this fails and accompanies an infinite ambiguity in defining a trace (and entropies) on $\mathcal{A}_M$. For non-Abelian $G$, there are no canonical dual groups. 

Clearly, we have the regional hierarchy (for any $\Phi,\tilde\Phi$):
\begin{equation}\label{eq:alghierarchy}
    \mathcal{A}^{\rm phys}_{\Sigma|\Phi}\,\,\,\supsetneq\,\,\,\mathcal{A}_E\,\,\,\supsetneq\,\,\,\mathcal{A}^{\rm phys}_{\Sigma\setminus\tilde\Phi|\tilde\Phi}\,\,\,\supsetneq\,\,\,\mathcal{A}_M\,.
\end{equation}

The electric/magnetic center algebra is to extrinsic/intrinsic QRFs what Lorentz scalars are to Lorentz observers; they correspond precisely to the observables that all frames agree on and that are invariant under all corresponding QRF transformations, cf.~\cite{Hoehn:2023ehz}. Restricting to these algebras\,---\,as the previous literature on entropies in gauge theories\,---\,misses out on all the QRF-covariant relational information.\\

\noindent\begin{minipage}{0.48\textwidth}
\centering
    \vspace{5pt}
    \captionsetup{hypcap=false} 
\begin{tabular}{ ccc }
\hline\hline
algebra & entanglement entropy & distillable\\\hline
\footnotesize{$\mathcal{A}_{\Sigma|\Phi}^{\rm phys}$}& \footnotesize{$S_{\rm vN}(\rho_{\Sigma|\Phi}^{\rm phys})$}& $\checkmark$\vspace{3pt}\\
\footnotesize{$\mathcal{A}_E$} & \footnotesize{$H(\{p_\mathfrak{e}\})+\sum_\mathfrak{e} p_\mathfrak{e}\l\log d_\mathfrak{e}+S_{\rm vN}(\rho_\mathfrak{e})\r$}& $\cross$ \vspace{3pt}\\ 
\multirow{2}{*}{\footnotesize{$\mathcal{A}^{\rm phys}_{\Sigma\setminus\tilde\Phi|\tilde\Phi}$}} & \footnotesize{A: $\,\,\,S_{\rm vN}(\rho^{\rm phys}_{\Sigma\setminus\tilde\Phi|\tilde\Phi})$} & $\checkmark$ \\
 & \footnotesize{nA: $\,\,\,H(\{p_{\hat{\mathfrak{e}}}\})+\sum_{\hat{\mathfrak{e}}} p_{\hat{\mathfrak{e}}}\l\log d_{\hat{\mathfrak{e}}}+S_{\rm vN}(\rho_{\hat{\mathfrak{e}}})\r$} & $\cross$ \vspace{3pt}\\
 \footnotesize{$\mathcal{A}_M$} & \footnotesize{$H(\{p^{m}_{\bar{\mathfrak{e}}}\})+\sum_{m,\bar{\mathfrak{e}}}p^{m}_{\bar{\mathfrak{e}}}\l\log d^{m}_{\bar{\mathfrak{e}}}+S_{\rm vN}(\rho^{m}_{\bar{\mathfrak{e}}})\r$}& $\cross$\\
 \hline
\hline
\end{tabular}
\captionof{table}{\small  
 Entanglement entropies for the algebras in Eq.~\eqref{eq:alghierarchy}. 
A/nA stands for Abelian/non-Abelian theories. 
$H(\{p_i\})=-\sum_i p_i\log p_i$ is the Shannon entropy, $\mathfrak{e},\hat{\mathfrak{e}},\bar{\mathfrak{e}}$ are electric charge superselection sector labels for the respective centers, 
while $m$ is a magnetic charge label and the $d_i$ measure degeneracy of these labels. For continuous $G$, entropies for $\mathcal{A}_M$ are highly ambiguous.} \label{tab:1}
\end{minipage}

\section{A hierarchy of regional entropies}
\label{sec:entro}

The entanglement entropies for the algebras in Eq.~\eqref{eq:alghierarchy} are summarized in Table~\ref{tab:1} (see Apps.~\ref{app_electriccenter}--\ref{app:entropy_proof} for details). Crucially, the extrinsic (resp.~intrinsic) relational algebras are always (resp.~for Abelian theories) \emph{factors} and so yield simple relational subsystem definitions and fully distillable entanglement as measured by the standard von Neumann entropy. In all other cases, there is a nontrivial center, resulting in non-distillable contributions to the entanglement entropy, determined by the classical probability distribution over the different electric and magnetic charge superselection sectors (see \cite{VanAcoleyen:2015ccp,Soni:2015yga} for a discussion of distillation in gauge theories). 

 Note that the extrinsic and intrinsic algebras yield \emph{relational} entanglement entropies, measuring the regional entropy ``as seen'' by the corresponding QRF; as such, they are QRF-dependent as observed in other contexts in \cite{Hoehn:2023ehz,DeVuyst:2024pop,DeVuyst:2024uvd,Cepollaro:2024rss}. In particular, this is the lattice gauge theory version of the regional entanglement entropies in perturbative gravity \cite{Chandrasekaran:2022cip,Jensen:2023yxy,Kudler-Flam:2023qfl} that were shown to be relational and QRF-dependent in precisely the same way \cite{DeVuyst:2024pop,DeVuyst:2024uvd}. For example, due to subsystem relativity, different extrinsic frames yield \emph{different} subregion factor algebras; for each, there will exist global physical states such that the regional state is pure\,---\,yielding vanishing entropy\,---\,relative to them, but not other QRFs.

The entropy for the electric center found in \cite{Soni:2015yga} (see also \cite{Bianchi:2024aim}) differs from that computed in extended Hilbert space constructions \cite{Ghosh:2015iwa,VanAcoleyen:2015ccp,Delcamp:2016eya}, missing the $d_\mathfrak{e}$-term. Remarkably, our novel electric center computation agrees with the extended Hilbert space results, although remaining at the algebraic level, as \cite{Soni:2015yga}, but by exploiting properties of the $\mathbb{G}_{\p\Sigma}$-twirl in Eq.~\eqref{eq:electriccenterdef}, and thus not invoking any auxiliary extended Hilbert spaces. 

Our magnetic center algebra generalizes previous proposals \cite{Casini:2013rba,Delcamp:2016eya} to encompass continuous structure groups $G$. Interestingly, as we explain in Apps.~\ref{app_magneticambiguity},~\ref{app_nonAbmagnetic}, precisely in that case there is no natural trace on $\mathcal{A}_M$. Like any Type I algebra with center, it has an infinite ambiguity in defining a trace on it, since the trace in each superselection sector can be weighed differently \cite{Takesaki1,Sorce:2023fdx}. To break this ambiguity, typically one would want to inherit a unique trace on $\mathcal{A}_M$ from some larger algebra within which it is contained. This works in the electric center construction, where the standard Hilbert space trace induces a trace on $\mathcal{A}_E$. However, it fails for the magnetic center when $G$ is continuous, in which case $\mathcal{A}_M$ contains \emph{no} operator that is trace-class w.r.t.\ the standard Hilbert space trace. One can still define traces on the algebra \cite{Takesaki1}, but subject to an infinite ambiguity and no obvious physical way to break it, thus rendering entropy constructions in this case physically arbitrary. For finite $G$, on the other hand, magnetic center entropies are well-defined, reproducing \cite{Casini:2013rba,Delcamp:2016eya}. In Table~\ref{tab:1}, we have therefore only included the finite group case.

Invoking the monotonicity of \emph{relative} entropy \cite{Uhlmann} (see also \cite{perez2023uhlmann,Witten:2018zxz}), the algebra hierarchy in Eq.~\eqref{eq:alghierarchy} immediately entails a corresponding hierarchy of regional relative entropies. (All of them are unital $*$-algebras, and embedding one into any of the larger ones is a unital $*$-homomorphism, so a completely positive trace-preserving map.) Let $\rho_{\rm phys},\sigma_{\rm phys}\in\mathcal{S}(\Hphys)$ be two global physical states and $\rho_i,\sigma_i$ be their reduction to $\mathcal{A}_i$ defined via $\Tr_i\l\rho_i a_i\r=\Tr_{\rm phys}\l\rho_{\rm phys}a_i\r$ for all $a_i\in\mathcal{A}_i$, $i=|\Phi,E,|\tilde\Phi,M$. Then:
\begin{equation}
    S(\rho_{|\Phi}||\sigma_{|\Phi})\geq S(\rho_E||\sigma_E)\geq S(\rho_{|\tilde\Phi}||\sigma_{|\tilde\Phi})\geq S(\rho_M||\sigma_M) \,.
\end{equation}
Clearly, removing observables from regional algebras cannot increase the distinguishability of states, and extrinsic QRFs have the greatest distinguishing capacity.

Perhaps more interestingly, there are also inequalities for \emph{entanglement} entropies between the relational and center algebras, following directly from $\rho_E=\mathbb{G}_{\p\Sigma}(\rho_{|\Phi})$ and, for finite Abelian $G$, $\rho_M=\hat{G}_{\p\Sigma}(\rho_{|\tilde\Phi})$. These incoherent corner twirls decohere the relational states across symmetry-related sectors, turning them into superselection sectors. In the basis diagonalizing the corner charges, this amounts to erasing the off-diagonal blocks. $G$-twirling maps a finer observable algebra to a coarser one, and the von Neumann entropy increases accordingly.
For any global state $\rho_{\rm phys}$, we have (see App.~\ref{app:entropy_proof})
\begin{equation}
    S_{\rm vN}\l\rho_{|\Phi}\r\leq S_{\rm vN}\l\rho_E\r\,,\quad S_{\rm vN}(\rho_{|\tilde\Phi})\leq S_{\rm vN}\l\rho_M\r
\end{equation}
(the latter inequality holds only for finite Abelian $G$). 
 
 Note that there is no such inequality between $\mathcal{A}_E$ and $\mathcal{A}^{\rm phys}_{\Sigma\setminus\tilde\Phi|\tilde\Phi}$. For example, consider $G$ Abelian, so that Eq.~\eqref{eq:Hphysref} factorizes and $\mathcal{A}^{\rm phys}_{\Sigma\setminus\tilde\Phi|\tilde\Phi}$ is a factor. Then there exist global pure states, with $\rho_{|\tilde\Phi}$ pure and superposition in the electric corner charges in the GM sector, which due to twirling, yields a mixed $\rho_E$. Thus, the electric center entropy does not lower bound the intrinsic one, although the intrinsic algebra is a subalgebra of the electric center algebra.

This entropy structure is intimately intertwined with the behavior of global charges, defined on $\cB$ as the sum of incoming electric fields at each boundary node: $Q_v^a := \sum_{e\in\mathcal{E}_v} (\mathfrak{G}^v_e)^a$, with $\mathfrak{G}^v_e$ the appropriate assignment of $L_e,R_e$. These charges commute with the physical Hamiltonian (constructed from Wilson loops and electric fields) and are preserved under time evolution. As generators of large gauge transformations, they also generate extrinsic frame reorientations. 

Whenever the global state commutes with the global charges $Q_v^a$, the reduced state in any subregion necessarily is reorientation invariant (cf.~App.~\ref{app:global_decoherence}) and so commutes with the induced electric corner charge operators generating $\mathbb{G}_{\p\Sigma}$ (the regional data cannot distinguish whether the reorientation is generated by large gauge or other transformations in the complement). This introduces three qualitatively distinct regimes. If the global state contains coherent superpositions across charge sectors, then it carries off-diagonal correlations between distinct superselection blocks of $\mathcal{A}_E$. These coherences are invisible to electric center observables, but detectable by extrinsic relational observables. In this case, the entropy calculated from $\mathcal{A}_E$ underestimates the information captured by $\mathcal{A}^{\rm phys}_{\Sigma|\Phi}$, especially the cross-boundary relational data. If, instead, the state is an incoherent mixture over charge sectors, then both algebras yield the same entropy, since the coherence is not there from the outset. Finally, if the state lies within a single global charge sector, there is no superselection ambiguity (though the extrinsic relational state may be an incoherent mixture over charge sectors), and, again, the entropies agree.

\section{Conclusions}
\label{sec:concl}

We have introduced QRFs into lattice gauge theories by translating their classical continuum counterparts \cite{Carrozza:2021gju,Araujo-Regado:2024dpr} into the lattice quantum theory using the perspective-neutral framework \cite{delaHamette:2021oex}. This permitted us to develop a relational approach to entanglement entropies that we showed encompasses and unifies previous nonrelational approaches via an algebra and entropy hierarchy, which captures the complexity of regional symmetry structures. Importantly, unlike previous extended Hilbert space constructions \cite{Donnelly:2011hn,Buividovich:2008gq,Ghosh:2015iwa,VanAcoleyen:2015ccp,Delcamp:2016eya,Donnelly:2014gva,Aoki:2015bsa,Hung:2015fla}, our treatment does \emph{not} embed $\Hphys$ into any auxiliary Hilbert space, only invoking its intrinsic properties and observable algebras thereon. Our subsystem partitions are defined gauge-invariantly and relationally \cite{Hoehn:2023ehz,AliAhmad:2021adn}. The key to our observations is the distinction, as in \cite{Araujo-Regado:2024dpr}, between extrinsic and intrinsic edge mode QRFs for a subregion. Relative to any extrinsic QRF, the physical Hilbert space \emph{always} factorizes between subregion and complement, and the same is true for intrinsic QRFs in Abelian theories. In contrast to nonrelational constructions, this yields fully distillable relational entanglement entropies. This demonstrates a further advantage of relational entropy definitions that were previously shown to lead to an intrinsic regularization (type transition) of regional gravitational von Neumann algebras \cite{Chandrasekaran:2022cip,Jensen:2023yxy,Kudler-Flam:2023qfl,Kirklin:2024gyl,DeVuyst:2024pop,DeVuyst:2024uvd,Fewster:2024pur,AliAhmad:2024wja}.
Finally, using intrinsic QRFs, we also generalized previous proposals for a magnetic center algebra \cite{Casini:2013rba,Delcamp:2016eya}, revealing that these permit only physically rather arbitrary entropy computations when $G$ is continuous.



\begin{center}
    \textbf{Acknowledgments}
\end{center}
\noindent 
We thank Bianca Dittrich and Aldo Riello for many helpful discussions on their magnetic center construction in \cite{Delcamp:2016eya}, Bilyana Tomova for initial collaboration, and Jesse Woods for discussions on frames in non-Abelian theories \cite{jesse}. This work was supported by funding from Okinawa Institute of Science and Technology Graduate University and also made possible through the support of the ID\# 62312 grant from the John Templeton Foundation, as part of the \href{https://www.templeton.org/grant/the-quantum-information-structure-of-spacetime-qiss-second-phase}{\textit{`The Quantum Information Structure of Spacetime'} Project (QISS)}.~The opinions expressed in this project are those of the authors and do not necessarily reflect the views of the John Templeton Foundation. 

\onecolumngrid
\appendix

\vspace{.7cm}
\section{Lattice Gauge Theory}
\label{app:kin}

In this appendix, we review some fundamental formulas of lattice gauge theory to establish the notation used in the main text and to elaborate on specific results referenced therein.

In this work, we took the convention that, for each oriented edge on the lattice, carrying an Hilbert space $\H_e=L^2(G)$, the commuting left and right unitary actions of $G$ are $U(h)\ket{g}=\ket{hg}$ and $V(h)\ket{g}=\ket{gh^{-1}}$. We recall that the link variable $\hat{g}$ is a quantum operator valued in a representation $\rho$ of $G$. From this, we have the actions:
\begin{align}
    U(h)\hat{g}U^\dagger(h)= \rho(h^{-1})\hat{g}\quad \text{and} \quad V(h)\hat{g}V^\dagger(h)=\hat{g}\rho(h)\,.
\end{align}
So $\hat{g}$ itself transforms in the representation space $\mathbf{V}_\rho^L\otimes\l\mathbf{V}_\rho^R\r^*$ under the action of gauge transformations. These act independently at each of the bulk vertices $v\in\mathcal{V}_{\rm b}:=\mathcal{V}\setminus\left(\mathcal{V}\cap\mathcal{B}\right)$, in our convention as \eqref{eq:gaugetransfconv} (opposite of \cite{VanAcoleyen:2015ccp}):
\be
U_v(h)=\bigotimes_{e\in\mathcal{E}_v^{\rm in}}U_e(h)\bigotimes_{e'\in\mathcal{E}_v^{\rm out}}V_e(h)\,, 
\ee
where $\mathcal{E}_v^{\rm in/out}$ denote the sets of in- and outgoing edges at $v$. Bulk link variables thus transform as 
\be
\hat{g}_e\mapsto U_{v_i}(h_i)U_{v_f}(h_f)\hat{g}_e U_{v_f}^\dag(h_f)U^\dag_{v_i}(h_i)=V_e(h_i)U_e(h_f)\hat{g}_e U_e^\dag(h_f)V_e^\dag(h_i)=\rho(h^{-1}_f)\hat{g}_e\rho(h_i)\,.
\ee

Defining the generators of the left and right actions as
\begin{align}
    U(h)=e^{-i\theta^a \hat{L}^a} \quad \text{and} \quad V(h)=e^{+i\theta^a \hat{R}^a}\,,\quad \text{for}\quad h=e^{i\theta^a T^a}\,.
\end{align}
We remark that $\hat{L}^a$ and $\hat{R}^a$ are both quantum operators but do not live in the Lie algebra $\mathfrak{g}$.
Expanding to linear order, we get the Hilbert space commutators
\begin{align}
\label{eq:RL_G_commutators}
    [\hat{L}^a, \hat{g}]=T^a_\rho \hat{g}\quad \text{and}\quad [\hat{R}^a,\hat{g}]=\hat{g}T^a_\rho\,,
\end{align}
where $T^a_\rho$ are the generators of $\mathfrak{g}$ in representation $\rho$ and matrix indices are implicitly contracted. The Hilbert space brackets for the two sets of generators are defined by
\begin{align}\label{commutators}
    [\hat{L}^a,\hat{L}^b]=-i f^{abc}\hat{L}^c\,,\quad[\hat{R}^a,\hat{R}^b]=i f^{abc}\hat{R}^c\,,\quad [\hat{L}^a,\hat{R}^b]=0\,.
\end{align}
where $f_{abc}$ are the structure constants of $\mathfrak{g}$.
Under group multiplication, the generators transform in the adjoint representation
\begin{align}
    U(h)\hat{L}^aU(h)^\dagger=D(h^{-1})^{ab}\hat{L}^b\quad \text{and}\quad V(h)\hat{R}^aV(h)^\dagger=D(h^{-1})^{ab}\hat{R}^b\,, 
\end{align}
where $D$ denotes the adjoint representation of $G$. 
The generators transform under independent actions of $G_L$ and $G_R$, respectively. If we have an oriented edge, with associated Hilbert space, we can obtain the same equivalent description by ‘‘inverting the arrow" mapping:
$\hat g \to \hat g^\dagger$ and $\hat L^i \leftrightarrow \hat R^i$.

Finally, we introduce the following notation for the generators of gauge transformation at node $v$, such that $U_v(h)=e^{-i\theta^a \mathfrak{G}_v^{\,a}}$,  for $h=e^{i\theta^a T^a}$. From \eqref{eq:gaugetransfconv} we get 
\be
\label{eq:Gauss_Law_generators}
\mathfrak{G}_v^{\,a} = \sum_{e\in\mathcal{E}_v}(\mathfrak{G}_e^v)^a\,,
\q \wth \q
(\mathfrak{G}_e^v)^a =\left \{ 
\begin{array}{rl}
  L_e^a   & \text{if}\;\;e\in\mathcal{E}_v^{\rm in} \\[5pt]
  -R_e^a   & \text{if}\;\;e\in\mathcal{E}_v^{\rm out}
\end{array}
\right.
\ee
These are the operators entering the Gauss law $\mathfrak{G}_v^{\,a}=0$, $\forall v\in \cV_b$.

A central role is played by the Peter–Weyl theorem, which states that for any compact group $G$:
\begin{align}
    L^2(G)\cong \bigoplus_{[r]}\mathcal{H}_r\otimes\mathcal{H}_{r^*}\,,
\end{align}
where $[r]$ labels isomorphism classes of irreducible representations of $G$ and $r^*$ denotes the dual of $r$. This means that there is a natural (dual) basis for $L^2(G)$ respecting this decomposition
\begin{align}
    L^2(G) = \text{span}_\mathbb{C}\{\ket{r,i,j}|\, r\in\{[r]\},\,i,j=1,...,d_r\}\,,\quad \text{where}\quad d_r=\text{dim}(\mathcal{H}_r)\,.
\end{align}
with the relation to the group basis given by
\begin{align}
    \ket{r,i,j} = \sqrt{\frac{d_r}{{\rm Vol}(G)}}\int_G dg \, (\rho^r(g))^i_j \ket{g}\,,
\end{align}
where $\rho^r$ is  is the matrix representative of $g$ in irrep $r$. The action of the generators on the dual basis follows from
\begin{align}
    U(h)\ket{r,i,j}=(\rho^r(h))^i_k\ket{r,k,j}\quad \text{and}\quad V(h)\ket{r,i,j}=(\rho^{r^*}(h))^j_k\ket{r,i,k}\,,
\end{align}
by expanding to linear order
\begin{align}
    \hat{L}^a\ket{r,i,j}=-(T^a_r)^i_k\ket{r,k,j}\quad \text{and}\quad \hat{R}^a\ket{r,i,j}=-(T^a_r)^k_j\ket{r,i,k}\,,
\end{align}
where we used the fact that $(T^a_{r^*})^i_j=-(T^a_r)^j_i$ for any $(r,r^*)$ dual pair. 

This allows us to show that the left and right electric fields are not independent, but are related by the link operator, which carries a color charge. To see this, define the link operator in the adjoint representation
\begin{align}
    \hat{g}_\text{Adj}:=\int [dg] D(g) \ket{g}\bra{g}\,.
\end{align}
We now consider the action of the composite operator $(\hat{g}_\text{Adj})^{ab}\hat{L}^b$ on the dual basis $\ket{r,i,j}$ of $L^2(G)$
\begin{align}
    (\hat{g}_\text{Adj})^{ab}\hat{L}^b\ket{r,i,j}&=-(\hat{g}_\text{Adj})^{ab}(T^b_r)^i_k\ket{r,k,j}\\
    &=-(\hat{g}_\text{Adj})^{ab}(T^b_r)^i_k \sqrt{\frac{d_r}{{\rm Vol}(G)}}\int_G dg \, (\rho^r(g))^k_j \ket{g}\\
    &=-\sqrt{\frac{d_r}{{\rm Vol}(G)}}\int_G dg\,(\rho^r(g))^k_j \underbrace{(T^b_r)^i_k D(g)^{ab}}_{\l\rho^r(g)T^a_r\rho^r(g^{-1})\r^i_k}\ket{g}\\
    &=-\sqrt{\frac{d_r}{{\rm Vol}(G)}}\int_G dg\,\l\rho^r(g)T^a_r\r^i_j\ket{g}\\
    &=-(T^a_r)^k_j\sqrt{\frac{d_r}{{\rm Vol}(G)}}\int_G dg\,(\rho^r(g))^i_k\ket{g}\\
    &=-(T^a_r)^k_j\ket{r,i,k} = \hat{R}^a\ket{r,i,j}\,\quad\forall r,i,j\,.
\end{align}
In going from the third to the fourth line, we made use of the adjoint action of a group on its Lie algebra. In conclusion, we have
\begin{align}\label{eq:l=r}
    \hat{R}^a=(\hat{g}_\text{Adj})^{ab}\hat{L}^b \overset{\dagger}{\iff} \hat{R}^a = \hat{L}^b (\hat{g}_\text{Adj})^{ab}\,.
\end{align}

The left and right Casimir quantum operators are defined as
\begin{align}
    \hat{C}_L:=\hat{L}^a\hat{L}^a \quad \text{and} \quad \hat{C}_R:=\hat{R}^a\hat{R}^a\,,
\end{align}
while the Lie algebra Casimir is given by $C_\rho:=T^a_\rho T^a_\rho$ in the $\rho$ representation. From \eqref{eq:l=r} it immediately follows that $\hat{C}_L=\hat{C}_R$.
We have the following Hilbert space brackets
\begin{align}
    [\hat{C}_L,\hat{g}]=\hat{L}^aT^a_\rho\hat{g}+T^a_\rho\hat{g}\hat{L}^a\quad \text{and} \quad [\hat{C}_R,\hat{g}]=\hat{R}^a\hat{g}T^a_\rho+\hat{g}T^a_\rho\hat{R}^a
\end{align}

The link color charge operator is given by 
\begin{align}
    \hat{E}^a := \hat{L}^a-\hat{R}^a\,.
\end{align}

\section{Kinematical refactorization into a QRF Wilson line basis}\label{app_kinrefactor}

We briefly discuss how to compose the electric fields of two consecutive edges in the unitary refactorization, fundamental in defining Wilson lines frame (see Fig. \ref{fig:Wilson_line_reparam}). For this, 
consider two consecutive edges, oriented in the same direction. Following the arrows, we have e.g. $g_1$ as the first one and then $g_2$. We define the cumulative variables as $g_{v_1} =g_1$, $g_{v_2}=g_2 g_1$,  corresponding to path-ordered products (the first operator applied is $g_1$). For a sequence of edges stopping at node $n$ we would similarily define $g_{v_n} = g_n \dots g_1$, so that $g_n = g_{v_n} g_{v_{n-1}}^{-1}$ and $g_{v_1} = g_1$. We introduce the unitary map: 
\be
\mathfrak{U}: \H_1 \otimes \H_2 \to \H_{v_1} \otimes \H_{v_2}\,,\q\q \ket{g_1}_1 \ket{g_2}_2 \mapsto  \ket{g_1}_{v_1} \ket{g_2 g_1}_{v_2}\,,
\ee
which induces the transformation of wavefunctions $\psi_{12} (g_1, g_2)\mapsto\psi_{v_1v_2} (g_1, g_2 g_1)$ and of group operators
\be
 \mathfrak{U}^\dagger \hat g_{v_1} \mathfrak{U} = \hat g_1\,,\q\q \mathfrak{U}^\dagger \hat g_{v_2} \mathfrak{U} = \hat g_2 \hat g_1\,.
\ee
More explicitly,
\be
 \mathfrak{U} (\Id_{v_1} \otimes \hat g_{v_2})_{ij} \mathfrak{U}^\dagger =  (\hat g_1)_{kj} \otimes (\hat g_2)_{ik} 
\ee

We now examine how the exponentiated electric field operators (left and right actions) transform under 
$\mathfrak{U}$.
  \begin{itemize}
\item Right action on edge 1:
\be
 \mathfrak{U}\, V_1(h) \mathfrak{U}^\dagger \ket{g_{v_1}, g_{v_2}}  &= \mathfrak{U}\, V_1(h) \ket{g_{v_1}}_1 \ket{ g_{v_2} g_{v_1}^{-1}}_2 \notag\\
 &= \mathfrak{U}\, \ket{g_{v_1} h^{-1}}_1 \ket{g_{v_2} g_{v_1}^{-1}}_2 = \ket{ g_{v_1} h^{-1} , g_{v_2} h^{-1}} = V_{v_1}(h) \otimes V_{v_2}(h) \ket{ g_{v_1}, g_{v_2}}
\ee
\item Right action on edge 2:
\be
 \mathfrak{U}\, V_2(h) \mathfrak{U}^\dagger \ket{g_{v_1}, g_{v_2}}  &= \mathfrak{U}\, V_2(h) \ket{g_{v_1}}_1 \ket{ g_{v_2} g_{v_1}^{-1}}_2 \notag\\
 &= \mathfrak{U}\, \ket{g_{v_1}}_1 \ket{ g_{v_2} g_{v_1}^{-1} h^{-1}}_2 = \ket{g_{v_1}, g_{v_2} g_{v_1}^{-1} h^{-1} g_{v_1}}
\ee
This transformation is not a simple tensor product of actions on the $\H_{v_i}$'s.
\item Left action on edge 1:
\be
 \mathfrak{U}\, U_1(h) \mathfrak{U}^\dagger \ket{g_{v_1}, g_{v_2}}  &= \mathfrak{U}\, U_1(h) \ket{g_{v_1}}_1 \ket{g_{W} g_{v_1}^{-1} }_2 \notag\\
 &= \mathfrak{U}\, \ket{ h g_{v_1} }_1 \ket{ g_{v_2} g_{v_1}^{-1}}_2 = \ket{h g_{v_1}, g_{v_2}  g_{v_1}^{-1} h g_{v_1}}
\ee
Again, this is not a factorized action on the $\H_{v_i}$'s.
\item Left action on edge 2:
\be
 \mathfrak{U}\, U_2(h) \mathfrak{U}^\dagger \ket{g_{v_1}, g_{v_2}}  &= \mathfrak{U}\, U_2(h) \ket{g_{v_1}}_1 \ket{g_{v_2} g_{v_1}^{-1}}_2 \notag\\
 &= \mathfrak{U}\, \ket{ g_{v_1}}_1 \ket{h g_{W} g_{v_1}^{-1} }_2 = \ket{g_{v_1}, h g_{v_2}}  = U_{v_2}(h) \ket{g_{v_1}, g_{v_2}}
\ee
\end{itemize}
From the above, only $V_1$ and $U_2$ yield simple, factorized expressions under $\mathfrak{U}$. In particular the first one corresponds to a large gauge transformation, acting on all terms. The last one is a gauge transformations at the endpoint $v_2$ (eventually belonging to $\p \Sigma$). Under this small gauge transformation, the Wilson line is nicely covariant. Finally, at the intermediate node $v_1$, the relevant gauge transformation corresponds to the combined action (cf.~Eq.~\eqref{eq:gaugetransfconv} $V_2(h)\otimes U_1(h)$, which under $\mathfrak{U}$ becomes:
\be
\mathfrak{U}\,\left( V_2(h)\otimes U_1(h) \right)\mathfrak{U}^\dagger = U_{v_1}(h)\otimes \Id_{v_2}
\ee
We propose using the combinations $R_1$, $L_2$ and $L_1 - R_2$  as the effective electric field operators to be dressed. These have well-defined, local transformations under the kinematical unitary map and align with natural interpretations as generators of local gauge transformations, for the left vector fields acting on the Hilbert spaces $H_{v_i}$, and large gauge transformation (frame reorientations) for the joint action of right multiplications on all $v_i$'s factors. By contrast, attempting to study $L_1$ and $R_2$ individually would unnecessarily complicate the analysis due to their non-factorizing action.


\section{Relational Observables}
\label{app:relational_obs}

In this appendix, we discuss how to recover the gauge-invariant operators in non-Abelian lattice gauge theory from relational observables obtained through dressing to a frame.

As a remark, we stress that in our constructions we often invoke the fact that gauge transformations act on all bulk nodes independently. Thus, for any operator $A$
\begin{equation}
    \mathcal{G}(A)=\mathcal{G}_{\rm supp(A)}(A) =\frac{1}{\rm{Vol}(G^{N_A})}\int_{G^{N_A}}\dd\mathbf{g}_A U(\mathbf{g}_A) A U^\dag(\mathbf{g}_A)\,,
\end{equation}
where $N_A$ is the number of vertices on which $A$ has support. This is because the full G-twirl just reduces to the identity on all other tensor factors, reducing appropriately the overall volume pre-factors coming from twirling over the other vertices.  Similarly, for the physical projector $
    \Pi_{\rm phys}=\prod_{v\in\mathcal{V}_{\rm b}}\Pi_v
$,  with 
$
    \Pi_v:=\frac{1}{\rm{Vol}(G)}\int_G\dd{g}U_v(g)\
$. From now on, we adopt the notation $\frac{1}{\text{Vol}(G)}\int_G dg =: \int [dg]$, with the obvious extension for integrals at multiple vertices.

We need to distinguish between two cases, which we analyze separately: dressing to an 1) extrinsic or 2) intrinsic frame. In either case, we will construct observables of the type
\begin{align}
    O_{f_\textbf{S}|\textbf{R}}(\textbf{g}_{\p\Sigma}):=\mathcal{G}_{\p\Sigma}(\ket{\textbf{g}_{\p\Sigma}}\bra{\textbf{g}_{\p\Sigma}}_\textbf{R}\otimes f_\textbf{S})\,.
\end{align}
Here, $f_\textbf{S}$ is any observable in $\mathcal{B}(\mathcal{H}_\textbf{S})$, where $\mathcal{H}_\text{kin}^{\p\Sigma}=\mathcal{H}_\Phi\otimes \mathcal{H}_{\Tilde{\Phi}}\otimes \mathcal{H}_\textbf{S}$ such that $\mathcal{H}_\textbf{S}=\mathcal{H}_\Sigma^\text{phys}\otimes \mathcal{H}_{\p\Sigma\setminus\Tilde{\Phi}}\otimes \mathcal{H}_{\Bar{\Sigma}\setminus\Phi}^\text{phys} $ and $\mathcal{H}_\textbf{R}$ can correspond to either $\mathcal{H}_\Phi$ or $\mathcal{H}_{\Tilde{\Phi}}$. By assumption, $f_\textbf{S}$ is already invariant with respect to bulk gauge transformations either inside or outside the region. This is why only a $\mathcal{G}_{\p\Sigma}$-twirl is needed. So examples of $f_\textbf{S}$ are: open Wilson lines anchored on $\p\Sigma$, open Wilson lines in the complement anchored on $\mathcal{B}$, Wilson loops fully supported inside or outside the region, as well as linear and quadratic combinations of electric fields that can be either internal normals or tangential to the boundary.

Typically, $f_\textbf{S}$ will have support on either one or two boundary vertices. Examples of the latter are any open Wilson lines with both endpoints anchored on $\p\Sigma$. Examples of the former are left and right electric fields on edges hitting the boundary from inside or tangential to it. This means that from the $\mathcal{G}_{\p\Sigma}$-twirl on the whole $\p\Sigma$, only the incoherent average with respect to one or two nodes has a non-trivial effect. 

We also note that if $f_\textbf{S}$ is already invariant with respect to boundary gauge transformations, then $\mathcal{G}_{\p\Sigma}(\ket{\textbf{g}_{\p\Sigma}}\bra{\textbf{g}_{\p\Sigma}}_\textbf{R}\otimes f_\textbf{S})=\mathbb{1}_\textbf{R
}\otimes f_\textbf{S}$. Examples of that are the Casimir operators on any link or Wilson loops obtained by closing inside/outside open Wilson lines with link variables in $\mathcal{H}_{\p\Sigma\setminus\Tilde{\Phi}}$. More generally, operators may behave differently under the twirl: some remain invariant, others acquire a nontrivial relational dressing, and in some cases, the twirl yields a trivial multiple of the identity (including zero).

Below, we will consider the following concrete examples, that we will dress to both intrinsic and extrinsic frames: A) open Wilson line operators with both ends anchored on $\p\Sigma$, and B) left/right electric fields on edges either hitting or belonging to $\p\Sigma$ (i.e. whose vertex belongs to $\p\Sigma$). For case A), we take the open Wilson line $l$ of interest to be oriented from vertex $w$ to vertex $v$, where $v,w\in\p\Sigma$. We denote its link operator by $\hat{g}_l$, which stands for a composite operator obtained from (tensor) multiplying the link variables of the edges making up $l$, in a path-ordered fashion. For example, if the link contains the set $\mathcal{E}_l=\{e|e\in l \}$ of edges whose orientations happen to align to our choice of $l$ orientation, then $\hat{g}_l=\bigotimes_{e\in\mathcal{E}_l}\hat{g}_e$, with identity on all other factors, where there is an implicit contraction of representation indices. For case B), the relevant electric fields depend on the orientation of the corresponding edge $e$ relative to the vertex $v$ on $\p\Sigma$. We will simply denote it by $(\hat{\mathfrak{G}}^v_e)^a$ (see App.~\ref{app:kin}).

Let us denote the gauge action at vertex $v$ by the unitary $\mathcal{U}_v$. This can either correspond to a left or right action, depending on whether the relevant line/edge hits the vertex with inward or outward orientation. That means that, under gauge transformations (labelled by the vector $\textbf{h}_{\p\Sigma}=(h_v|v\in\p\Sigma)$) we have the following transformations
\begin{align}
    \mathcal{U}_v(h_v)\mathcal{U}_w(h_w) \;\hat{g}_l \;\mathcal{U}_v^\dagger(h_v)\mathcal{U}_w^\dagger(h_w) &= \rho(h_v^{-1})\,\hat{g}_l\,\rho(h_w)\\
    \mathcal{U}_e(h_v) \;(\hat{\mathfrak{G}}^v_e)^a \;\mathcal{U}_e(h_v) &=D(h_v^{-1})^{ab}\,(\hat{\mathfrak{G}}^v_e)^b\,,
\end{align}
where $D$ denotes the adjoint representation of the group $G$.

\paragraph*{1) Dressing to Extrinsic Frame:}

We take the extrinsic frame to be in configuration $\ket{\textbf{g}_{\p\Sigma}}_\Phi=\bigotimes_{v\in\p\Sigma}\ket{g_v}_{\Phi_v}$. Because we have chosen it to be oriented towards $\p\Sigma$ we have a left action at all nodes
$U(\textbf{h}_{\p\Sigma})\ket{\textbf{g}_{\p\Sigma}}_\Phi  = \ket{\textbf{h}_{\p\Sigma}\textbf{g}_{\p\Sigma}}_\Phi$. Further, since the operators we want to dress are only (at most) supported on the boundary vertices $v$ and $w$, we have that 
\begin{align}
\mathcal{G}_{\p\Sigma}(\ket{\textbf{g}_{\p\Sigma}}\bra{\textbf{g}_{\p\Sigma}}_\Phi\otimes f_\textbf{S})= \int [dh_v][dh_w] \ket{h_vg_v}\bra{h_vg_v}_{\Phi_v}\otimes \ket{h_wg_w}\bra{h_wg_w}_{\Phi_w}\otimes \mathcal{U}_v(h_v)\mathcal{U}_w(h_w) \;f_\textbf{S} \;\mathcal{U}_v^\dagger(h_v)\mathcal{U}_w^\dagger(h_w)\,,
\end{align}
with identity understood on all other factors. We perform the changes of variables $h_v\mapsto h_vg_v^{-1}$ and $h_w\mapsto h_wg_w^{-1}$ to get
\begin{align}
    \int [dh_v][dh_w] \ket{h_v}\bra{h_v}_{\Phi_v}\otimes \ket{h_w}\bra{h_w}_{\Phi_w}\otimes \mathcal{U}_v(h_vg_v^{-1})\mathcal{U}_w(h_wg_w^{-1}) \;f_\textbf{S} \;\mathcal{U}_v^\dagger(h_vg_v^{-1})\mathcal{U}_w^\dagger(h_wg_w^{-1})\,.
\end{align}
 
\paragraph*{1A. Open Wilson line:}

We take $f_\textbf{S}=\hat{g}_l$. We then have
\begin{align}
\label{eq:Open_W_lines_ext_dress}
    \mathcal{G}_{\p\Sigma}(\ket{\textbf{g}_{\p\Sigma}}\bra{\textbf{g}_{\p\Sigma}}_\Phi\otimes \hat{g}_l)&=\int [dh_v][dh_w] \ket{h_v}\bra{h_v}_{\Phi_v}\otimes \ket{h_w}\bra{h_w}_{\Phi_w}\otimes \rho(g_vh_v^{-1})\,\hat{g}_l\,\rho(h_wg_w^{-1})\notag\\
    &=\rho(g_v)\l\int [dh_v]\,\rho(h_v^{-1})\ket{h_v}\bra{h_v}_{\Phi_v}\r\otimes\hat{g}_l\otimes\l\int [dh_v]\,\rho(h_w)\ket{h_w}\bra{h_w}_{\Phi_w}\r\rho(g_w^{-1})\\
    &=\rho(g_v)\;\hat{g}_{\Phi_v}^\dagger\otimes \hat{g}_l\otimes\hat{g}_{\Phi_w}\;\rho(g_w^{-1})\,.\notag
\end{align}
The RHS corresponds to an open Wilson line anchored at $\mathcal{B}$,  combining the extrinsic Wilson lines and the open line $l$ in a consistent path-ordered fashion.

\paragraph*{1B. Left/Right electric field:}

We take $f_\textbf{S}=(\hat{\mathfrak{G}}^v_e)^a$. We then have
\begin{align}
    \mathcal{G}_{\p\Sigma}(\ket{\textbf{g}_{\p\Sigma}}\bra{\textbf{g}_{\p\Sigma}}_\Phi\otimes (\hat{\mathfrak{G}}^v_e)^a)&=\int [dh_v][dh_w] \ket{h_v}\bra{h_v}_{\Phi_v}\otimes \ket{h_w}\bra{h_w}_{\Phi_w}\otimes D(g_vh_v^{-1})^{ab}\,(\hat{\mathfrak{G}}^v_e)^b\\
    &=D(g_v)^{ac}\l\int [dh_v] \,D(h_v^{-1})^{cb}\ket{h_v}\bra{h_v}_{\Phi_v}\r\otimes (\hat{\mathfrak{G}}^v_e)^b \otimes \mathbb{1}_{\Phi_w}
\end{align}
We notice that the operator in brackets is nothing but (the dagger of) the link operator $\hat{g}_{\Phi_v}$, only that this time in the adjoint representation, instead of the $\rho$ representation. We denote it by $\l\hat{g}_\text{Adj}\r_{\Phi_v}$. We thus have
\begin{align}
\label{eq:L_ext_dress}
    \mathcal{G}_{\p\Sigma}(\ket{\textbf{g}_{\p\Sigma}}\bra{\textbf{g}_{\p\Sigma}}_\Phi\otimes (\hat{\mathfrak{G}}^v_e)^a)=(D(g_v)\l\hat{g}_\text{Adj}\r_{\Phi_v}^\dagger)^{ab}\otimes (\hat{\mathfrak{G}}^v_e)^b \otimes \mathbb{1}_{\Phi_w}\,.
\end{align}

We see that the link left and right electric fields get dressed by the color charge of the configuration of the Wilson line connecting the node $v$ to the asymptotic boundary. The dressed observables correspond to the bulk edge electric fields parallel transported to the asymptotic boundary along the Wilson lines.

Consider now the special case in which the edge $e$ lies on $\p\Sigma$. Then we can dress the color charge operator $\hat{E}^a_e:=\hat{L}^a_e-\hat{R}^a_e$, by dressing the electric fields at both ends
\begin{align}
    \mathcal{G}_{\p\Sigma}(\ket{\textbf{g}_{\p\Sigma}}\bra{\textbf{g}_{\p\Sigma}}_\Phi\otimes \hat{E}_e)&=( D(g_v)\l\hat{g}_\text{Adj}\r_{\Phi_v}^\dagger-\mathbb{1}_{\Phi_v})\otimes\hat{E}_e\otimes(\mathbb{1}_{\Phi_w}-\l\hat{g}_\text{Adj}\r_{\Phi_w}D(g_w^{-1}))\,,
\end{align}
where we remember that $\hat{E}_e$ is a vector of electric fields. This observable measures the total color charge of the configuration of the joint open Wilson line system, combining the two extrinsic Wilson lines with the edge $e$.

\paragraph*{2) Dressing to Intrinsic Frame:}

This is similar in spirit to the previous situation. However, the incompleteness of the intrinsic frame will play a role here. In particular, as discussed in the main body and in appendices \ref{app:factor}, \ref{app:GM}, the intrinsic frame has an isotropy group given by a $\p\Sigma$-uniform action
\begin{align}
    H = \{(h,...,h)|h\in G\}\,.
\end{align}
We will label elements of $H$ by $\mathfrak{h}$. Any element $\textbf{g}_{\p\Sigma}\in \mathcal{G}_{\p\Sigma}$ can be written as $\textbf{g}_{\p\Sigma}=\mathfrak{g}_x \mathfrak{h}$, for some $\mathfrak{g}_x\in\mathcal{G}_{\p\Sigma} $ and $\mathfrak{h}\in H$. Then, we say element $\textbf{g}_{\p\Sigma}\in x$, where $x\in \mathcal{G}_{\p\Sigma}/H$ labels the left cosets of $H$.
This also means that states in $\mathcal{H}_{\Tilde{\Phi}}$ are labeled by the cosets $\{\ket{\mathfrak{g}_x}_{\Tilde{\Phi}}|x\in \mathcal{G}_{\p\Sigma}/H\}$, where we can take any representative of $x$. For $\cN$ the anchor point of the intrinsic frame (e.g.\ north pole of the boundary $\p \Sigma$), we will choose the representative $\mathfrak{g}_x=\textbf{g}_{\p\Sigma}(g_\mathcal{N},...,g_\mathcal{N})^{-1}=(e_\mathcal{N},\mathring{\mathfrak{g}}_x)\,,\, \mathring{\mathfrak{g}}_x\in \mathcal{G}_{\p\Sigma\setminus\mathcal{N}}$. So observables relational to the intrinsic frame read
\begin{align}
    \mathcal{G}_{\p\Sigma}(\ket{\textbf{g}_{\p\Sigma}}\bra{\textbf{g}_{\p\Sigma}}_{\Tilde{\Phi}}\otimes f_\textbf{S})&=\int_{\mathcal{G}_{\p\Sigma}}[d\textbf{h}_{\p\Sigma}]\ket{\textbf{h}_{\p\Sigma}\textbf{g}_{\p\Sigma}}\bra{\textbf{h}_{\p\Sigma}\textbf{g}_{\p\Sigma}}_{\Tilde{\Phi}}\otimes U_\textbf{S}(\textbf{h}_{\p\Sigma}) f_\textbf{S}U_\textbf{S}^\dagger(\textbf{h}_{\p\Sigma})\\
    &=\int_{\mathcal{G}_{\p\Sigma}}[d\textbf{h}_{\p\Sigma}]\ket{\textbf{h}_{\p\Sigma}}\bra{\textbf{h}_{\p\Sigma}}_{\Tilde{\Phi}}\otimes U_\textbf{S}(\textbf{h}_{\p\Sigma}\textbf{g}_{\p\Sigma}^{-1}) f_\textbf{S}U_\textbf{S}^\dagger(\textbf{h}_{\p\Sigma}\textbf{g}_{\p\Sigma}^{-1})\\
    &=\int _{\mathcal{G}_{\p\Sigma\setminus\mathcal{N}}} [\mathcal{D} \mathring{\mathfrak{h}}_x]\ket{\mathring{\mathfrak{h}}_x}\bra{\mathring{\mathfrak{h}}_x}_{\Tilde{\Phi}} \otimes \int_{H}[d\mathfrak{h}] U_{\textbf{S}}(\mathfrak{h}_x\mathfrak{h}\textbf{g}_{\p\Sigma}^{-1})f_\textbf{S}U_{\textbf{S}}^\dagger(\mathfrak{h}_x\mathfrak{h}\textbf{g}_{\p\Sigma}^{-1})
\end{align}
In going to the second line, we performed the change of variables $\textbf{h}_{\p\Sigma}\mapsto\textbf{h}_{\p\Sigma}\textbf{g}_{\p\Sigma}^{-1}$. 
We note that even though the coset space $\mathcal{G}_{\p\Sigma}/H\cong \mathcal{G}_{\p\Sigma\setminus\mathcal{N}}$ as a manifold, it does \emph{not} have the structure of a group, since $H$ is not normal.  We denote its measure by $\mathcal{D}$ to remind the reader that it is only left-invariant. Nevertheless, we still have the fact that the coset states provide a resolution of the identity for $\mathcal{H}_{\Tilde{\Phi}}$. We now perform the change of variables $\mathfrak{h}\mapsto \textbf{g}_{\p\Sigma}^{-1}\mathfrak{h}\textbf{g}_{\p\Sigma}$, which maps the isotropy group to $H_{\textbf{g}_{\p\Sigma}}:=\textbf{g}_{\p\Sigma}H\textbf{g}_{\p\Sigma}^{-1}\cong H$ (with same Haar measure)
\begin{align}
    \mathcal{G}_{\p\Sigma}(\ket{\textbf{g}_{\p\Sigma}}\bra{\textbf{g}_{\p\Sigma}}_{\Tilde{\Phi}}\otimes f_\textbf{S})&=\int _{\mathcal{G}_{\p\Sigma\setminus\mathcal{N}}} [\mathcal{D} \mathring{\mathfrak{h}}_x]\ket{\mathring{\mathfrak{h}}_x}\bra{\mathring{\mathfrak{h}}_x}_{\Tilde{\Phi}} \otimes U_{\textbf{S}}(\mathfrak{h}_x\textbf{g}_{\p\Sigma}^{-1})\l \int_{H_{\textbf{g}_{\p\Sigma}}}[d\mathfrak{h}]U_{\textbf{S}}(\mathfrak{h})f_\textbf{S}U_{\textbf{S}}^\dagger(\mathfrak{h})\r U_{\textbf{S}}^\dagger(\mathfrak{h}_x\textbf{g}_{\p\Sigma}^{-1})
\end{align}
We note that $H_{\textbf{g}_{\p\Sigma}}$ is the isotropy group for the frame orientation $\ket{\textbf{g}_{\p\Sigma}}_{\Tilde{\Phi}}$.

The conclusion is that, due to the incompleteness of the frame, all observables dressed to it are  averaged over the isotropy group. So the algebra of relational observables for the intrinsic frame is \emph{coarser} than for the extrinsic one
\begin{align}
    \mathcal{G}_{\p\Sigma}(\ket{\textbf{g}_{\p\Sigma}}\bra{\textbf{g}_{\p\Sigma}}_{\Tilde{\Phi}}\otimes \mathcal{B}(\mathcal{H}_\textbf{S})) \cong \mathcal{G}_{\p\Sigma}(\ket{\textbf{g}_{\p\Sigma}}\bra{\textbf{g}_{\p\Sigma}}_{\Tilde{\Phi}}\otimes \mathfrak{H}_{\textbf{g}_{\p\Sigma}}\l\mathcal{B}(\mathcal{H}_\textbf{S})\r)\,,
\end{align}
where $\mathfrak{H}_{\textbf{g}_{\p\Sigma}}(\cdot)$ denotes the $H_{\textbf{g}_{\p\Sigma}}$-twirl. This will play a crucial role in the properties of the intrinsic algebra (see App. \ref{app:alg_content}). We note that $H_{\textbf{g}_{\p\Sigma}}=H_{\mathfrak{g}_x}$ for any representative $\mathfrak{g}_x$.

Going back to our concrete examples, where $f_\textbf{S}$ is supported only on vertices $v$ and $w$, we need to evaluate
\begin{align}
    \int [dh_v][dh_w] \ket{h_v}\bra{h_v}_{\Tilde{\Phi}_v}\otimes \ket{h_w}\bra{h_w}_{\Tilde{\Phi}_w} \otimes \mathcal{U}_v(h_vg_v^{-1})\mathcal{U}_w(h_wg_w^{-1}) \;\mathfrak{H}_{\textbf{g}_{\p\Sigma}}(f_\textbf{S}) \;\mathcal{U}_v^\dagger(h_vg_v^{-1})\mathcal{U}_w^\dagger(h_wg_w^{-1})\,,
\end{align}
where the isotropy group average is itself only supported on (at most) the two vertices $v$ and $w$.

\paragraph*{2A. Open Wilson line:} We take $f_\textbf{S}=\hat{g}_l$. We note that the action of the isotropy group $H_{\textbf{g}_{\p\Sigma}}\cong G$ on the two endpoints of the open link $l$ is precisely isomorphic to the action of the diagonal subgroup of the independent left/right  group $G_L\times G_R$. The group average with respect to it will project onto its singlet representation. Its isotropy average is 
\begin{align}
    \mathfrak{H}(\hat{g}_l)=\int_H [dh]\, \rho(h^{-1})\,\hat{g}_l\,\rho(h) = \Tr_\rho(\hat{g}_l) \mathbb{1}_\rho\,,
\end{align}
where $\Tr_\rho$ is the trace in the representation $\rho$.
$\hat{g}_l$ lives in the representation space $\mathbf{V}^L_\rho\otimes (\mathbf{V}^R_\rho)^*$ of $G_L\times G_R$ and so the integral projects onto the singlet sector of $\mathbf{V}_\rho\otimes(\mathbf{V}_\rho)^*$, which is given by the trace. Averaging over $H_{\textbf{g}_{\p\Sigma}}$ gives the same result with the appropriate modification
\begin{align}
    \mathfrak{H}_{\textbf{g}_{\p\Sigma}}(\hat{g}_l)=\Tr_\rho\l\rho(g_v^{-1})\,\hat{g}_l\,\rho(g_w)\r\rho(g_vg_w^{-1})\,.
\end{align}

Plugging this in, we get the relational observable 
\begin{align}
    \mathcal{G}_{\p\Sigma}(\ket{\textbf{g}_{\p\Sigma}}\bra{\textbf{g}_{\p\Sigma}}_{\Tilde{\Phi}}\otimes \hat{g}_l)&= \rho(g_vg_w^{-1})\,\Tr_\rho\left\{ \l\int [dh_v]\,\rho(h_v^{-1}) \ket{h_v}\bra{h_v}_{\Tilde{\Phi}_v} \r \otimes \hat{g}_l \otimes \l\int [dh_w]\,\rho(h_w)\ket{h_w}\bra{h_w}_{\Tilde{\Phi}_w}\r \right\}\\
    &=\rho(g_vg_w^{-1})\,\Tr_\rho\l\hat{g}_{\Tilde{\Phi}_v}^\dagger\otimes\hat{g}_l\otimes \hat{g}_{\Tilde{\Phi}_w} \r\,.\label{eq:intloop}
\end{align}
This is nothing but a Wilson loop consisting of the link $l$ and closed with the intrinsic frame, with a consistent path-orientation.

\paragraph*{2B. Left/Right electric field:}
We now take $f_\textbf{S}=(\hat{\mathfrak{G}}^v_e)^a$, which lives in the adjoint representation space of the structure group $G$ at $v$ and hence also of $H$. For non-Abelian groups, the adjoint has no support on the singlet, so its isotropy average  vanishes\footnote{For Abelian theories, the adjoint representation coincides with the singlet representation, so the electric field itself is invariant under the isotropy group $H$, and its dressed observable yields back the undressed one.}
\begin{align}
    \mathfrak{H}((\hat{\mathfrak{G}}^v_e)^a)=\int_H [dh]\, D(h^{-1})^{ab}\,(\hat{\mathfrak{G}}^v_e)^b = 0\,.
\end{align}
The same result holds for $\mathfrak{H}_{\textbf{g}_{\p\Sigma}}$.  The electric fields dressed with the intrinsic frame are trivial. This is to be expected, since in non-Abelian gauge theory, the generators are not gauge-invariant, but rather only the corresponding Casimirs are. Contrast this with the situation of extrinsic dressing, in which case the electric fields are parallel transported to the global boundary, where they are allowed to be charged under large gauge transformations.  If we restrict ourselves to bulk-supported degrees of freedom, the only way to obtain gauge-invariant observables from electric fields is by taking second order combinations in which the Lie algebra indices are contracted. We turn to explaining this now.

\paragraph*{2C. Casimirs:}

Consider a general quadratic operator of electric fields
\begin{align}
    (\hat{\mathfrak{G}}^v_e)^a(\hat{\mathfrak{G}}^w_{e'})^b\,,
\end{align}
where $e$ and $e'$ are some edges connected to (potentially distinct) vertices $v$ and $w$. This is not gauge-invariant. Let us take this as our $f_\textbf{S}$ and dress it to the intrinsic frame. So to start, we take $v$ and $w$ to lie on $\p\Sigma$. The isotropy average is now non-trivial, since the product of two adjoint representations always has support on the trivial representation
\begin{align}
    \mathfrak{H}((\hat{\mathfrak{G}}^v_e)^a(\hat{\mathfrak{G}}^w_{e'})^b)=\int_H [dh] D(h^{-1})^{ac}(\hat{\mathfrak{G}}^v_e)^cD(h^{-1})^{bd}(\hat{\mathfrak{G}}^w_{e'})^d=\delta^{ab}\delta^{cd}(\hat{\mathfrak{G}}^v_e)^c(\hat{\mathfrak{G}}^w_{e'})^d\,.
\end{align}
So the isotropy average contracts the Lie algebra indices. For the orientation-dependent average we get
\begin{align}
    \mathfrak{H}_{\textbf{g}_{\p\Sigma}}((\hat{\mathfrak{G}}^v_e)^a(\hat{\mathfrak{G}}^w_{e'})^b)= D(g_vg_w^{-1})^{ab}D(g_vg_w^{-1})^{cd}(\hat{\mathfrak{G}}^v_e)^c(\hat{\mathfrak{G}}^w_{e'})^d\,.
\end{align}

The relational observable becomes
\begin{align}
    \mathcal{G}_{\p\Sigma}(\ket{\textbf{g}_{\p\Sigma}}\bra{\textbf{g}_{\p\Sigma}}_{\Tilde{\Phi}}\otimes (\hat{\mathfrak{G}}^v_e)^a(\hat{\mathfrak{G}}^w_{e'})^b)&=D(g_vg_w^{-1})^{ab}\int[dh_v][dh_w]\ket{h_v}\bra{h_v}_{\Tilde{\Phi}_v}\otimes \ket{h_w}\bra{h_w}_{\Tilde{\Phi}_w}\otimes D(h_vh_w^{-1})^{cd}(\hat{\mathfrak{G}}^v_e)^c(\hat{\mathfrak{G}}^w_{e'})^d\\
    &=D(g_vg_w^{-1})^{ab}\l(\hat{g}_\text{Adj})_{\Tilde{\Phi}_v}\otimes (\hat{g}_\text{Adj})_{\Tilde{\Phi}_w}^\dagger\r^{cd}(\hat{\mathfrak{G}}^v_e)^c(\hat{\mathfrak{G}}^w_{e'})^d\,.\label{eq:c28}
\end{align}
This observable encodes the product of the two electric fields, having parallel-transported one of them towards the location of the other along the Wilson line on the intrinsic tree connecting the two vertices. Notice that for the case that $v=w$ we get
\begin{align}
    \mathcal{G}_{\p\Sigma}(\ket{\textbf{g}_{\p\Sigma}}\bra{\textbf{g}_{\p\Sigma}}_{\Tilde{\Phi}}\otimes (\hat{\mathfrak{G}}^v_e)^a(\hat{\mathfrak{G}}^v_{e'})^b)&=\delta^{ab}\delta^{cd}(\hat{\mathfrak{G}}^v_e)^c(\hat{\mathfrak{G}}^v_{e'})^d)\,,
\end{align}
for any two edges $e$, $e'$ hitting the same vertex. As expected, to multiply two electric fields at the same vertex, there is no need of parallel-transport. Nevertheless, the Lie algebra indices must be contracted in order to get a gauge-invariant observable.

More generally, the observable
\begin{align}\label{eq:c30}(\hat{\mathfrak{G}}^v_e)^a(\hat{g}_\text{Adj})_\ell^{ab}(\hat{\mathfrak{G}}^w_{e'})^b\,,
\end{align}
for any Wilson line $\ell$ connecting vertices $v$ and $w$, is a gauge-invariant one. The intrinsic frame just happens to pick a particular such path. But all paths lead to valid dressed observables. Crucially, these are distinct observables. This is a manifestation of the fact that parallel-transport is path-dependent, due to the presence of magnetic curvature. With this consideration, one can clearly relax the condition that $v$ and $w$ lie on $\p\Sigma$. The above prescription tells us how to dress products of electric fields on any two vertices. These considerations will be crucial for understand the various physical algebras, see App.~\ref{app:alg_content}.

\section{Gauge-invariant factorizations on the physical Hilbert space}\label{app:factor}

We start with $\Hkin = \bigotimes_{e} L^2(G)$ corresponding to the original lattice. We then perform the refactorization described in the main body and the end of App.~\ref{app:kin}
to define our  frames. This leads to a modified lattice, in which edges along the frame Wilson lines have been removed and replaced with node degrees of freedom. This is akin to a lattice gauge theory with ``matter'' insertions at those nodes (although usually matter lives in finite-dimensional representations). Here, the frames at node $v$ live in $L^2(G)_{\Phi_v}$ and $L^2(G)_{\Tilde{\Phi}_v}$, respectively. Because we have defined the frames with inward orientation, we have the following transformation properties
\begin{align}
    U_v(g_v)U_{v_0}(g_{v_0})\triangleright\ket{g}_{\Phi_v} &= \ket{g_v g g_{v_0}^{-1}}_{\Phi_v}\,,\\
    U_v(g_v)U_{v_0}(g_{v_0})\triangleright\ket{g}_{\Tilde{\Phi}_v} &= \ket{g_v g g_{v_0}^{-1}}_{\Tilde{\Phi}_v}\,,
\end{align}
where in the case of $\Phi$, $v_0$ denotes the anchor node of $\Phi_v$ on the global boundary $\mathcal{B}$, and, in the case of $\tilde\Phi$, $v_0$ denotes the root of the tree on $\p\Sigma$ defining the frame.

We define the notation $\Phi := \bigcup_{v\in\p\Sigma}\Phi_v$ and $\Tilde{\Phi}:=\bigcup_{v\in\p\Sigma, v\neq\mathcal{N}}\Tilde{\Phi}_v$ to denote the collection of subregion boundary frames. We further define $\text{sub}:= \Sigma \cup \l\p\Sigma\setminus\Tilde{\Phi}\r$ and $\text{compl}:=\Bar{\Sigma}\setminus\Phi$, to represent the subregion, respectively complement, degrees of freedom in the modified lattice, having removed the frame degrees of freedom on the subregion boundary. We then have $\Hkin= \mathcal{H}_\text{sub}\otimes\mathcal{H}_\text{compl}\otimes\mathcal{H}_{\Tilde{\Phi}}\otimes\mathcal{H}_\Phi$. 

Each edge or frame degree of freedom transforms under the product group $G_L\times G_R$ at the corresponding endpoints. For a node $v\in\p\Sigma$  we define the generator of gauge transformations to be: $U_v(h)=\bigotimes_{e\in\mathcal{E}_v^{\rm in}}U_e(h)\bigotimes_{e'\in\mathcal{E}_v^{\rm out}}V_{e'}(h)\bigotimes U_{\Tilde{\Phi}_v}(h)\bigotimes U_{\Phi_v}(h)$, where $\mathcal{E}_v^{\rm in/out}$ denote the sets of in- and outgoing edges at $v$. We define the boundary gauge group $\mathcal{G}_{\p\Sigma}$ to be the cartesian product of the group $G_v$ acting at each boundary node $v\in\p\Sigma$ independently, 
and we denote a joint gauge transformation on the subregion boundary by $U(\textbf{g}_{\p\Sigma}):=\bigotimes_{v\in\p\Sigma}U_v(g_v)$, with $\textbf{g}_{\p\Sigma}\in \mathcal{G}_{\p\Sigma}$. We can repackage this into $U(\textbf{g}_{\p\Sigma})=U_\text{sub}(\textbf{g}_{\p\Sigma})\otimes U_\text{compl}(\textbf{g}_{\p\Sigma})\otimes U_{\Tilde{\Phi}}(\textbf{g}_{\p\Sigma})\otimes U_\Phi(\textbf{g}_{\p\Sigma})$, with the definitions above. 

We first impose the Gauss constraint at every bulk node not belonging to the corner $\p\Sigma$. The corresponding group averaging projectors act locally on $\Hkin$, more precisely, separately on the $\mathcal{H}_\text{sub}$ and $\mathcal{H}_\text{compl}$ factors. This leads to the Hilbert space factors $\mathcal{H}_\text{sub}^\text{phys}$ and $\mathcal{H}_\text{compl}^\text{phys}$, respectively. The resulting total Hilbert space is still ``kinematical'' with respect to the group $\mathcal{G}_{\p\Sigma}$ which motivates the notation
\begin{align} \mathcal{H}_\text{kin}^{\p\Sigma}:=\mathcal{H}_\text{sub}^\text{phys}\otimes \mathcal{H}_\text{compl}^\text{phys}\otimes  \mathcal{H}_{\Tilde{\Phi}}\otimes\mathcal{H}_\Phi\,.
\end{align}

We now proceed to impose the Gauss constraint at the boundary nodes. To do that, we act with the joint coherent group averaging projector
\begin{align}
    \mathbf{\Pi}^\text{phys}_{\p\Sigma}:= \int [d\textbf{g}_{\p\Sigma}]\, U(\textbf{g}_{\p\Sigma})
\end{align}
on the kinematical states.  The following is an application of the formalism developed in \cite{delaHamette:2021oex}.

\paragraph*{Extrinsic perspective:}

Let us take the extrinsic frame as our frame field of interest on $\p\Sigma$ and define
\begin{align}
    \mathcal{H}_\textbf{S}:= \mathcal{H}^\text{phys}_\text{sub}\otimes \mathcal{H}^\text{phys}_\text{compl}\otimes \mathcal{H}_{\Tilde{\Phi}}\,.
\end{align}
Then, our kinematical system (wrt to the gauge group $\mathcal{G}_{\p\Sigma}$) can be written as
\begin{align}
    \mathcal{H}_\text{kin}^{\p\Sigma}:=\mathcal{H}_\Phi\otimes \mathcal{H}_\textbf{S}\,,\quad \text{with}\quad U(\textbf{g}_{\p\Sigma})=U_\Phi(\textbf{g}_{\p\Sigma})\otimes U_\textbf{S}(\textbf{g}_{\p\Sigma})\,.
\end{align}
The physical Hilbert space is simply $\Hphys = \mathbf{\Pi}^\text{phys}_{\p\Sigma}\l\mathcal{H}_\text{kin}^{\p\Sigma}\r$. 
One can now define a lattice version of the Page-Wootters (PW) conditioning procedure, which will induce a frame-dependent TPS on $\Hphys$ \cite{Hoehn:2023ehz,DeVuyst:2024uvd}. We define the PW conditioning map associated to the frame $\Phi$ as 
\begin{align}
    \mathcal{R}_\Phi(\textbf{g}_{\p\Sigma}):=\bra{\textbf{g}_{\p\Sigma}}_\Phi\otimes \mathbb{1}_\textbf{S}\,,\quad \text{where}\quad \ket{\textbf{g}_{\p\Sigma}}_\Phi:=\bigotimes_{v\in\p\Sigma}\ket{g_v}_{\Phi_v}\,.
\end{align}
The latter is merely a choice of frame orientation for $\Phi$ on $\p\Sigma$. The PW conditioning is, therefore, a map $\mathcal{R}_\Phi(\textbf{g}_{\p\Sigma}):\Hphys \to \mathcal{H}_\textbf{S}$. This map can be shown to be invertible on its image, with inverse \cite{delaHamette:2021oex}:
\begin{align}\label{eq:Rinv}
    \l\mathcal{R}_\Phi(\textbf{g}_{\p\Sigma})\r^{-1} = \mathbf{\Pi}^\text{phys}_{\p\Sigma} \l\ket{\textbf{g}_{\p\Sigma}}_\Phi\otimes \mathbb{1}_\textbf{S}\r\,.
\end{align}
Crucially, it was also shown in \cite{delaHamette:2021oex}, that for complete ideal frames, like $\Phi$, the image under $\mathcal{R}^\Phi(\textbf{g}_{\p\Sigma})$ is precisely $\mathcal{H}_\textbf{S}$. This means that the PW conditioning provides a Hilbert space isomorphism between the physical Hilbert space and the Hilbert space of the rest of the degrees of freedom, excluding the frame
\begin{align}
    \Hphys \cong \mathcal{H}_\textbf{S}= \mathcal{H}^\text{phys}_\text{sub}\otimes \mathcal{H}^\text{phys}_\text{compl}\otimes \mathcal{H}_{\Tilde{\Phi}}\,,
\end{align}
which is a tensor product of Hilbert spaces. Crucially, this defines a \emph{gauge-invariant} TPS on $\Hphys$, as noted in \cite{Hoehn:2023ehz}. Indeed, let us generalize the argument of \cite[Sec.~III]{Hoehn:2023ehz} to non-Abelian theories. A TPS on an abstract Hilbert space $\H$ is an equivalence class of isomorphisms $\mathbf{T}:\H\to\bigotimes_\alpha\H_\alpha$ such that $\mathbf{T}_1\sim\mathbf{T}_2$ if $\mathbf{T}_2(\mathbf{T}_1)^\dag$ is a product of multilocal unitaries $\otimes_\alpha U_\alpha$ (and permutations of subsystem factors of equal dimension). We have the unitary
\begin{equation}\label{eq:extfactor}
    \mathcal{R}_\Phi(\mathbf{g}_{\p\Sigma}):\Hphys\to \mathcal{H}^\text{phys}_\text{sub}\otimes \mathcal{H}^\text{phys}_\text{compl}\otimes \mathcal{H}_{\Tilde{\Phi}}\,.
\end{equation}
To establish that the PW reduction map $\mathcal{R}_\Phi(\mathbf{g}_{\p\Sigma})$, i.e.\ $\Phi$'s internal QRF perspective, is a (representative of a) gauge-invariant TPS on $\Hphys$, we need to show that it does not depend on $\Phi$'s orientation, i.e.\ on the gauge. Indeed, invoking the left covariance of the QRF orientation basis under gauge transformations, we have ${\mathcal{R}_\Phi(\mathbf{g}_{\p\Sigma})\sim\mathcal{R}_\Phi(\mathbf{g}'_{\p\Sigma})}$, $\forall\,\mathbf{g}_{\p\Sigma},\mathbf{g}'_{\p\Sigma}\in\mathcal{G}_{\p\Sigma}$, because, on the domain $\Hphys$, they are related by a multilocal unitary $\mathcal{R}_\Phi(\mathbf{g}_{\p\Sigma}'\mathbf{g}_{\p\Sigma})=\left(U_{\rm sub}^{\rm phys}(\mathbf{g}_{\p\Sigma}')\otimes U_{\rm compl}^{\rm phys}(\mathbf{g}'_{\p\Sigma})\otimes U_{\tilde\Phi}(\mathbf{g}'_{\p\Sigma})\right)\mathcal{R}_\Phi(\mathbf{g}_{\p\Sigma})$. Hence, the equivalence class $\mathcal{R}_\Phi:=\big[\mathcal{R}_\Phi(\mathbf{g}_{\p\Sigma})\big]$ is a TPS on $\Hphys$. This TPS can equivalently be defined in terms of commuting subalgebras of relational observables associated with the QRF \cite{Hoehn:2023ehz}. Non-ideal and/or incomplete QRFs map onto a subspace of the above TPS \cite{delaHamette:2021oex}.

So, from the perspective of the extrinsic frame, the physical Hilbert space \emph{does} factorize between the subregion and the complement. The important caveat is that the complement Hilbert space does not include the frame degrees of freedom. This is, thus, consistent with the statement that one cannot factorize between ``inside'' and ``outside'' if we were to include all the degrees of freedom on both sides. One can immediately also see that the TPS is frame-dependent. Had we chosen a different Wilson line system to define our $\Phi$, we would have ended up not only with a different $\mathcal{H}^\text{phys}_\text{compl}$ factor, but also the subsystem itself would contain different observables, as detailed in App. \ref{app:subrel}. This is a manifestation of subsystem relativity on the lattice \cite{Hoehn:2023ehz}.

Another interesting feature of this factorization is the extra factor $\mathcal{H}_{\Tilde{\Phi}}$. This will be shown in App. \ref{app:GM} to correspond to the corner Goldstone mode degrees of freedom appearing in continuum classical setups \cite{Araujo-Regado:2024dpr,Ball:2024hqe,Ball:2024gti}.

\paragraph*{Intrinsic Perspective:}

We now split our kinematical degrees of freedom in terms of the $\Tilde{\Phi}$ frame and the rest, re-adapting the above notation for \textbf{S}
\begin{align}\label{eq:intframepers}
    \mathcal{H}_\textbf{S}:= \mathcal{H}^\text{phys}_\text{sub}\otimes \mathcal{H}^\text{phys}_\text{compl}\otimes \mathcal{H}_\Phi\,,\quad \mathcal{H}_\text{kin}^{\p\Sigma}=\mathcal{H}_{\Tilde{\Phi}}\otimes\mathcal{H}_\textbf{S}\,.
\end{align}
As a manifestation of the intrinsic frame being incomplete, the PW conditioning map is now
\begin{align}
    \mathcal{R}_{\Tilde{\Phi}}(\mathring{\textbf{g}}_{\p\Sigma}):=\bra{\mathring{\textbf{g}}_{\p\Sigma}}_{\Tilde{\Phi}}\otimes \mathbb{1}_\textbf{S}\,,\quad \text{where}\quad \ket{\mathring{\textbf{g}}_{\p\Sigma}}_{\Tilde{\Phi}}:=\bigotimes_{v\in\p\Sigma, v\neq \mathcal{N}}\ket{g_v}_{\Tilde{\Phi}_v}\,.
\end{align}
The notation $\mathring{\textbf{g}}_{\p\Sigma}$ refers to an element of $\mathcal{G}_{\p\Sigma\setminus \mathcal{N}}$. In other words, the orientation labels of $\Tilde{\Phi}$ do not parameterize the whole group orbit of $\mathcal{G}_{\p\Sigma}$. This map is still invertible on its image, with the analogous definition for the inverse: $\l\mathcal{R}_{\Tilde{\Phi}}(\mathring{\textbf{g}}_{\p\Sigma})\r^{-1} = \mathbf{\Pi}^\text{phys}_{\p\Sigma} \l\ket{\mathring{\textbf{g}}_{\p\Sigma}}_{\Tilde{\Phi}}\otimes \mathbb{1}_\textbf{S}\r$. However, the image is a strict subspace of $\mathcal{H}_\textbf{S}$. In particular, as shown in \cite{delaHamette:2021oex}, the projector on $\mathcal{H}_\textbf{S}$ takes the form
\begin{align}
    \mathbf{P}^\text{phys}_\textbf{S}(\mathring{\textbf{g}}_{\p\Sigma})&:=\mathcal{R}_{\Tilde{\Phi}}(\mathring{\textbf{g}}_{\p\Sigma})\l\mathcal{R}_{\Tilde{\Phi}}(\mathring{\textbf{g}}_{\p\Sigma})\r^{-1}\\
    &=\l\bra{\mathring{\textbf{g}}_{\p\Sigma}}_{\Tilde{\Phi}}\otimes \mathbb{1}_\textbf{S}\r\mathbf{\Pi}^\text{phys}_{\p\Sigma}\l\ket{\mathring{\textbf{g}}_{\p\Sigma}}_{\Tilde{\Phi}}\otimes \mathbb{1}_\textbf{S}\r\\
    &= \int [d\textbf{g}_{\p\Sigma}']\l\bra{\mathring{\textbf{g}}_{\p\Sigma}}_{\Tilde{\Phi}}\otimes \mathbb{1}_\textbf{S}\r U(\textbf{g}_{\p\Sigma}')\l\ket{\mathring{\textbf{g}}_{\p\Sigma}}_{\Tilde{\Phi}}\otimes \mathbb{1}_\textbf{S}\r\\
    &=\int [dg_\mathcal{N}'][d \mathring{\textbf{g}}_{\p\Sigma}']\braket{\mathring{\textbf{g}}_{\p\Sigma}\,}{\,\mathring{\textbf{g}}_{\p\Sigma}'\,\mathring{\textbf{g}}_{\p\Sigma}g_\mathcal{N}'^{-1}}_{\Tilde{\Phi}}U_\textbf{S}(\textbf{g}_{\p\Sigma}')\,, 
\end{align}
where the notation for the inner product means $\braket{\mathring{\textbf{g}}_{\p\Sigma}\,}{\,\mathring{\textbf{g}}_{\p\Sigma}'\,\mathring{\textbf{g}}_{\p\Sigma}{g'}_\mathcal{N}^{-1}}_{\Tilde{\Phi}}=\prod_{v\in\p\Sigma, v\neq\mathcal{N}}\braket{g_v}{g'_v g_v {g'}_\mathcal{N}^{-1}}_{\Tilde{\Phi}_v}$. Using the perfect distinguishability of the frame coherent states, we get 
\begin{align}
\mathbf{P}^\text{phys}_\textbf{S}(\mathring{\textbf{g}}_{\p\Sigma}) = \int [dh] \,U_{\textbf{S}, \mathcal{N}}(h)\bigotimes_{v\in\p\Sigma, v\neq\mathcal{N}}U_{\textbf{S}, v}(g_v\,h\,g_v^{-1})\,,
\end{align}
where the unitaries in the expression above are implementing $G$-transformations on the system $\textbf{S}$, by $h$ at the anchor point $\mathcal{N}$ and by $g_v\,h\,g_v^{-1}$ at the other nodes $\p\Sigma\ni v\neq\mathcal{N}$.
This $\mathbf{P}^\text{phys}_\textbf{S}$ is thus an operator acting on $\mathcal{H}_\textbf{S}$ only. It is a projector, because $\l\mathbf{P}^\text{phys}_\textbf{S}(\mathring{\textbf{g}}_{\p\Sigma})\r^2=\mathbf{P}^\text{phys}_\textbf{S}(\mathring{\textbf{g}}_{\p\Sigma})$ and $\l\mathbf{P}^\text{phys}_\textbf{S}(\mathring{\textbf{g}}_{\p\Sigma})\r^\dagger=\mathbf{P}^\text{phys}_\textbf{S}(\mathring{\textbf{g}}_{\p\Sigma})$, where the proof is analogous to the one in \cite{delaHamette:2021oex}. Crucially, $\mathbf{P}^\text{phys}_\textbf{S}(\mathring{\textbf{g}}_{\p\Sigma})\neq \mathbb{1}_\textbf{S}$, so it is a non-trivial projector. At first this might appear somewhat surprising since the frame is ideal (and in mechanical settings, ideal frames lead to trivial projectors). However, the frame is incomplete and so it has a non-trivial isotropy subgroup  $H\subset\mathcal{G}_{\p\Sigma}$. Following \cite{delaHamette:2021oex}, when this happens, the physical Hilbert space in the frame perspective corresponds to the subspace of $\mathcal{H}_\textbf{S}$ that is invariant under $H$; the frame can only resolve what is invariant under its isotropy group. This is precisely what $\mathbf{P}^\text{phys}_\textbf{S}(\mathring{\textbf{g}}_{\p\Sigma})$ is doing. To see this, note that when $\Tilde{\Phi}$ is in the orientation $\ket{\mathring{\textbf{g}}_{\p\Sigma}}_{\Tilde{\Phi}}=\bigotimes_{v\in\p\Sigma, v\neq \mathcal{N}}\ket{g_v}_{\Tilde{\Phi}_v}$ we have that
\begin{align}
     \l U_{\Tilde{\Phi}, \mathcal{N}}(h)\bigotimes_{v\in\p\Sigma, v\neq\mathcal{N}}U_{\Tilde{\Phi}, v}(g_v\,h\,g_v^{-1})\r \ket{\mathring{\textbf{g}}_{\p\Sigma}}_{\Tilde{\Phi}}&= \bigotimes_{v\in\p\Sigma, v\neq\mathcal{N}}U_{\Tilde{\Phi}, \mathcal{N}}(h)U_{\Tilde{\Phi}, v}(g_v\,h\,g_v^{-1})\ket{g_v}_{\Tilde{\Phi}_v}\\
     &=\bigotimes_{v\in\p\Sigma, v\neq\mathcal{N}} \ket{(g_v\,h\,g_v^{-1})g_v h^{-1}}_{\Tilde{\Phi}_v}=\ket{\mathring{\textbf{g}}_{\p\Sigma}}_{\Tilde{\Phi}}\,,\quad \forall h\in G\,.
\end{align}
Thus,  $\mathbf{P}^\text{phys}_\textbf{S}(\mathring{\textbf{g}}_{\p\Sigma})$ is nothing but a coherent group average with respect to the orientation-dependent intrinsic isotropy group $H(\mathring{\textbf{g}}_{\p\Sigma}):=\{(e_\mathcal{N},\mathring{\textbf{g}}_{\p\Sigma})\cdot(h,...,h)\cdot(e_\mathcal{N},\mathring{\textbf{g}}_{\p\Sigma})^{-1}|\,h\in G\}$. This has a single group parameter and corresponds to a ``uniform'' gauge transformation on $\p\Sigma$. It is isomorphic to the structure group $H(\mathring{\textbf{g}}_{\p\Sigma})\cong G,\;\forall\,\mathring{\textbf{g}}_{\p\Sigma}\in\mathcal{G}_{\p\Sigma\setminus\mathcal{N}}$. For the identity frame orientation $\ket{\mathring{\textbf{e}}_{\p\Sigma}}_{\Tilde{\Phi}}$, the projector is exactly a $\p\Sigma$-uniform group action on \textbf{S}. We use the shorthand notation
\begin{align}
    \mathbf{P}^\text{phys}_\textbf{S}(\mathring{\textbf{g}}_{\p\Sigma}) =: \mathbf{P}^\text{inv}_\textbf{S}\l H(\mathring{\textbf{g}}_{\p\Sigma})\r
\end{align}
to denote projection onto the invariant subspace with respect to $H(\mathring{\textbf{g}}_{\p\Sigma})$.

This projection operator generally acts non-trivially on every factor of $\mathcal{H}_\textbf{S}$ and thus breaks the kinematical TPS. The physical Hilbert space therefore does not inherit any subregion/complement factorization:
\begin{align}
    \Hphys \cong \mathbf{P}^\text{inv}_\textbf{S}\l H(\mathring{\textbf{g}}_{\p\Sigma})\r\l\mathcal{H}_\textbf{S}\r\,.
\end{align}
As explained in \cite{delaHamette:2021oex}, as we change the orientation $\mathring{\textbf{g}}_{\p\Sigma}$ onto which we condition, the image subspace rotates inside $\mathcal{H}_\textbf{S}$. This is a consequence of the fact that the intrinsic frame lacks frame reorientations. Nevertheless, all these different projected subspaces are isomorphic.

We now note a couple of important differences between Abelian (including finite Abelian groups) and non-Abelian gauge theories. First, in the former case, the isotropy subgroup is actually orientation-independent. Second, even though the projector is a priori still non-trivial, if we have no matter, it acts trivially on almost all factors in $\mathcal{H}_\textbf{S}$. This is because states in $\mathcal{H}_\Sigma^\text{phys}$ and $\mathcal{H}_\text{compl}^\text{phys}$ are already invariant under $\p\Sigma$-uniform gauge transformations, as a consequence of having imposed the Gauss constraints at every node inside, respectively, outside. Both factors include only states of zero total electric charge. Further, $\mathcal{H}_{\p\Sigma\setminus\Tilde{\Phi}}$ is also already invariant under this uniform action. So, in the end, the projector only acts non-trivially on the extrinsic frame factor $\mathcal{H}_\Phi$. This means that it acts locally on the kinematical TPS and so we recover a factorization on the physical Hilbert space
\begin{align}\label{eq:Abelintfact}
    \text{Abelian case:}\quad \Hphys \cong \mathcal{H}_\text{sub}^\text{phys}\otimes \mathcal{H}_\text{compl}^\text{phys}\otimes \textbf{P}^\text{inv}_\Phi(H)\l\mathcal{H}_\Phi\r\,.
\end{align}

On the other hand, in non-Abelian theories (even without matter), it is not true that having imposed the Gauss constraint at every node inside and outside leads to states of zero total electric flux in and out of $\p\Sigma$. This is a manifestation of the fact that gluons carry color charge, while photons are electrically neutral. More concretely, this is because, on each edge, $\hat{L}^a\neq\hat{R}^a$, but are rather related via \eqref{eq:l=r}. This makes it such that the different electric fields do not cancel when adding together the bulk Gauss constraints, unlike in the Abelian case. Therefore, both $\mathcal{H}_\Sigma^\text{phys}$ and $\mathcal{H}_\text{compl}^\text{phys}$ include states with arbitrary total color charge. In addition, states in $\mathcal{H}_{\p\Sigma\setminus\Tilde{\Phi}}$ are not invariant under $\p\Sigma$-uniform transformations either, since both the holonomies and electric field operators transform non-trivially. Therefore, the isotropy projector has a genuinely non-local action on all factors of $\mathcal{H}_\textbf{S}$, spoiling a clean factorization between regional degrees of freedom and the complement.

\section{The Subregional Goldstone Mode}
\label{app:GM}
Let us now return to the extrinsic frame perspective, which we saw in Eq.~\eqref{eq:extfactor} yields a neat factorization between the in- and outside of the region, as well as between the intrinsic frame, the corner complement of the intrinsic frame, and the bulk for the regional factor. We will now revisit the latter factorization of the regional factor $\H^{\rm phys}_{\rm sub}\otimes\H_{\tilde\Phi}$ and show that one can do a further refactorization that isolates a Goldstone mode (GM) for the region, mimicking the continuum discussion in \cite{Araujo-Regado:2024dpr,Ball:2024gti,Ball:2024hqe}. The key idea is to reorganize the regional degrees of freedom by dressing, as much as possible, to the intrinsic frame $\tilde\Phi$ and only dressing to the extrinsic one $\tilde\Phi$ itself. This has the effect of isolating the action of frame reorientations onto the GM sector, while leaving the other intrinsically dressed degrees of freedom unaffected by reorientations. 
This procedure yields a simple refactorization for Abelian theories. For non-Abelian theories, however, this reorganization of the regional degrees of freedom is only possible on a subspace of the physical Hilbert space. This is again due to the incompleteness of $\tilde\Phi$ and its non-trivial isotropy group. These observations will prove useful for the main body. We will first make a few preparatory technical remarks before proceeding to explicitly construct the GM sector.

\paragraph{Frame Reorientations:} Frame reorientations are physical symmetries that change the QRF's orientations, while leaving its complement, the system, invariant \cite{delaHamette:2021oex}. This changes the relation between the QRF and the system. In order to commute with gauge transformations, the reorientations have to act from the ``other side'' on the orientation labels than the gauge transformations. In the present context, applied to an extrinsic edge QRF $\Phi$, this means its reorientations act on its frame coherent states by regular right multiplication, while acting as the identity on the subregion (but may act non-trivially on $\bar\Sigma\setminus\Phi$). One way to realize them is by large gauge transformations, see the discussion in the main body. On $\mathcal{H}_\text{kin}^{\p\Sigma}$ they take the form
\begin{align}
  {V}_\Phi(\mathbf{h}_{\p\Sigma})\otimes \mathbb{1}_\text{sub}\otimes \mathbb{1}_{\Tilde{\Phi}}\otimes \mathcal{U}(\mathbf{h}_{\p\Sigma})_\text{compl}\,,\quad \text{such that}\quad {V}_\Phi(\mathbf{h}_{\p\Sigma})\ket{\textbf{g}_{\p\Sigma}}_\Phi=\ket{\textbf{g}_{\p\Sigma}\mathbf{h}_{\p\Sigma}^{-1}}_\Phi\,,
\end{align}
with $\mathcal{U}(\mathbf{h}_{\p\Sigma})_\text{compl}$ some (possibly complicated) complement-supported unitary commuting with all gauge generators there. This complement action is determined by ${V}_\Phi(\mathbf{h}_{\p\Sigma})$ itself. The frame reorientation group is thus $\mathbb{G}_{\p\Sigma}\cong \mathcal{G}_{\p\Sigma}$.

Under the PW conditioning, the frame reorientation operators get mapped to operators on the reduced Hilbert space
\begin{align}
    &\mathcal{R}_\Phi(\textbf{g}_{\p\Sigma})\l{V}_\Phi(\mathbf{h}_{\p\Sigma})\otimes \mathbb{1}_\text{sub}\otimes \mathbb{1}_{\Tilde{\Phi}}\otimes \mathcal{U}(\mathbf{h}_{\p\Sigma})_\text{compl}\r\l\mathcal{R}_\Phi(\textbf{g}_{\p\Sigma})\r^{\dag}\\
    &=U_\text{sub}(\Bar{\textbf{h}}_{\p\Sigma})\otimes U_{\Tilde{\Phi}}(\Bar{\textbf{h}}_{\p\Sigma})\otimes U_\text{compl}(\Bar{\textbf{h}}_{\p\Sigma})\,\mathcal{U}(\mathbf{h}_{\p\Sigma})_\text{compl}\,,\quad \text{where} \quad \Bar{\textbf{h}}_{\p\Sigma}:=\textbf{g}_{\p\Sigma}\textbf{h}_{\p\Sigma}\textbf{g}_{\p\Sigma}^{-1}\,.
\end{align}
So we see that, up to the gauge-invariant complement unitary, on the reduced Hilbert space, frame reorientations ``look like'' gauge transformations on $\p\Sigma$ acting on all other degrees of freedom except the frame $\Phi$. The group is still $\textbf{g}_{\p\Sigma}\mathbb{G}_{\p\Sigma}\textbf{g}_{\p\Sigma}^{-1}= \mathbb{G}_{\p\Sigma}$ and we label the action above on $\mathcal{R}_\Phi(\Hphys)$ by 
\begin{align} \label{eq:fr}
    \mathbb{U}(\textbf{h}_{\p\Sigma})=\mathbb{U}_\text{sub}(\textbf{h}_{\p\Sigma})\otimes \mathbb{U}_{\Tilde{\Phi}}(\textbf{h}_{\p\Sigma})\otimes\mathbb{U}_\text{compl}(\textbf{h}_{\p\Sigma})\,\,,
\end{align}
where for $\mathbb{U}_\text{compl}(\textbf{h}_{\p\Sigma})$ it is understood that extra complement unitaries may be included.

\paragraph{Reorientation Isotropy Subgroup of $\Tilde{\Phi}$:} 

In the extrinsic perspective, frame reorientations act as in Eq.~\eqref{eq:fr}. We, thus, land on a similar structure to the one referring to gauge transformations. We have a tensor product action of the symmetry group $\mathbb{G}_{\p\Sigma}$, with the difference now that we do not want to project onto the $\mathbb{G}_{\p\Sigma}$-invariant subspace, as these are physical symmetries. Nevertheless, the factor $\mathcal{H}_{\Tilde{\Phi}}$ acts as a \emph{physical}, i.e.\ gauge-invariant reference frame for this group action (recall that we are in $\Phi$'s perspective, so, upon inverting the reduction map, $\tilde\Phi$ becomes dressed by $\Phi$, thus being gauge-invariant). It is an ideal, \emph{incomplete} frame for $\mathbb{G}_{\p\Sigma}$, with isotropy group given by 
\begin{align} \label{eq:isotropy}
    \mathbb{H} = \{\mathfrak{h}:=(h,...,h)\,|\, h\in G\}\,.
\end{align}
(We use the notation $\mathbb{H}$ to distinguish it from the gauge isotropy group $H$.)
As before this is the representative of the set $\{\mathbb{H}_{\mathring{\textbf{g}}_{\p\Sigma}}:= \mathring{\textbf{g}}_{\p\Sigma}\mathbb{H}\mathring{\textbf{g}}_{\p\Sigma}^{-1}\}$ that leaves the coherent state $\ket{\mathring{\textbf{e}}_{\p\Sigma}}_{\Tilde{\Phi}}$ invariant. $\mathbb{H}$ corresponds to a $\p\Sigma$-uniform frame reorientation.

Paralleling the discussion in App. \ref{app:relational_obs}, the coherent states of $\mathcal{H}_{\Tilde{\Phi}}$ label points in the coset space $\mathbb{G}_{\p\Sigma}/\mathbb{H}$ as follows
\begin{align}
    \mathbb{G}_{\p\Sigma}/\mathbb{H}=\{\textbf{g}_{\p\Sigma}\mathbb{H}\} \quad \leftrightarrow \quad \mathcal{H}_{\Tilde{\Phi}}=\text{span}_\mathbb{C}\{\ket{\mathfrak{g}_{\p\Sigma}}_{\Tilde{\Phi}}|\mathfrak{g}_{\p\Sigma}:=\textbf{g}_{\p\Sigma}(g_\mathcal{N}^{-1},...,g_\mathcal{N}^{-1})=:(e_\mathcal{N},\mathring{\mathfrak{g}}_{\p\Sigma})\}\,,
\end{align}
where $\mathring{\mathfrak{g}}_{\p\Sigma}\in \mathbb{G}_{\p\Sigma\setminus\mathcal{N}}$. Any other representative of the coset will be of the form $\textbf{g}_{\p\Sigma}'=\mathfrak{g}_{\p\Sigma}\mathfrak{h}$, for some $\mathfrak{h}\in\mathbb{H}$ and so yields the same coherent state: $\ket{\textbf{g}_{\p\Sigma}'}_{\Tilde{\Phi}}=\ket{\mathfrak{g}_{\p\Sigma}\mathfrak{h}}_{\Tilde{\Phi}}=\ket{\mathfrak{g}_{\p\Sigma}}_{\Tilde{\Phi}}$. In what follows, we will use these coset labels ($\mathfrak{g}_{\p\Sigma}=(e_\mathcal{N},\mathring{\mathfrak{g}}_{\p\Sigma})$) for the coherent states of $\mathcal{H}_{\Tilde{\Phi}}$. When we perform integrals over $\mathbb{G}_{\p\Sigma}\setminus\mathbb{H}$ we need to remember that this is not a group manifold, so, in particular, the right-invariance of the measure does not hold. That being said, as a manifold, it is isomorphic to $\mathbb{G}_{\p\Sigma\setminus\mathcal{N}}\subset \mathbb{G}_{\p\Sigma}$.

\paragraph{Disentangling Re-factorizations:}
We are now ready to introduce the re-factorization map that isolates the frame reorientation action. It is called a ``disentangling'' re-factorization to contrast it with the original factorization in which the frame reorientation action is entangled across the various subregional systems.

As a warm-up, we first define the maps for a mechanical system, following \cite{delaHamette:2021oex} (also used in \cite{Carrozza:2024smc,Hoehn:2019fsy,Chataignier:2024eil,Hoehn:2021flk,Hoehn:2020epv,Vanrietvelde:2018dit,Vanrietvelde:2018pgb,Giacomini:2021gei,Castro-Ruiz:2021vnq}), and then extend the construction to the lattice. In the process, we wish to make clear that the obstruction to a clean GM factorization comes entirely from the existence of the isotropy group.

Consider a kinematical system of the form $\Hkin = \mathcal{H}_R\otimes \mathcal{H}_S$ with a tensor product action for some group $G$.
Starting with the case of complete, ideal frames, we introduce the unitary \emph{trivialization map} which re-factorizes $\Hkin$ as
\begin{align}
    \mathcal{T}_R:\mathcal{H}_R\otimes \mathcal{H}_S \to \mathcal{H}_{\Tilde{R}}\otimes\mathcal{H}_{\Tilde{S}}\,,
\end{align}
in such a way that the symmetry action gets mapped to
\begin{align}\label{eq:refact}
    \mathcal{T}_R\l U_R(g)\otimes U_S(g)\r\mathcal{T}_R^\dagger=U_R(g)\otimes \mathbb{1}_S\,.
\end{align}
For ideal frames, the map takes the form
\begin{align}
    \mathcal{T}_R&:=\int_G[dg]\ket{g}\!\bra{g}_R\otimes U_S^\dagger(g)\\
    \mathcal{T}_R^\dagger &:= \int_G[dg] \ket{g}\!\bra{g}_R\otimes U_S(g)\,.
\end{align}
These are non-local unitaries implementing a change of TPS. They perform a conditional unitary on the system. This has the effect of transferring all the $G$-noninvariant data into the frame $R$, while repackaging the joint degrees of freedom into a new invariant system, denoted by $\Tilde{S}$. In other words, it disentangles the group action between $R$ and $S$. The action on states in $\mathcal{H}_R\otimes \mathcal{H}_S$ is
\begin{align}
    \mathcal{T}_R(\ket{g}_R\otimes\ket{\psi}_S)=\ket{g}_R\otimes U_S^\dagger(g)\ket{\psi}_S=:\ket{g}_{\Tilde{R}}\otimes \ket{\Tilde{\psi}}_{\Tilde{S}}\,
\end{align}
where we define the state $\ket{\Tilde{\psi}}_{\Tilde{S}}$ to be the (due to Eq.~\eqref{eq:refact}) $G$-invariant state to which $\ket{\psi}_S$ gets mapped under this non-local dressing with $R$. 

In this case it turns out that $\mathcal{H}_{\Tilde{S}}\cong \mathcal{H}_S$, which is expected since we know the joint $G$-invariant subspace of $\Hkin$ to be $\mathcal{H}^G_{RS}\cong \mathcal{H}_S$. This will no longer hold when the frame $R$ is incomplete. We note that the unitary $\mathfrak{U}$ used to re-factorize the kinematical Hilbert space to build Wilson line frames in App.~\ref{app_kinrefactor} is nothing more than a composition of such trivialization maps.

The story changes when the frame $R$ has an isotropy subgroup $H$, which is the relevant case for us here. The operators $\mathcal{T}_R$ and $\mathcal{T}_R^\dagger$, defined  above, fail to have the desired properties. To see this we explicitly separate the effect of the isotropy group by splitting the integral over $G$ into an integral over $H$ and another over the coset space (cf.~\cite[Lem.~1]{delaHamette:2021oex}), to get
\begin{align}
    &\mathcal{T}_R =\int_X[dx]\l\ket{\mathfrak{g}_x}\!\bra{\mathfrak{g}_x}_R\otimes P_S^H U_S^\dagger(\mathfrak{g}_x)\r\\ 
    &\mathcal{T}_R^\dagger=\int_X[dx]\l\ket{\mathfrak{g}_x}\!\bra{\mathfrak{g}_x}_R\otimes U_S(\mathfrak{g}_x)P_S^H\r\,,
\end{align}
where $P_S^H$ projects orthogonally onto the $H$-invariant subspace of $S$ and $\mathfrak{g}_x$ is a coset representative for the group element $g$: $\ket{g}_R=\ket{\mathfrak{g}_x}_R$. From here we can show that the following relations hold 
\begin{align}
    \mathcal{T}_R\l U_R(g)\otimes U_S(g)\r\mathcal{T}_R^\dagger&=U_R(g)\otimes P_S^H\\
    \mathcal{T}_R\mathcal{T}_R^\dagger&=\mathbb{1}_R\otimes P_S^H\\
    \mathcal{T}_R^\dagger\mathcal{T}_R &=\int_X [dx] \ket{\mathfrak{g}_x}\bra{\mathfrak{g}_x}_R\otimes P_S^{H_{\mathfrak{g}_x}}\,, 
\end{align}
where $P_S^{H_{\mathfrak{g}_x}}=U_S(\mathfrak{g}_x)P_S^HU_S^\dagger(\mathfrak{g}_x)$ is the projector onto the isotropy subspace corresponding to orientation state $\ket{\mathfrak{g}_x}_R$. One can show that $(\mathcal{T}_R^\dagger\mathcal{T}_R)^2=\mathcal{T}_R^\dagger\mathcal{T}_R$ and so this is a projector on $\Hkin$. In particular, it is a conditional projector, where the projected subspace of $\mathcal{H}_S$ rotates inside $\mathcal{H}_S$ depending on the frame orientation. So these maps fail even to define a change of TPS, since they are not unitary, let alone implementing the re-factorization of interest.

Nevertheless, the first equation in particular is suggestive of the required modification. It says that if we want to perform a disentangling re-factorization of the type described above, which shifts all the $G$-noninvariant data onto $\Tilde{R}$, we need to first project $\Hkin$ onto the subspace $\mathcal{H}_R\otimes P_S^H(\mathcal{H}_S)$. This leads to the natural definitions of modified trivialization maps for \emph{incomplete}, ideal frames
\begin{align}
    \Tilde{\mathcal{T}}_R:=\mathcal{T}_R(\mathbb{1}_R\otimes P_S^H)&=\int_X[dx]\l\ket{\mathfrak{g}_x}\bra{\mathfrak{g}_x}_R\otimes  \mathfrak{H}\l U_S(\mathfrak{g}_x)\r^\dagger \r(\mathbb{1}_R\otimes P_S^H)\\
    \Tilde{\mathcal{T}}_R^\dagger:=(\mathbb{1}_R\otimes P_S^H)\mathcal{T}_R^\dagger&=\int_X[dx]\l\ket{\mathfrak{g}_x}\bra{\mathfrak{g}_x}_R\otimes  \mathfrak{H}\l U_S(\mathfrak{g}_x)\r\r(\mathbb{1}_R\otimes P_S^H)
\end{align}
where we defined $\mathfrak{H}(\cdot)=\int_H [\dd{h}] U_S(h)(\cdot)U_S^\dag(h)$  to represent the $H$-twirl and used the facts that $P_S^H U_S(\mathfrak{g}_x)P_S^H=\mathfrak{H}\l U_S(\mathfrak{g}_x)\r P_S^H$ and $\mathfrak{H}\l U_S^\dagger(\mathfrak{g}_x)\r=\mathfrak{H}\l U_S(\mathfrak{g}_x)\r^\dagger$.
We see that $\Tilde{\mathcal{T}}_R$ and $\Tilde{\mathcal{T}}_R^\dagger$ perform isotropy averaged conditional unitaries on the system. These satisfy the properties
\begin{align}
    \Tilde{\mathcal{T}}_R\l U_R(g)\otimes U_S(g)\r\Tilde{\mathcal{T}}_R^\dagger&=U_R(g)\otimes P_S^H\\
    \Tilde{\mathcal{T}}_R^\dagger\Tilde{\mathcal{T}}_R&=\mathbb{1}_R\otimes P_S^H\\
    \Tilde{\mathcal{T}}_R\Tilde{\mathcal{T}}_R^\dagger&=\mathbb{1}_R\otimes P_S^H\,,
\end{align}
which tells us that they are unitary on the space $\mathcal{H}_R\otimes P_S^H(\mathcal{H}_S)$. Furthermore, on this subspace of $\Hkin$ the $G$-action gets transferred entirely onto $R$, as desired. All in all, we have a disentangling change of TPS on $\mathcal{H}_R\otimes P_S^H(\mathcal{H}_S)$
\begin{align}
    \Tilde{\mathcal{T}}_R:\mathcal{H}_R\otimes P_S^H(\mathcal{H}_S)\to \mathcal{H}_{\Tilde{R}}\otimes \mathcal{H}_{\Tilde{S}}\,,
\end{align}
with the action on states given by
\begin{align}
    \Tilde{\mathcal{T}}_R(\ket{\mathfrak{g}_x}_R\otimes P_S^H\ket{\psi}_S)=\ket{\mathfrak{g}_x}_R\otimes \mathfrak{H}\l U_S(\mathfrak{g}_x)\r^\dagger P_S^H\ket{\psi}_S=:\ket{\mathfrak{g}_x}_{\Tilde{R}}\otimes\ket{\Tilde{\psi}}_{\Tilde{S}}\,,
\end{align}
where $\ket{\Tilde{\psi}}_{\Tilde{S}}$ denotes the $G$-invariant state to which $P_S^H\ket{\psi}_S$ gets mapped under the non-local dressing to $R$. As alluded to earlier, for incomplete frames we get that $\mathcal{H}_{\Tilde{S}}\cong P_S^H(\mathcal{H}_S)$.

We note that when the frame has trivial isotropy group $H$, all the formulae reduce to the ones for complete frames.

The conclusion is that the incompleteness of the frame $R$ prevents a re-factorization into $G$-invariant and $G$-variant factors at the level of the full $\Hkin$, but only allows for one on the subspace $\mathcal{H}_R\otimes P_S^H(\mathcal{H}_S)$. The reason is that not all the degrees of freedom in $\mathcal{H}_S$ can be dressed to the incomplete frame $R$. The ones that cannot are precisely given by the orthogonal space $P_S^H(\mathcal{H}_S)^\perp$. We thus have 
\begin{align}
    \Hkin = \mathcal{H}_R\otimes \mathcal{H}_S = \mathcal{H}_R\otimes (\underbrace{P_S^H(\mathcal{H}_S)}_{\text{can dress to }R}\oplus \underbrace{P_S^H(\mathcal{H}_S)^\perp}_{\text{cannot dress to }R})\,.
\end{align}
\\~
We now bring all of these notions together to define the  \emph{subregional Goldstone mode} (GM).

In the reduced perspective of the extrinsic frame the physical Hilbert space includes the following factors associated to the subregion: $\H^{\rm phys}_{\rm sub}\otimes\H_{\tilde\Phi}$. This acts like a ``kinematical'' space for the frame reorientation group $\mathbb{G}_{\p\Sigma}$ of the extrinsic frame, as explained before. So the different quantities in the mechanical model above map naturally to the present situation
\begin{align}
    G \mapsto \mathbb{G}_{\p\Sigma}\,,\quad H\mapsto \mathbb{H}\,,\quad \mathcal{H}_R\mapsto \mathcal{H}_{\Tilde{\Phi}}\,,\quad \mathcal{H}_S\mapsto \H^{\rm phys}_{\rm sub}\,.
\end{align}

We define the natural extension of the trivialization map on $\mathcal{R}_\Phi(\Hphys)$ which re-factorizes the subregional degrees of freedom, while leaving the complement untouched
\begin{align}\label{eq:GM trivial}
    \Tilde{\mathcal{T}}_{\Tilde{\Phi}}:=\int_{\mathbb{G}_{\p\Sigma\setminus\mathcal{N}}}[\mathcal{D}\mathring{\mathfrak{g}}_{\p\Sigma}] \l\ket{\mathfrak{g}_{\p\Sigma}}\!\bra{\mathfrak{g}_{\p\Sigma}}_{\Tilde{\Phi}}\otimes \mathfrak{H}(\mathbb{U}_\text{sub}(\mathfrak{g}_{\p\Sigma}))^\dagger\otimes\mathbb{1}_\text{compl}\r(\mathbb{1}_{\Tilde{\Phi}}\otimes P_\text{sub}^\mathbb{H}\otimes \mathbb{1}_\text{compl})\,,
\end{align}
with the analogous definition for $\Tilde{\mathcal{T}}_{\Tilde{\Phi}}^\dagger$. The frame reorientations get mapped to
\begin{align}\label{eq:disentanglement}
    \Tilde{\mathcal{T}}_{\Tilde{\Phi}}\l\mathbb{U}_{\Tilde{\Phi}}(\textbf{g}_{\p\Sigma})\otimes \mathbb{U}_\text{sub}(\textbf{g}_{\p\Sigma}) \otimes\mathbb{U}_\text{compl}(\textbf{g}_{\p\Sigma})\r\Tilde{\mathcal{T}}_{\Tilde{\Phi}}^\dagger=\mathbb{U}_{\Tilde{\Phi}}(\textbf{g}_{\p\Sigma})\otimes \mathbb{1}_\text{sub}\otimes\mathbb{U}_\text{compl}(\textbf{g}_{\p\Sigma})\,.
\end{align}
As desired, under the map, the subregional degrees of freedom have been intrinsically dressed and are now frame reorientation invariant. The action of $\mathbb{G}_{\p\Sigma}$ only affects the frame and the complement. For this reason, we make the following identification with the spaces of the mechanical model above
\begin{align}
    \mathcal{H}_{\Tilde{R}}\mapsto \mathcal{H}_\text{GM}\cong \mathcal{H}_{\Tilde{\Phi}}\,,\quad \mathcal{H}_{\Tilde{S}}\mapsto \mathcal{H}_\text{sub}^\text{int}\cong P_\text{sub}^\mathbb{H}(\mathcal{H}_\text{sub}^\text{phys})\,.
\end{align}
and we have the follwing change of TPS on a subspace of $\mathcal{R}_\Phi(\Hphys)$
\begin{align}
    \Tilde{\mathcal{T}}_{\Tilde{\Phi}}:\mathcal{H}_{\Tilde{\Phi}}\otimes P_\text{sub}^\mathbb{H}(\mathcal{H}_\text{sub}^\text{phys})\otimes \mathcal{H}_\text{compl}^\text{phys}\to \mathcal{H}_\text{GM}\otimes\mathcal{H}_\text{sub}^\text{int}\otimes \mathcal{H}_\text{compl}^\text{phys}\,.
\end{align}
On this subspace of $\Hphys$, we can cleanly separate the subregional degrees of freedom into those that are frame reorientation invariant (and, hence, intrinsically dressed), and those that parameterize the orbit of frame reorientations, and so act as Goldstone mode (GM) degrees of freedom for the corner symmetries. Notice that, as a direct consequence of the incompleteness of the frame, the GM factor can only parameterize frame roerientations modulo $\p\Sigma$-uniform ones. This is the best one can achieve. We have that
\begin{align}
    \mathcal{H}_\text{GM}\cong L^2\l\mathbb{G}_{\p\Sigma}/\mathbb{H}\r\,.
\end{align} 

In the extrinsic perspective, the physical Hilbert space can thus be written as\footnote{We point out that, here, we have obtained a bulk/GM factorization between the inside of the region ($\mathcal{H}^\text{int}_\text{sub}$) and its corner ($\mathcal{H}_\text{GM}$) in a ``first quantize, then reduce'' approach, in contrast with the ``first reduce, then quantize'' continuum setup of \cite{Ball:2024gti,Ball:2024hqe}. }
\begin{align}\label{eq:GM}
    \Hphys\cong \l\l\mathcal{H}_\text{GM}\otimes\mathcal{H}_\text{sub}^\text{int}\r\oplus \l\mathcal{H}_{\Tilde{\Phi}}\otimes P_\text{sub}^\mathbb{H}(\mathcal{H}_\text{sub}^\text{phys})^\perp\r\r\otimes \mathcal{H}_\text{compl}^\text{phys} 
\end{align}

We can understand the various factors from our study of relational observables in App.~\ref{app:relational_obs}. There we saw that to dress any observable to the intrinsic frame $\Tilde{\Phi}$ a coarse-graining twirl over the isotropy group is implemented as a consequence of its incompleteness (note that $\mathbb{H}\cong H$ and $\mathbb{G}_{\p\Sigma}\cong \mathcal{G}_{\p\Sigma}$ so the computations follow in direct correspondence). If we focus on observables acting on the subregional factors, $\mathcal{A}_\text{sub}:=\mathcal{B}(\mathcal{H}_\text{sub}^\text{phys})$, we have an orthogonal decomposition (with respect to the Hilbert-Schmidt inner product) induced by the isotropy group 
\begin{align}
    \mathcal{A}_\text{sub} = \mathcal{A}_\text{sub}^\mathbb{H} \oplus (\mathcal{A}_\text{sub}^\mathbb{H})^\perp,
\end{align}
where $\mathcal{A}_\text{sub}^\mathbb{H}:=\mathfrak{H}(\mathcal{A}_\text{sub})$. So, if we attempt to dress the algebra to the intrinsic frame by performing the $\mathcal{G}_{\p\Sigma}$-twirl we get
\begin{align}
    \mathcal{A}_{\text{sub}|\Tilde{\Phi}}:=\mathcal{G}_{\p\Sigma}(\ket{\textbf{g}_{\p\Sigma}}\bra{\textbf{g}_{\p\Sigma}}_{\Tilde{\Phi}}\otimes \mathcal{A}_\text{sub})=\mathcal{G}_{\p\Sigma}(\ket{\textbf{g}_{\p\Sigma}}\bra{\textbf{g}_{\p\Sigma}}_{\Tilde{\Phi}}\otimes \mathcal{A}_\text{sub}^\mathbb{H})\,.
\end{align}
Only $\mathbb{H}$-invariant observables can be dressed intrinsically.
The representation of $\mathcal{A}_{\text{sub}|\Tilde{\Phi}}$ on the physical Hilbert space in the extrinsic perspective is then given by 
\begin{align}
    \mathcal{A}_{(\text{sub}|\Tilde{\Phi})|\Phi}^\text{phys}:=\mathcal{A}_{\text{sub}|\Tilde{\Phi}}P_\text{sub}^\mathbb{H}\,.
\end{align}
We refer the reader to App.~\ref{app:alg_content} for a more expanded description of this procedure. For now, we just want to highlight the conclusion that the intrinsically-dressed subregional algebra is thus naturally represented on the space $\mathcal{H}^\text{int}_\text{sub}$ that emerges out of the GM-refactorization above. 

This also clarifies what the factor $P_\text{sub}^\mathbb{H}(\mathcal{H}_\text{sub}^\text{phys})^\perp$ stands for. These are the subregional degrees of freedom that the intrinsic frame cannot ``see''. In other words, observables on this factor are trivial under an intrinsic relational dressing. We have seen several such examples already in App.~\ref{app:relational_obs}. For instance, any linear combination of subregional electric fields corresponds to a trivial intrinsic relational observable. Nevertheless, these operator \emph{can} be ``seen'' by the extrinsic frame. As we showed, these same operators lead to non-trivial relational observables under extrinsic dressing. We see that the Hilbert space factorization in \eqref{eq:GM} is a natural manifestation of the incompleteness of the intrinsic frame.

\paragraph{Goldstone Mode Observables:} The observables on $\mathcal{H}_\text{GM}$ can be mapped to $\Hphys$ by first applying the inverse of the unitary trivialization map of Eq. \eqref{eq:GM trivial} and then dressing with the extrinsic frame, following App.~\ref{app:relational_obs}.

The Wilson line operator $\hat{g}_{\Tilde{\Phi}_v}$ is invariant under the first operation since $\Tilde{\mathcal{T}}_{\Tilde{\Phi}}$ is diagonal in the same basis as $\hat{g}_{\Tilde{\Phi}_v}$. Now, we need to remember that the latter transforms at two nodes, $v$ and $\mathcal{N}$
\begin{align}
    \mathcal{U}_v(h_v)\mathcal{U}_\mathcal{N}(h_\mathcal{N}) \;\hat{g}_{\Tilde{\Phi}_v} \;\mathcal{U}_v^\dagger(h_v)\mathcal{U}_\mathcal{N}^\dagger(h_\mathcal{N}) &= \rho(h_v^{-1})\,\hat{g}_{\Tilde{\Phi}_v}\,\rho(h_\mathcal{N})\,.
\end{align}
Then, we have the relational observable field, describing the intrinsic edge mode field relative to the extrinsic one (conditional on it being in orientation $\mathbf{g}_{\p\Sigma}$), given by observables of the form
\begin{align}
   O_{\tilde\Phi_v|\Phi}(\mathbf{g}_{\p\Sigma}):=\mathcal{G}_{\p\Sigma}(\ket{\textbf{g}_{\p\Sigma}}\bra{\textbf{g}_{\p\Sigma}}_\Phi\otimes \hat{g}_{\Tilde{\Phi}_v})&=\int [dh_v][dh_\mathcal{N}] \ket{h_v}\bra{h_v}_{\Phi_v}\otimes \ket{h_\mathcal{N}}\bra{h_\mathcal{N}}_{\Phi_\mathcal{N}}\otimes \rho(g_vh_v^{-1})\,\hat{g}_{\Tilde{\Phi}_v}\,\rho(h_\mathcal{N}g_\mathcal{N}^{-1})\\
    &=\rho(g_v)(\hat{g}_{\Phi_v}^\dagger\otimes \hat{g}_{\Tilde{\Phi}_v}\otimes \hat{g}_{\Phi_\mathcal{N}})\rho(g_\mathcal{N}^{-1})\,,\label{eq:GMfield}
\end{align}
which represents an open Wilson line anchored with both ends on $\mathcal{B}$ passing through the north pole $\mathcal{N}$ of $\p\Sigma$ (in the main body also denoted as $v_0$). This is our lattice GM field, cf.~\cite{Araujo-Regado:2024dpr} for the classical continuum counterpart for Abelian theories.

Unlike for the link variable above, the conjugate variable gets modified under the change of TPS defining the GM factor. In the final TPS, the conjugate electric field is nothing but the generator of $\mathbb{U}_{\text{GM},v}(h_v)$ at each vertex. We see from \eqref{eq:disentanglement} that this gets mapped to
\begin{align}
    \mathbb{U}_{\text{GM},v}(h_v) \mapsto \mathbb{U}_{\Tilde{\Phi}_v}(h_v)\otimes \mathbb{U}_{\text{sub},v}(h_v)\otimes \mathbb{1}_{\text{compl},v}
\end{align}
in the original TPS. So, on $\mathcal{R}_\Phi(\Hphys)$ the generator is given by $\hat{\mathfrak{G}}^a_{\text{GM},v}:=\sum_{e\in\mathcal{E}_v\cap\Sigma}(\hat{\mathfrak{G}}_e^v)^a$, in other words the total Gau{\ss} generator with support on the subregion and corner. We now dress this with the extrinsic frame. Re-using the result from App.~\ref{app:relational_obs}
\begin{align}
    \mathcal{G}_{\p\Sigma}(\ket{\textbf{g}_{\p\Sigma}}\bra{\textbf{g}_{\p\Sigma}}_{\Phi}\otimes \hat{\mathfrak{G}}^a_{\text{GM},v})&=(D(g_v)\l\hat{g}_\text{Adj}\r_{\Phi_v}^\dagger)^{ab}\otimes \hat{\mathfrak{G}}^b_{\text{GM},v}\,. 
\end{align}
Finally, we note that on $\Hphys$, using the Gau{\ss} law at the vertex $v$ we can re-write the total Gau{\ss} generator in the subregion in terms of the total electric flux hitting $\p\Sigma$ from outside, including the extrinsic frame $\Phi$: $\sum_{e\in\mathcal{E}_v\cap\Sigma}(\hat{\mathfrak{G}}_e^v)^a= \sum_{e\in\mathcal{E}_v\cap(\Bar{\Sigma}\cup\Phi)}(\hat{\mathfrak{G}}_e^v)^a=: (\hat{\mathfrak{G}}^v_\perp)^a$. Thus, at the level of the physical algebra, the conjugate electric field to the GM at $v$ corresponds to the extrinsically dressed total outward electric flux at $v$, thus recovering the same result obtained classically in the continuum (see \cite{Araujo-Regado:2024dpr,Ball:2024gti})
\begin{align}
    \mathcal{G}_{\p\Sigma}(\ket{\textbf{g}_{\p\Sigma}}\bra{\textbf{g}_{\p\Sigma}}_{\Phi}\otimes \hat{\mathfrak{G}}^a_{\text{GM},v})\Pi_\text{phys} = (D(g_v)\l\hat{g}_\text{Adj}\r_{\Phi_v}^\dagger)^{ab}\otimes (\hat{\mathfrak{G}}^v_\perp)^b\,.
\end{align}
\\~
\indent We conclude by explaining how the structure simplifies in the Abelian case.  In that case, the projector $P^\mathbb{H}_\text{sub}= \mathbb{1}_\text{sub}$ because all states in $\mathcal{H}_\text{sub}^\text{phys}$ carry zero global electric charge, as a consequence of the Gau{\ss} constraint having been imposed at every node inside. This has the drastic effect of allowing for the GM re-factorization to work on the full $\Hphys$ for Abelian theories, cf.~\cite{Araujo-Regado:2024dpr,Ball:2024hqe} for the continuum counterpart. We expect this difference to manifest itself in the continuum also, even at the classical level, see \cite{Ball:2024gti,jesse}.

 \section{Extrinsic and intrinsic relational algebras}\label{app:alg_content}
In this appendix, we elaborate on the structure of the extrinsic and intrinsic relational algebras
$\mathcal{A}^{\rm phys}_{\Sigma|\Phi}$ and $\mathcal{A}^{\rm phys}_{\Sigma\setminus\tilde\Phi|\tilde\Phi}$, respectively, building upon the results of the previous appendices. As a preliminary remark, we recall that, by definition, both of these algebras consist solely of bounded operators. In particular, the extrinsic algebra is, via the reduction maps $\mathcal{R}_\Phi$, isomorphic to the algebra of bounded operators $\mathcal{A}^{\rm phys}_{\Sigma|\Phi} \simeq \cB (\H^\phys_\Sigma$), as described in the main text. While Wilson loop operators are themselves bounded, this is not generally the case for electric field operators, nor their quadratic combinations. Therefore, whenever we refer in the following to electric operators as part of the generators of the algebra, we implicitly mean bounded functions of these operators. This, in particular, encompasses the electric operators in finite groups, which only exist in ``exponentiated form''. A concrete example in the Abelian $U(1)$ case is provided by exponentiated electric field operators  $e^{i \theta \hat E}$, which, for $\theta \in [0,2\pi]$, spans the bounded functions of $\hat E$.

Let us also recall some QRF considerations before applying them to the lattice. Recall the definition of relational observables in Eq.~\eqref{eq:relobs}. Formally, we may write 
\begin{equation}\label{relalgkin}
    \mathcal{A}^{\rm kin}_{S|R}:=O_{\mathcal{A}_S|R}=\mathcal{G}\left(\ket{\mathbf{g}}\!\bra{\mathbf{g}}_R\otimes\mathcal{A}_S\right)\,,\qquad \mathcal{A}_S=\mathcal{B}(\H_S)\,,
\end{equation}
for the algebra of kinematical relational observables describing $S$ relative to $R$. Note that this algebra is independent of the QRF orientation $\mathbf{g}\in\mathcal{G}$, as may be checked with a change of integration variable in the $\mathcal{G}$-twirl and invoking the left and right invariance of the Haar measure. 

We are interested in the representation of this algebra on $\Hphys$, i.e.\ $\mathcal{A}^{\rm phys}_{S|R}:=\mathcal{A}^{\rm kin}_{S|R}\,\Pi_{\rm phys}$, where $\Pi_{\rm phys}$ is the \emph{coherent} group averaging projector. Using 
\begin{equation} \label{eq:Gpi}
\mathcal{G}(\bullet)\,\Pi_{\rm phys}=\hat\Pi_{\rm phys}(\bullet)\,,
\end{equation}
with $\hat\Pi_{\rm phys}(\bullet):=\Pi_{\rm phys}(\bullet)\Pi_{\rm phys}$ denoting conjugation by the projector, we have
\begin{equation}\label{eq:relphys}
    \mathcal{A}^{\rm phys}_{S|R}=\hat\Pi_{\rm phys}\left(\ket{\mathbf{g}}\!\bra{\mathbf{g}}_R\otimes\mathcal{A}_S\right)=\mathcal{R}^\dag_R(\mathbf{g})\,\mathcal{A}_S\,\mathcal{R}_R(\mathbf{g})\,,
\end{equation}
where the last equality holds for complete ideal QRFs $R$ and follows from \cite[Thm.~2]{delaHamette:2021oex} (cf.~Eq.~\eqref{eq:obsred}). Since $\mathcal{R}_R$ is unitary, this means that relational subsystem algebras commute $\big[\mathcal{A}_{S_\alpha|R}^{\rm phys},\mathcal{A}_{S_\beta|R}^{\rm phys}\big]=0$ for complete ideal QRFs, when $\mathcal{A}_S=\bigotimes_\alpha\mathcal{A}_{S_\alpha}$ admits a further subsystem partition \cite{Hoehn:2023ehz,delaHamette:2021oex,AliAhmad:2021adn}. 

Crucially, since $\mathcal{R}_R$ is unitary, we have from Eq.~\eqref{eq:relphys} that, for complete ideal QRFs $R$, the relational observable algebra $\mathcal{A}_{S|R}^{\rm phys}$ is a \emph{factor} if and only if $\mathcal{A}_S$ is one. Recall that a factor algebra is an algebra with trivial center. This means that the only subalgebra that commutes with \emph{all} elements of the algebra is the identity component, i.e.\ $\mathbb{C}$-numbers. In particular, when $\mathcal{A}_S=\mathcal{B}(\H_S)$, which is a von Neumann Type I factor, we have that $\mathcal{A}_{S|R}^{\rm phys}$ is a Type I factor also. Our extrinsic relational algebras will be Type I factors, and we will see how to obtain algebras with non-trivial centers from the intersections of algebras dressed with different frames. Note that, by contrast, $\mathcal{A}^{\rm kin}_{S|R}$ in Eq.~\eqref{relalgkin} is \emph{not} a factor; this is because it contains the subalgebra $\mathcal{G}(\mathcal{A}_S)$, which has the Casimir elements of the gauge generators in $S$ in its center. This will be explained in detail in Apps.~\ref{app:subrel} and~\ref{app_electriccenter}. Similarly, for incomplete frames, due to the average of the isotropy group, the relational observable algebra fails to be a factor when $G$ is non-Abelian.

Finally, we should also stress that the subsystem $S$ being dressed slightly differs between extrinsic and intrinsic algebra. For the former, $S$ coincides with the whole subregion $S =\Sigma = \mathring{\Sigma} \cup \p\Sigma$, while, for the intrinsic case, we are dressing only the boundary links excluded from the frame tree $S =\Sigma\setminus\tilde\Phi = \mathring{\Sigma} \cup \p\Sigma\setminus\tilde\Phi$.

We denote the representation of relational observables \eqref{eq:relobs} on $\Hphys$ by
\begin{equation}
    \mathcal{O}_{f_S|R}(\mathbf{g}):=O_{f_S|R}(\mathbf{g})\,\Pi_{\rm phys}=\hat\Pi_{\rm phys}\left(\ket{\mathbf{g}}\!\bra{\mathbf{g}}_R\otimes f_S\right)\,,
\end{equation}
and note that frame reorientations translate the relational observable orientation label \cite{delaHamette:2021oex} (likewise for $O_{f_S|R}(\mathbf{g})$):
\begin{equation}
    V_R(\mathbf{g}')\,\mathcal{O}_{f_S|R}(\mathbf{g})\,V_R^\dag(\mathbf{g}')=\mathcal{O}_{f_S|R}(\mathbf{g}\mathbf{g}'^{-1})\,.
\end{equation}

For simplicity, in the rest of this appendix, we will give explicit formulas for relational observables in the identity frame orientation $\mathcal{O}_{f_S|R}(\mathbf{e})$; all others can be obtained via reorientations.

\subsection*{Extrinsic relational algebras}

The extrinsic relational algebra $\mathcal{A}^{\rm phys}_{\Sigma|\Phi}$ consists of all relational observables, according to Eq.~\eqref{eq:relobs}, dressing any kinematical degrees of freedom in $\Sigma=\mathring{\Sigma}\cup\p\Sigma$ w.r.t.\ the chosen extrinsic edge mode frame $\Phi$, and projecting with $\Pi_{\rm phys}$ onto the physical Hilbert space. It is generated by:
\begin{itemize}
    \item Regional Wilson lines anchored to the global boundary via the extrinsic Wilson lines of $\Phi$. The frame indeed is completing the open lines with two endpoints on $\p \Sigma$, as explained in App. \ref{app:relational_obs}, Eq. \eqref{eq:Open_W_lines_ext_dress}. 
    \item Linear combinations of electric fields, dressed with a single line attached to the global boundary through the extrinsic frame Wilson lines, as explained in App.~\ref{app:relational_obs}, Eq.~\eqref{eq:L_ext_dress}. If the electric field lies inside the subregion, we need to parallel transport it to $\p\Sigma$ (with any path) before dressing with the extrinsic frame. These operators look like (cf.~Eq.~\eqref{eq:L_ext_dress} for frame in identity orientation) $ (\hat{\mathfrak{G}}^v_e)^a (\l\hat{g}_\text{Adj}\r_{\ell})^{ab}(\l\hat{g}_\text{Adj}\r_{\Phi_w})^{bc} 
    $, with $w\in \p\Sigma$,  $v\in \Sigma$ and $\ell$ any path connecting the two (from $w$ to $v$), with support on $\Sigma$. In terms of reduction map these are obtained as $\cO_{(\mathfrak{G}^v_e)^a\, \l\hat{g}_\text{Adj}\r_{\ell}^{ab}|\Phi}(\mathbf{e})$.

\end{itemize}

All of these generators are charged under frame reorientations: open Wilson lines transform in a representation $\rho$ of $G$, while electric field operators transform in the adjoint representation. Nonetheless, one can construct various reorientation-invariant operators (these have exclusively support in $\Sigma$) by contracting the open indices of the charged generators appropriately. This follows from the fact that $\mathcal{O}_{\cdot|\Phi}(\mathbf{e}):\mathcal{B}\l\H^{\rm phys}_\Sigma\r\to \mathcal{A}^{\rm phys}_{\Sigma|\Phi}$ is a unital $*$-homomorphism \cite[Thm.~1]{delaHamette:2021oex}.

For example, consider two open Wilson lines $\ell_1, \ell_2$ in the same representation $\rho$, both connecting points $v$ and $w$ on the corner $\p\Sigma$. The operator
\be
\Tr \cO_{\hat g_{\ell_1} | \Phi} (\mathbf{e}) \cO_{ \hat g_{\ell_2}^{-1} | \Phi} (\mathbf{e}) =  \Tr \l (\hat{g}_{\Tilde{\Phi}_v}^\dagger\otimes \hat g_{\ell_1} \otimes \hat{g}_{\Tilde{\Phi}_w} )(\hat{g}_{\Tilde{\Phi}_v}\otimes g_{\ell_2}^{-1}\otimes \hat{g}_{\Tilde{\Phi}_w}^\dagger) \r = \Tr \l \hat g_{\ell_1} g_{\ell_2}^{-1} \r
\ee
is invariant under frame reorientations, and corresponds to a closed Wilson loop along the path $\ell_1 \circ \ell_2^{-1}$. 

Similarly, we can get quadratic combinations of electric fields connected with an internal path. For example, given two boundary fields at nodes $v, w\in \p \Sigma$ and a path $\ell\in \Sigma$ between them, we have 
\be
\cO_{(\hat{\mathfrak{G}}^v_e )^a | \Phi} (\mathbf{e})
\cO_{({\hat{g}_{\rm Adj}}_\ell)^{ab} | \Phi} (\mathbf{e})
\cO_{(\hat{\mathfrak{G}}^w_{e'})^b | \Phi} (\mathbf{e})
= (\hat{\mathfrak{G}}^v_e )^a ({\hat{g}_{\rm Adj}})_\ell^{ab} (\hat{\mathfrak{G}}^w_{e'})^b\,.
\ee
Both of these examples act on the \textit{disentangled} subsystem introduced in the previous App.~\ref{app:GM}. Their invariance is not surprising as they belong to the internal gauge-invariant algebra $\mathcal{A}^\mathcal{G}_\Sigma$, which is unaffected by dressing. These examples show explicitly how operators on the \textit{disentangled} subsystem can be recovered as suitably contracted combinations of charged generators from the dressed algebra.

Among the open Wilson lines in the extrinsic algebra, there are ones connecting edges belonging to the intrinsic frame with the asymptotic boundary. These correspond to the relational observables measuring one frame relative to the other and are the Goldstone modes $\mathcal{O}_{\tilde\Phi_v|\Phi}$ in Eq.~\eqref{eq:GMfield} (projected to $\Hphys$).  

Finally, we emphasize that frame reorientation generators themselves can be obtained from linear combinations of dressed electric fields at the corner.
Using the Gau{\ss} law at a boundary node $v\in \p \Sigma$, we get, on $\Hphys$,
    \be
    \sum_{e \in \p\Sigma \cap \cE_v} (\hat{\mathfrak{G}}^v_e)^a = - \sum_{e \in \bar \Sigma \cap \cE_v} (\hat {\mathfrak{G}}^v_e)^a\,.
    \ee
    The right-hand side contains the electric fields normal to $\p \Sigma$, tangential to the frame  Wilson line, that act as a left vector field on $\H_{\Phi_v}$. In particular, if the boundary node has only one incoming edge from the complement, then $\sum_{e \in \bar \Sigma \cap \cE_v} (\mathfrak{G}^v_e)^a = \hat L_{\Phi_v}$. Now, in the extrinsic algebra, we have the parallel transported version of the operator in the left hand side of the equation above, all the way to the asymptotic boundary:
    \be
    \sum_{e \in \p\Sigma \cap \cE_v} \cO_{(\hat {\mathfrak{G}}^v_e)^a|\Phi}(\mathbf{e})=
    \sum_{e \in \Sigma \cap \cE_v} (\hat{\mathfrak{G}}^v_e)^a \l\hat{g}_\text{Adj}\r_{\Phi_v}^{ab} = - \sum_{e \in \bar \Sigma \cap \cE_v} (\hat{\mathfrak{G}}^v_e)^a \l\hat{g}_\text{Adj}\r_{\Phi_v}^{ab}\,.\label{eq:reorientgen}
    \ee
    The right-hand side contains $\hat L_{\Phi_v}^a \l\hat{g}_\text{Adj}\r_{\Phi_v}^{ab}$, that, by Eq. \eqref{eq:l=r}, is equal to the generator of right transformations on the frame, corresponding to frame reorientations, i.e.\ the electric corner symmetry group $\mathbb{G}_{\p\Sigma}$. This shows that the dressing of what looks like the generator of boundary gauge transformations on the boundary $\p \Sigma$ in the relational perspective, acts as a frame reorientation on the frame in the physical perspective.

In the Abelian case, all the dressings of $E$ are trivial and linear combinations of any electric field with support on the subregion (or normal to $\p \Sigma$, tangential to the frame) are part of the generators of the algebra.

\begin{minipage}{\textwidth}
    \centering
    \vspace{5pt}
    \captionsetup{hypcap=false} 
    \includegraphics[width=0.5\linewidth]{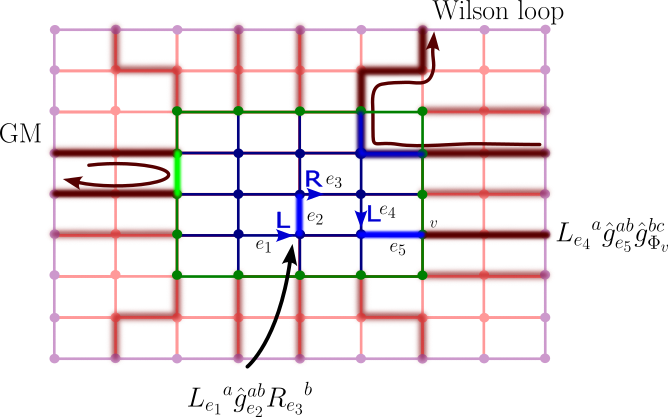}
    \captionof{figure}{\small Schematic illustration of some operators in the extrinsic algebra: Wilson loop attached to the global boundary (top right corner), with the particular example of one GM operator, dressing the intrinsic frame relative to the extrinsic one (left side); one example of a quadratic combination of $L_{e_1}$ and $R_{e_3}$ connected by the edge $e_2$ (in the bulk) and finally the electric field $L_{e_4}$, parallel transported to $\p \Sigma$ with $e_5$ and then dressed with $\Phi$.}  
\label{fig:ext frame dress}
\end{minipage}\\

\subsection*{Intrinsic relational algebra}

The intrinsic algebra has a similar overall structure, with the key difference that corner degrees of freedom must be treated separately for the frame tree $\tilde\Phi$ and its complement on $\p \Sigma$.  
As we saw in Eq.~\eqref{eq:intframepers}, \emph{after} imposing gauge constraints everywhere, \emph{except} on $\p\Sigma$, the complement of $\tilde\Phi$ among the surviving degrees of freedom is 
\begin{equation}
    \H_{\textbf{S}}=\H^{\rm phys}_{\mathring{\Sigma}}\otimes\H_{\p\Sigma\setminus\tilde\Phi}\otimes\H_\Phi\otimes\H_{\bar\Sigma\setminus\Phi}^{\rm phys}\,,
\end{equation}
where again the superscript `phys' indicates that gauge constraints are imposed everywhere, except on the corner. We have also seen that, after imposing constraints also on the corner, $\tilde\Phi$'s perspective is given by
\begin{equation}\label{eq:intfac2}
  \mathcal{R}_{\tilde\Phi}(\mathring{\mathbf{g}}_{\p\Sigma})\left(\Hphys\right)=\mathbf{P}^\text{phys}_\textbf{S}(\mathring{\textbf{g}}_{\p\Sigma})\left(\H_{\textbf{S}}\right)
\end{equation}
with $\mathbf{P}^\text{phys}_\textbf{S}(\mathring{\textbf{g}}_{\p\Sigma})$ the projector onto the subspace invariant under $\tilde\Phi$'s isotropy group $H$, which, in non-Abelian theories, acts non-locally on all four factors of Eq.~\eqref{eq:intfac2}. $\mathcal{R}_{\tilde\Phi}$ is unitary on this image.

Let now $S_\alpha$ be any of $\mathring{\Sigma},\p\Sigma\setminus\tilde\Phi,\Phi,\bar\Sigma\setminus\Phi$ and $\mathcal{A}_{S_\alpha}:=\mathcal{B}(\H^{\rm phys}_{S_\alpha})$ (with the `phys' dropped for $\p\Sigma\setminus\tilde\Phi,\Phi$). Invoking the generalization of Eq.~\eqref{eq:relphys} to incomplete QRFs \cite[Thm.~2]{delaHamette:2021oex} and adapting Eq.~\eqref{eq:Gpi}, we have that the algebra that $\tilde\Phi$ associates with $S_\alpha$ is given by
\begin{equation}\label{eq:isoalgebra}
   \mathcal{A}_{S_\alpha|\tilde\Phi}:= \hat{\mathbf{P}}^\text{phys}_\textbf{S}(\mathring{\textbf{g}}_{\p\Sigma})\left(\mathcal{A}_{S_\alpha}\right)=\mathcal{A}_{S_\alpha}^H\,\mathbf{P}^\text{phys}_\textbf{S}(\mathring{\textbf{g}}_{\p\Sigma})=\mathbf{P}^\text{phys}_\textbf{S}(\mathring{\textbf{g}}_{\p\Sigma})\,\mathcal{A}_{S_\alpha}^H\,,
\end{equation}
with $\mathcal{A}_{S_\alpha}^H=\mathfrak{H}(\mathcal{A}_{S_\alpha})\subset\mathcal{A}_{S_\alpha}$ the $H$-invariant subalgebra, produced by the incoherent group average over $\tilde\Phi$'s isotropy group,
\begin{equation}
    \mathfrak{H}(\bullet):=\frac{1}{\rm{Vol}(H)}\int_H\dd{h} \otimes_{v\in\p\Sigma}U_{\textbf{S},v}(h_{g_v})(\bullet)\otimes_{v\in\p\Sigma}U^\dag_{\textbf{S},v}(h_{g_v})\,,\qquad\qquad h_{g_v}:=g_vhg_v^{-1}\,,\quad g_{v_0}=e\,.
\end{equation}
That is, it is the isotropy-group-averaged algebra of $S_\alpha$; an incomplete QRF can only resolve what is invariant under its own isotropy group \cite{delaHamette:2021oex,Chataignier:2024eil}. The last equality in Eq.~\eqref{eq:isoalgebra} holds because $\mathbf{P}^\text{phys}_\textbf{S}(\mathring{\textbf{g}}_{\p\Sigma})$ is a coherent average over this isotropy group. In particular, since the incoherent twirl acts trivially on uninvolved tensor factors, we further have
\begin{equation}
    \Big[\mathfrak{H}(\mathcal{A}_{S_\alpha}),\mathfrak{H}(\mathcal{A}_{S_\beta})\Big]=0\,,\qquad \alpha\neq\beta\,,
\end{equation}
and so
\begin{equation}
   \Big[\mathcal{A}_{S_\alpha|\tilde\Phi},\mathcal{A}_{S_\beta|\tilde\Phi}\Big]=0\,,\qquad \alpha\neq\beta\,. 
\end{equation}
That is, the kinematical commutation, and hence distinction, between the $S_\alpha$-algebras survives in $\tilde\Phi$'s perspective, despite its incompleteness and the projector.
This works because our intrinsic frame $\tilde\Phi$ is still ideal, i.e.\ has perfectly distinguishable orientation states. For non-ideal QRFs, the projected algebra will in general no longer be an algebra;\footnote{The projector will then be a smeared, i.e.\ `fuzzy' group average \cite{delaHamette:2021oex,DeVuyst:2024uvd,Hoehn:2019fsy}.} e.g., projected elements from kinematically distinct algebras may fail to commute upon imposing gauge symmetries \cite{AliAhmad:2021adn}.

Crucially, however, the algebras $\mathcal{A}_{S_\alpha|\tilde\Phi}$ fail to be factors. This is because the incoherent twirl $\mathfrak{H}$ introduces a superselection across irreps of $H$ into the algebra. The center is then comprised of the Casimir elements of the irrep, as detailed below in App.~\ref{app_electriccenter}. In particular, as the $\mathfrak{H}$-twirl acts globally, we have
\begin{equation}
    \left(\mathcal{A}_{S_{\alpha}}\otimes\mathcal{A}_{S_\beta}\right)^H\supsetneq \mathcal{A}_{S_\alpha}^H\otimes\mathcal{A}_{S_\beta}^H\,,\qquad\quad \alpha\neq \beta\,.
\end{equation}
The right hand side misses some $H$-invariant observables with joint support on $S_\alpha,S_\beta$.

As we have seen in App.~\ref{app:factor}, the exception is given by Abelian theories, where $H$ acts non-trivially only on $S_\alpha=\Phi$, i.e.\ the extrinsic frame, and so also $\hat{\mathbf{P}}^\text{phys}_\textbf{S}(\mathring{\textbf{g}}_{\p\Sigma})$ and $\mathfrak{H}$ act non-trivially only on that factor. In this case, all of $\mathcal{A}_{S_\alpha|\tilde\Phi}$ remain Type I factors.\footnote{For $S_\alpha=\Phi$, it is $\mathcal{A}_{\Phi|\tilde\Phi}:=\mathcal{B}\left({\mathbf{P}}^\text{phys}_\textbf{S}(\mathring{\textbf{g}}_{\p\Sigma})\H_\Phi\right)$.} 

We may now use the unitarity of $\mathcal{R}_{\tilde\Phi}$ to lift these algebras back into the physical Hilbert space, and it follows from \cite[Thm.~2]{delaHamette:2021oex} that
\begin{equation}\label{eq:Aphys2}
    \mathcal{A}_{\rm phys}=\mathcal{R}^\dag_{\tilde\Phi}(\mathring{\mathbf{g}}_{\p\Sigma})\left(\bigotimes_{S_\alpha=\mathring{\Sigma},\p\Sigma\setminus\tilde\Phi,\Phi,\bar\Sigma\setminus\Phi}\mathcal{A}_{S_\alpha}\right)^H\mathcal{R}_{\tilde\Phi}(\mathring{\mathbf{g}}_{\p\Sigma})\,,
\end{equation}
is the total algebra of relational observables relative to $\tilde\Phi$, which, by construction, is a Type I factor, and
\begin{equation}\label{eq:intframealg}
  \mathcal{A}^{\rm phys}_{S_\alpha|\tilde\Phi}:=\mathcal{R}^\dag_{\tilde\Phi}(\mathring{\mathbf{g}}_{\p\Sigma})\left(\mathcal{A}_{S_\alpha}^H\right)\mathcal{R}_{\tilde\Phi}(\mathring{\mathbf{g}}_{\p\Sigma})\,,
\end{equation}
which (except in the Abelian case) is not a factor, due to the unitarity of $\mathcal{R}_{\tilde\Phi}$. For this reason, the tensor product on the r.h.s.\ of Eq.~\eqref{eq:Aphys2} cannot be pulled out of the conjugation. Note that the nonlocal coherent projector ${\mathbf{P}}^\text{phys}_\textbf{S}(\mathring{\textbf{g}}_{\p\Sigma})$ drops out in the conjugation by $\mathcal{R}_{\tilde\Phi}$, as it automatically maps onto its image, cf.~App.~\ref{app:factor}. For the regional observable algebra that $\tilde\Phi$ assigns to $\Sigma$, we have
\begin{equation}
    \mathcal{A}_{\Sigma\setminus\tilde\Phi|\tilde\Phi}^{\rm phys}:=\mathcal{R}^\dag_{\tilde\Phi}(\mathring{\mathbf{g}}_{\p\Sigma})\left(\mathcal{A}_{S_{\mathring{\Sigma}}}\otimes\mathcal{A}_{\p\Sigma\setminus\tilde\Phi}\right)^H\mathcal{R}_{\tilde\Phi}(\mathring{\mathbf{g}}_{\p\Sigma})\supsetneq\mathcal{R}^\dag_{\tilde\Phi}(\mathring{\mathbf{g}}_{\p\Sigma})\left(\mathcal{A}^H_{S_{\mathring{\Sigma}}}\otimes\mathcal{A}^H_{\p\Sigma\setminus\tilde\Phi}\right)\mathcal{R}_{\tilde\Phi}(\mathring{\mathbf{g}}_{\p\Sigma})\,.\label{eq:regionalinta}
\end{equation}
It is thus important to consider the two subsystems $\mathring\Sigma$ and $\p \Sigma \backslash\tilde\Phi$ together if our aim is to have an intrinsic relational description of the subregion $\Sigma$. This will have consequences on the subsystem relativity discussed in the following appendix, because the relativity is already present at the kinematical level and the frames always partially overlap, in contrast to the usual treatment as in \cite{AliAhmad:2021adn,Hoehn:2023ehz,delaHamette:2021oex,DeVuyst:2024pop,DeVuyst:2024uvd,Castro-Ruiz:2021vnq,AliAhmad:2024wja} and what we will discuss for the extrinsic case.

For a given $\tilde \Phi$, the algebra $\mathcal{A}^{\rm phys}_{\Sigma\setminus\tilde\Phi|\tilde\Phi}$ contains all the relational observables, according to Eq.~\eqref{eq:relobs}, dressing kinematical degrees of freedom in $\Sigma\setminus\tilde\Phi$ with $\tilde\Phi$. Thus it contains: 

\begin{itemize}
    \item Wilson loops entirely supported on $\Sigma = \mathring{\Sigma} \cup \p\Sigma$. If these have no support on the tree, meaning they are completely internal to $\Sigma$, or closed on the boundary by using only complementary links to the frame tree, they are already gauge invariant in $S=\Sigma\setminus\tilde\Phi$, on the other hand, they are obtained through the frame dressing in Eq.~\eqref{eq:intloop} if part of them is supported on the edges defining $\tilde\Phi$.
    \item Quadratic combinations of electric fields $(\hat {\mathfrak{G}}_e^v)^a$, acting on internal or boundary frame-complement edges, i.e. $v \in \Sigma$, $e \in \Sigma \backslash \tilde \Phi$. If acting at two different nodes, these are connected with any possible path supported on $\Sigma = \mathring{\Sigma} \cup \p\Sigma$, as in Eq.~\eqref{eq:c30}. If the path is supported only on $\Sigma \backslash \tilde \Phi$, they are already part of the subsystem algebra $\cA^\phys_S$. Otherwise, the segments lying in $\tilde \Phi$ appear, in the correct path-ordering fashion, when dressing and twirling on the boundary, cf.~Eq.~\eqref{eq:c28}.
\end{itemize}
Notably, no linear combinations of $L$ and $R$ electric fields acting on $\mathbf{S}$ are allowed in the non-Abelian case, but they are all present if $G$ is Abelian.

Finally, we can identify the central element with the dressed Casimir of the H-twirl generators. The undressed generators of constant gauge transformations on $\Sigma\setminus\tilde \Phi$ are given by the sum of Gau{\ss} generators on the corner $\p\Sigma$ restricted  to $\mathbf{S}$, meaning (for intrinsic frame in identity orientation) 
\be
U_{\textbf{S},v}(h) = e^{- i \theta^a \hat H^a }\,,\q\wth\,,\;\;\;   h = e^{i \theta^a T^a}\,,\;\;\; \hat H^a = \sum_{v \in \p \Sigma}\,\, \sum_{e  \in \mathcal{E}_v\cap \Sigma\setminus\tilde\Phi}  (\hat{\mathfrak{G}}_e^v)^a \,, 
\ee
representing the total electric field tangential and interior transversal to the corner in $\mathbf{S}$. Using the results in App.~\ref{app:relational_obs}, we immediately get that the dressed version of the corresponding Casimir is
\be\label{eq:intcenter}
 \mathcal{Z}_{\tilde{\Phi}}= \sum_{v,w \in \p \Sigma}\,\, \sum_{e  \in \mathcal{E}_v\cap \Sigma\setminus\tilde\Phi} \sum_{e'  \in \mathcal{E}_w\cap \Sigma\setminus\tilde\Phi} \l (\hat{\mathfrak{G}}_e^v)^a({\hat{g}_{\rm Adj}})_{\ell(\tilde \Phi)}^{ab}  (\hat{\mathfrak{G}}_{e'}^w)^b \r
\ee
where $\hat g_{\ell(\tilde \Phi)}$ is the unique Wilson line connecting the two nodes, entirely supported on the frame tree. Note that for Abelian theories, this operator is trivial. In that case, the dressing is trivial. Moreover, the contribution from the tangential electric fields vanishes because we add the $L$ and $R$ parts of those edges (for Abelian theories, these coincide) separately and with relative sign. We are thus left with the square of the sum of the interior normals to $\p \Sigma$, which vanishes by the Gau{\ss} law.

 \section{Subsystem relativity on the lattice for extrinsic edge mode QRFs}\label{app:subrel}

In this appendix, we adapt the property of \emph{subsystem relativity} \cite{AliAhmad:2021adn,Hoehn:2023ehz,delaHamette:2021oex,DeVuyst:2024pop,DeVuyst:2024uvd,Castro-Ruiz:2021vnq,AliAhmad:2024wja} to lattice gauge theories and refine some corresponding observations.

We say that two extrinsic frame choices $\Phi,\Phi'$ are non-overlapping if the edges of their Wilson lines do not overlap. Whenever the lattice $\mathcal{L}$ is sufficiently large, one can find such non-overlapping extrinsic QRFs. The following result is an adaptation and refinement of \cite[Thm.~1]{Hoehn:2023ehz} and results in \cite{DeVuyst:2024pop,DeVuyst:2024uvd} to the present non-Abelian lattice case.

\begin{prop}[\textbf{Extrinsic relational algebra intersections}]\label{prop:1}
Let $\Phi,\Phi'$ be two non-overlapping extrinsic QRFs for $\p\Sigma$. Furthermore, let $S_\alpha$ be any of the regional subsystems $\mathring{\Sigma},\p\Sigma\setminus\tilde\Phi,\tilde\Phi$, or any of their unions. Then,
\begin{equation}
\mathcal{A}^{\rm phys}_{S_\alpha|\Phi}\cap\mathcal{A}^{\rm phys}_{S_\alpha|\Phi'}=\hat\Pi^{\rm phys}_{\p\Sigma}\left(\mathcal{A}_{S_\alpha}\right)=\mathcal{A}^{\mathcal{G}_{\p\Sigma}}_{S_\alpha}\Pi^{\rm phys}_{\p\Sigma}=\mathbb{G}_{\p\Sigma}\left(\mathcal{A}^{\rm phys}_{S_\alpha|\Phi}\right)=\mathbb{G}'_{\p\Sigma}\left(\mathcal{A}^{\rm phys}_{S_\alpha|\Phi'}\right)\,,
\end{equation}
where $\mathbb{G}_{\p\Sigma}(\bullet),\mathbb{G}_{\p\Sigma}'(\bullet)$ denote the $G$-twirls over the reorientation groups $\mathbb{G}_{\p\Sigma},\mathbb{G}_{\p\Sigma}'$ of $\Phi,\Phi'$, respectively.
\end{prop}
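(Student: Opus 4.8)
The plan is to prove the string of equalities by working from the middle and right toward the left, and then to pin down the intersection by squeezing it between two of the intermediate expressions. The two central equalities $\hat\Pi^{\rm phys}_{\p\Sigma}(\mathcal{A}_{S_\alpha})=\mathcal{A}^{\mathcal{G}_{\p\Sigma}}_{S_\alpha}\Pi^{\rm phys}_{\p\Sigma}$ follow immediately from the operator identity $\mathcal{G}_{\p\Sigma}(\bullet)\,\Pi^{\rm phys}_{\p\Sigma}=\hat\Pi^{\rm phys}_{\p\Sigma}(\bullet)$ [Eq.~\eqref{eq:Gpi}, applied to the corner gauge group], together with the elementary fact that the image of the incoherent twirl $\mathcal{G}_{\p\Sigma}$ on $\mathcal{A}_{S_\alpha}$ is exactly its fixed-point subalgebra $\mathcal{A}^{\mathcal{G}_{\p\Sigma}}_{S_\alpha}$.

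For the equalities $\mathcal{A}^{\mathcal{G}_{\p\Sigma}}_{S_\alpha}\Pi^{\rm phys}_{\p\Sigma}=\mathbb{G}_{\p\Sigma}(\mathcal{A}^{\rm phys}_{S_\alpha|\Phi})=\mathbb{G}'_{\p\Sigma}(\mathcal{A}^{\rm phys}_{S_\alpha|\Phi'})$ I would first recall that, $\Phi$ being ideal and complete for $\mathcal{G}_{\p\Sigma}$, the reduction map $\mathcal{R}_\Phi$ is unitary and $f_S\mapsto\mathcal{O}_{f_S|\Phi}(\mathbf{g}_{\p\Sigma})=\mathcal{R}^\dag_\Phi(\mathbf{g}_{\p\Sigma})\,f_S\,\mathcal{R}_\Phi(\mathbf{g}_{\p\Sigma})$ is a unital $*$-isomorphism of $\mathcal{A}_{S_\alpha}$ onto $\mathcal{A}^{\rm phys}_{S_\alpha|\Phi}$ [\cite[Thm.~1,2]{delaHamette:2021oex}, cf.~Eqs.~\eqref{eq:obsred},~\eqref{eq:relphys}]; in particular, every element of $\mathcal{A}^{\rm phys}_{S_\alpha|\Phi}$ is itself a single relational observable. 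Incoherently twirling $\mathcal{O}_{f_S|\Phi}(\mathbf{g}_{\p\Sigma})$ over $\Phi$'s reorientation group, using the label-translation property $V_\Phi(\mathbf{g}')\,\mathcal{O}_{f_S|\Phi}(\mathbf{g}_{\p\Sigma})\,V^\dag_\Phi(\mathbf{g}')=\mathcal{O}_{f_S|\Phi}(\mathbf{g}_{\p\Sigma}\mathbf{g}'^{-1})$, a change of variables over $\mathbb{G}_{\p\Sigma}\simeq\mathcal{G}_{\p\Sigma}$, and the orientation-basis completeness $\tfrac{1}{\mathrm{Vol}(\mathcal{G}_{\p\Sigma})}\int\!\dd\mathbf{g}''\,\ket{\mathbf{g}''}\!\bra{\mathbf{g}''}_\Phi=\mathbb{1}_\Phi$, this collapses to $\mathbb{G}_{\p\Sigma}\big(\mathcal{O}_{f_S|\Phi}(\mathbf{g}_{\p\Sigma})\big)=\hat\Pi^{\rm phys}_{\p\Sigma}(f_S)$. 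Ranging over $f_S\in\mathcal{A}_{S_\alpha}$ gives $\mathbb{G}_{\p\Sigma}(\mathcal{A}^{\rm phys}_{S_\alpha|\Phi})=\hat\Pi^{\rm phys}_{\p\Sigma}(\mathcal{A}_{S_\alpha})$ — i.e.\ the $\mathbb{G}_{\p\Sigma}$-invariant subalgebra of $\mathcal{A}^{\rm phys}_{S_\alpha|\Phi}$ — and identically for $\Phi'$.

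It then remains to show $\mathcal{A}^{\rm phys}_{S_\alpha|\Phi}\cap\mathcal{A}^{\rm phys}_{S_\alpha|\Phi'}=\hat\Pi^{\rm phys}_{\p\Sigma}(\mathcal{A}_{S_\alpha})$. The inclusion ``$\supseteq$'' is easy: for corner-invariant $a\in\mathcal{A}^{\mathcal{G}_{\p\Sigma}}_{S_\alpha}$ the dressing does nothing, $\mathcal{O}_{a|\Phi}(\mathbf{g}_{\p\Sigma})=a\,\Pi^{\rm phys}_{\p\Sigma}=\mathcal{O}_{a|\Phi'}(\mathbf{g}'_{\p\Sigma})$, so this operator lies in both relational algebras. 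For ``$\subseteq$'', given $X$ in the intersection it suffices, by the previous paragraph, to show $X$ is invariant under $\Phi$'s reorientations; then $X=\mathbb{G}_{\p\Sigma}(X)\in\mathbb{G}_{\p\Sigma}(\mathcal{A}^{\rm phys}_{S_\alpha|\Phi})=\hat\Pi^{\rm phys}_{\p\Sigma}(\mathcal{A}_{S_\alpha})$. To get this invariance I would realize $\Phi$'s reorientations (for anchor points all mutually distinct, which together with $\Phi$'s and $\Phi'$'s Wilson lines avoiding each other's anchor vertices is what ``non-overlapping'' buys us on a sufficiently large $\mathcal{L}$) as large gauge transformations supported at $\Phi$'s anchor vertices on $\mathcal{B}$: these are physical, hence preserve $\Hphys$, and conjugation by them is an automorphism of $\mathcal{B}(\Hphys)$. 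By the generator description of App.~\ref{app:alg_content} — regional Wilson lines/loops and parallel-transported $\Phi'$-dressed electric fields, cf.~Eqs.~\eqref{eq:Open_W_lines_ext_dress},~\eqref{eq:L_ext_dress} — each generator of $\mathcal{A}^{\rm phys}_{S_\alpha|\Phi'}$ is supported only on edges of $\Sigma$ and of $\Phi'$'s Wilson lines, none of which is incident to a $\Phi$-anchor vertex; hence $\Phi$'s reorientations commute with every generator, thus with all of $\mathcal{A}^{\rm phys}_{S_\alpha|\Phi'}$, and in particular fix $X$.

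The main obstacle is precisely this ``$\subseteq$'' step: although $\Phi$'s reorientations act trivially on $\Sigma$ and on $\tilde\Phi$, in general they may act nontrivially on the complement $\bar\Sigma\setminus\Phi$, which is exactly where $\Phi'$ lives, so a priori they could spoil $\Phi'$-dressed observables. One must exploit the non-overlap (disjoint edges and anchors, each frame's lines avoiding the other's anchor vertices) to certify that the canonical realization of $\Phi$'s reorientations as large gauge transformations at $\Phi$'s anchor vertices acts as the identity on the $\Phi'$-dressed algebra. This is where the explicit lattice structure and the generator content of the extrinsic algebras do the real work; everything else is formal manipulation of twirls, coherent/incoherent projectors, and relational observables.
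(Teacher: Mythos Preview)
Your overall structure is correct and matches the paper for the central equalities, the twirl identities, and the ``$\supseteq$'' inclusion. The difference lies in the ``$\subseteq$'' step, where you make the problem harder than it is.

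In your second paragraph you already invoke the pure reorientation $V_\Phi$ via its label-translation property $V_\Phi(\mathbf{g}')\,\mathcal{O}_{f_S|\Phi}(\mathbf{g})\,V^\dag_\Phi(\mathbf{g}')=\mathcal{O}_{f_S|\Phi}(\mathbf{g}\mathbf{g}'^{-1})$. That operator acts \emph{only} on $\H_\Phi$ (it is generated by the right electric fields on $\Phi$'s $\mathcal{B}$-incident edges, cf.~the main text). Since ``non-overlapping'' means exactly that $\Phi$'s and $\Phi'$'s edges are disjoint, $\H_\Phi$ and $\H_{\Phi'}$ are distinct tensor factors; and because $\mathcal{O}_{f_{S_\alpha}|\Phi'}$ is supported on $\H_{\Phi'}\otimes\H_{S_\alpha}$ (plus the corner projector, which commutes with $V_\Phi$), one has $[V_\Phi,\mathcal{O}_{f_{S_\alpha}|\Phi'}]=0$ immediately. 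This is precisely the paper's route: any $X\in\mathcal{A}^{\rm phys}_{S_\alpha|\Phi'}$ is $V_\Phi$-invariant, and combining with $X=\mathcal{O}_{f|\Phi}$ for some $f\in\mathcal{A}_{S_\alpha}$ forces, via label-translation and the injectivity of $f\mapsto\mathcal{O}_{f|\Phi}$, that $f\in\mathcal{A}^{\mathcal{G}_{\p\Sigma}}_{S_\alpha}$.

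Your detour via large gauge transformations at $\Phi$'s anchor vertices is correct but costs you extra hypotheses (mutually distinct anchors, each frame's lines avoiding the other's anchor vertices) that the proposition does \emph{not} assume --- the paper explicitly allows shared anchor points on $\mathcal{B}$. It also creates a small gap you do not address: the large gauge transformation at $v_0$ equals $V_{\Phi_v}\otimes W$, with $W$ acting on the other edges at $v_0$, so invariance under the large gauge map is not literally invariance under $V_\Phi$ (the twirl in the statement is over the latter). This is patchable under your extra hypotheses, since $W$ acts away from the support of $X$, but it is one more thing to argue. In short, your ``main obstacle'' is self-inflicted: the possible complement action of reorientations you worry about arises only from the large-gauge realization; the pure $V_\Phi$ you already use in paragraph~2 has none, and closes the argument with no further lattice geometry.
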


\begin{proof}
 $\Phi,\Phi'$ non-overlapping means that their Hilbert spaces factorize, $\H_{\rm kin}\simeq\cdots\otimes\H_\Phi\otimes\H_{\Phi'}\otimes\cdots$, and that, while they may share anchor points on $\mathcal{B}$, the generators of their reorientations are independent. Thus, by construction, $\big[V_{\Phi'}(\mathbf{g}_{\p\Sigma}'),\mathcal{O}_{f_{S_\alpha}|\Phi}(\mathbf{g}_{\p\Sigma})\big]=0$, for all $\mathbf{g}_{\p\Sigma},\mathbf{g}_{\p\Sigma}'\in G^{V_{\p\Sigma}}$ because $\mathbb{G}_{\p\Sigma}\simeq\mathbb{G}'_{\p\Sigma}\simeq G^{V_{\p\Sigma}}$. That is, relational observables of $S_\alpha$ relative to $\Phi$ commute with reorientations of $\Phi'$, and vice versa, with $\Phi,\Phi'$ exchanged. Hence, operators in $\mathcal{A}^{\rm phys}_{S_\alpha|\Phi}\cap\mathcal{A}^{\rm phys}_{S_\alpha|\Phi'}$ must commute with reorientations of both extrinsic frames. In particular, for $\mathcal{O}_{f_{S_\alpha}|\Phi}(\mathbf{g}_{\p\Sigma})\in\mathcal{A}^{\rm phys}_{S_\alpha|\Phi}\cap\mathcal{A}^{\rm phys}_{S_\alpha|\Phi'}$, we have
 \begin{equation}
  \mathcal{O}_{f_{S_\alpha}|\Phi}(\mathbf{g}_{\p\Sigma})=V_{\Phi}(\mathbf{g}_{\p\Sigma}')\,\mathcal{O}_{f_{S_\alpha}|\Phi}(\mathbf{g}_{\p\Sigma})\,V_\Phi^\dag(\mathbf{g}'_{\p\Sigma})=\mathcal{O}_{f_{S_{\alpha}}|\Phi}(\mathbf{g}_{\p\Sigma}{\mathbf{g}'}_{\p\Sigma}^{-1})=\mathcal{O}_{U^\dag_{S_\alpha}(\tilde{\mathbf{g}}_{\p\Sigma})f_{S_\alpha}U_{S_\alpha}(\tilde{\mathbf{g}}_{\p\Sigma})|\Phi}(\mathbf{g}_{\p\Sigma})\,,
 \end{equation}
 with $\tilde{\mathbf{g}}_{\p\Sigma}=\mathbf{g}_{\p\Sigma}{\mathbf{g}'}_{\p\Sigma}^{-1}\mathbf{g}_{\p\Sigma}^{-1}$. Applying Eq.~\eqref{eq:obsred} to the present case implies $f_{S_\alpha}=U^\dag_{S_\alpha}(\tilde{\mathbf{g}}_{\p\Sigma})f_{S_\alpha}U_{S_\alpha}(\tilde{\mathbf{g}}_{\p\Sigma})$, and because this holds for all $\mathbf{g}_{\p\Sigma},\mathbf{g}'_{\p\Sigma}\in G^{V_{\p\Sigma}}$, it follows that $\big[f_{S_{\alpha}},U_{S_{\alpha}}(\mathbf{g}_{\p\Sigma})\big]=0$ for all $\mathbf{g}_{\p\Sigma}\in G^{V_{\p\Sigma}}$ and so $f_{S_{\alpha}}\in\mathcal{A}_{S_{\alpha}}^{\mathcal{G}_{\p\Sigma}}$. For such $f_{S_\alpha}$, we have $\mathcal{O}_{f_{S_\alpha}|\Phi}(\mathbf{g}_{\p\Sigma})=f_{S_\alpha}\Pi^{\rm phys}_{\p\Sigma}$, using that $\Phi$ is complete, so that $1/\rm{Vol}(\mathcal{G}_{\p\Sigma})\int_{\mathcal{G}_{\p\Sigma}}\dd\mathbf{g}\ket{\mathbf{g}}\!\bra{\mathbf{g}}_\Phi=\mathds1_{\Phi}$. Conversely, it is clear that $\mathcal{O}_{f_{S_\alpha}|\Phi}(\mathbf{g}_{\p\Sigma})\in\mathcal{A}^{\rm phys}_{S_\alpha|\Phi}\cap\mathcal{A}^{\rm phys}_{S_\alpha|\Phi'} $ for all $f_{S_\alpha}\in\mathcal{A}^{\mathcal{G}_{\p\Sigma}}_{S_\alpha}$ (and likewise for $\Phi'$). This establishes the first two equalities.

 To establish also the last two equalities, note that, for all $f_{S_\alpha}\in\mathcal{A}_{S_\alpha}$ and all $\mathbf{g}_{\p\Sigma}\in\mathcal{G}_{\p\Sigma}$,
 \begin{equation}
     \mathbb{G}_{\p\Sigma}\left(\mathcal{O}_{f_{S_\alpha}|\Phi}(\mathbf{g}_{\p\Sigma})\right)=\hat\Pi^{\rm phys}_{\p\Sigma}\left(\left(\frac{1}{\rm{Vol}(G^{V_{\p\Sigma}})}\int_{\mathbb{G}_{\p\Sigma}}\dd\mathbf{h}\ket{\mathbf{g}_{\p\Sigma}\mathbf{h}^{-1}}\!\bra{\mathbf{g}_{\p\Sigma}\mathbf{h}^{-1}}_\Phi\right)\otimes f_{S_{\alpha}}\right)=\hat\Pi^{\rm phys}_{\p\Sigma}\left(\mathds1_\Phi\otimes f_{S_\alpha}\right)\,,
 \end{equation}
 where we made use of the left and right invariance of the Haar measure for compact groups and of the completeness of $\Phi$ (and suppressed uninvolved tensor factors). The same reasoning applies to $\Phi'$.
\end{proof}

This is \emph{subsystem relativity} \cite{Hoehn:2023ehz,AliAhmad:2021adn,delaHamette:2021oex,Castro-Ruiz:2021vnq,DeVuyst:2024pop,DeVuyst:2024uvd} on the lattice: the two extrinsic frames $\Phi,\Phi'$ describe the subsystem $S_\alpha$ with intersecting, yet distinct subalgebras of the full algebra $\mathcal{A}_{\rm phys}=\mathcal{B}(\Hphys)$ on the physical Hilbert space. The two extrinsic QRFs only share $S_\alpha$ observables that are already gauge-invariant without dressing to either of $\Phi,\Phi'$ (e.g.\ Wilson loops in the bulk of $\Sigma$). But clearly, there are many relational observables dressing $S_\alpha$ to $\Phi$ (resp.\ $\Phi'$) that depend non-trivially on $\Phi$ (resp.\ $\Phi'$) and thus transform non-trivially under its reorientations and are not contained in the intersection (e.g.\ any Wilson line in $\Sigma$ with two endpoints on $\p\Sigma$ that upon dressing with $\Phi,\Phi'$ extends to a small gauge-invariant Wilson line with its two endpoints on $\mathcal{B}$). Hence, $\mathcal{A}^{\rm phys}_{S_\alpha|\Phi}\cap\mathcal{A}^{\rm phys}_{S_\alpha|\Phi'}=\mathcal{A}^{\mathcal{G}_{\p\Sigma}}_{S_\alpha}\,\Pi_{\p\Sigma}^{\rm phys}\subsetneq\mathcal{A}^{\rm phys}_{S_\alpha|\Phi}$ (and likewise for $\Phi'$). Indeed, the two relational algebras relative to $\Phi,\Phi'$ differ precisely in their cross-boundary gauge-invariant data, as illustrated in Fig.~\ref{fig:algebras&sub_rel}. That is, they each capture \emph{different relational information between $\Sigma$ and its complement $\bar{\Sigma}$} (cf.~\cite{Araujo-Regado:2024dpr} for an analogous classical discussion).

This has immediate consequences, establishing that the QRF-defined TPSs on $\Hphys$ indeed depend on the choice of QRF\,---\,with repercussion for relational entanglement entropies. To see this, denote by
\begin{equation}\label{eq:QRFtransf}
 V_{\Phi'\to\Phi}(\mathbf{g}_{\p\Sigma}',\mathbf{g}_{\p\Sigma}):=\mathcal{R}_\Phi(\mathbf{g}_{\p\Sigma})\,\mathcal{R}^\dag_{\Phi'}(\mathbf{g}'_{\p\Sigma})=\frac{1}{\rm{Vol}(G^{V_{\p\Sigma}})}\int_{\mathcal{G}_{\p\Sigma}}\dd\mathbf{h}\ket{\mathbf{h}\mathbf{g}'_{\p\Sigma}}_{\Phi'}\otimes\bra{\mathbf{h}^{-1}\mathbf{g}_{\p\Sigma}}_\Phi\otimes U_{\textbf{S}}(\mathbf{h}) 
\end{equation}
the QRF transformation from $\Phi'$ in orientation $\mathbf{g}'_{\p\Sigma}$ to QRF $\Phi$ in orientation $\mathbf{g}_{\p\Sigma}$, cf.~\cite[Thm.~4]{delaHamette:2021oex}.

\begin{corol}
 The intersection of the relational observable algebras relative to extrinsic frames $\Phi,\Phi'$ reduces as follows into $\Phi$'s perspective (likewise for $\Phi'$'s perspective):
 \begin{equation}
        \mathcal{R}_{\Phi}(\mathbf{g}_{\p\Sigma})\,\left(\mathcal{A}^{\rm phys}_{S_\alpha|\Phi}\cap\mathcal{A}^{\rm phys}_{S_\alpha|\Phi'}\right)\,\mathcal{R}_\Phi^\dag(\mathbf{g}_{\p\Sigma})=\mathcal{A}_{S_\alpha}^{\mathcal{G}_{\p\Sigma}}=\mathcal{G}_{\p\Sigma}(\mathcal{A}_{S_\alpha})\,.
    \end{equation}
In particular, under QRF transformations\footnote{We have implicitly deleted a $\mathds1_{\Phi}$ in $\mathcal{A}^{\rm kin}_{S_\alpha|\Phi'}$ compared to the kinematical representation.} 
\begin{equation}\label{eq:QRFtransformedAS}
    V_{\Phi'\to\Phi}(\mathbf{g}_{\p\Sigma}',\mathbf{g}_{\p\Sigma})\,\mathcal{A}_{S_\alpha}\,V^\dag_{\Phi'\to\Phi}(\mathbf{g}_{\p\Sigma}',\mathbf{g}_{\p\Sigma})=\mathcal{G}_{\p\Sigma}\left(\ket{\mathbf{g}'_{\p\Sigma}}\!\bra{\mathbf{g}'_{\p\Sigma}}_{\Phi'}\otimes\mathcal{A}_{S_\alpha}\right)=\mathcal{A}^{\rm kin}_{S_\alpha|\Phi'}\supsetneq\mathcal{G}_{\p\Sigma}(\mathcal{A}_{S_\alpha})\,.
\end{equation}
\end{corol}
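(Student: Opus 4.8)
The plan is to obtain the Corollary as an essentially immediate consequence of Proposition~\ref{prop:1}, combined with the representation formula Eq.~\eqref{eq:relphys}, the reduction identity Eq.~\eqref{eq:obsred}, the unitarity of the PW maps $\mathcal{R}_\Phi$ for complete ideal QRFs, and the QRF-transformation property behind the definition~\eqref{eq:QRFtransf} of $V_{\Phi'\to\Phi}$, i.e.\ \cite[Thm.~4]{delaHamette:2021oex}. Throughout I would keep careful track of which Hilbert space each algebra is represented on, writing $\H_{\textbf{S}_\Phi}$ (resp.\ $\H_{\textbf{S}_{\Phi'}}$) for the perspective Hilbert space obtained by conditioning out $\Phi$ (resp.\ $\Phi'$); non-overlap of $\Phi,\Phi'$ ensures $\H_{\Phi'}\subset\H_{\textbf{S}_\Phi}$, $\H_\Phi\subset\H_{\textbf{S}_{\Phi'}}$, and that the subregional factor $\H_{S_\alpha}$ sits inside both.

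For the first identity, I would start from Proposition~\ref{prop:1}, which gives $\mathcal{A}^{\rm phys}_{S_\alpha|\Phi}\cap\mathcal{A}^{\rm phys}_{S_\alpha|\Phi'}=\mathcal{A}^{\mathcal{G}_{\p\Sigma}}_{S_\alpha}\,\Pi^{\rm phys}_{\p\Sigma}$. As already observed in its proof, for $f_{S_\alpha}\in\mathcal{A}^{\mathcal{G}_{\p\Sigma}}_{S_\alpha}$ (already invariant under the residual corner gauge group) one has $f_{S_\alpha}\,\Pi^{\rm phys}_{\p\Sigma}=\mathcal{O}_{f_{S_\alpha}|\Phi}(\mathbf{g}_{\p\Sigma})$, using completeness of $\Phi$ to set $\int_{\mathcal{G}_{\p\Sigma}}[d\mathbf{h}]\,\ket{\mathbf{h}}\!\bra{\mathbf{h}}_\Phi=\mathds1_\Phi$. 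Applying $\mathcal{R}_\Phi(\mathbf{g}_{\p\Sigma})$ and invoking Eq.~\eqref{eq:obsred} gives $\mathcal{R}_\Phi(\mathbf{g}_{\p\Sigma})\,\mathcal{O}_{f_{S_\alpha}|\Phi}(\mathbf{g}_{\p\Sigma})\,\mathcal{R}^\dag_\Phi(\mathbf{g}_{\p\Sigma})=f_{S_\alpha}$; since $\mathcal{R}_\Phi(\mathbf{g}_{\p\Sigma})$ is unitary onto $\H_{\textbf{S}_\Phi}$, conjugation by it is a $*$-isomorphism, so the reduced intersection is exactly $\mathcal{A}^{\mathcal{G}_{\p\Sigma}}_{S_\alpha}$, which equals $\mathcal{G}_{\p\Sigma}(\mathcal{A}_{S_\alpha})$ because the corner $\mathcal{G}_{\p\Sigma}$-twirl is the orthogonal projector onto the gauge-invariant subalgebra, as recalled around Eq.~\eqref{eq:relobs}.

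For the second identity, I would use the definition~\eqref{eq:QRFtransf} to write $V_{\Phi'\to\Phi}(\mathbf{g}'_{\p\Sigma},\mathbf{g}_{\p\Sigma})\,\mathcal{A}_{S_\alpha}\,V^\dag_{\Phi'\to\Phi}(\mathbf{g}'_{\p\Sigma},\mathbf{g}_{\p\Sigma})=\mathcal{R}_\Phi(\mathbf{g}_{\p\Sigma})\bigl(\mathcal{R}^\dag_{\Phi'}(\mathbf{g}'_{\p\Sigma})\,\mathcal{A}_{S_\alpha}\,\mathcal{R}_{\Phi'}(\mathbf{g}'_{\p\Sigma})\bigr)\mathcal{R}^\dag_\Phi(\mathbf{g}_{\p\Sigma})$, and recognise the inner bracket, by Eq.~\eqref{eq:relphys} for $\Phi'$, as $\mathcal{A}^{\rm phys}_{S_\alpha|\Phi'}=\hat\Pi_{\rm phys}\bigl(\ket{\mathbf{g}'_{\p\Sigma}}\!\bra{\mathbf{g}'_{\p\Sigma}}_{\Phi'}\otimes\mathcal{A}_{S_\alpha}\bigr)$. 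It then remains to show that conjugation by $\mathcal{R}_\Phi(\mathbf{g}_{\p\Sigma})$ turns $\hat\Pi_{\rm phys}$ into the corner twirl on $\H_{\textbf{S}_\Phi}$: since $\H_\Phi$ does not appear in $\ket{\mathbf{g}'_{\p\Sigma}}\!\bra{\mathbf{g}'_{\p\Sigma}}_{\Phi'}\otimes\mathcal{A}_{S_\alpha}$, plugging in $\Pi_{\rm phys}=\int[d\mathbf{h}_{\p\Sigma}]\,U_\Phi(\mathbf{h}_{\p\Sigma})\otimes U_{\textbf{S}_\Phi}(\mathbf{h}_{\p\Sigma})$ and $\mathcal{R}_\Phi(\mathbf{g}_{\p\Sigma})=\bra{\mathbf{g}_{\p\Sigma}}_\Phi\otimes\mathds1_{\textbf{S}_\Phi}$ and collapsing the $\Phi$-integral by left/right Haar invariance and completeness of $\Phi$ (along the lines of the last display in the proof of Proposition~\ref{prop:1}) leaves precisely the $\mathcal{G}_{\p\Sigma}$-twirl acting on $\textbf{S}_\Phi$, which contains $\Phi'$ and $S_\alpha$. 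This yields $\mathcal{G}_{\p\Sigma}\bigl(\ket{\mathbf{g}'_{\p\Sigma}}\!\bra{\mathbf{g}'_{\p\Sigma}}_{\Phi'}\otimes\mathcal{A}_{S_\alpha}\bigr)=\mathcal{A}^{\rm kin}_{S_\alpha|\Phi'}$ on $\H_{\textbf{S}_\Phi}$, the $\mathds1_\Phi$ of the ambient kinematical representation now being absent (cf.\ the footnote); equivalently, this is just the statement that $V_{\Phi'\to\Phi}$ sends the reduced observables of $S_\alpha$ to $\Phi'$'s kinematical relational observables of $S_\alpha$, \cite[Thm.~4]{delaHamette:2021oex}. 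Finally, the strict inclusion $\mathcal{A}^{\rm kin}_{S_\alpha|\Phi'}\supsetneq\mathcal{G}_{\p\Sigma}(\mathcal{A}_{S_\alpha})$ follows as in Proposition~\ref{prop:1}: twirling an already $\mathcal{G}_{\p\Sigma}$-invariant $f_{S_\alpha}$ returns $\mathds1_{\Phi'}\otimes f_{S_\alpha}$, so $\mathcal{G}_{\p\Sigma}(\mathcal{A}_{S_\alpha})=\mathds1_{\Phi'}\otimes\mathcal{A}^{\mathcal{G}_{\p\Sigma}}_{S_\alpha}$, whereas $\mathcal{A}^{\rm kin}_{S_\alpha|\Phi'}$ contains in addition the genuinely $\Phi'$-dressed cross-boundary observables (e.g.\ open Wilson lines in $\Sigma$ closed to $\mathcal{B}$ through $\Phi'$), which transform non-trivially under $\Phi'$'s reorientations.

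The main obstacle is not any single estimate but the bookkeeping of representations: tracking each algebra as it passes between the kinematical picture, $\Hphys$, and the two distinct reduced perspectives $\H_{\textbf{S}_\Phi}$ and $\H_{\textbf{S}_{\Phi'}}$, making precise use of the non-overlap hypothesis on $\Phi,\Phi'$ (so that $\H_\Phi$ and $\H_{\Phi'}$ each survive in the other's perspective and the generators of the two reorientation groups are independent, as in Proposition~\ref{prop:1}), and handling the $\mathds1_\Phi$ insertion/deletion in $\mathcal{A}^{\rm kin}_{S_\alpha|\Phi'}$ consistently with the footnote. Once this is organised, every step reduces to reusing Proposition~\ref{prop:1}, Eqs.~\eqref{eq:relphys}--\eqref{eq:obsred}, the completeness and idealness of $\Phi$, and \cite[Thm.~4]{delaHamette:2021oex}.
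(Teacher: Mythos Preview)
Your proposal is correct and follows essentially the same route as the paper: Proposition~\ref{prop:1} plus Eqs.~\eqref{eq:relphys}--\eqref{eq:obsred} for the first identity, and the definition~\eqref{eq:QRFtransf} together with Eq.~\eqref{eq:relphys} for the second. The paper's proof is slightly more streamlined in the second part, phrasing the collapse of the $\Phi$-integral as ``$\mathcal{A}^{\rm kin}_{S_\alpha|\Phi'}$ commutes with the reduction maps and $\bra{\mathbf{g}_{\p\Sigma}}\Pi^{\rm phys}_{\p\Sigma}\ket{\mathbf{g}_{\p\Sigma}}_\Phi=\mathds1_{\textbf{S}}$'' rather than expanding the Haar integral explicitly, but this is the same mechanism you describe.
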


\begin{proof}
The first part of the statement follows from Proposition~\ref{prop:1}, the observation that $\mathcal{A}_{S_{\alpha}}^{\mathcal{G}_{\p\Sigma}}$ commutes with the reduction maps, and $\bra{\mathbf{g}_{\p\Sigma}}\Pi^{\rm phys}_{\p\Sigma}\ket{\mathbf{g}_{\p\Sigma}}_\Phi=\mathds1_{\textbf{S}}$ \cite{delaHamette:2021oex} (cf.~\ref{eq:Rinv}). Eq.~\eqref{eq:QRFtransformedAS} follows from the definition in Eq.~\eqref{eq:QRFtransf} of the QRF transformation, adapting Eq.~\eqref{eq:relphys} to the present case, and noting that, again, $\mathcal{A}^{\rm kin}_{S_\alpha|\Phi'}$ commutes with the reduction maps and $\bra{\mathbf{g}_{\p\Sigma}}\Pi^{\rm phys}_{\p\Sigma}\ket{\mathbf{g}_{\p\Sigma}}_\Phi=\mathds1_{\textbf{S}}$.
\end{proof}

In other words, QRF-transforming what extrinsic frame $\Phi'$ ``sees'' as the system $S_\alpha$, namely the observable algebra with non-trivial support only on $\H_{S_{\alpha}}$, becomes in extrinsic frame $\Phi$'s perspective a composite system, built from $S_\alpha$ and the old frame $\Phi'$, manifested in the composite algebra $\mathcal{A}^{\rm kin}_{S_\alpha|\Phi'}$. The overlap of that algebra with what $\Phi$ ``sees'' as the system $S_\alpha$, namely $\mathcal{R}_{\Phi}(\mathbf{g}_{\p\Sigma})\left(\mathcal{A}^{\rm phys}_{S_\alpha|\Phi}\right)\mathcal{R}_\Phi^\dag(\mathbf{g}_{\p\Sigma})=\mathcal{A}_{S_\alpha}$, is exactly the already invariant $S_\alpha$ subalgebra $\mathcal{A}^{\mathcal{G}_{\p\Sigma}}_{S_\alpha}\subsetneq\mathcal{A}_{S_\alpha}$. 

Now recall from App.~\ref{app:factor} that both extrinsic frame perspectives $\mathcal{R}_\Phi,\mathcal{R}_{\Phi'}$ are gauge-invariant TPSs on $\Hphys$. Recall also from Eqs.~\eqref{eq:obsred} and~\eqref{eq:relphys} that the manifestly gauge-invariant relational observable algebras $\mathcal{A}^{\rm phys}_{S_\alpha|\Phi}$ are \emph{local} with respect to the TPS $\mathcal{R}_\Phi$ (likewise for $\Phi'$), in particular that
\begin{equation}
    \mathcal{R}_\Phi(\mathbf{g}_{\p\Sigma})\mathcal{A}_{\rm phys} \mathcal{R}^\dag_\Phi(\mathbf{g}_{\p\Sigma})=\bigotimes_{S_\alpha=\mathring{\Sigma},\p\Sigma\setminus\tilde\Phi,\tilde\Phi,\bar{\Sigma}\setminus\Phi}\mathcal{A}_{S_\alpha}=\mathcal{A}_{\textbf{S}}=\mathcal{B}(\H_{\textbf{S}})\,,
\end{equation}
where $\mathcal{R}_\Phi$ is unitary. Thus, the algebras commute for different $\alpha$ and together generate all of $\mathcal{A}_{\rm phys}=\bigvee_{S_\alpha=\mathring{\Sigma},\p\Sigma\setminus\tilde\Phi,\tilde\Phi,\bar\Sigma\setminus\Phi}\mathcal{A}^{\rm phys}_{S_\alpha|\Phi}$ (each $\mathcal{A}_{S_\alpha}=\mathcal{B}(\H_{S_\alpha})$ is a Type I factor). Hence, the relational algebras $\mathcal{A}^{\rm phys}_{S_\alpha|\Phi}$ \emph{generate} the TPS $\mathcal{R}_\Phi$ on $\Hphys$, and likewise for $\Phi'$ \cite{Hoehn:2023ehz,AliAhmad:2021adn} (see also \cite{Zanardi:2001zz,Zanardi:2004zz,Cotler:2017abq}). 

Putting the above together, we arrive at the conclusion:

\begin{corol}[\textbf{QRF transformations are changes of physical TPS}] 
The QRF transformation in Eq.~\eqref{eq:QRFtransf}, as a controlled non-local unitary, is nothing but \emph{a change of TPS on $\H_{\rm phys}$}, cf.~\cite[Sec.~III]{Hoehn:2023ehz}.
\end{corol}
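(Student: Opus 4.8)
The claim is essentially a packaging of results already established in the excerpt, so the proof will be short and synthetic. The plan is to argue that a QRF transformation $V_{\Phi'\to\Phi}(\mathbf{g}'_{\p\Sigma},\mathbf{g}_{\p\Sigma})=\mathcal{R}_\Phi(\mathbf{g}_{\p\Sigma})\mathcal{R}^\dag_{\Phi'}(\mathbf{g}'_{\p\Sigma})$ is a unitary on $\Hphys$ (since both PW reductions $\mathcal{R}_\Phi,\mathcal{R}_{\Phi'}$ are unitaries onto their images, which for ideal complete extrinsic frames are all of $\H_{\textbf S}$, as shown in App.~\ref{app:factor}), and then that pre- and post-composition with such a unitary is precisely what it means to pass from one representative of a TPS equivalence class to another.

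First I would recall the definition of a TPS on an abstract Hilbert space as the equivalence class of isomorphisms $\mathbf{T}:\H\to\bigotimes_\alpha\H_\alpha$ modulo multilocal unitaries and subsystem permutations, exactly as stated before Eq.~\eqref{eq:extfactor}. I would then note that in App.~\ref{app:factor} it was shown that $\mathcal{R}_\Phi:=[\mathcal{R}_\Phi(\mathbf{g}_{\p\Sigma})]$ and $\mathcal{R}_{\Phi'}:=[\mathcal{R}_{\Phi'}(\mathbf{g}'_{\p\Sigma})]$ are both gauge-invariant TPSs on $\Hphys$, each with target $\bigotimes_{S_\alpha}\mathcal{H}_{S_\alpha}$ for the respective $\Phi$- or $\Phi'$-adapted system decomposition (the $S_\alpha$ ranging over $\mathring\Sigma,\p\Sigma\setminus\tilde\Phi,\tilde\Phi,\bar\Sigma\setminus\Phi$, with the ``$\Phi$-dependent'' complement factor). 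The key step is then simply the identity $\mathcal{R}_\Phi(\mathbf{g}_{\p\Sigma})=V_{\Phi'\to\Phi}(\mathbf{g}'_{\p\Sigma},\mathbf{g}_{\p\Sigma})\,\mathcal{R}_{\Phi'}(\mathbf{g}'_{\p\Sigma})$, which is immediate from the definition \eqref{eq:QRFtransf} together with $\mathcal{R}^\dag_{\Phi'}(\mathbf{g}'_{\p\Sigma})\mathcal{R}_{\Phi'}(\mathbf{g}'_{\p\Sigma})=\mathds1_{\Hphys}$ (unitarity of the reduction on $\Hphys$). Hence the two representative isomorphisms of the respective TPSs are related by the single unitary $V_{\Phi'\to\Phi}$, i.e.\ the QRF transformation literally intertwines the $\Phi'$-decomposition and the $\Phi$-decomposition of $\Hphys$.

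To close the loop and make precise that this is a genuine \emph{change} of TPS rather than a trivial relabelling, I would invoke the preceding Corollary together with Eq.~\eqref{eq:QRFtransformedAS}: conjugating the $\Phi'$-local algebra $\mathcal{A}_{S_\alpha}$ by $V_{\Phi'\to\Phi}$ yields $\mathcal{A}^{\rm kin}_{S_\alpha|\Phi'}\supsetneq\mathcal{G}_{\p\Sigma}(\mathcal{A}_{S_\alpha})$, which by subsystem relativity (Proposition~\ref{prop:1}) is strictly larger than the part of it that is $\Phi$-local, namely $\mathcal{A}_{S_\alpha}^{\mathcal{G}_{\p\Sigma}}$. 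So $V_{\Phi'\to\Phi}$ does not map the $\Phi'$-local algebras to the $\Phi$-local algebras; it is not a multilocal unitary with respect to either TPS, so the two TPSs $\mathcal{R}_\Phi,\mathcal{R}_{\Phi'}$ are genuinely inequivalent, and $V_{\Phi'\to\Phi}$ is the nontrivial controlled (multi-)local unitary relating them, as in \cite[Sec.~III]{Hoehn:2023ehz}. I would also remark that $V_{\Phi'\to\Phi}$ is ``controlled'' in the sense manifest from Eq.~\eqref{eq:QRFtransf}: it acts as a $\mathcal{G}_{\p\Sigma}$-dependent conditional unitary $U_{\textbf S}(\mathbf{h})$ on the remaining degrees of freedom, correlated with the frame orientation data, which is exactly the structure of a change-of-TPS map.

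Honestly there is no real obstacle here; the only thing needing care is bookkeeping about \emph{which} TPS target appears on each side — the ``complement'' factor $\H_{\bar\Sigma\setminus\Phi}$ differs between $\Phi$ and $\Phi'$, and the frame factors $\H_\Phi,\H_{\Phi'}$ get swapped in/out of ``system'' vs.\ ``reference'', so the statement is really that $V_{\Phi'\to\Phi}$ is an isomorphism $\bigotimes_{S_\alpha^{(\Phi')}}\mathcal{H}_{S_\alpha}\to\bigotimes_{S_\alpha^{(\Phi)}}\mathcal{H}_{S_\alpha}$ between two a priori distinct factorizations of the same $\Hphys$, not an automorphism of a fixed one. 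Making that matching explicit — and flagging that for overlapping $\Phi,\Phi'$ one must restrict to the non-overlapping case so that the factorizations are well-defined, consistent with Proposition~\ref{prop:1} — is the one point I would be careful to state cleanly. Everything else is a one-line consequence of unitarity of the PW reductions plus the definitions already in place.
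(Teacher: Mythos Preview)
Your proposal is correct and matches the paper's approach: the paper presents this corollary without a separate proof environment, simply stating it as the packaging of the preceding results (unitarity of the PW reductions as gauge-invariant TPSs from App.~\ref{app:factor}, plus Proposition~\ref{prop:1} and the preceding corollary with Eq.~\eqref{eq:QRFtransformedAS} to establish inequivalence). Your write-up is more explicit than the paper's one-line statement, but the logical content is identical; your care about the differing target factorizations for $\Phi$ vs.\ $\Phi'$ and the non-overlapping assumption is appropriate and consistent with the surrounding text.
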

What $\Phi'$ ``sees'' as the factor $\H_{S_\alpha}$ is not what $\Phi$ ``sees'' as $\H_{S_\alpha}$. Finally, as on a sufficiently big lattice $\mathcal{L}$ one can always find non-overlapping extrinsic frames $\Phi,\Phi'$, we have the following repercussion of Proposition~\ref{prop:1}:
\begin{corol}\label{cor_3}[\textbf{Center algebras from intersections of factors}]
    Consider the union of the regional subsystems $\Sigma:=\bigcup_{S_\alpha=\mathring{\Sigma},\p\Sigma\setminus\tilde\Phi,\tilde\Phi}S_\alpha$. This is what any extrinsic frame ``sees'' as the subsystem associated with region $\Sigma$. On any sufficiently large lattice $\mathcal{L}$, we have
    \begin{equation}
        \mathcal{A}_E:= \hat\Pi_{\p\Sigma}^{\rm phys}\left(\mathcal{A}_{\Sigma}\right)=\mathcal{A}^{\mathcal{G}_{\p\Sigma}}_{\Sigma}\Pi_{\p\Sigma}^{\rm phys}=\mathbb{G}_{\p\Sigma}\left(\mathcal{A}_{\Sigma|\Phi}^{\rm phys}\right)\,,
    \end{equation}
where the last equality holds for the reorientation $G$-twirl of any $\Phi$.
\end{corol}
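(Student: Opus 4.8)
This is an immediate specialization of Proposition~\ref{prop:1}, so the plan is to invoke that result with the subsystem $S_\alpha$ taken to be the union $\Sigma=\mathring\Sigma\cup(\p\Sigma\setminus\tilde\Phi)\cup\tilde\Phi$. The first thing I would check is that this union is indeed an admissible choice there: it is assembled from the basic regional pieces, and, crucially, none of those pieces contains an edge of the extrinsic Wilson lines defining $\Phi$ or $\Phi'$. Hence in either extrinsic split $\Hkin\simeq\H_\Phi\otimes\H_{\textbf{S}}$ (respectively with $\Phi'$) the factor $\H^{\rm phys}_\Sigma=\H^{\rm phys}_{\mathring\Sigma}\otimes\H_{\p\Sigma\setminus\tilde\Phi}\otimes\H_{\tilde\Phi}$ of Eq.~\eqref{eq:extfacthphys} sits entirely inside $\H_{\textbf{S}}$, so that $\mathcal{A}_\Sigma:=\mathcal{B}(\H^{\rm phys}_\Sigma)\subset\mathcal{B}(\H_{\textbf{S}})$ and, by Eq.~\eqref{eq:relphys}, $\mathcal{A}^{\rm phys}_{\Sigma|\Phi}=\mathcal{R}^\dag_\Phi(\mathbf{g}_{\p\Sigma})\,\mathcal{A}_\Sigma\,\mathcal{R}_\Phi(\mathbf{g}_{\p\Sigma})$ is exactly the extrinsic relational algebra that Proposition~\ref{prop:1} addresses. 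I would also note in passing that, since $\mathcal{R}_\Phi$ is a unitary TPS and $\mathcal{B}(\H^{\rm phys}_\Sigma)=\bigotimes_\alpha\mathcal{B}(\H^{\rm phys}_{S_\alpha})$, this algebra is generated by the individual $S_\alpha$-local dressed factors, matching the main-text definition of $\mathcal{A}^{\rm phys}_{\Sigma|\Phi}$.

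Next I would invoke the standing hypothesis, namely that on a sufficiently large lattice $\mathcal{L}$ non-overlapping extrinsic QRFs $\Phi,\Phi'$ for $\p\Sigma$ exist (already observed just before Proposition~\ref{prop:1}). Feeding $S_\alpha\to\Sigma$ into that proposition then delivers in one stroke
\begin{equation}
  \mathcal{A}^{\rm phys}_{\Sigma|\Phi}\cap\mathcal{A}^{\rm phys}_{\Sigma|\Phi'}=\hat\Pi^{\rm phys}_{\p\Sigma}\left(\mathcal{A}_{\Sigma}\right)=\mathcal{A}^{\mathcal{G}_{\p\Sigma}}_{\Sigma}\,\Pi^{\rm phys}_{\p\Sigma}=\mathbb{G}_{\p\Sigma}\left(\mathcal{A}^{\rm phys}_{\Sigma|\Phi}\right)=\mathbb{G}'_{\p\Sigma}\left(\mathcal{A}^{\rm phys}_{\Sigma|\Phi'}\right)\,,
\end{equation}
and I would simply \emph{define} $\mathcal{A}_E$ as the common value of this chain, which reproduces the three displayed identities of the corollary. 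At this point I would record the small supplementary remark that $\mathcal{A}^{\mathcal{G}_{\p\Sigma}}_\Sigma$, being invariant under corner gauge transformations, commutes with the coherent corner-averaging projector $\Pi^{\rm phys}_{\p\Sigma}$, so that the product $\mathcal{A}^{\mathcal{G}_{\p\Sigma}}_\Sigma\,\Pi^{\rm phys}_{\p\Sigma}$ is genuinely a unital $*$-algebra on $\Hphys$, consistent with how $\mathcal{A}_E$ is used in Eq.~\eqref{eq:electriccenterdef} of the main text.

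For the clause ``for the reorientation $G$-twirl of any $\Phi$'', I would argue that, given an arbitrary extrinsic QRF $\Phi$, largeness of $\mathcal{L}$ again allows one to choose some $\Phi'$ non-overlapping with it, so the fourth equality in the chain above gives $\mathbb{G}_{\p\Sigma}(\mathcal{A}^{\rm phys}_{\Sigma|\Phi})=\hat\Pi^{\rm phys}_{\p\Sigma}(\mathcal{A}_\Sigma)=\mathcal{A}_E$ regardless of the choice of $\Phi$. There is no genuinely hard step, since the entire content is inherited from Proposition~\ref{prop:1}; the only places requiring care, and hence the nearest thing to an obstacle, are the bookkeeping checks in the first paragraph: confirming that $\Sigma\subseteq\textbf{S}$ holds for \emph{both} frames simultaneously (so that the commutator $[V_{\Phi'}(\mathbf{g}'_{\p\Sigma}),\mathcal{O}_{f_\Sigma|\Phi}(\mathbf{g}_{\p\Sigma})]=0$ used in the proof of Proposition~\ref{prop:1} continues to hold with $f_\Sigma$ ranging over all of $\mathcal{A}_\Sigma$ rather than a single tensor factor), and verifying that the corner projector commutes with $\mathcal{A}^{\mathcal{G}_{\p\Sigma}}_\Sigma$ so that the product is again an algebra.
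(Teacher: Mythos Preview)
Your proposal is correct and follows the paper's own approach: the paper treats this corollary as an immediate ``repercussion of Proposition~\ref{prop:1}'' without giving a separate proof, and you do precisely that, specializing $S_\alpha$ to the union $\Sigma$ (explicitly allowed in the hypotheses of Proposition~\ref{prop:1}) and invoking the existence of non-overlapping extrinsic frames on a sufficiently large lattice. Your additional bookkeeping checks (that $\Sigma\subseteq\mathbf{S}$ for both frames and that $\mathcal{A}^{\mathcal{G}_{\p\Sigma}}_\Sigma\Pi^{\rm phys}_{\p\Sigma}$ is genuinely an algebra) are helpful elaborations but not something the paper spells out.
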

Thus, the intersection of all possible relational algebras describing $\Sigma$ relative to all possible extrinsic QRF choices is the algebra of all regional degrees of freedom that is already gauge-invariant. In App.~\ref{app:global_decoherence}, we will see that this algebra contains a non-trivial center and, in fact, coincides with the \emph{electric center algebras} of \cite{Casini:2013rba,Soni:2015yga}. 

\section{Intersection of intrinsic frames and the magnetic center}
\label{app:intersection_int_frame}

So far, we have only considered subsystem relativity for extrinsic QRFs. However, it also applies to intrinsic QRFs $\tilde\Phi$. In that case, subsystem relativity for $\Sigma$ emerges already at the kinematical level: because the reference frame is built from degrees of freedom internal to the region $\Sigma$, each choice of frame $\tilde\Phi$ selects a different complement $\Sigma \setminus \tilde\Phi$ within $\Sigma$ to be treated as the subsystem. This corresponds to the refactorization of the Hilbert space, $\mathcal{H}^\text{kin}_\Sigma \simeq \mathcal{H}_{\tilde\Phi} \otimes \mathcal{H}_{\Sigma \setminus \tilde\Phi}$, which is frame-dependent.

Crucially, this subsystem relativity persists even after imposing the constraints, at the physical level. It is easy to verify that, for two different intrinsic frames $\tilde\Phi$ and $\tilde\Phi'$, the corresponding physical algebras assigned to the region $\Sigma$, namely $\cA^{\phys}_{\Sigma\setminus\tilde \Phi|\tilde \Phi}$ and $\mathcal{A}^{\rm phys}_{\Sigma\setminus\tilde \Phi'|\tilde \Phi'}$, do not completely overlap. For instance, the first algebra contains the Casimir operator associated with electric fields on links omitted from the tree defining $\tilde\Phi$, while the second contains those associated with a different set of links, omitted by $\tilde\Phi'$, and these do not fully overlap unless the two frames coincide.

Analogously to the extrinsic case, we can ask the question of what is the common intersection of all intrinsic algebras. From the previous App.~\ref{app:alg_content}, we immediately see that all the Wilson loops with support in $\Sigma$ are contained in any intrinsic algebra. On the other hand, the quadratic combinations of $(\hat{\mathfrak{G}}^v_e)^a$ are built with operators acting on frame-complement edges ($e\in \Sigma\setminus\tilde \Phi$), and the collection of them is frame dependent. However, the set of choices of path $\ell$ used to connect two vertices $v$ and $w$ in  $(\hat{\mathfrak{G}}^v_e )^a ({\hat{g}_{\rm Adj}})_\ell^{ab} (\hat{\mathfrak{G}}^w_{e'})^b$ is not frame-dependent, as the path $\ell$ can lie anywhere in $\Sigma$, either on the frame or its complement.\footnote{{If $\ell$ lies in the complement of $\tilde\Phi$, then it is part of the subalgebra of $\mathcal{B}(\H_\mathbf{S})$, which is already invariant pre-dressing with $\tilde\Phi$.}}

Combining these insights, we can characterize the generators of the intersection of all intrinsic algebras as: \begin{itemize}
    \item[(i)]  Wilson loops supported on $\Sigma$, including its boundary,\footnote{{Owing to the Mandelstam constraints \cite{Giles:1981ej,Gambini:1986ew,Loll:1991mh}, these are not all independent. We thank Bianca Dittrich for pointing this out.}} 
    \item[(ii)] Quadratic combinations of internal electric fields, dressed with any path in $\Sigma$, i.e.\ $(\hat{\mathfrak{G}}^v_e )^a ({\hat{g}_{\rm Adj}})_\ell^{ab} (\hat{\mathfrak{G}}^w_{e'})^b$, with $v,w\in \Sigma$, $e\in \mathring{\Sigma}$ and $\ell\in\Sigma$ connecting the two vertices. 
\end{itemize}
That is, all \emph{tangential} electric field observables drop out.

It is worth emphasizing that the generator of the center $\mathcal{Z}_{\tilde \Phi}$ in Eq.~\eqref{eq:intcenter} varies across different frames, both because of the choice of path and the selection of electric fields contributing to the flux. Nonetheless, the intersection algebra 
\begin{equation} 
\mathcal{A}_M:=\bigcap_{\text{{intrinsic QRFs }}\tilde\Phi}\mathcal{A}^{\rm phys}_{\Sigma\setminus\tilde\Phi|\tilde\Phi}
\end{equation}
\emph{also} has a center that contains both \emph{magnetic} and (only in the non-Abelian case) \emph{electric components}. This is our new proposal for a magnetic center algebra.

The \emph{magnetic part} of the center is generated by Wilson loops ($\Tr_\rho \hat{g}_W$) supported on $\partial\Sigma$. These commute with all observables in the algebra, since loops commute with each other, there are no electric field operators on boundary edges, and
\be
[\Tr_\rho \hat{g}_W, (\hat{\mathfrak{G}}^v_e)^a] =0 \q \text{if} \;\;e \notin W\,.
\ee

In the case in which the corner is one-dimensional, we will argue that the \textit{electric central element} is given by the total electric flux through the corner, extending the finite $G$ case in \cite{Delcamp:2016eya} to general compact Lie groups. This flux is projected onto the unique Wilson loop $W$ around the corner, yielding 
\begin{equation}
\label{eq:ZE_app}
Z_{1d}=\sum_{v\in\p\Sigma}\Tr_\rho\l \hat{g}_{W(v)} T_\rho^a\r\sum_{e\in\mathcal{E}_v\cap\mathring{\Sigma}}\l\mathfrak{G}^v_e\r^a=\sum_{v\in \p\Sigma}\Tr_\rho\l\hat{g}_{W(v_0)} \l\hat{g}_{\rm 
Adj}(v_0,v)\r^{ab}T^b_\rho\r\sum_{e\in\mathcal{E}_v\cap\mathring{\Sigma}}\l\mathfrak{G}^v_e\r^a\,,
\end{equation}
where $W(v)$ denotes the Wilson loop holonomy in $\p\Sigma$ from $v$ to $v$ before taking the trace and $\hat{g}_{\rm Adj}(v_0,v)$ denotes the parallel transporter in adjoint representation from $v$ to some reference node $v_0$ on $\p\Sigma$. In the second equality we made use of the identity 
\begin{equation}\label{eq:useful}
     \l\hat{g}_{\rm 
Adj}(v_0,v)\r^{ab}T^b_\rho=\hat{g}(v_0,v)T^a_\rho\hat{g}^\dag(v_0,v)\,.
\end{equation}

Let us show explicitly that it commutes with all elements in $\mathcal{A}_M$. First of all, it is clear that $Z_{1d}$ commutes with all regional Wilson loops that either lie entirely in the corner, or entirely in the bulk. In the former case, this happens because $Z_{1d}$ does not contain any tangential electric fields. In the latter case, the reason is that, at each $v$, $Z_{1d}$ sums over all the relevant parts in the Gau{\ss} law for the bulk Wilson loop (which may touch $v\in\p\Sigma$ without being tangential to $\p\Sigma$), which thus commutes with it. The same is true for all observables involving quadratic combinations of bulk electric fields connected by Wilson lines that are also purely bulk supported. 

The nontrivial commutation checks involve the following types of observables:
\begin{itemize}
\item[(i)] Wilson loops that are tangential to the boundary between vertices $v_1,v_2\in\p\Sigma$ and otherwise dive into the bulk;
\item [(ii)] Electric field observables $\l\mathfrak{G}^v_e\r^a\l\hat{g}_{\rm Adj}\r^{ab}_\ell \l\mathfrak{G}^w_{e'}\r^b$, such that $\ell$ is a Wilson line connecting $v,w\in\p\Sigma$ purely within $\p\Sigma$ and the electric fields are interior normal to the corner;
\item[(iii)] Electric field observables $\l\mathfrak{G}^v_e\r^a\l\hat{g}_{\rm Adj}\r^{ab}_\ell \l\mathfrak{G}^w_{e'}\r^b$, such that $\ell$ is a Wilson line with segments tangential to the corner, as well as segments in the bulk.
\end{itemize}

We begin with case (i). Consider a Wilson loop 
$W'$ that touches the corner between $v_1,v_2$ and is otherwise in the bulk. Using the commutators in Eq.~\eqref{eq:RL_G_commutators}, one finds
\be
[Z_{1d}, \Tr_{\rho} \hat{g}_{W'}] = \Tr_\rho (g_{W(v_1)} T^a)\Tr_\rho(\hat{g}_{W'(v_1)}T^a) - \Tr_\rho(\hat{g}_{W(v_2)}T^a) \Tr_\rho(\hat{g}_{W'(v_2)}T^a)\label{H5}
\ee
To analyze this expression, choose $v_0=v_1$ as the basepoint for the definition of $Z_{1d}$ as in the right-hand side of \eqref{eq:ZE_app}. Then the second term becomes
$$\Tr_\rho(g_{W(v_2)}T^a) \Tr_\rho(g_{W'(v_2)}T^a) = \Tr_\rho(g_{W(v_1)} T^b)\, \l\hat{g}_{\rm Adj}(v_1,v_2)\r^{ab} \l\hat{g}_{\rm Adj}(v_1,v_2)\r^{ac}\, \Tr_\rho(g_{W'(v_1)} T^c)\,, $$
But 
$\l\hat{g}_{\rm Adj}(v_1,v_2)\r^{ab} \l\hat{g}_{\rm Adj}(v_1,v_2)\r^{ac} = (\hat{g}_{\rm Adj}(v_1,v_2)^T)^{ba} \l\hat{g}_{\rm Adj}(v_1,v_2)\r^{ac} = \delta^{bc}$, 
where, in the last step, we used the orthogonality of the adjoint representation as inner product in the Lie algebra. Thus, Eq.~\eqref{H5} vanishes.

Next, let us consider (ii). We have
\begin{equation}
    \big[\l\mathfrak{G}^w_e\r^a\l\hat{g}_{\rm Adj}\r^{ab}_\ell \l\mathfrak{G}^{w'}_{e'}\r^b,Z_{1d}\big] = \sum_{v\in W}\Tr_\rho\l\hat{g}_{W(v)} T^c_\rho\r\l\hat{g}_{\rm Adj}\r^{ab}_\ell\sum_{\bar{e}\in\mathcal{E}_v\cap\mathring{\Sigma}}\big[ \l\mathfrak{G}^{w}_e\r^a \l\mathfrak{G}^{w'}_{e'}\r^b,\l\mathfrak{G}^v_{\bar{e}}\r^c\big]\,.
\end{equation}
Clearly, when $w=w'$, the observable commutes with $Z_{1d}$. In this case, $\l\mathfrak{G}^w_e\r^a\l\mathfrak{G}^{w}_{e'}\r^a$ is gauge-invariant and thus commutes with the relevant Gau{\ss} law contribution inside $Z_{1d}$. 
For a 1d boundary $w\neq w'$ lie both necessarily inside $W$. Then, using Eqs.~\eqref{commutators} and~\eqref{eq:Gauss_Law_generators},
\begin{align}
   \big[\l\mathfrak{G}^w_e\r^a\l\hat{g}_{\rm Adj}\r^{ab}_\ell \l\mathfrak{G}^{w'}_{e'}\r^b,Z_{1d}\big] &=-i\Tr_\rho\l\hat{g}_{W(w)}T^c_\rho\r \l\hat{g}_{\rm Adj}\r^{ab}_\ell f^{acd}\l\mathfrak{G}^w_e\r^d\l\mathfrak{G}^{w'}_{e'}\r^b\nonumber\\
   &\quad-i\Tr_\rho\l\hat{g}_{W(w')}T^c_\rho\r \l\hat{g}_{\rm Adj}\r^{ab}_\ell f^{bcd}\l\mathfrak{G}^w_e\r^a\l\mathfrak{G}^{w'}_{e'}\r^d\nonumber\\
   &=\Tr_\rho\l\hat{g}_{W(w)}\big[T_\rho^a,T^d_\rho\big]\r\l\hat{g}_{\rm Adj}\r^{ab}_\ell\l\mathfrak{G}^w_e\r^d\l\mathfrak{G}^{w'}_{e'}\r^b\nonumber\\
   &\quad +\Tr_\rho\l\hat{g}_{W(w')}\big[T^b_\rho,T^d_\rho\big]\r\l\hat{g}_{\rm Adj}\r^{ab}_\ell\l\mathfrak{G}^w_e\r^a\l\mathfrak{G}^{w'}_{e'}\r^d\nonumber\\
   &=\Tr_\rho\l\hat{g}_{W(w)}\Big[\l\mathfrak{G}^{w'}_{e'}\r^b\hat{g}^\dag_\ell T^b_\rho\hat{g}_\ell,\l\mathfrak{G}^{w}_{e}\r^d T^d_\rho\Big]\r\nonumber\\
   &\quad+\Tr_\rho\l\hat{g}_{W(w)}\Big[\l\mathfrak{G}^{w}_{e}\r^a T^a_\rho,\l\mathfrak{G}^{w'}_{e'}\r^d\hat{g}^\dag_\ell T^d_\rho\hat{g}_\ell\Big]\r=0\,,
\end{align}
where in the last equality we made use of Eq.~\eqref{eq:useful} and, invoking that $\hat{g}_\ell$ is the parallel transport between $w,w'$,
\begin{equation}
    \Tr_\rho\l\hat{g}_{W(w')}\big[\hat{g}_\ell T^a_\rho\hat{g}_\ell^\dag,T^d_\rho\big]\r = \Tr_\rho\l\hat{g}_{W(w)}\big[T^a_\rho,\hat{g}^\dag_\ell T^d_\rho\hat{g}_\ell\big]\r\,.
\end{equation}

Now, to study the third case (iii), we will simply need the following commutation relation, for two edges $e$, $e'$, both incoming to the boundary from the bulk at two nodes $v$, $w$ connected by a path along $\p \Sigma$:
\be
   \big[\l\mathfrak{G}^v_e\r^a\l\hat{g}_{\rm Adj}\r^{ab}_{\ell}\l\hat{g}_{\rm Adj}\r^{bc}_{e'} , Z_{1d}\big] =&-i\Tr_\rho\l\hat{g}_{W(v)}T^d_\rho\r  f^{ade}\l\mathfrak{G}^v_e\r^e \l\hat{g}_{\rm Adj}\r^{ab}_{\ell} \l\hat{g}_{\rm Adj}\r^{bc}_{e'}\nonumber\\
   &+i\Tr_\rho\l\hat{g}_{W(w)}T^d_\rho\r  \l\mathfrak{G}^v_e\r^a \l\hat{g}_{\rm Adj}\r^{ab}_{\ell} (T^d_{\rm Adj})^{be} \l\hat{g}_{\rm Adj}\r^{ec}_{e'}\nonumber\\
   =&\Tr_\rho\l\hat{g}_{W(v)} [T^a_\rho,T^e_\rho ]\r \l\mathfrak{G}^v_e\r^e \l\hat{g}_{\rm Adj}\r^{ab}_{\ell} \l\hat{g}_{\rm Adj}\r^{bc}_{e'}\nonumber\\
   &+\Tr_\rho\l\hat{g}_{W(w)} [T^b_\rho, T^e_\rho]\r  \l\mathfrak{G}^v_e\r^a \l\hat{g}_{\rm Adj}\r^{ab}_{\ell} \l\hat{g}_{\rm Adj}\r^{ec}_{e'}=0\nonumber\\
=&\Tr_\rho\l\hat{g}_{W(v)}\Big[\l\hat{g}_{\rm Adj}\r^{bc}_{e'}\hat{g}^\dag_\ell T^b_\rho\hat{g}_\ell,\l\mathfrak{G}^{v}_{e}\r^d T^d_\rho\Big]\r\nonumber\\
   &+\Tr_\rho\l\hat{g}_{W(v)}\Big[\l\mathfrak{G}^{w}_{e}\r^a T^a_\rho,\l\hat{g}_{\rm Adj}\r^{ec}_{e'}\hat{g}^\dag_\ell T^e_\rho\hat{g}_\ell\Big]\r=0\,,  
\ee
where we have used the fact that the matrix elements of the adjoint representation are given by the structure constants of the algebra $T_{\rm Adj}^{abc} =f^{abc}$, and the same properties of the parallel transporter as in the previous case. 

Now all possibilities in (iii), with at least one electric field touching $\p \Sigma$ and part of the Wilson line in the bulk are built out of the previous operators together with pieces that trivially commute with $Z_{1d}$. $\mathcal{A}_M$ is generated by the observables we investigated, hence $Z_{1d}$ commutes with all elements in $\cA_M$.

Notice that in the Abelian case, as for Eq.~\eqref{eq:intcenter}, we have that the total electric flux $\sum_{e\in\mathcal{E}_v\cap\mathring{\Sigma}}\l\mathfrak{G}^v_e\r^a$ in Eq.~\eqref{eq:ZE_app}  vanishes by the Gau{\ss} law. Thus, in the Abelian case, $\mathcal{A}_M$ only contains corner loops, reproducing the magnetic center definition of \cite{Casini:2013rba} (which was formulated for finite groups).

As we argue in Sec.~\ref{sec:alg} in the main text, we conjecture that $\mathcal{A}_M$ also contains electric parts for non-Abelian theories in higher dimensions. But we leave confirming this an open question.

\section{Superselection sectors from the incoherent G-twirl}\label{app_electriccenter}

Here we demonstrate how the action of the $G$-twirl operation on an algebra leads to a new algebra with superselection sectors labeled by irreps of $G$. In the case considered below, the irreps are labeled by a discrete set, as appropriate for electric irrep labels in the case of theories with finite or compact structure group. In our argument, it does not matter whether $G$ is gauge or physical. The results will directly apply to different cases, namely to the electric center $\mathcal{A}_E$ via Corollary~\ref{cor_3} and the intrinsic relational algebras for non-Abelian theories via Eq.~\eqref{eq:intcenter} and the $H$-twirl in Eq.~\eqref{eq:regionalinta}.

Consider a Hilbert space $\mathcal{H}$ which furnishes a unitary representation of some compact Lie group $G$. We can write
\begin{align}
    \mathcal{H}\cong\bigoplus_{[r]}\l\mathcal{H}_r\otimes \mathcal{H}_{rm}\r\,,
\end{align}
where $[r]$ is the set of isomorphism classes of irreps of $G$, $\mathcal{H}_r$ is the $r$-representation space and $\mathcal{H}_{rm}$ is the corresponding multiplicity space. We will label an orthonormal basis for $\mathcal{H}$ by $\{\ket{r,i,\alpha}\}$, where $\{i\}$ and $\{\alpha\}$ represent an orthonormal basis for $\mathcal{H}_r$ and $\mathcal{H}_{rm}$, respectively. The group action on $\mathcal{H}$ decomposes into irreps as
\begin{align}
    U(g) = \bigoplus_{[r]} U_r(g)\otimes \mathbb{1}_{rm}\,,
\end{align}
where $U_r$ is the group representation in irrep $r$. This irrep action is defined in terms of representation matrices
\begin{align}
    U_r(g) \ket{r,i}=\l\rho_r(g)\r^i_j\ket{r,j}\,.
\end{align}
We note that $\l\l\rho_r(g)\r^i_j\r^\dagger=\l\rho_r(g)\r^j_i{}^*$ in our conventions.

We now take some algebra of bounded operators on $\mathcal{H}$: $\mathcal{A}\subseteq\mathcal{B}(\mathcal{H})$. A generic operator written in the basis above takes the form
\begin{align}
    \hat{a}=\sum_{\substack{r,i,\alpha\\r',i',\alpha'}}\,a^{rr'}_{ii'\alpha\alpha'}\ket{r,i,\alpha}\bra{r',i',\alpha'}\,
\end{align}
for some complex coefficients $a^{rr'}_{ii'\alpha\alpha'}$. We now compute the G-twirl of $\hat{a}$ 
\begin{align}
    \mathfrak{G}(\hat{a})&:=\int [dg] \,U(g)\,\hat{a}\,U(g)^\dagger\\
    &=\int [dg] \sum_{\substack{r,i,\alpha\\r',i',\alpha'}}a^{rr'}_{ii'\alpha\alpha'}\,\l U_r(g)\ket{r,i}\bra{r',i'}U_{r'}(g)^\dagger\r\otimes \ket{r,\alpha}\bra{r',\alpha'}\\
    &=\sum_{\substack{r,i,\alpha\\r',i',\alpha'}}a^{rr'}_{ii'\alpha\alpha'}\, \sum_{j,j'}\l\int [dg] \l\rho_r(g)\r^i_j\l\rho_{r'}(g)\r^{j'}_{i'}{}^*\r \ket{r,j}\bra{r',j'}\otimes \ket{r,\alpha
    }\bra{r',\alpha'}\,.
\end{align}
We now use the irrep orthogonality relation
\begin{align}
    \int [dg] \l\rho_r(g)\r^i_j\l\rho_{r'}(g)\r^{j'}_{i'}{}^* = \frac{1}{d_r}\delta_{rr'}\delta^i_{i'}\delta^{j'}_j\,,
\end{align}
where $d_r$ is the irrep dimension. So, 
\begin{align}
    \mathfrak{G}(\hat{a})&=\sum_{r,i,\alpha,\beta} a^{rr}_{ii\alpha\beta} \frac{1}{d_r}\underbrace{\l\sum_j\ket{r,j}\bra{r,j}\r}_{\mathbb{1}_r}\otimes \ket{r,\alpha}\bra{r,\beta}\\
    &=\sum_r \frac{1}{d_r}\mathbb{1}_r\otimes \sum_{\alpha,\beta} \underbrace{\l\sum_i a^{rr}_{ii\alpha\beta}\r}_{=:\,(\mathfrak{a}_{rm})_{\alpha\beta}}\ket{r,\alpha}\bra{r,\beta}\\
    &=\sum_r\l\frac{1}{d_r}\mathbb{1}_r\otimes \mathfrak{a}_{rm}\r\,,
\end{align}
where we have defined the operator $\mathfrak{a}_{rm}\in \mathcal{B}(\mathcal{H}_{rm})$ acting on the multiplicity spaces alone. We note that we can think of the diagonal blocks (wrt to $r$) of the matrix coefficients of $\hat{a}$ as defining an operator $\hat{a}_r\in\mathcal{B}(\mathcal{H}_r\otimes \mathcal{H}_{rm})$ as $\hat{a}_r := \sum_{i,j,\alpha,\beta}\, a^{rr}_{ij\alpha\beta}\ket{r,i,\alpha}\bra{r,j,\beta}$. Then, we have that $\mathfrak{a}_{rm}=\Tr_{\mathcal{H}_r}(\hat{a}_r)$. As expected, we see that $\Tr_\mathcal{H}\mathfrak{G}(\hat{a})=\Tr_\mathcal{H}(\hat{a})=\sum_{r,i,\alpha}\,a^{rr}_{ii\alpha\alpha}$. The G-twirl is a CPTP map.

In conclusion, we have that
\begin{align}
    \mathfrak{G}(\mathcal{A})=\bigoplus_{[r]}\l\frac{1}{d_r}\mathbb{1}_r\otimes \Tr_{\mathcal{H}_r}\l\mathcal{A}\big|_r\r\r\,,
\end{align}
where by $\mathcal{A}\big|_r$ we mean the  restriction of $\mathcal{A}$ onto the diagonal block $r$.  We see that the superselection sectors are precisely the irrep classes of the group $G$. 

This naturally leads to the existence of an algebraic center, defined as $\mathcal{Z}:=\mathfrak{G}(\mathcal{A})\cap\mathfrak{G}(\mathcal{A})'$, where $'$ denotes the commutant in $\mathcal{B}(\mathcal{H})$. The bigger the starting algebra $\mathcal{A}$, the bigger $\mathfrak{G}(\mathcal{A})$. In particular, if we take $\mathcal{A}=\mathcal{B}(\mathcal{H})$, then we have that $\mathfrak{G}(\mathcal{A})=\bigoplus_{[r]}\l\frac{1}{d_r}\mathbb{1}_r\otimes \mathcal{B}(\mathcal{H}_{rm})\r$. Demanding commutativity in this ``maximal'' G-twirled algebra will always lead to the ``minimal'' center ($\mathcal{Z}_\text{min}$), which we now proceed to characterize. The condition that $[\mathfrak{z},\mathfrak{a}]=0\,,\forall \mathfrak{z}\in\mathcal{Z}\,,\, \mathfrak{a}\in\mathfrak{G}(\mathcal{A})$ leads to $[\mathfrak{z}_{rm},\mathfrak{a}_{rm}]=0\,,\forall r$. Because $\mathcal{B}(\mathcal{H}_{rm})$ is a factor, we have that $\mathfrak{z}_{rm}\propto\mathbb{1}_{rm}$. Thus, elements in $\mathcal{Z}_\text{min}$ are fully diagonal in the irrep basis and are labeled by a set of complex numbers: $\mathfrak{z}=\oplus_{[r]}\lambda_{[r]}\mathbb{1}_{[r]}$, $\lambda_{[r]}\in\mathbb{C}$. So, within each sector the center algebra is generated by the Casimirs of $\mathfrak{g}$ and these are contained in any center, regardless of the starting algebra $\cA$.

A subset of operators of interest to us are the states on the algebra: $\varrho\in \mathcal{S}(\mathcal{A})$. Under the G-twirl they become
\begin{align}
    \mathfrak{G}(\varrho)=\bigoplus_{[r]}\l\frac{1}{d_r}\mathbb{1}_r\otimes \varrho_{rm}\r=    \bigoplus_{[r]}p_r\l\frac{1}{d_r}\mathbb{1}_r\otimes \hat{\varrho}_{rm}\r\,
\end{align}
where we have defined $p_r:=\Tr_{\mathcal{H}_{rm}}(\varrho_{rm})$, satisfying $\sum_r p_r =\Tr_\mathcal{H}(\varrho)=1$. This means that $\hat{\varrho}_{rm}$ is such that $\Tr_{\mathcal{H}_{rm}}(\hat{\varrho}_{rm})=1$, and so it is a state on $\mathcal{H}_{rm}$. The von Neumann entropy of the G-twirled state is
\begin{align}\label{eq:entropyformula}
    S_\text{vN}(\mathfrak{G}(\varrho)) = -\sum_{[r]}p_r\log p_r + \sum_{[r]}p_r \l \log d_r + S_\text{vN}(\hat{\varrho}_{rm})\r\,.
\end{align}

A few comments are in order. First, applying the present discussion to the $\mathbb{G}_{\p\Sigma}$-twirl from any extrinsic algebra to $\mathcal{A}_E$ (cf.~Proposition~\ref{prop:1} and Corollary~\ref{cor_3}) yields indeed an algebra with center. The center is precisely given by all the Casimirs of the physical electric corner group $\mathbb{G}_{\p\Sigma}$; as shown in Eq.~\eqref{eq:reorientgen}, these are given by the $\Phi$-dressed electric fields normal to $\p\Sigma$. The center is thus comprised of the normal electric fields squared and this coincides with the electric center proposal in \cite{Soni:2015yga,Casini:2013rba}. The electric corner twirl
block diagonalizes operators, such as density matrices, in electric corner charge superselection sectors and essentially removes the GM. Information about the latter is now only contained in the probabilities $p_r$. See also App.~\ref{app:alternative} for further discussion of this in the Abelian case. 

Notably, our computation of the electric center entanglement entropy following from Eq.~\eqref{eq:entropyformula} and reported in Table~\ref{tab:1} \emph{differs} in the non-Abelian case from the one in \cite{Soni:2015yga,Delcamp:2016eya} (see also \cite{Bianchi:2024aim}), which misses the $\log d_r$-contribution (which vanishes in Abelian theories). In fact, our result in Eq.~\eqref{eq:entropyformula} agrees with the extended Hilbert space computation \cite{Ghosh:2015iwa,Soni:2015yga,Delcamp:2016eya,VanAcoleyen:2015ccp,Donnelly:2011hn,Donnelly:2014gva} instead. Our computation does not involve an auxiliary extended Hilbert space and is algebraic as the center calculation in \cite{Soni:2015yga,Delcamp:2016eya,Bianchi:2024aim}, leading however to a finer structure.

Finally, the entropy computation in Eq.~\eqref{eq:entropyformula} also directly applies to any intrinsic relational algebra, where the Casimirs are given by the $\tilde\Phi$-dressed square of the total electric flux tranversal to the corner and tangential to the complement of $\tilde\Phi$, see Eq.~\eqref{eq:intcenter}. In $\tilde\Phi$-perspective, i.e.\ upon applying the PW reduction $\mathcal{R}_{\tilde\Phi}$, this becomes directly the ``naked'' square of the sum of these electric fields.

\section{Pontryagin duality in Abelian theories}\label{app_pontryagin}

Let us briefly review basics of Pontryagin duality for Abelian gauge theories, which we will make use of subsequently.

The Pontryagin dual of a compact Abelian group $G$ is the group $\hat{G}=\rm{Hom}(G,\rm{U}(1))$ of homomorphisms from $G$ into the unit circle, i.e.\ the group of irreducible \emph{characters} $\chi$ of $G$. The group law is given by pointwise multiplication, $\chi\eta:g\to\chi(g)\eta(g)$ for all $\chi,\eta\in\hat{G}$ and $g\in G$, and we have the canonical pairing $G\times\hat{G}\to\rm{U}(1)$ given by $(g,\chi)\mapsto\chi(g)$. Pontryagin duality says that the double dual $\hat{\hat{G}}\simeq G$ is canonically isomorphic to $G$ itself. This is the basis for extending Fourier analysis to general Abelian groups. In particular, Plancherel's theorem implies that we have $L^2(G)\simeq L^2(\hat{G})$. For finite $G$, $\hat{G}$ is finite too with $|G|=|\hat{G}|$, and for $G$ continuous and compact, $\hat{G}$ is discrete with infinite cardinality. For instance, the Pontryagin dual of $G=\rm{U}(1)$ is $\mathbb{Z}$. In what follows, $\hat{G}$ is thus discrete.

Let us now build a unitary representation $\hat U$ for $\hat G$ on $\H=L^2(G)$ for later purpose by slightly generalizing the procedure in \cite[Prop.~4.7]{Carrozza:2024smc}, which holds for finite Abelian groups. Indeed, setting\footnote{We write $\chi$ as a superscript, given that for compact $G$ it takes value in a discrete set.}
\begin{equation}\label{eq:Uchidef}
    \hat{U}^\chi:=\frac{1}{\rm{Vol}(G)}\int_G\dd{g}\chi(g)\ket{g}\!\bra{g}
\end{equation}
defines a \emph{dual} unitary representation of $\hat G$ on $\H$, according to \cite[Def.~D.1]{Carrozza:2024smc}. This means that the $\{\hat U^\chi\}_{\chi\in\hat G}$ obey the following properties. First, using the orthonormality of the group basis, they comprise a group homomorphism:
\begin{equation}\label{eq:Uchi2}
    \hat{U}^\chi\hat{U}^\eta=\frac{1}{\rm{Vol}(G)}\int_G\dd{g}\chi(g)\eta(g)\ket{g}\!\bra{g}=\frac{1}{\rm{Vol}(G)}\int_G\dd{g}\chi\eta(g)\ket{g}\!\bra{g}=\hat{U}^{\chi\eta}\,.
\end{equation}
From this it also follows that $\hat{U}^\chi\left(\hat U^\chi\right)^\dag=\hat U^\chi\hat{U}^{\bar{\chi}}=\hat U^1=1=\left(\hat U\right)^\dag\hat U^\chi$, using the resolution of the identity. Hence, $\{\hat U^\chi\}_{\chi\in\hat G}$ is a unitary representation of $\hat G$ on $\H$. Second, they are \emph{dual} to the unitary $G$ representation $U$ on $\H$, i.e.\ for any $g\in G$ and $\chi\in\hat G$,
\begin{equation}
    U(g)\,\hat U^\chi=\frac{1}{\rm{Vol}(G)}\int_G\dd{g}'\chi(g')\ket{gg'}\!\bra{g'}=\frac{1}{\rm{Vol}(G)}\int_G\dd{g}'\chi(g^{-1})\chi(g')\ket{g'}\!\bra{g'} U(g)=\bar{\chi}(g)\,\hat U^\chi \,U(g)\,,\label{dualrep}
\end{equation}
where we used that $\chi(gg')=\chi(g)\chi(g')$ and $\chi(g^{-1})=\bar\chi(g)$. This relation mimics the Weyl duality relations for position and momentum translations (which are each other's Pontryagin dual).

We may now use the characters to build a dual orthonormal basis \cite{Carrozza:2024smc} by Fourier transform:
\begin{equation}
    \ket{\chi}:=\frac{1}{\rm{Vol}(G)}\int_G\dd{g}\chi(g)\ket{g}\,.
\end{equation}
Using the orthogonality relation
\begin{equation}
    \frac{1}{\rm{Vol}(G)}\int_G\dd{g}\bar{\chi}(g)\eta(g)=\delta_{\chi,\eta}
\end{equation}
it may be checked that indeed $\braket{\chi}{\eta}=\delta_{\chi,\eta}$ and
\begin{equation}\label{dualresid}
    \frac{1}{\rm{Vol}(G)}\sum_{\chi\in\hat{G}}\ket{\chi}\!\bra{\chi}=\mathds1\,.
\end{equation}Furthermore,
\begin{equation}\label{uchi}
    U(g)\ket{\chi}=\bar{\chi}(g)\ket{\chi}\,,\qquad\qquad \hat U^\eta\ket{\chi}=\ket{\eta\chi}\,,
\end{equation}
so the dual basis states are eigenstates of the gauge transformations with eigenvalues given by the (inverse) characters. (For Abelian theories, we drop the distinction between left and right multiplication.) Multiplying Eq.~\eqref{dualresid} with $U(g)$ and invoking the left equation in Eq.~\eqref{uchi} yields
\begin{equation}
    U(g)=\frac{1}{\rm{Vol}(G)}\sum_{\chi\in\hat{G}}\bar{\chi}(g)\ket{\chi}\!\bra{\chi}\,.
\end{equation}

Using the dual orthogonality relation
\begin{equation}\label{Eq:dualortho}
 \frac{1}{\rm{Vol}(G)} \sum_\chi\bar{\chi}(g)\chi(g')=\delta(g,g')\,,
\end{equation}
one finds the inverse Fourier transform
\begin{equation}
    \ket{g}=\frac{1}{\rm{Vol}(G)}\sum_{\chi\in\hat{G}}\bar{\chi}(g)\ket{\chi}\,
\end{equation}
and
\begin{equation}\label{eq:Ueta}
    \hat U^\eta\ket{g}=\eta(g)\ket{g}\,.
\end{equation}

Let us now show that Wilson loops give rise to a \emph{dual} unitary representation of the Pontryagin dual $\hat{G}$ according to \cite[Def.~D.1]{Carrozza:2024smc}. The starting point is the observation that Wilson loops define characters for the structure group $G$. Indeed, a closed Wilson line operator $\hat{g}_W$ from some vertex $v$ to itself takes value in the representation $\rho$ of $G$
\begin{equation}
    \hat{g}_W:=\frac{1}{\rm{Vol}(G)}\int_G\dd{g}\rho(g)\ket{g}\!\bra{g}_W\,,
\end{equation}
where we suppress the $\rho$-representation indices and $\{\ket{g}_W\}_{g\in G}$ constitute an orthonormal basis on the Hilbert space $\H_W\simeq L^2(G)$ on which the closed Wilson line is supported. Obtaining $\H_W$ from the local link Hilbert spaces involves a nonlocal refactorization of the kind explained in the main body. Thus, the corresponding Wilson loop operator reads
\begin{equation}\label{eq:loopcharacter}
   \Tr_\rho\hat{g}_W=\frac{1}{\rm{Vol}(G)}\int_G\dd{g}\chi_\rho(g)\ket{g}\!\bra{g}_W\,,
\end{equation}
where
\begin{equation}
\chi_\rho(g)=\Tr_\rho\rho(g)
\end{equation}
is the character of representation $\rho$.

For a general compact group $G$, the set of characters does not form a group. However, for the case of Abelian $G$, as considered here, it does form a group. In particular, for different irreps $\rho$, this constitutes $\hat{G}$.
Dropping for notational simplicity henceforth the $\rho$ subscript on the characters and labeling them with $\chi,\eta,...$ as before, we have that the Wilson loops
\begin{equation}
    \hat{U}^\chi:=\Tr_\rho\hat{g}_W=\frac{1}{\rm{Vol}(G)}\int_G\dd{g}\chi(g)\ket{g}\!\bra{g}_W
\end{equation}
define a \emph{dual} representation of $\hat{G}$ thanks to Eqs.~\eqref{eq:Uchidef}--\eqref{dualrep}.

These relations will prove useful below.

\section{An alternative derivation of the electric center entropy formula for Abelian theories}\label{app:alternative}

Consider a lattice gauge theory with compact or finite Abelian structure group $G$. We will briefly rederive the entropy formula Eq.~\eqref{eq:entropyformula} for the electric center algebra from our observation that the physical Hilbert space neatly factorizes relative to the extrinsic frame $\Phi$. This is to illustrate a key difference to the magnetic center entropy computation in the next section, which suffers from an infinite ambiguity when $G$ is continuous.

We saw in App.~\ref{app:factor} that for Abelian theories
\begin{equation}
\Hphys\simeq\H_{\mathring{\Sigma}|\tilde\Phi}^{\rm phys}\otimes\H^{\rm phys}_{\p\Sigma\setminus\tilde\Phi|\tilde\Phi}\otimes\H_{\rm GM}\otimes\H_{\bar\Sigma\setminus\Phi}^{\rm phys}\,,
\end{equation}
the regional factor $\H^{\rm phys}_{\Sigma}$ factorizes into the first three factors above. When computing entanglement entropies for a density operator supported on the first three factors, we may use the standard Hilbert space trace on $\H^{\rm phys}_\Sigma$, which thus factorizes across these three factors.

For a general regional density operator, we may write
\begin{equation}
    \rho^{\rm phys}_{\Sigma}=\sum_i\,p_i\rho^i_B\otimes\rho^i_{\rm GM}
\end{equation}
with $\sum_i p_i=1$ and the $\rho^i_B,\rho^i_{\rm GM}$ normalized, where $B=\Sigma\setminus\tilde\Phi|\tilde\Phi=\mathring{\Sigma}|\tilde\Phi\otimes\p\Sigma\setminus\tilde\Phi|\tilde\Phi$ for shorthand. Written in this factorization, the electric center algebra takes the form
\begin{equation}
    \mathcal{A}_E=\mathcal{B}(\mathcal{H}^{\rm phys}_B)\otimes\big\langle U_{\rm GM}(g)\,\big|\,g\in \mathbb{G}_{\p\Sigma}\simeq G^{V_{\p\Sigma}}\big\rangle
\end{equation}
because the $\{U_{\rm GM}(g)\}_{g\in \mathbb{G}_{\p\Sigma}}$, by Fourier analysis, generate all bounded functions of the exterior normal electric field operators on $\H_{\rm GM}\simeq L^2(G^{V_{\p\Sigma}})$. 

$\mathcal{A}_{\rm GM}^E:=\big\langle U_{\rm GM}(g)\,\big|\,g\in \mathbb{G}_{\p\Sigma}\big\rangle$, being Abelian, is a Type I von Neumann algebra. But because it is not a factor, it does not possess a unique trace \emph{a priori} (in each superselection sector, we could scale the trace differently). In the present case, this is not an issue, however. Indeed, note for contradistinction with the magnetic center case below that $\mathcal{A}_{\rm GM}^E$ still contains operators on $\H_{\rm GM}$ that are trace-class w.r.t.\ the standard Hilbert space trace. Thanks to Eq.~\eqref{uchi}, every element $a_{\rm GM}$ of $\mathcal{A}^E_{\rm GM}$ is diagonal in the dual character basis and therefore of the form
\begin{equation}
    a_{\rm GM}=\sum_{\chi\in\hat{G}^{V_{\p\Sigma}} }a(\chi)\ket{\chi}\!\bra{\chi}
\end{equation}
with bounded function $a(\chi)$. Trace-class operators are then those with $\sum_\chi|a(\chi)|<\infty$.
The standard trace on $\H_{\rm GM}$ is the trace on the larger algebra $\mathcal{B}(\H_{\rm GM})$ and is thus induced on $\mathcal{A}^E_{\rm GM}$. It is the natural choice in this case, breaking the large ambiguity in defining a trace on it.
We are therefore entitled to simply use the standard trace on the physical Hilbert space in what follows.

Now, if $\rho^{\rm phys}_\Sigma\in\mathcal{A}_E$, we must have that 
\begin{equation}
    \rho^i_{\rm GM}=\sum_{\mathbf{\chi}\in\hat G^{V_{\p\Sigma}}} f^i(\mathbf{\chi})\ket{\mathbf{\chi}}\!\bra{\mathbf{\chi}}
\end{equation}
with $0\leq f^i(\chi)\leq1$ normalized $\sum_\chi f^i(\chi)=1$, and so
\begin{equation}
    \rho^{\rm phys}_\Sigma=\sum_{\chi\in\hat{G}^{V_{\p\Sigma}}}\rho_B(\chi)\otimes\ket{\chi}\!\bra{\chi}\,,\qquad\qquad\rho_B(\chi):=\sum_i p_i f^i(\chi)\rho^i_B\,.
\end{equation}
Hence,
\begin{equation}
  \rho_\Sigma^{\rm phys}\log\rho^{\rm phys}_\Sigma=\sum_{\chi\in\hat{G}^{V_{\p\Sigma}}}\rho_B(\chi)\log\rho_B(\chi)\otimes\ket{\chi}\!\bra{\chi}
\end{equation}
and, using that the trace factorizes,
\begin{equation}
    -\Tr\rho^{\rm phys}_\Sigma\log\rho^{\rm phys}_\Sigma=-\sum_{\chi\in\hat{G}^{V_{\p\Sigma}}}\rho_B(\chi)\log\rho_B(\chi)\,.
\end{equation}
Note that
\begin{equation}
    \Tr_B\rho_B(\chi)=\sum_i p_i f^i(\chi)=:p_\chi
\end{equation}
is subnormalized, since $0\leq p_\chi\leq1$ and $\sum_{\chi\in\hat{G}^{V_{\p\Sigma}}}p_\chi=1$. Normalizing $\rho^\chi_B:=\rho_B(\chi)/p_\chi$ whenever $p_\chi>0$, we have
\begin{equation}\label{eq:electricentropy}
    S_{\rm vN}(\rho^{\rm phys}_\Sigma)=-\sum_{\chi\in\hat{G}^{V_{\p\Sigma}}}p_\chi\log p_\chi+\sum_{\chi\in\hat{G}^{V_{\p\Sigma}}} p_\chi S_{\rm vN}(\rho^\chi_B)\,,
\end{equation}
recovering Eq.~\eqref{eq:entropyformula} for the electric center in the Abelian case ($d_r=1$), upon identifying $\chi=r$ and the multiplicity spaces $\H_{rm}$ with the Hilbert space $\H_B^{\rm phys}$.

\section{Ambiguities in the entanglement entropy for magnetic center algebras: Abelian theories}\label{app_magneticambiguity}

Let us now repeat the exercise of the previous appendix for magnetic center algebra in the Abelian case (we discuss the non-Abelian case in App.~\ref{app_nonAbmagnetic} below). We will see that this is only well-defined for \emph{finite} structure group $G$. When $G$ is continuous, the magnetic center algebra does \emph{not} inherit the standard Hilbert space trace, nor that of any of the algebras of which $\mathcal{A}_M$ is a subalgebra by Eq.~\eqref{eq:alghierarchy}. While a trace can still be defined, there is a large ambiguity that cannot be broken by a natural induced trace, as in the electric center case.

The starting point is the perspective of some intrinsic QRF, from which the magnetic center can be obtained (recall that $\mathcal{A}_M\subsetneq\mathcal{A}^{\rm phys}_{\Sigma\setminus\tilde\Phi|\tilde\Phi}$, cf.~Eq.~\eqref{eq:alghierarchy}). From App.~\ref{app:factor}, we have that, for Abelian $G$, the physical Hilbert space relative to any intrinsic frame $\tilde\Phi$ also factorizes according to Eq.~\eqref{eq:Abelintfact}
\begin{equation}
\Hphys\simeq\H_{\mathring{\Sigma}}^{\rm phys}\otimes\H_{\p\Sigma\setminus\tilde\Phi}\otimes\textbf{P}^\text{inv}_\Phi(\mathbb{H})\l\mathcal{H}_\Phi\r\otimes\H^{\rm phys}_{\bar\Sigma\setminus\Phi}\,.
\end{equation}
We are interested in the first two factors on the r.h.s.\ on which $\mathcal{A}_M$ has support.
For a regional density operator in $\tilde\Phi$'s perspective we may write
\begin{equation}
    \rho^{\rm phys}_{\Sigma\setminus\tilde\Phi}=\sum_i p_i\,\rho^i_{\mathring{\Sigma}}\otimes\rho^i_{C}
\end{equation}
with $0\leq p_i$ and $\sum_i p_i=1$ and $\rho^i_{\mathring{\Sigma}},\rho^i_{C}$ normalized. For shorthand, we introduce only in this and the following appendix the notation $C=\p\Sigma\setminus\tilde\Phi$ for the complement of the intrinsic frame $\tilde\Phi$ in the corner $\p\Sigma$.

In this factorization, the magnetic center algebra reads (recall that in the Abelian case there is no electric contribution)
\begin{equation}
    \mathcal{A}_M=\mathcal{B}(\H^{\rm phys}_{\mathring{\Sigma}})\otimes\big\langle \hat U^\chi\,\big|\,\chi\in\hat G^{E_C}\big\rangle\,,
\end{equation}
where $E_C:=|\mathcal{E}\cap\p\Sigma|-(V_{\p\Sigma}-1)$ is the number of edges in the complement of the tree defining $\tilde\Phi$ in the corner $\p\Sigma$.
Similarly to the electric case, $\mathcal{A}^M_{C}:=\big\langle \hat U^\chi\,\big|\,\chi\in\hat G^{E_C}\big\rangle$ is an Abelian von Neumann algebra of Type I, generated by bounded functions of the corner Wilson loops $\Tr\hat g_W$ on $\H_{C}\simeq L^2(G^{E_C})\simeq L^2(\hat{G}^{E_C})$,\footnote{We thus count $E_C$ independent Wilson loops in $\p\Sigma$ generating $\mathcal{A}_C^M$ (taking different irreps into account). Note that when the dimension of $\p\Sigma$ is $D\geq2$, there will generally be additional Wilson loops in the corner, involving multiple (or only) edges in the complement of the tree. However, these will not be independent and the redundancy among Wilson loop variables is encoded in Mandelstam constraints, e.g.\ see \cite{Giles:1981ej,Gambini:1986ew,Loll:1991mh}. (We thank Bianca Dittrich for pointing this out.) Our construction thus automatically takes these into account.} and thus not a factor. Hence, it again possesses a highly non-unique trace. In contradistinction with the electric center case of the previous appendix, $\mathcal{A}_{C}^M$  now only contains operators that are trace-class w.r.t.\ the standard Hilbert space trace on $\H_{C}$, provided $G$ is finite. Only in that case may we use the standard Hilbert space trace to break the large ambiguity. 

To see this, note that Eq.~\eqref{eq:Ueta} implies that every element $a_{C}$ in $\mathcal{A}_{C}^M$ is diagonal in the original group basis (recall from App.~\ref{app_pontryagin} that the group elements $g\in G$ label the characters of the Pontryagin dual group $\hat G$) and therefore of the form
\begin{equation}\label{eq:magneticelement}
    a_C=\frac{1}{\rm{Vol}(G)^{E_C}}\int_{G^{E_C}}\dd{\mathbf g}a(\mathbf g)\ket{\mathbf g}\!\bra{\mathbf g}
\end{equation}
with bounded $a(\mathbf g)$. Hence,
\begin{equation}\label{eq:nomagnetictrace}
    \Tr_C a_C = \frac{1}{\rm{Vol}(G)^{E_C}}\int_{G^{E_C}}\dd{\mathbf g} a(\mathbf{g}) \,\delta(\mathbf{e},\mathbf{e})\,,
\end{equation}
so that when $G$ is continuous (and $\delta(\mathbf{e},\mathbf{e})$ diverges) no element in $\mathcal{A}^M_C$ is trace-class w.r.t.\ the standard Hilbert space trace. By contrast, when $G$ is finite (and $\delta(\mathbf{e},\mathbf{e})=1$ and the integral is replaced by a trace), we have instead that all $a_C$ are trace-class.\footnote{For a discrete structure group $G$ with infinite cardinality, one would likewise find trace-class operators, similarly to the electric center construction. However, we do not entertain such structure groups in this work.} Thus, in this latter case, we may invoke again the Hilbert space trace to proceed as in the electric center case.

For continuous $G$, we may still construct a trace on $\mathcal{A}_C^M$, given that it is of Type I. Besides linearity and cyclicity, a trace $\tau:\mathcal{A}_C^M\to\mathbb{C}$ must also obey ``faithfulness, normality and semifiniteness'' (see \cite{Sorce:2023fdx} for an introduction and \cite[Ch.~V]{Takesaki1} for details). It turns out that any of
\begin{equation}\label{eq:gentrace}
    \tau_\mu(a_C):=\frac{1}{\rm{Vol}(G)^{E_C}}\int_{G^{E_C}}\dd{\mathbf{g}}\,\mu(\mathbf{g})\,a(\mathbf{g})
\end{equation}
defines a valid trace on this algebra, provided $\mu(\mathbf{g})$ is a positive function on $G^{E_C}$ \cite[Ch.~V]{Takesaki1}. This ambiguity corresponds precisely to scaling the trace in each factor, i.e.\ in each different superselection sector labeled by $\mathbf{g}\in G^{E_C}$, in different ways. Crucially, none of these is induced by the Hilbert space trace.
Equipped with any one of these ``renormalized'' traces $\tau_\mu$, we may in principle proceed to compute entropies by using the trace
\begin{equation}\label{eq:tracefact}
    \Tr_M(O_{\mathring{\Sigma}}\otimes O_C):=\Tr_{\mathring{\Sigma}}\l O_{\mathring{\Sigma}}\r \tau_\mu(O_C)\,
\end{equation}
on the magnetic center $\mathcal{A}_M$. However, no choice of $\mu$ is better than any other. Note that we may also not inherit the trace on any of the other three algebras of which $\mathcal{A}_M$ is a subalgebra according to Eq.~\eqref{eq:alghierarchy} to break this ambiguity and fix a $\mu$. This is because the unique (up to scaling) trace of the immediate superalgebra, the intrinsic relational algebra $\mathcal{A}^{\rm phys}_{\Sigma\setminus\tilde\Phi|\tilde\Phi}\simeq\mathcal{B}(\H^{\rm phys}_{\mathring{\Sigma}})\otimes\mathcal{B}(\H_C)$, \emph{is} the standard Hilbert space trace, being a Type I \emph{factor} in the Abelian $G$ case.  Thus, there is infinite ambiguity in defining entanglement entropies associated with magnetic center algebras for continuous Abelian $G$. Without a physical means to break this ambiguity, inquiring about such entanglement entropies is not a physically well-defined question.

Let us nevertheless proceed with the (arbitrary) choice $\mu=1$, simply because the resulting formulae will be of the same form as for finite $G$, in which case the entanglement entropy associated with $\mathcal{A}_M$ \emph{is} well-defined. We will henceforth continue writing an integral over the group to account for this case and it is understood that $1/\rm{Vol}(G)^{E_C}\int_{G^{E_C}}\dd{\mathbf{g}}\to 1/|G^{E_C}|\sum_{\mathbf{g}\in G^{E_C}}$ and $\tau_1\to \Tr_C$ everywhere below when $G$ is finite.

On account of the above, and similarly to the previous appendix, when $\rho^{\rm phys}_{\Sigma\setminus\tilde\Phi}\in\mathcal{A}_M$, we must have
\begin{equation}
    \rho^i_C=\frac{1}{\rm{Vol}(G)^{E_C}}\int_{G^{E_C}}\dd{g}\rho^i(\mathbf{g})\ket{\mathbf{g}}\!\bra{\mathbf{g}}
\end{equation}
with $\rho^i(\mathbf{g})\geq0$ and $\tau_1(\rho^i_C)=1/\rm{Vol}(G)^{E_C}\int\dd{\mathbf{g}}\rho^i(\mathbf{g})=1$, and so
\begin{equation}
    \rho^{\rm phys}_{\Sigma\setminus\tilde\Phi}=\frac{1}{\rm{Vol}(G)^{E_C}}\int_{G^{E_C}}\dd{\mathbf{g}}\tilde{\rho}_{\mathring{\Sigma}}(\mathbf{g})\otimes\ket{\mathbf{g}}\!\bra{\mathbf{g}}\,,\qquad\qquad \tilde{\rho}_{\mathring{\Sigma}}(\mathbf{g}):=\sum_i p_i\rho^i(\mathbf{g})\rho^i_{\mathring{\Sigma}}\,.
\end{equation}
Hence,
\begin{equation}
    \rho^{\rm phys}_{\Sigma\setminus\tilde{\Phi}}\log\rho^{\rm phys}_{\Sigma\setminus\tilde\Phi}=\frac{1}{\rm{Vol}(G)^{E_C}}\int_{G^{E_C}}\dd{\mathbf{g}}\tilde\rho_{\mathring{\Sigma}}(\mathbf{g})\log\tilde\rho_{\mathring{\Sigma}}(\mathbf{g})\otimes\ket{\mathbf{g}}\!\bra{\mathbf{g}}\,,
\end{equation}
which, invoking Eq.~\eqref{eq:tracefact}, yields
\begin{equation}
    S_{\rm vN}(\rho^{\rm phys}_{\Sigma\setminus\tilde\Phi})=-\frac{1}{\rm{Vol}(G)^{E_C}}\int_{G^{E_C}}\dd{\mathbf{g}}\Tr_{\mathring{\Sigma}}\tilde\rho_{\mathring{\Sigma}}(\mathbf{g})\log\tilde\rho_{\mathring{\Sigma}}(\mathbf{g})\,.
\end{equation}
Again,
\begin{equation}
    0\leq p(\mathbf{g}):=\Tr_{\mathring{\Sigma}}\tilde\rho_{\mathring{\Sigma}}(\mathbf{g})=\sum_i p_i\rho^i(\mathbf{g})
\end{equation}
is not normalized. Renormalizing ${\rho}_{\mathring{\Sigma}}(\mathbf{g}):=\tilde\rho_{\mathring{\Sigma}}(\mathbf{g})/p(\mathbf{g})$ whenever $p(\mathbf{g})>0$, we arrive at 
\begin{equation}\label{eq:magneticentropy}
    S_{\rm vN}(\rho^{\rm phys}_{\Sigma\setminus\tilde\Phi})=-\frac{1}{\rm{Vol}(G)^{E_C}}\int_{G^{E_C}}\dd{\mathbf{g}}p(\mathbf{g})\log p(\mathbf{g}) +\frac{1}{\rm{Vol}(G)^{E_C}}\int_{G^{E_C}}\dd{\mathbf{g}} p(\mathbf{g})\,S_{\rm vN}(\rho_{\mathring{\Sigma}}(\mathbf{g}))\,,
\end{equation}
which is the magnetic center analog of Eq.~\eqref{eq:electricentropy}. In the case that $G$ is finite Abelian, this result is unambiguous and also recovers the earlier result of \cite{Casini:2013rba}. When $G$ is continuous, on the other hand, this result is highly ambiguous due to the reasons presented above; different choices of trace will change the measure by $\dd{\mathbf{g}}\to\dd{\mathbf{g}}\mu(\mathbf{g})$ in Eq.~\eqref{eq:magneticentropy}.

We will discuss the non-Abelian case in App.~\ref{app_nonAbmagnetic}.

\section{Magnetic center from magnetic dual twirl in the Abelian case}\label{app_magneticcenter}

In this appendix, we show that, for gauge theories with finite Abelian structure group $G$, the magnetic center algebra $\mathcal{A}_M$ can also be obtained as a $\hat{G}$-twirl of any intrinsic relational algebra, i.e.\ by incoherently averaging that algebra over the Pontryagin dual group $\hat G$. The unitary representation of $\hat G$ on $\Hphys$ will be constituted by the corner Wilson loops and so the magnetic corner degrees of freedom. This observation will feed into the proof in the next appendix that intrinsic relational and magnetic center algebras also obey an entanglement entropy inequality in the finite Abelian case.

We will also briefly explain why this observation fails for continuous Abelian groups $G$. This is related to the result of the previous appendix, showing that there is an infinite ambiguity in defining a trace and entanglement entropies for the magnetic center in that case. For non-Abelian theories, there is no natural dual to $G$, which is why we restrict to the Abelian case. The dual twirl for finite Abelian groups is therefore the maximum we can ask for.

Thus, the electric center is obtained by averaging extrinsic relational algebras over an \emph{electric corner group} $\mathbb{G}_{\p\Sigma}$, generated by electric charges on $\p\Sigma$, while the magnetic center in finite Abelian theories arises by averaging intrinsic relational algebras over a dual \emph{magnetic corner group} (of smaller cardinality), generated by magnetic charges (Wilson loops) on $\p\Sigma$. In the electric case this is tantamount to averaging over the orientations of a QRF for the electrically generated group, while for the magnetic case we will see that this is equivalent to averaging over the orientations of a dual QRF for the dual magnetically generated corner group. Note that Pontryagin duality underpins electromagnetic duality \cite{Ben-Zvi}.

We have seen in App.~\ref{app_pontryagin} that Wilson loops generate a dual unitary representation $\{\hat U^\chi\}_{\chi\in\hat{G}}$ of $\hat{G}$. Let us apply this to the corner loops in $\p\Sigma$ and any intrinsic relational algebra $\mathcal{A}^{\rm phys}_{\Sigma\setminus\tilde\Phi|\tilde\Phi}$. Recall that when $G$ is Abelian, as it will henceforth be, the intrinsic algebra is a type I factor,
\begin{equation}
    \mathcal{A}^{\rm phys}_{\Sigma\setminus\tilde\Phi|\tilde\Phi}\simeq\mathcal{B}(\H^{\rm phys}_{\mathring{\Sigma}})\otimes\mathcal{B}(\H_C)\,.
\end{equation}
 There are $E_C=|\mathcal{E}\cap\p\Sigma|-(V_{\p\Sigma}-1)$ edges in the complement of the tree defining an intrinsic edge mode QRF $\tilde\Phi$ in $\p\Sigma$, so that $\H_C\simeq L^2(G^{E_C})$. The corner loops live in this space, as does therefore the dual unitary representation of the magnetic corner group $\hat{G}_{\p\Sigma}\simeq\hat{G}^{E_C}$ generated by them.

 Suppose now that $G$ is finite. We may then define the dual magnetic twirl $\hat{G}_{\p\Sigma}(\bullet):\mathcal{A}^{\rm phys}_{\Sigma\setminus\tilde\Phi|\tilde\Phi}\to\mathcal{A}_M=\l\mathcal{A}^{\rm phys}_{\Sigma\setminus\tilde\Phi|\tilde\Phi}\r^{\hat{G}_{\p\Sigma}}$ as\footnote{Such a twirl over the Pontryagin dual appeared previously in the context of an error duality and a correspondence between quantum error correcting codes and QRF setups of the perspective-neutral framework used in this article \cite{Carrozza:2024smc}.}
\begin{equation}\label{eq:magnetictwirl}
    \hat{G}_{\p\Sigma}(\bullet):=\frac{1}{|G^{E_C}|}\sum_{\chi\in\hat{G}^{E_C}} \hat U^\chi\l\bullet\r\l\hat U^\chi\r^\dag\,.
\end{equation}
(Recall that $|G|=|\hat{G}|$.) Indeed, we have for the generating basis $\hat U^\eta U(\mathbf{g})$ of $\mathcal{B}(\H_C)$
\begin{equation}\label{eq:dualtwirl}
    \hat{G}_{\p\Sigma}\l\hat U^\eta U(\mathbf{g})\r=\hat U^\eta U(\mathbf{g})\frac{1}{|G^{E_C}|}\sum_{\chi\in\hat{G}^{E_C}}\chi(\mathbf{g})=\delta_{\mathbf{g},\mathbf{e}}\,\hat U^\eta U(\mathbf{g})\,,
\end{equation}
where in the first equality we made use of the duality \eqref{dualrep} of the representations $U,\hat U$, and in the second we invoked the orthogonality relation Eq.~\eqref{Eq:dualortho}. Thus, $\hat{G}_{\p\Sigma}(\mathcal{B}(\H_C))=\mathcal{A}_C^M$. Furthermore, operators in $\mathcal{B}(\H^{\rm phys}_{\mathring{\Sigma}})$ are clearly left invariant, so that 
\begin{equation}
    \hat{G}_{\p\Sigma}\l\mathcal{A}^{\rm phys}_{\Sigma\setminus\tilde\Phi|\tilde\Phi}\r = \mathcal{A}_M\,
\end{equation}
as desired. In particular, as a standard incoherent group average, $\hat{G}_{\p\Sigma}(\bullet)$ defines a CPTP map.

For continuous Abelian $G$, on the other hand, its Pontryagin dual $\hat{G}$ would have infinite cardinality and we could not normalize as in Eq.~\eqref{eq:magnetictwirl}. On the r.h.s.\ of Eq.~\eqref{eq:dualtwirl}, one would have to replace $\delta_{\mathbf{g},\mathbf{e}}\to\rm{Vol}(G^{E_C})\,\delta(\mathbf{g},\mathbf{e})$, which thus diverges for $\mathbf{g}=\mathbf{e}$. Somewhat informally, we would obtain
\begin{equation}
    \hat{G}_{\p\Sigma}\l\mathcal{A}^{\rm phys}_{\Sigma\setminus\tilde\Phi|\tilde\Phi}\r = \rm{Vol}(G^{E_C})\,\delta(\mathbf{e},\mathbf{e})\,\mathcal{A}_M\,.
\end{equation}
One may also check that regularizing the twirl by first defining it for a finite range and then taking limits, while overcoming the divergences, no longer averages the electric operators $U(\mathbf{g})$ to zero (or multiples of the identity). Similarly, one may define a somewhat modified regularized twirl by
\begin{equation}
    \hat{ G}'_{\p\Sigma}\l\bullet\r:=\sum_{\chi\in\hat{G}^{E_C}}\hat U^\chi(\bullet)\ket{1}\!\bra{1}\l\hat U^\chi\r^\dag\,,
\end{equation}
where $\ket{1}=\ket{\chi=1}$ is the trivial representation.
It may be checked, using the identities provided in App.~\ref{app_pontryagin}, that this twirl does satisfy $\hat{G}'_{\p\Sigma}\l\mathcal{A}^{\rm phys}_{\Sigma\setminus\tilde\Phi|\tilde\Phi}\r=\mathcal{A}_M$ for both finite and continuous Abelian $G$, overcoming the infinities. This map is also a projector on the algebra. However, it turns out not be trace-preserving, which makes it uninteresting for applications to states (as in the next appendix).
In any case, given that this case of continuous Abelian $G$ does not admit a physically well-defined definition of trace and entanglement entropies for the magnetic center, this failure will not further concern us.

\section{Ambiguities in the entropy for magnetic center algebras: non-Abelian theories}
\label{app_nonAbmagnetic}

Finally, let us also comment on entanglement entropies for magnetic center algebras $\mathcal{A}_M$ when $G$ is non-Abelian. Recall from the main body that our definition in Eq.~\eqref{eq:magneticcenterdef} is a generalization of previous proposals \cite{Casini:2013rba,Delcamp:2016eya}, both of which were formulated for finite groups only; the original proposal in \cite{Casini:2013rba} pertains to finite Abelian groups, while \cite{Delcamp:2016eya} also encompasses finite non-Abelian groups (and reproduces \cite{Casini:2013rba} in the Abelian case), yet is formulated only on $(2+1)D$ lattices using a fusion basis. Our $\mathcal{A}_M$ encompasses finite and compact Abelian and non-Abelian groups and is not restricted to a specific dimension. We have already computed the associated entanglement entropy in App.~\ref{app_magneticambiguity} for Abelian theories, reproducing the results of \cite{Casini:2013rba} for finite $G$ and noting the infinite ambiguity in defining it for continuous $G$. 

We will now argue that for finite non-Abelian case, the entanglement entropy continues to be well-defined, as in \cite{Delcamp:2016eya}, and that in the continuous non-Abelian case, infinite ambiguities persist. We will be somewhat schematic in this argument.

To this end, we note that, as usual for an algebra with non-trivial center, we can decompose $\mathcal{A}_M$ into charge sectors labeled by the eigenvalues of an independent basis of its center elements:
\begin{equation}\label{eq:decomp}
    \mathcal{A}_M=\int^\oplus_{\mathbf{G}}\dd{\mathbf{g}}\bigoplus_{\chi\in\hat{\mathbf{G}}}\tilde{\mathcal{A}}_{M}^\chi(\mathbf{g})\,.
\end{equation}
Here, $\mathbf{g}$ labels the eigenvalues of a choice of independent magnetic degrees of freedom in the center that come from an independent set of Wilson loops and take value in some index set $\mathbf{G}$. For corners $\p\Sigma$ of dimension $D\geq2$ this will require modding the set of corner Wilson loops by the redundancies encoded in the Mandelstam constraints \cite{Giles:1981ej,Gambini:1986ew,Loll:1991mh}; we assume this has been taken care of as in our Abelian discussion in App.~\ref{app_magneticambiguity}. Since Wilson loops define ``character operators'' on the structure group $G$, cf.~Eq.~\eqref{eq:loopcharacter}, the index set $\mathbf{G}$ will be a suitable set of copies of $G$ itself. Thus, the label $\mathbf{g}$ runs over a continuous/finite set for $G$ continuous/finite. For finite $G$, the direct integral $\int^\oplus$ can be replaced by a direct sum; only in that case will the corresponding eigenstates $\ket{\mathbf{g}}$ be proper normalized Hilbert space elements. 

The index $\chi$, on the other hand, labels the eigenvalues of the independent electric degrees of freedom in the center, taking value in some set $\hat{\mathbf{G}}$. When $G$ is finite, these eigenvalues remain discrete, in which case the corresponding eigenstates $\ket{\chi}$ will be proper Hilbert space elements. When $G$ is continuous, they may be continuous. For example, for one-dimensional corners, it appears that $Z_{1d}$ in Eq.~\eqref{eq:ZE_app} has continuous spectrum, due to the Wilson loop contribution. In this case too, the $\ket{\chi}$ would rather be distributions on Hilbert space. In what follows, it will not matter much whether $\hat{\mathbf{G}}$ is discrete or continuous; for simplicity we keep a discrete label set. The argumentation would not change for a continuous one.

Now, in non-Abelian theories, due to their non-trivial representation theory, there is generically a degeneracy in the eigenvalues of center elements. For example, in the electric center case in App.~\ref{app_electriccenter}, this is encoded in the dimension $d_r$ of the $r$-irrep spaces $\H_r$. The same will be true here: each label pair $(\mathbf{g},\chi)$ will occur a certain number of times $d_{\mathbf{g},\chi}$. Thus, we can also write the diagonal blocks in the decomposition Eq.~\eqref{eq:decomp} as 
\begin{equation}\label{eq:decomp2}
    \mathcal{A}_M=\int^\oplus_{\mathbf{G}}\dd{\mathbf{g}}\bigoplus_{\chi\in\hat{\mathbf{G}}}\mathds1_{\mathbf{g},\chi}\otimes{\mathcal{A}}_{M}^\chi(\mathbf{g})\,,
\end{equation}
where $\mathds1_{\mathbf{g},\chi}$ is a $(d_{\mathbf{g},\chi}\times d_{\mathbf{g},\chi})$ identity matrix. Equivalently, generalizing the discussion in App.~\ref{app_magneticambiguity}, we can write this as
\begin{equation}
    \mathcal{A}_M=\int_{\mathbf{G}}\dd{\mathbf{g}}\sum_{\chi\in\hat{\mathbf{G}}}\ket{\mathbf{g}}\!\bra{\mathbf{g}}_M\otimes\ket{\chi}\!\bra{\chi}_E\otimes\mathds1_{\mathbf{g},\chi}\otimes{\mathcal{A}}_{M}^\chi(\mathbf{g})\,,
\end{equation}
where $\ket{\mathbf{g}}_M,\ket{\chi}_E$ define a suitable, non-degenerate eigenbases on some Hilbert spaces $\H_M,\H_E$.

In order to compute entanglement entropies associated with such an algebra, we first need to define a trace $\Tr_M$ on it. Equipped with such a trace, we can then define regional density operators $\rho_M\in\mathcal{A}_M$ according to 
\begin{equation} 
\Tr_M\l\rho_M O_M\r=\Tr_{\rm phys}\l\rho_{\rm phys} O_M\r\,,\qquad\qquad \forall\,O_M\in\mathcal{A}_M\,,
\end{equation} 
where $\rho_{\rm phys}\in\mathcal{S}\l\Hphys\r$ is any global physical lattice state and $\Tr_{\rm phys}$ the standard physical Hilbert space trace. Since $\mathcal{A}_M$ has a nontrivial center, as in previous cases, there is \emph{a priori} no unique trace on it, as we can weigh each superselection sector $(\mathbf{g},\chi)$ in different ways.

It is clear that when $G$ is finite\,---\,and so $\mathbf{G}$ is a discrete index set\,---\,all operators in $\mathcal{A}_M$ are trace-class w.r.t.\ the standard Hilbert space trace $\Hphys$, as in the Abelian case of App.~\ref{app_magneticambiguity}. Thus, in this case, we can simply again use the standard Hilbert space to break this ambiguity. By contrast, when $G$ is continuous\,---\,and so is $\mathbf{G}$\,---\,then, just like in the Abelian case of App.~\ref{app_magneticambiguity}, there is \emph{no} element in $\mathcal{A}_M$ that is trace-class w.r.t.\ the standard Hilber space trace (for reasons similar to those discussed around Eqs.~\eqref{eq:magneticelement} and~\eqref{eq:nomagnetictrace}). The troublemaker is once more the improper eigenbasis $\ket{\mathbf{g}}$. As this pertains only to the center elements which commute with everything else, we may again use any of the trace constructions in Eq.~\eqref{eq:gentrace} to regularize the trace on $\H_M$ and use this to define $T_M$ (invoking factorization as in Eq.~\eqref{eq:tracefact}). However, just as in the Abelian case, no choice of trace on $\mathcal{A}_M$ is physically more natural than any other and none of them is induced by the standard physical Hilbert space trace. There is thus an infinite ambiguity in defining entanglement entropies for magnetic center algebras for continuous $G$, Abelian or non-Abelian.

For definiteness, let us assume we pick the $\tau_1$ trace for continuous $G$, and the standard Hilbert space trace for $G$ finite. Adapting then the arguments leading to Eq.~\eqref{eq:magneticentropy} to account for the additional degeneracy similarly to Eq.~\eqref{eq:entropyformula}, we arrive at the magnetic entanglement entropy
\begin{equation}
    S_{\rm vN}(\rho_M)=-\frac{1}{\rm{Vol}\mathbf{G}}\int_{\mathbf{G}}\sum_{\chi\in\hat{\mathbf{G}}}\dd{\mathbf{g}}p_\chi(\mathbf{g})\log p_\chi(\mathbf{g}) +\frac{1}{\rm{Vol}\mathbf{G}}\int_{\mathbf{G}}\dd{\mathbf{g}}\sum_{\chi\in\hat{\mathbf{G}}} p_\chi(\mathbf{g})\l\log d_{\mathbf{g},\chi}+S_{\rm vN}(\rho_M^\chi(\mathbf{g}))\r\,.
\end{equation}
Here, $\rho_M^\chi(\mathbf{g})\in\mathcal{A}_M^\chi$ and the probabilities $p_\chi(\mathbf{g})$ that superselection sectors $(\mathbf{g},\chi)$ are ``switched on'' are normalized, $\Tr\rho_M^\chi\rho(\mathbf{g})=1$ and $\int_\mathbf{G}\dd{\mathbf{g}}\sum_\chi p_\chi(\mathbf{g})=\rm{Vol}(\mathbf{G})$. Note that in the Abelian case, when $d_{\mathbf{g},\chi}=1$, this recovers Eq.~\eqref{eq:magneticentropy}.

For finite $G$, in which case $1/\rm{Vol}(\mathbf{G})\int_\mathbf{G}\dd{\mathbf{g}}\to 1/|\mathbf{G}|\sum_{\mathbf{g}\in\mathbf{G}}$, this result is unambiguous. For continuous $G$, due to the trace ambiguity, this result is physically rather arbitrary; choosing a different trace for $\H_M$ will change the $\mathbf{G}$ measure by some positive function $\mu(\mathbf{g})$ with no obvious physical means to break this ambiguity. We conclude that for continuous groups $G$, both Abelian and non-Abelian, entanglement entropies for magnetic center algebras are physically not well-defined. One could possibly sidestep this issue with a (quantum) deformation or (Bohr) compactification of the group. However, we will not consider such drastic modifications here. By contrast, for finite groups, the entropy is well-defined.

\section{Entropy inequalities from $G$-twirls}
\label{app:entropy_proof}

Here, we briefly establish the entanglement entropy inequalities between relational and center algebras discussed in the main body.

\begin{prop}[\textbf{Twirl-induced entropy inequality}]\label{prop:2}
Let $\H$ be a Hilbert space on which a finite or compact Lie group $G$ acts unitarily. We have $S_{\rm vN}(\rho)\leq S_{\rm vN}\left(\mathcal{G}(\rho)\right)$ for all $\rho\in\mathcal{S}(\mathcal{H})$, where $\mathcal{G}$ denotes the $G$-twirl. 
\end{prop}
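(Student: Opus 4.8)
The plan is to recognize that the $G$-twirl $\mathcal{G}(\bullet) = \frac{1}{\mathrm{Vol}(G)}\int_G \dd{g}\, U(g)\bullet U^\dag(g)$ is a unital, trace-preserving, completely positive (CPTP) map — in fact a mixed-unitary channel — and then invoke the well-known fact that the von Neumann entropy is non-decreasing under unital CPTP maps. So the first step is to spell out why $\mathcal{G}$ is unital ($\mathcal{G}(\mathds1)=\mathds1$, immediate since each $U(g)$ is unitary) and trace-preserving ($\Tr\mathcal{G}(\rho)=\Tr\rho$ by cyclicity of the trace under each $U(g)$), and CP (it is an average/integral of the CP maps $\rho\mapsto U(g)\rho U^\dag(g)$, hence CP; for compact $G$ the integral against the normalized Haar measure is well-defined). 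For finite $G$ this is literally a convex combination of unitary conjugations; for compact $G$ one treats it as the obvious continuous analog.

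Second, I would recall the standard entropy inequality. The cleanest route is via relative entropy: for any CPTP map $\mathcal{N}$ one has monotonicity $S(\mathcal{N}(\rho)\,\|\,\mathcal{N}(\sigma))\leq S(\rho\,\|\,\sigma)$ (Uhlmann \cite{Uhlmann}, already cited in the main body). Taking $\sigma = \mathds1/d$ the maximally mixed state on a $d$-dimensional $\H$ (or, for infinite-dimensional $\H$, doing a limiting/truncation argument or simply restricting to the relevant finite-dimensional superselection blocks, which is the only case actually used in the paper), and using $\mathcal{G}(\mathds1/d)=\mathds1/d$ by unitality, the relation $S(\rho\,\|\,\mathds1/d) = \log d - S_{\rm vN}(\rho)$ gives $\log d - S_{\rm vN}(\mathcal{G}(\rho)) \leq \log d - S_{\rm vN}(\rho)$, i.e. $S_{\rm vN}(\rho)\leq S_{\rm vN}(\mathcal{G}(\rho))$. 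An alternative, even more elementary argument for finite groups: $\mathcal{G}(\rho)=\frac{1}{|G|}\sum_g U(g)\rho U^\dag(g)$ is a convex combination of states all having the same spectrum as $\rho$ (hence $S_{\rm vN}(U(g)\rho U^\dag(g))=S_{\rm vN}(\rho)$), and concavity of $S_{\rm vN}$ then yields $S_{\rm vN}(\mathcal{G}(\rho))\geq \frac{1}{|G|}\sum_g S_{\rm vN}(U(g)\rho U^\dag(g)) = S_{\rm vN}(\rho)$. I would probably present the concavity argument as the main line since it is self-contained, and mention the relative-entropy/monotonicity route as the natural generalization that also covers the compact case.

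The only genuine subtlety — and the step I expect to need the most care — is the passage from finite to compact $G$ and the handling of infinite-dimensional $\H$: one must make sure the Haar integral defining $\mathcal{G}$ converges in the appropriate (weak-$*$ or trace-norm) sense on density operators, that $\mathcal{G}$ remains CPTP there, and that concavity of $S_{\rm vN}$ still applies (it does, via its representation as an infimum/supremum over finite measurements, or by the lower semicontinuity and concavity of quantum entropy on the trace class). Since in the body this Proposition is applied to the incoherent corner twirls $\mathbb{G}_{\p\Sigma}$ and $\hat G_{\p\Sigma}$ acting on the relational states — which in the cases of interest decompose over finitely many superselection sectors each carrying a well-behaved trace — it suffices to invoke the finite-dimensional argument sector by sector, or to cite the standard extension of entropy monotonicity to the compact/infinite setting. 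I would state the clean finite-dimensional proof in full and then remark that the compact case follows by the same reasoning applied within each superselection block, or directly from monotonicity of relative entropy under the CPTP Haar-twirl \cite{Uhlmann}.
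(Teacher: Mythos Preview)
Your argument is correct, but the paper takes a slightly different route. Rather than invoking monotonicity of relative entropy with the maximally mixed state as reference (which forces you to deal with the finite-dimensionality caveat you flag), the paper uses only \emph{positivity} of the relative entropy $S(\rho\,\|\,\mathcal{G}(\rho))\geq 0$, i.e.\ with the twirled state itself as reference. Expanding gives $0\leq S_{\rm vN}(\mathcal{G}(\rho))-S_{\rm vN}(\rho)-\Tr\big[(\rho-\mathcal{G}(\rho))\log\mathcal{G}(\rho)\big]$, and the cross term vanishes because $\mathcal{G}$ is the orthogonal projector (in Hilbert--Schmidt inner product) onto the $G$-invariant subalgebra $\mathcal{A}^G$: one has $\rho-\mathcal{G}(\rho)\in(\mathcal{A}^G)^\perp$ while $\log\mathcal{G}(\rho)\in\mathcal{A}^G$. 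This avoids any reference to $\mathds1/d$ and hence sidesteps the infinite-dimensional subtlety you worried about, at the cost of invoking the projection property of the twirl. Your concavity argument is more elementary and entirely self-contained for finite $G$, and your monotonicity-plus-unitality argument is the standard textbook line; the paper's argument trades the stronger monotonicity hypothesis for the structural fact that the twirl is a conditional expectation.
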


\begin{proof}
The G-twirl is a CPTP map. Thus, the positivity of relative entropy implies
\begin{equation}
    0\leq S\left(\rho||\mathcal{G}(\rho)\right)=-S_{\rm vN}(\rho)-\Tr\left(\rho\log\mathcal{G}(\rho)\right)=S_{\rm vN}\left(\mathcal{G}(\rho)\right)-S_{\rm vN}(\rho)-\Tr\big[\left(\rho-\mathcal{G}(\rho)\right)\log\mathcal{G}(\rho)\big]\,.
\end{equation}
Now $\mathcal{G}$ is the orthogonal (with respect to the Hilbert-Schmidt inner product) projector onto the $G$-invariant subalgebra $\mathcal{A}^G$ of $\mathcal{A}=\mathcal{B}(\H)$. Thus, writing $\mathcal{A}=\mathcal{A}^G\oplus\left(\mathcal{A}^G\right)^\perp$, we have $\rho-\mathcal{G}(\rho)\in\left(\mathcal{A}^G\right)^\perp$ and $\log\mathcal{G}(\rho)\in\mathcal{A}^G$. Hence, the last term vanishes and we obtain the desired inequality. 
\end{proof}

The electric center $\mathcal{A}_E$ is obtained by twirling any of the extrinsic relational algebras $\mathcal{A}_{\Sigma|\Phi}^{\rm phys}$ over the electric corner group (cf.~App.~\ref{app:subrel}). This means also that the twirl is the correct channel mapping density operators from any relational into the center algebra. Indeed, this follows from the preservation of expectation values, noting that $\mathcal{A}_E$ is a subalgebra of any extrinsic relational algebra:
\begin{equation}
\forall\,a\in\mathcal{A}_E:\qquad \langle a\rangle= \Tr\l\rho_{\Sigma|\Phi}^{\rm phys}\,a\r=\Tr\l\rho_{\Sigma|\Phi}^{\rm phys}\,\mathbb{G}_{\p\Sigma}(a)\r=\Tr\l\mathbb{G}_{\p\Sigma}\l\rho_{\Sigma|\Phi}^{\rm phys}\r\,a\r\,,
\end{equation}
where $\rho^{\rm phys}_{\Sigma|\Phi}\in\mathcal{A}^{\rm phys}_{\Sigma|\Phi}$ and we used that the electric corner twirl $\mathbb{G}_{\p\Sigma}(\bullet)$ is an orthogonal projector with respect to the Hilbert-Schnmidt inner product.
Since the magnetic center $\mathcal{A}_M$, for finite Abelian groups $G$, is obtained by twirling any of the intrinsic relational algebras over the magnetic corner group (cf.~App.~\ref{app_magneticcenter}), the magnetic twirl is similarly the correct channel to map states from any intrinsic into the magnetic center algebra. 

Altogether, we arrive at the following conclusion:
\begin{corol}[\textbf{Relational entanglement entropy hierarchy}]
Let $\rho^{\rm phys}_{\Sigma|\Phi}\in\mathcal{A}_{\Sigma|\Phi}^{\rm phys}$ be an arbitrary state in an extrinsic relational algebra and $\mathbb{G}_{\p\Sigma}(\rho^{\rm phys}_{\Sigma|\Phi})\in\mathcal{A}_E$ its counterpart in the electric center algebra. Then
\begin{equation}\label{ineq:electric}
    S_{\rm vN}(\rho^{\rm phys}_{\Sigma|\Phi})\leq S_{\rm vN}\l\mathbb{G}_{\p\Sigma}\l\rho^{\rm phys}_{\Sigma|\Phi}\r\r\,,
\end{equation}
i.e.\ the entanglement entropy of \emph{any} extrinsic relational algebra is upper bounded by that of the electric center algebra. 

Similarly, suppose $G$ is finite Abelian. Let $\rho^{\rm phys}_{\Sigma\setminus\tilde\Phi|\tilde\Phi}\in\mathcal{A}_{\Sigma\setminus\tilde\Phi|\Phi}^{\rm phys}$ be an arbitrary state in an intrinsic relational algebra and $\hat{G}_{\p\Sigma}(\rho^{\rm phys}_{\Sigma\setminus\tilde\Phi|\tilde\Phi})\in\mathcal{A}_M$ its counterpart in the magnetic center algebra. Then
\begin{equation}
    S_{\rm vN}(\rho^{\rm phys}_{\Sigma\setminus\tilde\Phi|\tilde\Phi})\leq S_{\rm vN}\l\hat{G}_{\p\Sigma}\l\rho^{\rm phys}_{\Sigma\setminus\tilde\Phi|\tilde\Phi}\r\r\,.
\end{equation}
Thus, the entanglement entropy of \emph{any} intrinsic relational algebra is upper bounded by that of the magnetic center when $G$ is finite Abelian. 
\end{corol}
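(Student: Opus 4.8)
The plan is to reduce both inequalities to the single clean statement of Proposition~\ref{prop:2}, namely that the von Neumann entropy cannot decrease under a $G$-twirl, once we have identified the correct twirling channel relating a relational algebra to its associated center. For the electric case, the key input is Corollary~\ref{cor_3} (and Proposition~\ref{prop:1}), which tells us that $\mathcal{A}_E=\mathbb{G}_{\p\Sigma}\l\mathcal{A}^{\rm phys}_{\Sigma|\Phi}\r$ for \emph{any} extrinsic frame $\Phi$, i.e.\ the electric corner twirl is precisely the orthogonal (Hilbert--Schmidt) projector from the extrinsic relational algebra onto the electric center. First I would note, via preservation of expectation values, that $\mathbb{G}_{\p\Sigma}$ is also the correct CPTP channel mapping the reduced state $\rho^{\rm phys}_{\Sigma|\Phi}$ to its counterpart $\rho_E\in\mathcal{A}_E$: for all $a\in\mathcal{A}_E\subset\mathcal{A}^{\rm phys}_{\Sigma|\Phi}$ we have $\Tr(\rho_E\, a)=\Tr(\rho^{\rm phys}_{\Sigma|\Phi}\,a)=\Tr(\rho^{\rm phys}_{\Sigma|\Phi}\,\mathbb{G}_{\p\Sigma}(a))=\Tr(\mathbb{G}_{\p\Sigma}(\rho^{\rm phys}_{\Sigma|\Phi})\,a)$, using self-adjointness of the projector $\mathbb{G}_{\p\Sigma}$ with respect to the trace inner product. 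Then Proposition~\ref{prop:2}, applied with $G=\mathbb{G}_{\p\Sigma}$ acting unitarily on the relevant Hilbert space factor, immediately yields $S_{\rm vN}(\rho^{\rm phys}_{\Sigma|\Phi})\leq S_{\rm vN}(\mathbb{G}_{\p\Sigma}(\rho^{\rm phys}_{\Sigma|\Phi}))=S_{\rm vN}(\rho_E)$.

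For the magnetic case, I would run the same argument verbatim, now using App.~\ref{app_magneticcenter}: for finite Abelian $G$, the dual magnetic twirl $\hat{G}_{\p\Sigma}(\bullet)$ in Eq.~\eqref{eq:magnetictwirl} is a well-defined incoherent group average (hence CPTP) over the unitary representation of $\hat{G}_{\p\Sigma}\simeq\hat G^{E_C}$ generated by the corner Wilson loops, and it satisfies $\hat{G}_{\p\Sigma}\l\mathcal{A}^{\rm phys}_{\Sigma\setminus\tilde\Phi|\tilde\Phi}\r=\mathcal{A}_M$. Since it is again an orthogonal projector with respect to the Hilbert--Schmidt inner product (being an incoherent twirl over a compact, here finite, group), the same expectation-value argument shows it is the correct channel $\rho^{\rm phys}_{\Sigma\setminus\tilde\Phi|\tilde\Phi}\mapsto\rho_M$, and Proposition~\ref{prop:2} with $G$ replaced by the finite magnetic corner group $\hat{G}_{\p\Sigma}$ gives $S_{\rm vN}(\rho^{\rm phys}_{\Sigma\setminus\tilde\Phi|\tilde\Phi})\leq S_{\rm vN}(\hat{G}_{\p\Sigma}(\rho^{\rm phys}_{\Sigma\setminus\tilde\Phi|\tilde\Phi}))=S_{\rm vN}(\rho_M)$. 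The restriction to finite Abelian $G$ enters here precisely because it is only in that case that the dual twirl is normalizable and trace-preserving, and that $\mathcal{A}_M$ carries a well-defined (standard Hilbert space) trace, as established in Apps.~\ref{app_magneticambiguity} and~\ref{app_magneticcenter}.

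The only subtlety I anticipate — and the step I would be most careful about — is the applicability of Proposition~\ref{prop:2} in the magnetic setting: Proposition~\ref{prop:2} is phrased for a \emph{unitary} $G$-action on a Hilbert space and a $G$-twirl that is the orthogonal projector onto the invariant subalgebra of $\mathcal{B}(\H)$, whereas here $\hat{G}_{\p\Sigma}$ is a twirl over a \emph{dual} representation acting on a tensor factor of $\Hphys$. I would address this by observing that $\hat{U}^\chi$ in Eq.~\eqref{eq:Uchidef} genuinely is a unitary representation of the (finite, here) group $\hat{G}_{\p\Sigma}$ on $\H_C$ (shown in App.~\ref{app_pontryagin}), that it acts trivially on the complementary factor $\H^{\rm phys}_{\mathring\Sigma}$, and hence the twirl $\hat{G}_{\p\Sigma}(\bullet)$ is genuinely a unitary group average on the full space on which $\rho^{\rm phys}_{\Sigma\setminus\tilde\Phi|\tilde\Phi}$ is supported; the decomposition $\mathcal{A}=\mathcal{A}^{\hat{G}_{\p\Sigma}}\oplus(\mathcal{A}^{\hat{G}_{\p\Sigma}})^\perp$ and the vanishing of the cross term then go through exactly as in the proof of Proposition~\ref{prop:2}. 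Everything else is a direct citation of the earlier results, so no genuinely new computation is required.
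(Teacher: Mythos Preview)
Your proposal is correct and follows essentially the same route as the paper: identify the relevant corner twirl as the orthogonal (Hilbert--Schmidt) projector and state-mapping channel via preservation of expectation values, then apply Proposition~\ref{prop:2} directly in both the electric and (finite Abelian) magnetic cases. Your explicit check that the dual representation $\hat{U}^\chi$ is a genuine unitary representation on $\H_C$ (so that Proposition~\ref{prop:2} applies verbatim) is a nice clarification that the paper leaves implicit.
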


The $G$-twirl enhances our ignorance about the global state by eliminating degrees of freedom on the corner.

Finally, we point out that the content of inequality~\eqref{ineq:electric} also appears in an entropy inequality at the very end of \cite[SM, Sec.~VI]{VanAcoleyen:2015ccp}, but the relational content of it was not recognized. There, it bounds the entropy of ``gauge-fixed'' states by the regional center algebra entropy. As we discussed in the main body, any gauge fixing implicitly invokes a choice of global QRF. However, one could also use our extrinsic frame to gauge fix and express our inequality~\eqref{ineq:electric} in gauge-fixed language using $\mathcal{R}_\Phi$ (gauge fixing away from the corner is not important for this argument). Interestingly, the proof in \cite[SM, Sec.~VI]{VanAcoleyen:2015ccp} invokes quantum information theory arguments to arrive at the inequality. This complements our simple proof of Proposition~\ref{prop:2}, which essentially only invokes symmetry properties.

\section{Relationship between global and corner charges}
\label{app:global_decoherence}

In this appendix, we prove the statement, used in the main text, that whenever the global state commutes with the global charges, the reduced state in any (extrinsically dressed) subregion is necessarily reorientation invariant. To this end, let us work in the relational framework, where, for some complete extrinsic frame we have $\Hphys \simeq \H_{\Sigma|R }\otimes \H_{\Bar \Sigma|R}$. Large gauge transformations generated by the sum of incoming electric fields at each boundary node on $\cB$ act as (see discussion on frame reorientations in App.~\ref{app:GM})
\be
\mathbb{U}_{\Sigma|R}(\textbf{g}_{\cB})\otimes \mathbb{U}_{\bar \Sigma|R}(\textbf{g}_{\cB})\,.
\ee
In particular, we focus on the ones acting on the nodes where the extrinsic frame is anchored to. When the extrinsic frame $\Phi$ has no overlapping anchor points, there is a one-to-one correspondence between the node $w
$ in $\cB$ and the one ($v$) on $\p \Sigma$ where the Wilson line hits it. This means that the generator of the transformations above can be written in the form 
\be
Q_w^a = \Id_{\Sigma|R} \otimes \bar Q_v^a +\mathbb{Q}_v^a \otimes \Id_{\bar \Sigma|R}\,,
\ee
for some $\bar Q_v^a$ (generating some possibly-complicated unitary on the complement), and $\mathbb{Q}_v^a$, the subregion-supported corner charge at $v$, i.e.\ the generator of frame reorientations $\mathbb{G}_v$ for the subregional factors. It is then straightforward to see that, if for some global state $\varrho = \sum c_i \varrho^i_{\Sigma|R} \otimes \bar \varrho^i_{\Bar \Sigma|R} $ we have that
\be
[\varrho, Q_w^a] =0 \q\Rightarrow \q  0=&\Tr_{\bar \Sigma|R}[ \varrho,Q_w^a] \notag\\
0=&\Tr_{\bar \Sigma|R} \left [ \sum c_i \varrho^i_{\Sigma|R} \otimes \bar \varrho^i_{\Bar \Sigma|R} ,\Id_{\Sigma|R} \otimes \bar Q_v^a +\mathbb{Q}_v^a \otimes \Id_{\bar \Sigma|R}\right] \notag\\
0=& \sum c_i\left [  \varrho^i_{\Sigma|R} , \mathbb{Q}_v^a \right]  \Tr_{\bar \Sigma|R} \bar \varrho^i_{\Bar \Sigma|R} + 
\sum c_i \varrho^i_{\Sigma|R}  \cancel{ \Tr_{\bar \Sigma|R} \left [  \bar \varrho^i_{\Bar \Sigma|R} , \bar Q_v^a\right] } \notag\\
0=& \left [ \Tr_{\bar \Sigma|R}  \varrho, \mathbb{Q}_v^a \right] \,,
\ee 
meaning that the reduced state commutes with \emph{all} frame reorientations, irrespective of whether they are generated by large-gauge transformations or by some other mechanism in the complement. See \cite{Araujo-Regado:2024dpr} for examples of other more operationally-meaningful ways to realize frame reorientations, in the context of Maxwell theory.

\bibliographystyle{apsrev4-1}
\bibliography{EE}

\begin{thebibliography}{105}%
\makeatletter
\providecommand \@ifxundefined [1]{%
 \@ifx{#1\undefined}
}%
\providecommand \@ifnum [1]{%
 \ifnum #1\expandafter \@firstoftwo
 \else \expandafter \@secondoftwo
 \fi
}%
\providecommand \@ifx [1]{%
 \ifx #1\expandafter \@firstoftwo
 \else \expandafter \@secondoftwo
 \fi
}%
\providecommand \natexlab [1]{#1}%
\providecommand \enquote  [1]{``#1''}%
\providecommand \bibnamefont  [1]{#1}%
\providecommand \bibfnamefont [1]{#1}%
\providecommand \citenamefont [1]{#1}%
\providecommand \href@noop [0]{\@secondoftwo}%
\providecommand \href [0]{\begingroup \@sanitize@url \@href}%
\providecommand \@href[1]{\@@startlink{#1}\@@href}%
\providecommand \@@href[1]{\endgroup#1\@@endlink}%
\providecommand \@sanitize@url [0]{\catcode `\\12\catcode `\$12\catcode `\&12\catcode `\#12\catcode `\^12\catcode `\_12\catcode `\%12\relax}%
\providecommand \@@startlink[1]{}%
\providecommand \@@endlink[0]{}%
\providecommand \url  [0]{\begingroup\@sanitize@url \@url }%
\providecommand \@url [1]{\endgroup\@href {#1}{\urlprefix }}%
\providecommand \urlprefix  [0]{URL }%
\providecommand \Eprint [0]{\href }%
\providecommand \doibase [0]{http://dx.doi.org/}%
\providecommand \selectlanguage [0]{\@gobble}%
\providecommand \bibinfo  [0]{\@secondoftwo}%
\providecommand \bibfield  [0]{\@secondoftwo}%
\providecommand \translation [1]{[#1]}%
\providecommand \BibitemOpen [0]{}%
\providecommand \bibitemStop [0]{}%
\providecommand \bibitemNoStop [0]{.\EOS\space}%
\providecommand \EOS [0]{\spacefactor3000\relax}%
\providecommand \BibitemShut  [1]{\csname bibitem#1\endcsname}%
\let\auto@bib@innerbib\@empty
\bibitem [{\citenamefont {Witten}(2018)}]{Witten:2018zxz}%
  \BibitemOpen
  \bibfield  {author} {\bibinfo {author} {\bibfnamefont {E.}~\bibnamefont {Witten}},\ }\href {\doibase 10.1103/RevModPhys.90.045003} {\bibfield  {journal} {\bibinfo  {journal} {Rev. Mod. Phys.}\ }\textbf {\bibinfo {volume} {90}},\ \bibinfo {pages} {045003} (\bibinfo {year} {2018})},\ \Eprint {http://arxiv.org/abs/1803.04993} {arXiv:1803.04993 [hep-th]} \BibitemShut {NoStop}%
\bibitem [{\citenamefont {Casini}\ and\ \citenamefont {Huerta}(2023)}]{Casini:2022rlv}%
  \BibitemOpen
  \bibfield  {author} {\bibinfo {author} {\bibfnamefont {H.}~\bibnamefont {Casini}}\ and\ \bibinfo {author} {\bibfnamefont {M.}~\bibnamefont {Huerta}},\ }\href {\doibase 10.22323/1.403.0002} {\bibfield  {journal} {\bibinfo  {journal} {PoS}\ }\textbf {\bibinfo {volume} {TASI2021}},\ \bibinfo {pages} {002} (\bibinfo {year} {2023})},\ \Eprint {http://arxiv.org/abs/2201.13310} {arXiv:2201.13310 [hep-th]} \BibitemShut {NoStop}%
\bibitem [{\citenamefont {Srednicki}(1993)}]{Srednicki:1993im}%
  \BibitemOpen
  \bibfield  {author} {\bibinfo {author} {\bibfnamefont {M.}~\bibnamefont {Srednicki}},\ }\href {\doibase 10.1103/PhysRevLett.71.666} {\bibfield  {journal} {\bibinfo  {journal} {Phys. Rev. Lett.}\ }\textbf {\bibinfo {volume} {71}},\ \bibinfo {pages} {666} (\bibinfo {year} {1993})},\ \Eprint {http://arxiv.org/abs/hep-th/9303048} {arXiv:hep-th/9303048} \BibitemShut {NoStop}%
\bibitem [{\citenamefont {Bombelli}\ \emph {et~al.}(1986)\citenamefont {Bombelli}, \citenamefont {Koul}, \citenamefont {Lee},\ and\ \citenamefont {Sorkin}}]{Bombelli:1986rw}%
  \BibitemOpen
  \bibfield  {author} {\bibinfo {author} {\bibfnamefont {L.}~\bibnamefont {Bombelli}}, \bibinfo {author} {\bibfnamefont {R.~K.}\ \bibnamefont {Koul}}, \bibinfo {author} {\bibfnamefont {J.}~\bibnamefont {Lee}}, \ and\ \bibinfo {author} {\bibfnamefont {R.~D.}\ \bibnamefont {Sorkin}},\ }\href {\doibase 10.1103/PhysRevD.34.373} {\bibfield  {journal} {\bibinfo  {journal} {Phys. Rev. D}\ }\textbf {\bibinfo {volume} {34}},\ \bibinfo {pages} {373} (\bibinfo {year} {1986})}\BibitemShut {NoStop}%
\bibitem [{\citenamefont {Jacobson}(2016)}]{Jacobson:2015hqa}%
  \BibitemOpen
  \bibfield  {author} {\bibinfo {author} {\bibfnamefont {T.}~\bibnamefont {Jacobson}},\ }\href {\doibase 10.1103/PhysRevLett.116.201101} {\bibfield  {journal} {\bibinfo  {journal} {Phys. Rev. Lett.}\ }\textbf {\bibinfo {volume} {116}},\ \bibinfo {pages} {201101} (\bibinfo {year} {2016})},\ \Eprint {http://arxiv.org/abs/1505.04753} {arXiv:1505.04753 [gr-qc]} \BibitemShut {NoStop}%
\bibitem [{\citenamefont {Chandrasekaran}\ \emph {et~al.}(2023)\citenamefont {Chandrasekaran}, \citenamefont {Longo}, \citenamefont {Penington},\ and\ \citenamefont {Witten}}]{Chandrasekaran:2022cip}%
  \BibitemOpen
  \bibfield  {author} {\bibinfo {author} {\bibfnamefont {V.}~\bibnamefont {Chandrasekaran}}, \bibinfo {author} {\bibfnamefont {R.}~\bibnamefont {Longo}}, \bibinfo {author} {\bibfnamefont {G.}~\bibnamefont {Penington}}, \ and\ \bibinfo {author} {\bibfnamefont {E.}~\bibnamefont {Witten}},\ }\href {\doibase 10.1007/JHEP02(2023)082} {\bibfield  {journal} {\bibinfo  {journal} {Journal of High Energy Physics}\ }\textbf {\bibinfo {volume} {02}},\ \bibinfo {pages} {082} (\bibinfo {year} {2023})},\ \Eprint {http://arxiv.org/abs/2206.10780} {arXiv:2206.10780 [hep-th]} \BibitemShut {NoStop}%
\bibitem [{\citenamefont {Jensen}\ \emph {et~al.}(2023)\citenamefont {Jensen}, \citenamefont {Sorce},\ and\ \citenamefont {Speranza}}]{Jensen:2023yxy}%
  \BibitemOpen
  \bibfield  {author} {\bibinfo {author} {\bibfnamefont {K.}~\bibnamefont {Jensen}}, \bibinfo {author} {\bibfnamefont {J.}~\bibnamefont {Sorce}}, \ and\ \bibinfo {author} {\bibfnamefont {A.~J.}\ \bibnamefont {Speranza}},\ }\href {\doibase 10.1007/JHEP12(2023)020} {\bibfield  {journal} {\bibinfo  {journal} {JHEP}\ }\textbf {\bibinfo {volume} {12}},\ \bibinfo {pages} {020} (\bibinfo {year} {2023})},\ \Eprint {http://arxiv.org/abs/2306.01837} {arXiv:2306.01837 [hep-th]} \BibitemShut {NoStop}%
\bibitem [{\citenamefont {Kudler-Flam}\ \emph {et~al.}(2025)\citenamefont {Kudler-Flam}, \citenamefont {Leutheusser},\ and\ \citenamefont {Satishchandran}}]{Kudler-Flam:2023qfl}%
  \BibitemOpen
  \bibfield  {author} {\bibinfo {author} {\bibfnamefont {J.}~\bibnamefont {Kudler-Flam}}, \bibinfo {author} {\bibfnamefont {S.}~\bibnamefont {Leutheusser}}, \ and\ \bibinfo {author} {\bibfnamefont {G.}~\bibnamefont {Satishchandran}},\ }\href {\doibase 10.1103/PhysRevD.111.025013} {\bibfield  {journal} {\bibinfo  {journal} {Phys. Rev. D}\ }\textbf {\bibinfo {volume} {111}},\ \bibinfo {pages} {025013} (\bibinfo {year} {2025})},\ \Eprint {http://arxiv.org/abs/2309.15897} {arXiv:2309.15897 [hep-th]} \BibitemShut {NoStop}%
\bibitem [{\citenamefont {Kirklin}(2024)}]{Kirklin:2024gyl}%
  \BibitemOpen
  \bibfield  {author} {\bibinfo {author} {\bibfnamefont {J.}~\bibnamefont {Kirklin}},\ }\href@noop {} {\  (\bibinfo {year} {2024})},\ \Eprint {http://arxiv.org/abs/2412.01903} {arXiv:2412.01903 [hep-th]} \BibitemShut {NoStop}%
\bibitem [{\citenamefont {Ryu}\ and\ \citenamefont {Takayanagi}(2006)}]{Ryu:2006bv}%
  \BibitemOpen
  \bibfield  {author} {\bibinfo {author} {\bibfnamefont {S.}~\bibnamefont {Ryu}}\ and\ \bibinfo {author} {\bibfnamefont {T.}~\bibnamefont {Takayanagi}},\ }\href {\doibase 10.1103/PhysRevLett.96.181602} {\bibfield  {journal} {\bibinfo  {journal} {Phys. Rev. Lett.}\ }\textbf {\bibinfo {volume} {96}},\ \bibinfo {pages} {181602} (\bibinfo {year} {2006})},\ \Eprint {http://arxiv.org/abs/hep-th/0603001} {arXiv:hep-th/0603001} \BibitemShut {NoStop}%
\bibitem [{\citenamefont {Hubeny}\ \emph {et~al.}(2007)\citenamefont {Hubeny}, \citenamefont {Rangamani},\ and\ \citenamefont {Takayanagi}}]{Hubeny:2007xt}%
  \BibitemOpen
  \bibfield  {author} {\bibinfo {author} {\bibfnamefont {V.~E.}\ \bibnamefont {Hubeny}}, \bibinfo {author} {\bibfnamefont {M.}~\bibnamefont {Rangamani}}, \ and\ \bibinfo {author} {\bibfnamefont {T.}~\bibnamefont {Takayanagi}},\ }\href {\doibase 10.1088/1126-6708/2007/07/062} {\bibfield  {journal} {\bibinfo  {journal} {JHEP}\ }\textbf {\bibinfo {volume} {07}},\ \bibinfo {pages} {062} (\bibinfo {year} {2007})},\ \Eprint {http://arxiv.org/abs/0705.0016} {arXiv:0705.0016 [hep-th]} \BibitemShut {NoStop}%
\bibitem [{\citenamefont {Van~Raamsdonk}(2010)}]{VanRaamsdonk:2010pw}%
  \BibitemOpen
  \bibfield  {author} {\bibinfo {author} {\bibfnamefont {M.}~\bibnamefont {Van~Raamsdonk}},\ }\href {\doibase 10.1142/S0218271810018529} {\bibfield  {journal} {\bibinfo  {journal} {Gen. Rel. Grav.}\ }\textbf {\bibinfo {volume} {42}},\ \bibinfo {pages} {2323} (\bibinfo {year} {2010})},\ \Eprint {http://arxiv.org/abs/1005.3035} {arXiv:1005.3035 [hep-th]} \BibitemShut {NoStop}%
\bibitem [{\citenamefont {Nishioka}\ \emph {et~al.}(2009)\citenamefont {Nishioka}, \citenamefont {Ryu},\ and\ \citenamefont {Takayanagi}}]{Nishioka:2009un}%
  \BibitemOpen
  \bibfield  {author} {\bibinfo {author} {\bibfnamefont {T.}~\bibnamefont {Nishioka}}, \bibinfo {author} {\bibfnamefont {S.}~\bibnamefont {Ryu}}, \ and\ \bibinfo {author} {\bibfnamefont {T.}~\bibnamefont {Takayanagi}},\ }\href {\doibase 10.1088/1751-8113/42/50/504008} {\bibfield  {journal} {\bibinfo  {journal} {J. Phys. A}\ }\textbf {\bibinfo {volume} {42}},\ \bibinfo {pages} {504008} (\bibinfo {year} {2009})},\ \Eprint {http://arxiv.org/abs/0905.0932} {arXiv:0905.0932 [hep-th]} \BibitemShut {NoStop}%
\bibitem [{\citenamefont {Van~Raamsdonk}(2017)}]{VanRaamsdonk:2016exw}%
  \BibitemOpen
  \bibfield  {author} {\bibinfo {author} {\bibfnamefont {M.}~\bibnamefont {Van~Raamsdonk}},\ }in\ \href {\doibase 10.1142/9789813149441_0005} {\emph {\bibinfo {booktitle} {{Theoretical Advanced Study Institute in Elementary Particle Physics}: {New Frontiers in Fields and Strings}}}}\ (\bibinfo {year} {2017})\ pp.\ \bibinfo {pages} {297--351},\ \Eprint {http://arxiv.org/abs/1609.00026} {arXiv:1609.00026 [hep-th]} \BibitemShut {NoStop}%
\bibitem [{\citenamefont {Buividovich}\ and\ \citenamefont {Polikarpov}(2008)}]{Buividovich:2008gq}%
  \BibitemOpen
  \bibfield  {author} {\bibinfo {author} {\bibfnamefont {P.~V.}\ \bibnamefont {Buividovich}}\ and\ \bibinfo {author} {\bibfnamefont {M.~I.}\ \bibnamefont {Polikarpov}},\ }\href {\doibase 10.1016/j.physletb.2008.10.032} {\bibfield  {journal} {\bibinfo  {journal} {Physics Letters B}\ }\textbf {\bibinfo {volume} {670}},\ \bibinfo {pages} {141} (\bibinfo {year} {2008})},\ \Eprint {http://arxiv.org/abs/0806.3376} {arXiv:0806.3376 [hep-th]} \BibitemShut {NoStop}%
\bibitem [{\citenamefont {Donnelly}(2012)}]{Donnelly:2011hn}%
  \BibitemOpen
  \bibfield  {author} {\bibinfo {author} {\bibfnamefont {W.}~\bibnamefont {Donnelly}},\ }\href {\doibase 10.1103/PhysRevD.85.085004} {\bibfield  {journal} {\bibinfo  {journal} {Physical Review D}\ }\textbf {\bibinfo {volume} {85}},\ \bibinfo {pages} {085004} (\bibinfo {year} {2012})},\ \Eprint {http://arxiv.org/abs/1109.0036} {arXiv:1109.0036 [hep-th]} \BibitemShut {NoStop}%
\bibitem [{\citenamefont {Donnelly}(2014)}]{Donnelly:2014gva}%
  \BibitemOpen
  \bibfield  {author} {\bibinfo {author} {\bibfnamefont {W.}~\bibnamefont {Donnelly}},\ }\href {\doibase 10.1088/0264-9381/31/21/214003} {\bibfield  {journal} {\bibinfo  {journal} {Class. Quant. Grav.}\ }\textbf {\bibinfo {volume} {31}},\ \bibinfo {pages} {214003} (\bibinfo {year} {2014})},\ \Eprint {http://arxiv.org/abs/1406.7304} {arXiv:1406.7304 [hep-th]} \BibitemShut {NoStop}%
\bibitem [{\citenamefont {Casini}\ \emph {et~al.}(2014)\citenamefont {Casini}, \citenamefont {Huerta},\ and\ \citenamefont {Rosabal}}]{Casini:2013rba}%
  \BibitemOpen
  \bibfield  {author} {\bibinfo {author} {\bibfnamefont {H.}~\bibnamefont {Casini}}, \bibinfo {author} {\bibfnamefont {M.}~\bibnamefont {Huerta}}, \ and\ \bibinfo {author} {\bibfnamefont {J.~A.}\ \bibnamefont {Rosabal}},\ }\href {\doibase 10.1103/PhysRevD.89.085012} {\bibfield  {journal} {\bibinfo  {journal} {Phys. Rev. D}\ }\textbf {\bibinfo {volume} {89}},\ \bibinfo {pages} {085012} (\bibinfo {year} {2014})},\ \Eprint {http://arxiv.org/abs/1312.1183} {arXiv:1312.1183 [hep-th]} \BibitemShut {NoStop}%
\bibitem [{\citenamefont {Ghosh}\ \emph {et~al.}(2015)\citenamefont {Ghosh}, \citenamefont {Soni},\ and\ \citenamefont {Trivedi}}]{Ghosh:2015iwa}%
  \BibitemOpen
  \bibfield  {author} {\bibinfo {author} {\bibfnamefont {S.}~\bibnamefont {Ghosh}}, \bibinfo {author} {\bibfnamefont {R.~M.}\ \bibnamefont {Soni}}, \ and\ \bibinfo {author} {\bibfnamefont {S.~P.}\ \bibnamefont {Trivedi}},\ }\href {\doibase 10.1007/JHEP09(2015)069} {\bibfield  {journal} {\bibinfo  {journal} {Journal of High Energy Physics}\ }\textbf {\bibinfo {volume} {09}},\ \bibinfo {pages} {069} (\bibinfo {year} {2015})},\ \Eprint {http://arxiv.org/abs/1501.02593} {arXiv:1501.02593 [hep-th]} \BibitemShut {NoStop}%
\bibitem [{\citenamefont {Soni}\ and\ \citenamefont {Trivedi}(2016)}]{Soni:2015yga}%
  \BibitemOpen
  \bibfield  {author} {\bibinfo {author} {\bibfnamefont {R.~M.}\ \bibnamefont {Soni}}\ and\ \bibinfo {author} {\bibfnamefont {S.~P.}\ \bibnamefont {Trivedi}},\ }\href {\doibase 10.1007/JHEP01(2016)136} {\bibfield  {journal} {\bibinfo  {journal} {JHEP}\ }\textbf {\bibinfo {volume} {01}},\ \bibinfo {pages} {136} (\bibinfo {year} {2016})},\ \Eprint {http://arxiv.org/abs/1510.07455} {arXiv:1510.07455 [hep-th]} \BibitemShut {NoStop}%
\bibitem [{\citenamefont {Acoleyen}\ \emph {et~al.}(2016)\citenamefont {Acoleyen}, \citenamefont {Bultinck}, \citenamefont {Haegeman}, \citenamefont {Marien}, \citenamefont {Scholz},\ and\ \citenamefont {Verstraete}}]{VanAcoleyen:2015ccp}%
  \BibitemOpen
  \bibfield  {author} {\bibinfo {author} {\bibfnamefont {K.~V.}\ \bibnamefont {Acoleyen}}, \bibinfo {author} {\bibfnamefont {N.}~\bibnamefont {Bultinck}}, \bibinfo {author} {\bibfnamefont {J.}~\bibnamefont {Haegeman}}, \bibinfo {author} {\bibfnamefont {M.}~\bibnamefont {Marien}}, \bibinfo {author} {\bibfnamefont {V.~B.}\ \bibnamefont {Scholz}}, \ and\ \bibinfo {author} {\bibfnamefont {F.}~\bibnamefont {Verstraete}},\ }\href {\doibase 10.1103/PhysRevLett.117.131602} {\bibfield  {journal} {\bibinfo  {journal} {Phys. Rev. Lett.}\ }\textbf {\bibinfo {volume} {117}},\ \bibinfo {pages} {131602} (\bibinfo {year} {2016})},\ \Eprint {http://arxiv.org/abs/1511.04369} {arXiv:1511.04369 [quant-ph]} \BibitemShut {NoStop}%
\bibitem [{\citenamefont {Delcamp}\ \emph {et~al.}(2016)\citenamefont {Delcamp}, \citenamefont {Dittrich},\ and\ \citenamefont {Riello}}]{Delcamp:2016eya}%
  \BibitemOpen
  \bibfield  {author} {\bibinfo {author} {\bibfnamefont {C.}~\bibnamefont {Delcamp}}, \bibinfo {author} {\bibfnamefont {B.}~\bibnamefont {Dittrich}}, \ and\ \bibinfo {author} {\bibfnamefont {A.}~\bibnamefont {Riello}},\ }\href {\doibase 10.1007/JHEP11(2016)102} {\bibfield  {journal} {\bibinfo  {journal} {JHEP}\ }\textbf {\bibinfo {volume} {11}},\ \bibinfo {pages} {102} (\bibinfo {year} {2016})},\ \Eprint {http://arxiv.org/abs/1609.04806} {arXiv:1609.04806 [hep-th]} \BibitemShut {NoStop}%
\bibitem [{\citenamefont {Aoki}\ \emph {et~al.}(2015)\citenamefont {Aoki}, \citenamefont {Iritani}, \citenamefont {Nozaki}, \citenamefont {Numasawa}, \citenamefont {Shiba},\ and\ \citenamefont {Tasaki}}]{Aoki:2015bsa}%
  \BibitemOpen
  \bibfield  {author} {\bibinfo {author} {\bibfnamefont {S.}~\bibnamefont {Aoki}}, \bibinfo {author} {\bibfnamefont {T.}~\bibnamefont {Iritani}}, \bibinfo {author} {\bibfnamefont {M.}~\bibnamefont {Nozaki}}, \bibinfo {author} {\bibfnamefont {T.}~\bibnamefont {Numasawa}}, \bibinfo {author} {\bibfnamefont {N.}~\bibnamefont {Shiba}}, \ and\ \bibinfo {author} {\bibfnamefont {H.}~\bibnamefont {Tasaki}},\ }\href {\doibase 10.1007/JHEP06(2015)187} {\bibfield  {journal} {\bibinfo  {journal} {JHEP}\ }\textbf {\bibinfo {volume} {06}},\ \bibinfo {pages} {187} (\bibinfo {year} {2015})},\ \Eprint {http://arxiv.org/abs/1502.04267} {arXiv:1502.04267 [hep-th]} \BibitemShut {NoStop}%
\bibitem [{\citenamefont {Hung}\ and\ \citenamefont {Wan}(2015)}]{Hung:2015fla}%
  \BibitemOpen
  \bibfield  {author} {\bibinfo {author} {\bibfnamefont {L.-Y.}\ \bibnamefont {Hung}}\ and\ \bibinfo {author} {\bibfnamefont {Y.}~\bibnamefont {Wan}},\ }\href {\doibase 10.1007/JHEP04(2015)122} {\bibfield  {journal} {\bibinfo  {journal} {JHEP}\ }\textbf {\bibinfo {volume} {04}},\ \bibinfo {pages} {122} (\bibinfo {year} {2015})},\ \Eprint {http://arxiv.org/abs/1501.04389} {arXiv:1501.04389 [hep-th]} \BibitemShut {NoStop}%
\bibitem [{\citenamefont {Bianchi}\ \emph {et~al.}(2024)\citenamefont {Bianchi}, \citenamefont {Dona},\ and\ \citenamefont {Kumar}}]{Bianchi:2024aim}%
  \BibitemOpen
  \bibfield  {author} {\bibinfo {author} {\bibfnamefont {E.}~\bibnamefont {Bianchi}}, \bibinfo {author} {\bibfnamefont {P.}~\bibnamefont {Dona}}, \ and\ \bibinfo {author} {\bibfnamefont {R.}~\bibnamefont {Kumar}},\ }\href {\doibase 10.21468/SciPostPhys.17.5.127} {\bibfield  {journal} {\bibinfo  {journal} {SciPost Phys.}\ }\textbf {\bibinfo {volume} {17}},\ \bibinfo {pages} {127} (\bibinfo {year} {2024})},\ \Eprint {http://arxiv.org/abs/2405.00597} {arXiv:2405.00597 [quant-ph]} \BibitemShut {NoStop}%
\bibitem [{\citenamefont {Dong}\ \emph {et~al.}(2024)\citenamefont {Dong}, \citenamefont {McBride},\ and\ \citenamefont {Weng}}]{Dong:2023kyr}%
  \BibitemOpen
  \bibfield  {author} {\bibinfo {author} {\bibfnamefont {X.}~\bibnamefont {Dong}}, \bibinfo {author} {\bibfnamefont {S.}~\bibnamefont {McBride}}, \ and\ \bibinfo {author} {\bibfnamefont {W.~W.}\ \bibnamefont {Weng}},\ }\href {\doibase 10.1007/JHEP02(2024)222} {\bibfield  {journal} {\bibinfo  {journal} {JHEP}\ }\textbf {\bibinfo {volume} {02}},\ \bibinfo {pages} {222} (\bibinfo {year} {2024})},\ \Eprint {http://arxiv.org/abs/2309.06436} {arXiv:2309.06436 [hep-th]} \BibitemShut {NoStop}%
\bibitem [{\citenamefont {Donnelly}\ and\ \citenamefont {Freidel}(2016)}]{Donnelly:2016auv}%
  \BibitemOpen
  \bibfield  {author} {\bibinfo {author} {\bibfnamefont {W.}~\bibnamefont {Donnelly}}\ and\ \bibinfo {author} {\bibfnamefont {L.}~\bibnamefont {Freidel}},\ }\href {\doibase 10.1007/JHEP09(2016)102} {\bibfield  {journal} {\bibinfo  {journal} {Journal of High Energy Physics}\ }\textbf {\bibinfo {volume} {09}},\ \bibinfo {pages} {102} (\bibinfo {year} {2016})},\ \Eprint {http://arxiv.org/abs/1601.04744} {arXiv:1601.04744 [hep-th]} \BibitemShut {NoStop}%
\bibitem [{\citenamefont {Donnelly}\ and\ \citenamefont {Wall}(2015)}]{Donnelly:2014fua}%
  \BibitemOpen
  \bibfield  {author} {\bibinfo {author} {\bibfnamefont {W.}~\bibnamefont {Donnelly}}\ and\ \bibinfo {author} {\bibfnamefont {A.~C.}\ \bibnamefont {Wall}},\ }\href {\doibase 10.1103/PhysRevLett.114.111603} {\bibfield  {journal} {\bibinfo  {journal} {Physical Review Letters}\ }\textbf {\bibinfo {volume} {114}},\ \bibinfo {pages} {111603} (\bibinfo {year} {2015})},\ \Eprint {http://arxiv.org/abs/1412.1895} {arXiv:1412.1895 [hep-th]} \BibitemShut {NoStop}%
\bibitem [{\citenamefont {Donnelly}\ and\ \citenamefont {Wall}(2016)}]{Donnelly:2015hxa}%
  \BibitemOpen
  \bibfield  {author} {\bibinfo {author} {\bibfnamefont {W.}~\bibnamefont {Donnelly}}\ and\ \bibinfo {author} {\bibfnamefont {A.~C.}\ \bibnamefont {Wall}},\ }\href {\doibase 10.1103/PhysRevD.94.104053} {\bibfield  {journal} {\bibinfo  {journal} {Physical Review D}\ }\textbf {\bibinfo {volume} {94}},\ \bibinfo {pages} {104053} (\bibinfo {year} {2016})},\ \Eprint {http://arxiv.org/abs/1506.05792} {arXiv:1506.05792 [hep-th]} \BibitemShut {NoStop}%
\bibitem [{\citenamefont {Geiller}\ and\ \citenamefont {{Jai-akson}}(2020)}]{Geiller:2019bti}%
  \BibitemOpen
  \bibfield  {author} {\bibinfo {author} {\bibfnamefont {M.}~\bibnamefont {Geiller}}\ and\ \bibinfo {author} {\bibfnamefont {P.}~\bibnamefont {{Jai-akson}}},\ }\href {\doibase 10.1007/JHEP09(2020)134} {\bibfield  {journal} {\bibinfo  {journal} {JHEP}\ }\textbf {\bibinfo {volume} {09}},\ \bibinfo {pages} {134} (\bibinfo {year} {2020})},\ \Eprint {http://arxiv.org/abs/1912.06025} {arXiv:1912.06025 [hep-th]} \BibitemShut {NoStop}%
\bibitem [{\citenamefont {Blommaert}\ \emph {et~al.}(2018{\natexlab{a}})\citenamefont {Blommaert}, \citenamefont {Mertens},\ and\ \citenamefont {Verschelde}}]{Blommaert:2018oue}%
  \BibitemOpen
  \bibfield  {author} {\bibinfo {author} {\bibfnamefont {A.}~\bibnamefont {Blommaert}}, \bibinfo {author} {\bibfnamefont {T.~G.}\ \bibnamefont {Mertens}}, \ and\ \bibinfo {author} {\bibfnamefont {H.}~\bibnamefont {Verschelde}},\ }\href {\doibase 10.1007/JHEP11(2018)080} {\bibfield  {journal} {\bibinfo  {journal} {JHEP}\ }\textbf {\bibinfo {volume} {11}},\ \bibinfo {pages} {080} (\bibinfo {year} {2018}{\natexlab{a}})},\ \Eprint {http://arxiv.org/abs/1804.07585} {arXiv:1804.07585 [hep-th]} \BibitemShut {NoStop}%
\bibitem [{\citenamefont {Ball}\ \emph {et~al.}(2024)\citenamefont {Ball}, \citenamefont {Law},\ and\ \citenamefont {Wong}}]{Ball:2024hqe}%
  \BibitemOpen
  \bibfield  {author} {\bibinfo {author} {\bibfnamefont {A.}~\bibnamefont {Ball}}, \bibinfo {author} {\bibfnamefont {A.}~\bibnamefont {Law}}, \ and\ \bibinfo {author} {\bibfnamefont {G.}~\bibnamefont {Wong}},\ }\href {\doibase 10.1007/JHEP09(2024)032} {\ \textbf {\bibinfo {volume} {09}},\ \bibinfo {pages} {032} (\bibinfo {year} {2024})},\ \Eprint {http://arxiv.org/abs/2403.14542} {arXiv:2403.14542 [hep-th]} \BibitemShut {NoStop}%
\bibitem [{\citenamefont {Frenkel}(2023)}]{Frenkel:2023yuw}%
  \BibitemOpen
  \bibfield  {author} {\bibinfo {author} {\bibfnamefont {A.}~\bibnamefont {Frenkel}},\ }\href@noop {} {\  (\bibinfo {year} {2023})},\ \Eprint {http://arxiv.org/abs/2311.10131} {arXiv:2311.10131 [hep-th]} \BibitemShut {NoStop}%
\bibitem [{\citenamefont {Fliss}\ \emph {et~al.}(2025)\citenamefont {Fliss}, \citenamefont {Frenkel}, \citenamefont {Hartnoll},\ and\ \citenamefont {Soni}}]{Fliss:2025kzi}%
  \BibitemOpen
  \bibfield  {author} {\bibinfo {author} {\bibfnamefont {J.~R.}\ \bibnamefont {Fliss}}, \bibinfo {author} {\bibfnamefont {A.}~\bibnamefont {Frenkel}}, \bibinfo {author} {\bibfnamefont {S.~A.}\ \bibnamefont {Hartnoll}}, \ and\ \bibinfo {author} {\bibfnamefont {R.~M.}\ \bibnamefont {Soni}},\ }\href {\doibase 10.21468/SciPostPhys.18.6.171} {\bibfield  {journal} {\bibinfo  {journal} {SciPost Phys.}\ }\textbf {\bibinfo {volume} {18}},\ \bibinfo {pages} {171} (\bibinfo {year} {2025})},\ \Eprint {http://arxiv.org/abs/2408.05274} {arXiv:2408.05274 [hep-th]} \BibitemShut {NoStop}%
\bibitem [{\citenamefont {Akers}\ \emph {et~al.}(2024)\citenamefont {Akers}, \citenamefont {Soni},\ and\ \citenamefont {Wei}}]{Akers:2024wab}%
  \BibitemOpen
  \bibfield  {author} {\bibinfo {author} {\bibfnamefont {C.}~\bibnamefont {Akers}}, \bibinfo {author} {\bibfnamefont {R.~M.}\ \bibnamefont {Soni}}, \ and\ \bibinfo {author} {\bibfnamefont {A.~Y.}\ \bibnamefont {Wei}},\ }\href {\doibase 10.21468/SciPostPhysCore.7.4.070} {\bibfield  {journal} {\bibinfo  {journal} {SciPost Phys. Core}\ }\textbf {\bibinfo {volume} {7}},\ \bibinfo {pages} {070} (\bibinfo {year} {2024})},\ \Eprint {http://arxiv.org/abs/2404.03651} {arXiv:2404.03651 [hep-th]} \BibitemShut {NoStop}%
\bibitem [{\citenamefont {Law}(2025)}]{Law:2025ktz}%
  \BibitemOpen
  \bibfield  {author} {\bibinfo {author} {\bibfnamefont {Y.~T.~A.}\ \bibnamefont {Law}},\ }\href@noop {} {\  (\bibinfo {year} {2025})},\ \Eprint {http://arxiv.org/abs/2501.17912} {arXiv:2501.17912 [hep-th]} \BibitemShut {NoStop}%
\bibitem [{\citenamefont {Balasubramanian}\ and\ \citenamefont {Cummings}(2023)}]{Balasubramanian:2023dpj}%
  \BibitemOpen
  \bibfield  {author} {\bibinfo {author} {\bibfnamefont {V.}~\bibnamefont {Balasubramanian}}\ and\ \bibinfo {author} {\bibfnamefont {C.}~\bibnamefont {Cummings}},\ }\href@noop {} {\  (\bibinfo {year} {2023})},\ \Eprint {http://arxiv.org/abs/2312.08434} {arXiv:2312.08434 [hep-th]} \BibitemShut {NoStop}%
\bibitem [{\citenamefont {Ball}\ and\ \citenamefont {Ciambelli}(2024)}]{Ball:2024gti}%
  \BibitemOpen
  \bibfield  {author} {\bibinfo {author} {\bibfnamefont {A.}~\bibnamefont {Ball}}\ and\ \bibinfo {author} {\bibfnamefont {L.}~\bibnamefont {Ciambelli}},\ }\href {\doibase 10.48550/arXiv.2412.06672} {\  (\bibinfo {year} {2024}),\ 10.48550/arXiv.2412.06672},\ \Eprint {http://arxiv.org/abs/2412.06672} {arXiv:2412.06672 [hep-th]} \BibitemShut {NoStop}%
\bibitem [{\citenamefont {Carrozza}\ and\ \citenamefont {H\"ohn}(2022)}]{Carrozza:2021gju}%
  \BibitemOpen
  \bibfield  {author} {\bibinfo {author} {\bibfnamefont {S.}~\bibnamefont {Carrozza}}\ and\ \bibinfo {author} {\bibfnamefont {P.~A.}\ \bibnamefont {H\"ohn}},\ }\href {\doibase 10.1007/JHEP02(2022)172} {\bibfield  {journal} {\bibinfo  {journal} {Journal of High Energy Physics}\ }\textbf {\bibinfo {volume} {02}},\ \bibinfo {pages} {172} (\bibinfo {year} {2022})},\ \Eprint {http://arxiv.org/abs/2109.06184} {arXiv:2109.06184 [hep-th]} \BibitemShut {NoStop}%
\bibitem [{\citenamefont {{Ara\'ujo-Regado}}\ \emph {et~al.}(2024)\citenamefont {{Ara\'ujo-Regado}}, \citenamefont {H\"ohn}, \citenamefont {Sartini},\ and\ \citenamefont {Tomova}}]{Araujo-Regado:2024dpr}%
  \BibitemOpen
  \bibfield  {author} {\bibinfo {author} {\bibfnamefont {G.}~\bibnamefont {{Ara\'ujo-Regado}}}, \bibinfo {author} {\bibfnamefont {P.~A.}\ \bibnamefont {H\"ohn}}, \bibinfo {author} {\bibfnamefont {F.}~\bibnamefont {Sartini}}, \ and\ \bibinfo {author} {\bibfnamefont {B.}~\bibnamefont {Tomova}},\ }\href@noop {} {\bibfield  {journal} {\bibinfo  {journal} {accepted for publication in JHEP}\ } (\bibinfo {year} {2024})},\ \Eprint {http://arxiv.org/abs/2412.14548} {arXiv:2412.14548 [hep-th]} \BibitemShut {NoStop}%
\bibitem [{\citenamefont {Gomes}\ and\ \citenamefont {Riello}(2018)}]{Gomes:2018shn}%
  \BibitemOpen
  \bibfield  {author} {\bibinfo {author} {\bibfnamefont {H.}~\bibnamefont {Gomes}}\ and\ \bibinfo {author} {\bibfnamefont {A.}~\bibnamefont {Riello}},\ }\href {\doibase 10.1103/PhysRevD.98.025013} {\bibfield  {journal} {\bibinfo  {journal} {Phys. Rev. D}\ }\textbf {\bibinfo {volume} {98}},\ \bibinfo {pages} {025013} (\bibinfo {year} {2018})},\ \Eprint {http://arxiv.org/abs/1804.01919} {arXiv:1804.01919 [hep-th]} \BibitemShut {NoStop}%
\bibitem [{\citenamefont {Gomes}\ and\ \citenamefont {Riello}(2021)}]{Gomes:2019xto}%
  \BibitemOpen
  \bibfield  {author} {\bibinfo {author} {\bibfnamefont {H.}~\bibnamefont {Gomes}}\ and\ \bibinfo {author} {\bibfnamefont {A.}~\bibnamefont {Riello}},\ }\href {\doibase 10.21468/SciPostPhys.10.6.130} {\bibfield  {journal} {\bibinfo  {journal} {SciPost Phys.}\ }\textbf {\bibinfo {volume} {10}},\ \bibinfo {pages} {130} (\bibinfo {year} {2021})},\ \Eprint {http://arxiv.org/abs/1910.04222} {arXiv:1910.04222 [hep-th]} \BibitemShut {NoStop}%
\bibitem [{\citenamefont {Riello}(2021)}]{Riello:2021lfl}%
  \BibitemOpen
  \bibfield  {author} {\bibinfo {author} {\bibfnamefont {A.}~\bibnamefont {Riello}},\ }\href {\doibase 10.48550/arXiv.2104.10182} {\  (\bibinfo {year} {2021}),\ 10.48550/arXiv.2104.10182},\ \Eprint {http://arxiv.org/abs/2104.10182} {arXiv:2104.10182 [hep-th]} \BibitemShut {NoStop}%
\bibitem [{\citenamefont {Freidel}\ \emph {et~al.}(2024)\citenamefont {Freidel}, \citenamefont {Geiller},\ and\ \citenamefont {Wieland}}]{Freidel:2023bnj}%
  \BibitemOpen
  \bibfield  {author} {\bibinfo {author} {\bibfnamefont {L.}~\bibnamefont {Freidel}}, \bibinfo {author} {\bibfnamefont {M.}~\bibnamefont {Geiller}}, \ and\ \bibinfo {author} {\bibfnamefont {W.}~\bibnamefont {Wieland}},\ }\enquote {\bibinfo {title} {{Corner Symmetry and Quantum Geometry}},}\ in\ \href {\doibase 10.1007/978-981-19-3079-9_107-1} {\emph {\bibinfo {booktitle} {{Handbook of Quantum Gravity}}}},\ \bibinfo {editor} {edited by\ \bibinfo {editor} {\bibfnamefont {C.}~\bibnamefont {Bambi}}, \bibinfo {editor} {\bibfnamefont {L.}~\bibnamefont {Modesto}}, \ and\ \bibinfo {editor} {\bibfnamefont {I.}~\bibnamefont {Shapiro}}}\ (\bibinfo {year} {2024})\ pp.\ \bibinfo {pages} {1--36},\ \Eprint {http://arxiv.org/abs/2302.12799} {arXiv:2302.12799 [hep-th]} \BibitemShut {NoStop}%
\bibitem [{\citenamefont {Ciambelli}(2023)}]{Ciambelli:2022vot}%
  \BibitemOpen
  \bibfield  {author} {\bibinfo {author} {\bibfnamefont {L.}~\bibnamefont {Ciambelli}},\ }\href {\doibase 10.22323/1.435.0002} {\bibfield  {journal} {\bibinfo  {journal} {PoS}\ }\textbf {\bibinfo {volume} {Modave2022}},\ \bibinfo {pages} {002} (\bibinfo {year} {2023})},\ \Eprint {http://arxiv.org/abs/2212.13644} {arXiv:2212.13644 [hep-th]} \BibitemShut {NoStop}%
\bibitem [{\citenamefont {Ciambelli}\ and\ \citenamefont {Leigh}(2021)}]{Ciambelli:2021vnn}%
  \BibitemOpen
  \bibfield  {author} {\bibinfo {author} {\bibfnamefont {L.}~\bibnamefont {Ciambelli}}\ and\ \bibinfo {author} {\bibfnamefont {R.~G.}\ \bibnamefont {Leigh}},\ }\href {\doibase 10.1103/PhysRevD.104.046005} {\bibfield  {journal} {\bibinfo  {journal} {Phys. Rev. D}\ }\textbf {\bibinfo {volume} {104}},\ \bibinfo {pages} {046005} (\bibinfo {year} {2021})},\ \Eprint {http://arxiv.org/abs/2104.07643} {arXiv:2104.07643 [hep-th]} \BibitemShut {NoStop}%
\bibitem [{\citenamefont {Freidel}\ \emph {et~al.}(2020{\natexlab{a}})\citenamefont {Freidel}, \citenamefont {Geiller},\ and\ \citenamefont {Pranzetti}}]{Freidel:2020xyx}%
  \BibitemOpen
  \bibfield  {author} {\bibinfo {author} {\bibfnamefont {L.}~\bibnamefont {Freidel}}, \bibinfo {author} {\bibfnamefont {M.}~\bibnamefont {Geiller}}, \ and\ \bibinfo {author} {\bibfnamefont {D.}~\bibnamefont {Pranzetti}},\ }\href {\doibase 10.1007/JHEP11(2020)026} {\bibfield  {journal} {\bibinfo  {journal} {Journal of High Energy Physics}\ }\textbf {\bibinfo {volume} {11}},\ \bibinfo {pages} {026} (\bibinfo {year} {2020}{\natexlab{a}})},\ \Eprint {http://arxiv.org/abs/2006.12527} {arXiv:2006.12527 [hep-th]} \BibitemShut {NoStop}%
\bibitem [{\citenamefont {Freidel}\ \emph {et~al.}(2020{\natexlab{b}})\citenamefont {Freidel}, \citenamefont {Geiller},\ and\ \citenamefont {Pranzetti}}]{Freidel:2020svx}%
  \BibitemOpen
  \bibfield  {author} {\bibinfo {author} {\bibfnamefont {L.}~\bibnamefont {Freidel}}, \bibinfo {author} {\bibfnamefont {M.}~\bibnamefont {Geiller}}, \ and\ \bibinfo {author} {\bibfnamefont {D.}~\bibnamefont {Pranzetti}},\ }\href {\doibase 10.1007/JHEP11(2020)027} {\bibfield  {journal} {\bibinfo  {journal} {Journal of High Energy Physics}\ }\textbf {\bibinfo {volume} {11}},\ \bibinfo {pages} {027} (\bibinfo {year} {2020}{\natexlab{b}})},\ \Eprint {http://arxiv.org/abs/2007.03563} {arXiv:2007.03563 [hep-th]} \BibitemShut {NoStop}%
\bibitem [{\citenamefont {Donnelly}\ \emph {et~al.}(2021)\citenamefont {Donnelly}, \citenamefont {Freidel}, \citenamefont {Moosavian},\ and\ \citenamefont {Speranza}}]{Donnelly:2020xgu}%
  \BibitemOpen
  \bibfield  {author} {\bibinfo {author} {\bibfnamefont {W.}~\bibnamefont {Donnelly}}, \bibinfo {author} {\bibfnamefont {L.}~\bibnamefont {Freidel}}, \bibinfo {author} {\bibfnamefont {S.~F.}\ \bibnamefont {Moosavian}}, \ and\ \bibinfo {author} {\bibfnamefont {A.~J.}\ \bibnamefont {Speranza}},\ }\href {\doibase 10.1007/JHEP09(2021)008} {\bibfield  {journal} {\bibinfo  {journal} {JHEP}\ }\textbf {\bibinfo {volume} {09}},\ \bibinfo {pages} {008} (\bibinfo {year} {2021})},\ \Eprint {http://arxiv.org/abs/2012.10367} {arXiv:2012.10367 [hep-th]} \BibitemShut {NoStop}%
\bibitem [{\citenamefont {Donnelly}\ \emph {et~al.}(2022)\citenamefont {Donnelly}, \citenamefont {Freidel}, \citenamefont {Moosavian},\ and\ \citenamefont {Speranza}}]{Donnelly:2022kfs}%
  \BibitemOpen
  \bibfield  {author} {\bibinfo {author} {\bibfnamefont {W.}~\bibnamefont {Donnelly}}, \bibinfo {author} {\bibfnamefont {L.}~\bibnamefont {Freidel}}, \bibinfo {author} {\bibfnamefont {S.~F.}\ \bibnamefont {Moosavian}}, \ and\ \bibinfo {author} {\bibfnamefont {A.~J.}\ \bibnamefont {Speranza}},\ }\href {\doibase 10.1007/JHEP05(2023)163} {\bibfield  {journal} {\bibinfo  {journal} {JHEP}\ }\textbf {\bibinfo {volume} {05}},\ \bibinfo {pages} {163} (\bibinfo {year} {2022})},\ \Eprint {http://arxiv.org/abs/2212.09120} {arXiv:2212.09120 [hep-th]} \BibitemShut {NoStop}%
\bibitem [{\citenamefont {Speranza}(2018)}]{Speranza:2017gxd}%
  \BibitemOpen
  \bibfield  {author} {\bibinfo {author} {\bibfnamefont {A.~J.}\ \bibnamefont {Speranza}},\ }\href {\doibase 10.1007/JHEP02(2018)021} {\bibfield  {journal} {\bibinfo  {journal} {JHEP}\ }\textbf {\bibinfo {volume} {02}},\ \bibinfo {pages} {021} (\bibinfo {year} {2018})},\ \Eprint {http://arxiv.org/abs/1706.05061} {arXiv:1706.05061 [hep-th]} \BibitemShut {NoStop}%
\bibitem [{\citenamefont {Blommaert}\ \emph {et~al.}(2018{\natexlab{b}})\citenamefont {Blommaert}, \citenamefont {Mertens}, \citenamefont {Verschelde},\ and\ \citenamefont {Zakharov}}]{Blommaert:2018rsf}%
  \BibitemOpen
  \bibfield  {author} {\bibinfo {author} {\bibfnamefont {A.}~\bibnamefont {Blommaert}}, \bibinfo {author} {\bibfnamefont {T.~G.}\ \bibnamefont {Mertens}}, \bibinfo {author} {\bibfnamefont {H.}~\bibnamefont {Verschelde}}, \ and\ \bibinfo {author} {\bibfnamefont {V.~I.}\ \bibnamefont {Zakharov}},\ }\href {\doibase 10.1007/JHEP08(2018)196} {\bibfield  {journal} {\bibinfo  {journal} {JHEP}\ }\textbf {\bibinfo {volume} {08}},\ \bibinfo {pages} {196} (\bibinfo {year} {2018}{\natexlab{b}})},\ \Eprint {http://arxiv.org/abs/1801.09910} {arXiv:1801.09910 [hep-th]} \BibitemShut {NoStop}%
\bibitem [{\citenamefont {Klinger}\ and\ \citenamefont {Leigh}(2024)}]{Klinger:2023tgi}%
  \BibitemOpen
  \bibfield  {author} {\bibinfo {author} {\bibfnamefont {M.~S.}\ \bibnamefont {Klinger}}\ and\ \bibinfo {author} {\bibfnamefont {R.~G.}\ \bibnamefont {Leigh}},\ }\href {\doibase 10.1016/j.nuclphysb.2024.116453} {\bibfield  {journal} {\bibinfo  {journal} {Nucl. Phys. B}\ }\textbf {\bibinfo {volume} {999}},\ \bibinfo {pages} {116453} (\bibinfo {year} {2024})},\ \Eprint {http://arxiv.org/abs/2306.09314} {arXiv:2306.09314 [hep-th]} \BibitemShut {NoStop}%
\bibitem [{\citenamefont {Wieland}(2017)}]{Wieland:2017cmf}%
  \BibitemOpen
  \bibfield  {author} {\bibinfo {author} {\bibfnamefont {W.}~\bibnamefont {Wieland}},\ }\href {\doibase 10.1007/s00023-017-0598-6} {\bibfield  {journal} {\bibinfo  {journal} {Annales Henri Poincare}\ }\textbf {\bibinfo {volume} {18}},\ \bibinfo {pages} {3695} (\bibinfo {year} {2017})},\ \Eprint {http://arxiv.org/abs/1706.00479} {arXiv:1706.00479 [gr-qc]} \BibitemShut {NoStop}%
\bibitem [{\citenamefont {Pulakkat}(2025)}]{Pulakkat:2025eid}%
  \BibitemOpen
  \bibfield  {author} {\bibinfo {author} {\bibfnamefont {P.}~\bibnamefont {Pulakkat}},\ }\href@noop {} {\  (\bibinfo {year} {2025})},\ \Eprint {http://arxiv.org/abs/2501.10450} {arXiv:2501.10450 [physics.class-ph]} \BibitemShut {NoStop}%
\bibitem [{\citenamefont {Carrozza}\ \emph {et~al.}(2024{\natexlab{a}})\citenamefont {Carrozza}, \citenamefont {Eccles},\ and\ \citenamefont {H\"ohn}}]{Carrozza:2022xut}%
  \BibitemOpen
  \bibfield  {author} {\bibinfo {author} {\bibfnamefont {S.}~\bibnamefont {Carrozza}}, \bibinfo {author} {\bibfnamefont {S.}~\bibnamefont {Eccles}}, \ and\ \bibinfo {author} {\bibfnamefont {P.~A.}\ \bibnamefont {H\"ohn}},\ }\href {\doibase 10.21468/SciPostPhys.17.2.048} {\bibfield  {journal} {\bibinfo  {journal} {SciPost Phys.}\ }\textbf {\bibinfo {volume} {17}},\ \bibinfo {pages} {048} (\bibinfo {year} {2024}{\natexlab{a}})},\ \Eprint {http://arxiv.org/abs/2205.00913} {arXiv:2205.00913 [hep-th]} \BibitemShut {NoStop}%
\bibitem [{\citenamefont {Kabel}\ \emph {et~al.}(2023)\citenamefont {Kabel}, \citenamefont {Brukner},\ and\ \citenamefont {Wieland}}]{Kabel:2023jve}%
  \BibitemOpen
  \bibfield  {author} {\bibinfo {author} {\bibfnamefont {V.}~\bibnamefont {Kabel}}, \bibinfo {author} {\bibfnamefont {{\v C}.}~\bibnamefont {Brukner}}, \ and\ \bibinfo {author} {\bibfnamefont {W.}~\bibnamefont {Wieland}},\ }\href {\doibase 10.1103/PhysRevD.108.106022} {\bibfield  {journal} {\bibinfo  {journal} {Phys. Rev. D}\ }\textbf {\bibinfo {volume} {108}},\ \bibinfo {pages} {106022} (\bibinfo {year} {2023})},\ \Eprint {http://arxiv.org/abs/2302.11629} {arXiv:2302.11629 [gr-qc]} \BibitemShut {NoStop}%
\bibitem [{\citenamefont {Giesel}\ \emph {et~al.}(2024)\citenamefont {Giesel}, \citenamefont {Kabel},\ and\ \citenamefont {Wieland}}]{Giesel:2024xtb}%
  \BibitemOpen
  \bibfield  {author} {\bibinfo {author} {\bibfnamefont {K.}~\bibnamefont {Giesel}}, \bibinfo {author} {\bibfnamefont {V.}~\bibnamefont {Kabel}}, \ and\ \bibinfo {author} {\bibfnamefont {W.}~\bibnamefont {Wieland}},\ }\href@noop {} {\  (\bibinfo {year} {2024})},\ \Eprint {http://arxiv.org/abs/2410.17339} {arXiv:2410.17339 [gr-qc]} \BibitemShut {NoStop}%
\bibitem [{\citenamefont {Chen}\ \emph {et~al.}(2024{\natexlab{a}})\citenamefont {Chen}, \citenamefont {Myers},\ and\ \citenamefont {Raclariu}}]{Chen:2023tvj}%
  \BibitemOpen
  \bibfield  {author} {\bibinfo {author} {\bibfnamefont {H.~Z.}\ \bibnamefont {Chen}}, \bibinfo {author} {\bibfnamefont {R.~C.}\ \bibnamefont {Myers}}, \ and\ \bibinfo {author} {\bibfnamefont {A.-M.}\ \bibnamefont {Raclariu}},\ }\href {\doibase 10.1103/PhysRevD.109.L121702} {\ \textbf {\bibinfo {volume} {109}},\ \bibinfo {pages} {L121702} (\bibinfo {year} {2024}{\natexlab{a}})},\ \Eprint {http://arxiv.org/abs/2308.12341} {arXiv:2308.12341 [hep-th]} \BibitemShut {NoStop}%
\bibitem [{\citenamefont {Chen}\ \emph {et~al.}(2024{\natexlab{b}})\citenamefont {Chen}, \citenamefont {Myers},\ and\ \citenamefont {Raclariu}}]{Chen:2024kuq}%
  \BibitemOpen
  \bibfield  {author} {\bibinfo {author} {\bibfnamefont {H.~Z.}\ \bibnamefont {Chen}}, \bibinfo {author} {\bibfnamefont {R.}~\bibnamefont {Myers}}, \ and\ \bibinfo {author} {\bibfnamefont {A.-M.}\ \bibnamefont {Raclariu}},\ }\href@noop {} {\  (\bibinfo {year} {2024}{\natexlab{b}})},\ \Eprint {http://arxiv.org/abs/2403.13913} {arXiv:2403.13913 [hep-th]} \BibitemShut {NoStop}%
\bibitem [{\citenamefont {H\"ohn}\ \emph {et~al.}(2023)\citenamefont {H\"ohn}, \citenamefont {Kotecha},\ and\ \citenamefont {Mele}}]{Hoehn:2023ehz}%
  \BibitemOpen
  \bibfield  {author} {\bibinfo {author} {\bibfnamefont {P.~A.}\ \bibnamefont {H\"ohn}}, \bibinfo {author} {\bibfnamefont {I.}~\bibnamefont {Kotecha}}, \ and\ \bibinfo {author} {\bibfnamefont {F.~M.}\ \bibnamefont {Mele}},\ }\href@noop {} {\  (\bibinfo {year} {2023})},\ \Eprint {http://arxiv.org/abs/2308.09131} {arXiv:2308.09131 [quant-ph]} \BibitemShut {NoStop}%
\bibitem [{\citenamefont {Ali~Ahmad}\ \emph {et~al.}(2022)\citenamefont {Ali~Ahmad}, \citenamefont {Galley}, \citenamefont {H\"ohn}, \citenamefont {Lock},\ and\ \citenamefont {Smith}}]{AliAhmad:2021adn}%
  \BibitemOpen
  \bibfield  {author} {\bibinfo {author} {\bibfnamefont {S.}~\bibnamefont {Ali~Ahmad}}, \bibinfo {author} {\bibfnamefont {T.~D.}\ \bibnamefont {Galley}}, \bibinfo {author} {\bibfnamefont {P.~A.}\ \bibnamefont {H\"ohn}}, \bibinfo {author} {\bibfnamefont {M.~P.~E.}\ \bibnamefont {Lock}}, \ and\ \bibinfo {author} {\bibfnamefont {A.~R.~H.}\ \bibnamefont {Smith}},\ }\href {\doibase 10.1103/PhysRevLett.128.170401} {\bibfield  {journal} {\bibinfo  {journal} {Phys. Rev. Lett.}\ }\textbf {\bibinfo {volume} {128}},\ \bibinfo {pages} {170401} (\bibinfo {year} {2022})},\ \Eprint {http://arxiv.org/abs/2103.01232} {arXiv:2103.01232 [quant-ph]} \BibitemShut {NoStop}%
\bibitem [{\citenamefont {De~Vuyst}\ \emph {et~al.}(2024)\citenamefont {De~Vuyst}, \citenamefont {Eccles}, \citenamefont {H\"ohn},\ and\ \citenamefont {Kirklin}}]{DeVuyst:2024pop}%
  \BibitemOpen
  \bibfield  {author} {\bibinfo {author} {\bibfnamefont {J.}~\bibnamefont {De~Vuyst}}, \bibinfo {author} {\bibfnamefont {S.}~\bibnamefont {Eccles}}, \bibinfo {author} {\bibfnamefont {P.~A.}\ \bibnamefont {H\"ohn}}, \ and\ \bibinfo {author} {\bibfnamefont {J.}~\bibnamefont {Kirklin}},\ }\href@noop {} {\bibfield  {journal} {\bibinfo  {journal} {accepted for publication in JHEP}\ } (\bibinfo {year} {2024})},\ \Eprint {http://arxiv.org/abs/2405.00114} {arXiv:2405.00114 [hep-th]} \BibitemShut {NoStop}%
\bibitem [{\citenamefont {Vuyst}\ \emph {et~al.}(2024)\citenamefont {Vuyst}, \citenamefont {Eccles}, \citenamefont {H\"ohn},\ and\ \citenamefont {Kirklin}}]{DeVuyst:2024uvd}%
  \BibitemOpen
  \bibfield  {author} {\bibinfo {author} {\bibfnamefont {J.~D.}\ \bibnamefont {Vuyst}}, \bibinfo {author} {\bibfnamefont {S.}~\bibnamefont {Eccles}}, \bibinfo {author} {\bibfnamefont {P.~A.}\ \bibnamefont {H\"ohn}}, \ and\ \bibinfo {author} {\bibfnamefont {J.}~\bibnamefont {Kirklin}},\ }\href@noop {} {\bibfield  {journal} {\bibinfo  {journal} {accepted for publication in JHEP}\ } (\bibinfo {year} {2024})},\ \Eprint {http://arxiv.org/abs/2412.15502} {arXiv:2412.15502 [hep-th]} \BibitemShut {NoStop}%
\bibitem [{\citenamefont {{de la Hamette}}\ \emph {et~al.}(2021)\citenamefont {{de la Hamette}}, \citenamefont {Galley}, \citenamefont {H\"ohn}, \citenamefont {Loveridge},\ and\ \citenamefont {M\"uller}}]{delaHamette:2021oex}%
  \BibitemOpen
  \bibfield  {author} {\bibinfo {author} {\bibfnamefont {A.-C.}\ \bibnamefont {{de la Hamette}}}, \bibinfo {author} {\bibfnamefont {T.~D.}\ \bibnamefont {Galley}}, \bibinfo {author} {\bibfnamefont {P.~A.}\ \bibnamefont {H\"ohn}}, \bibinfo {author} {\bibfnamefont {L.}~\bibnamefont {Loveridge}}, \ and\ \bibinfo {author} {\bibfnamefont {M.~P.}\ \bibnamefont {M\"uller}},\ }\href@noop {} {\  (\bibinfo {year} {2021})},\ \Eprint {http://arxiv.org/abs/2110.13824} {arXiv:2110.13824 [quant-ph]} \BibitemShut {NoStop}%
\bibitem [{\citenamefont {H\"ohn}\ \emph {et~al.}(2021{\natexlab{a}})\citenamefont {H\"ohn}, \citenamefont {Smith},\ and\ \citenamefont {Lock}}]{Hoehn:2019fsy}%
  \BibitemOpen
  \bibfield  {author} {\bibinfo {author} {\bibfnamefont {P.~A.}\ \bibnamefont {H\"ohn}}, \bibinfo {author} {\bibfnamefont {A.~R.~H.}\ \bibnamefont {Smith}}, \ and\ \bibinfo {author} {\bibfnamefont {M.~P.~E.}\ \bibnamefont {Lock}},\ }\href {\doibase 10.1103/PhysRevD.104.066001} {\bibfield  {journal} {\bibinfo  {journal} {Phys. Rev. D}\ }\textbf {\bibinfo {volume} {104}},\ \bibinfo {pages} {066001} (\bibinfo {year} {2021}{\natexlab{a}})},\ \Eprint {http://arxiv.org/abs/1912.00033} {arXiv:1912.00033 [quant-ph]} \BibitemShut {NoStop}%
\bibitem [{\citenamefont {H\"ohn}\ \emph {et~al.}(2021{\natexlab{b}})\citenamefont {H\"ohn}, \citenamefont {Smith},\ and\ \citenamefont {Lock}}]{Hoehn:2020epv}%
  \BibitemOpen
  \bibfield  {author} {\bibinfo {author} {\bibfnamefont {P.~A.}\ \bibnamefont {H\"ohn}}, \bibinfo {author} {\bibfnamefont {A.~R.~H.}\ \bibnamefont {Smith}}, \ and\ \bibinfo {author} {\bibfnamefont {M.~P.~E.}\ \bibnamefont {Lock}},\ }\href {\doibase 10.3389/fphy.2021.587083} {\bibfield  {journal} {\bibinfo  {journal} {Front. in Phys.}\ }\textbf {\bibinfo {volume} {9}},\ \bibinfo {pages} {181} (\bibinfo {year} {2021}{\natexlab{b}})},\ \Eprint {http://arxiv.org/abs/2007.00580} {arXiv:2007.00580 [gr-qc]} \BibitemShut {NoStop}%
\bibitem [{\citenamefont {Carrozza}\ \emph {et~al.}(2024{\natexlab{b}})\citenamefont {Carrozza}, \citenamefont {{Chatwin-Davies}}, \citenamefont {H\"ohn},\ and\ \citenamefont {Mele}}]{Carrozza:2024smc}%
  \BibitemOpen
  \bibfield  {author} {\bibinfo {author} {\bibfnamefont {S.}~\bibnamefont {Carrozza}}, \bibinfo {author} {\bibfnamefont {A.}~\bibnamefont {{Chatwin-Davies}}}, \bibinfo {author} {\bibfnamefont {P.~A.}\ \bibnamefont {H\"ohn}}, \ and\ \bibinfo {author} {\bibfnamefont {F.~M.}\ \bibnamefont {Mele}},\ }\href@noop {} {\  (\bibinfo {year} {2024}{\natexlab{b}})},\ \Eprint {http://arxiv.org/abs/2412.15317} {arXiv:2412.15317 [quant-ph]} \BibitemShut {NoStop}%
\bibitem [{\citenamefont {Vanrietvelde}\ \emph {et~al.}(2023)\citenamefont {Vanrietvelde}, \citenamefont {H\"ohn},\ and\ \citenamefont {Giacomini}}]{Vanrietvelde:2018dit}%
  \BibitemOpen
  \bibfield  {author} {\bibinfo {author} {\bibfnamefont {A.}~\bibnamefont {Vanrietvelde}}, \bibinfo {author} {\bibfnamefont {P.~A.}\ \bibnamefont {H\"ohn}}, \ and\ \bibinfo {author} {\bibfnamefont {F.}~\bibnamefont {Giacomini}},\ }\href {\doibase 10.22331/q-2023-08-22-1088} {\bibfield  {journal} {\bibinfo  {journal} {Quantum}\ }\textbf {\bibinfo {volume} {7}},\ \bibinfo {pages} {1088} (\bibinfo {year} {2023})},\ \Eprint {http://arxiv.org/abs/1809.05093} {arXiv:1809.05093 [quant-ph]} \BibitemShut {NoStop}%
\bibitem [{\citenamefont {Vanrietvelde}\ \emph {et~al.}(2020)\citenamefont {Vanrietvelde}, \citenamefont {H\"ohn}, \citenamefont {Giacomini},\ and\ \citenamefont {Castro-Ruiz}}]{Vanrietvelde:2018pgb}%
  \BibitemOpen
  \bibfield  {author} {\bibinfo {author} {\bibfnamefont {A.}~\bibnamefont {Vanrietvelde}}, \bibinfo {author} {\bibfnamefont {P.~A.}\ \bibnamefont {H\"ohn}}, \bibinfo {author} {\bibfnamefont {F.}~\bibnamefont {Giacomini}}, \ and\ \bibinfo {author} {\bibfnamefont {E.}~\bibnamefont {Castro-Ruiz}},\ }\href {\doibase 10.22331/q-2020-01-27-225} {\bibfield  {journal} {\bibinfo  {journal} {Quantum}\ }\textbf {\bibinfo {volume} {4}},\ \bibinfo {pages} {225} (\bibinfo {year} {2020})},\ \Eprint {http://arxiv.org/abs/1809.00556} {arXiv:1809.00556 [quant-ph]} \BibitemShut {NoStop}%
\bibitem [{\citenamefont {Giacomini}(2021)}]{Giacomini:2021gei}%
  \BibitemOpen
  \bibfield  {author} {\bibinfo {author} {\bibfnamefont {F.}~\bibnamefont {Giacomini}},\ }\href {\doibase 10.22331/q-2021-07-22-508} {\bibfield  {journal} {\bibinfo  {journal} {Quantum}\ }\textbf {\bibinfo {volume} {5}},\ \bibinfo {pages} {508} (\bibinfo {year} {2021})},\ \Eprint {http://arxiv.org/abs/2101.11628} {arXiv:2101.11628 [quant-ph]} \BibitemShut {NoStop}%
\bibitem [{\citenamefont {H\"ohn}\ \emph {et~al.}(2022)\citenamefont {H\"ohn}, \citenamefont {Krumm},\ and\ \citenamefont {M\"uller}}]{Hoehn:2021flk}%
  \BibitemOpen
  \bibfield  {author} {\bibinfo {author} {\bibfnamefont {P.~A.}\ \bibnamefont {H\"ohn}}, \bibinfo {author} {\bibfnamefont {M.}~\bibnamefont {Krumm}}, \ and\ \bibinfo {author} {\bibfnamefont {M.~P.}\ \bibnamefont {M\"uller}},\ }\href {\doibase 10.1063/5.0088485} {\bibfield  {journal} {\bibinfo  {journal} {J. Math. Phys.}\ }\textbf {\bibinfo {volume} {63}},\ \bibinfo {pages} {112207} (\bibinfo {year} {2022})},\ \Eprint {http://arxiv.org/abs/2107.07545} {arXiv:2107.07545 [quant-ph]} \BibitemShut {NoStop}%
\bibitem [{\citenamefont {de~la Hamette}\ \emph {et~al.}(2022)\citenamefont {de~la Hamette}, \citenamefont {Ludescher},\ and\ \citenamefont {M\"uller}}]{delaHamette:2021piz}%
  \BibitemOpen
  \bibfield  {author} {\bibinfo {author} {\bibfnamefont {A.-C.}\ \bibnamefont {de~la Hamette}}, \bibinfo {author} {\bibfnamefont {S.~L.}\ \bibnamefont {Ludescher}}, \ and\ \bibinfo {author} {\bibfnamefont {M.~P.}\ \bibnamefont {M\"uller}},\ }\href {\doibase 10.1103/PhysRevLett.129.260404} {\bibfield  {journal} {\bibinfo  {journal} {Phys. Rev. Lett.}\ }\textbf {\bibinfo {volume} {129}},\ \bibinfo {pages} {260404} (\bibinfo {year} {2022})},\ \Eprint {http://arxiv.org/abs/2112.00046} {arXiv:2112.00046 [quant-ph]} \BibitemShut {NoStop}%
\bibitem [{\citenamefont {Castro-Ruiz}\ \emph {et~al.}(2020)\citenamefont {Castro-Ruiz}, \citenamefont {Giacomini}, \citenamefont {Belenchia},\ and\ \citenamefont {Brukner}}]{Castro-Ruiz:2019nnl}%
  \BibitemOpen
  \bibfield  {author} {\bibinfo {author} {\bibfnamefont {E.}~\bibnamefont {Castro-Ruiz}}, \bibinfo {author} {\bibfnamefont {F.}~\bibnamefont {Giacomini}}, \bibinfo {author} {\bibfnamefont {A.}~\bibnamefont {Belenchia}}, \ and\ \bibinfo {author} {\bibfnamefont {{\v{C}}.}~\bibnamefont {Brukner}},\ }\href {\doibase 10.1038/s41467-020-16013-1} {\bibfield  {journal} {\bibinfo  {journal} {Nature Commun.}\ }\textbf {\bibinfo {volume} {11}},\ \bibinfo {pages} {2672} (\bibinfo {year} {2020})},\ \Eprint {http://arxiv.org/abs/1908.10165} {arXiv:1908.10165 [quant-ph]} \BibitemShut {NoStop}%
\bibitem [{\citenamefont {Suleymanov}\ \emph {et~al.}(2025)\citenamefont {Suleymanov}, \citenamefont {Carmi},\ and\ \citenamefont {Cohen}}]{Suleymanov:2025nrr}%
  \BibitemOpen
  \bibfield  {author} {\bibinfo {author} {\bibfnamefont {M.}~\bibnamefont {Suleymanov}}, \bibinfo {author} {\bibfnamefont {A.}~\bibnamefont {Carmi}}, \ and\ \bibinfo {author} {\bibfnamefont {E.}~\bibnamefont {Cohen}},\ }\href@noop {} {\  (\bibinfo {year} {2025})},\ \Eprint {http://arxiv.org/abs/2503.20090} {arXiv:2503.20090 [quant-ph]} \BibitemShut {NoStop}%
\bibitem [{\citenamefont {H\"ohn}\ \emph {et~al.}(2024)\citenamefont {H\"ohn}, \citenamefont {Russo},\ and\ \citenamefont {Smith}}]{Hoehn:2023axh}%
  \BibitemOpen
  \bibfield  {author} {\bibinfo {author} {\bibfnamefont {P.~A.}\ \bibnamefont {H\"ohn}}, \bibinfo {author} {\bibfnamefont {A.}~\bibnamefont {Russo}}, \ and\ \bibinfo {author} {\bibfnamefont {A.~R.~H.}\ \bibnamefont {Smith}},\ }\href {\doibase 10.1103/PhysRevD.109.105011} {\bibfield  {journal} {\bibinfo  {journal} {Phys. Rev. D}\ }\textbf {\bibinfo {volume} {109}},\ \bibinfo {pages} {105011} (\bibinfo {year} {2024})},\ \Eprint {http://arxiv.org/abs/2308.12912} {arXiv:2308.12912 [quant-ph]} \BibitemShut {NoStop}%
\bibitem [{\citenamefont {Giacomini}\ \emph {et~al.}(2019{\natexlab{a}})\citenamefont {Giacomini}, \citenamefont {Castro-Ruiz},\ and\ \citenamefont {Brukner}}]{Giacomini:2017zju}%
  \BibitemOpen
  \bibfield  {author} {\bibinfo {author} {\bibfnamefont {F.}~\bibnamefont {Giacomini}}, \bibinfo {author} {\bibfnamefont {E.}~\bibnamefont {Castro-Ruiz}}, \ and\ \bibinfo {author} {\bibfnamefont {{\v{C}}.}~\bibnamefont {Brukner}},\ }\href {\doibase 10.1038/s41467-018-08155-0} {\bibfield  {journal} {\bibinfo  {journal} {Nature Commun.}\ }\textbf {\bibinfo {volume} {10}},\ \bibinfo {pages} {494} (\bibinfo {year} {2019}{\natexlab{a}})},\ \Eprint {http://arxiv.org/abs/1712.07207} {arXiv:1712.07207 [quant-ph]} \BibitemShut {NoStop}%
\bibitem [{\citenamefont {Bartlett}\ \emph {et~al.}(2007)\citenamefont {Bartlett}, \citenamefont {Rudolph},\ and\ \citenamefont {Spekkens}}]{Bartlett:2006tzx}%
  \BibitemOpen
  \bibfield  {author} {\bibinfo {author} {\bibfnamefont {S.~D.}\ \bibnamefont {Bartlett}}, \bibinfo {author} {\bibfnamefont {T.}~\bibnamefont {Rudolph}}, \ and\ \bibinfo {author} {\bibfnamefont {R.~W.}\ \bibnamefont {Spekkens}},\ }\href {\doibase 10.1103/RevModPhys.79.555} {\bibfield  {journal} {\bibinfo  {journal} {Rev. Mod. Phys.}\ }\textbf {\bibinfo {volume} {79}},\ \bibinfo {pages} {555} (\bibinfo {year} {2007})},\ \Eprint {http://arxiv.org/abs/quant-ph/0610030} {arXiv:quant-ph/0610030} \BibitemShut {NoStop}%
\bibitem [{\citenamefont {Castro-Ruiz}\ and\ \citenamefont {Oreshkov}(2025)}]{Castro-Ruiz:2021vnq}%
  \BibitemOpen
  \bibfield  {author} {\bibinfo {author} {\bibfnamefont {E.}~\bibnamefont {Castro-Ruiz}}\ and\ \bibinfo {author} {\bibfnamefont {O.}~\bibnamefont {Oreshkov}},\ }\href {\doibase 10.1038/s42005-025-02036-x} {\bibfield  {journal} {\bibinfo  {journal} {Commun. Phys.}\ }\textbf {\bibinfo {volume} {8}},\ \bibinfo {pages} {187} (\bibinfo {year} {2025})},\ \Eprint {http://arxiv.org/abs/2110.13199} {arXiv:2110.13199 [quant-ph]} \BibitemShut {NoStop}%
\bibitem [{\citenamefont {Krumm}\ \emph {et~al.}(2021)\citenamefont {Krumm}, \citenamefont {H\"ohn},\ and\ \citenamefont {M\"uller}}]{Krumm:2020fws}%
  \BibitemOpen
  \bibfield  {author} {\bibinfo {author} {\bibfnamefont {M.}~\bibnamefont {Krumm}}, \bibinfo {author} {\bibfnamefont {P.~A.}\ \bibnamefont {H\"ohn}}, \ and\ \bibinfo {author} {\bibfnamefont {M.~P.}\ \bibnamefont {M\"uller}},\ }\href {\doibase 10.22331/q-2021-08-27-530} {\bibfield  {journal} {\bibinfo  {journal} {Quantum}\ }\textbf {\bibinfo {volume} {5}},\ \bibinfo {pages} {530} (\bibinfo {year} {2021})},\ \Eprint {http://arxiv.org/abs/2011.01951} {arXiv:2011.01951 [quant-ph]} \BibitemShut {NoStop}%
\bibitem [{\citenamefont {Loveridge}\ \emph {et~al.}(2018)\citenamefont {Loveridge}, \citenamefont {Miyadera},\ and\ \citenamefont {Busch}}]{Loveridge:2017pcv}%
  \BibitemOpen
  \bibfield  {author} {\bibinfo {author} {\bibfnamefont {L.}~\bibnamefont {Loveridge}}, \bibinfo {author} {\bibfnamefont {T.}~\bibnamefont {Miyadera}}, \ and\ \bibinfo {author} {\bibfnamefont {P.}~\bibnamefont {Busch}},\ }\href {\doibase 10.1007/s10701-018-0138-3} {\bibfield  {journal} {\bibinfo  {journal} {Found. Phys.}\ }\textbf {\bibinfo {volume} {48}},\ \bibinfo {pages} {135} (\bibinfo {year} {2018})},\ \Eprint {http://arxiv.org/abs/1703.10434} {arXiv:1703.10434 [quant-ph]} \BibitemShut {NoStop}%
\bibitem [{\citenamefont {Carette}\ \emph {et~al.}(2025)\citenamefont {Carette}, \citenamefont {G{\ll}owacki},\ and\ \citenamefont {Loveridge}}]{Carette:2023wpz}%
  \BibitemOpen
  \bibfield  {author} {\bibinfo {author} {\bibfnamefont {T.}~\bibnamefont {Carette}}, \bibinfo {author} {\bibfnamefont {J.}~\bibnamefont {G{\ll}owacki}}, \ and\ \bibinfo {author} {\bibfnamefont {L.}~\bibnamefont {Loveridge}},\ }\href {\doibase 10.22331/q-2025-03-27-1680} {\bibfield  {journal} {\bibinfo  {journal} {Quantum}\ }\textbf {\bibinfo {volume} {9}},\ \bibinfo {pages} {1680} (\bibinfo {year} {2025})},\ \Eprint {http://arxiv.org/abs/2303.14002} {arXiv:2303.14002 [quant-ph]} \BibitemShut {NoStop}%
\bibitem [{\citenamefont {Fewster}\ \emph {et~al.}(2025)\citenamefont {Fewster}, \citenamefont {Janssen}, \citenamefont {Loveridge}, \citenamefont {Rejzner},\ and\ \citenamefont {Waldron}}]{Fewster:2024pur}%
  \BibitemOpen
  \bibfield  {author} {\bibinfo {author} {\bibfnamefont {J.~C.}\ \bibnamefont {Fewster}}, \bibinfo {author} {\bibfnamefont {D.~W.}\ \bibnamefont {Janssen}}, \bibinfo {author} {\bibfnamefont {L.~D.}\ \bibnamefont {Loveridge}}, \bibinfo {author} {\bibfnamefont {K.}~\bibnamefont {Rejzner}}, \ and\ \bibinfo {author} {\bibfnamefont {J.}~\bibnamefont {Waldron}},\ }\href {\doibase 10.1007/s00220-024-05180-7} {\bibfield  {journal} {\bibinfo  {journal} {Commun. Math. Phys.}\ }\textbf {\bibinfo {volume} {406}},\ \bibinfo {pages} {19} (\bibinfo {year} {2025})},\ \Eprint {http://arxiv.org/abs/2403.11973} {arXiv:2403.11973 [math-ph]} \BibitemShut {NoStop}%
\bibitem [{\citenamefont {{de la Hamette}}\ \emph {et~al.}(2023)\citenamefont {{de la Hamette}}, \citenamefont {Kabel}, \citenamefont {{Castro-Ruiz}},\ and\ \citenamefont {Brukner}}]{delaHamette:2021iwx}%
  \BibitemOpen
  \bibfield  {author} {\bibinfo {author} {\bibfnamefont {A.-C.}\ \bibnamefont {{de la Hamette}}}, \bibinfo {author} {\bibfnamefont {V.}~\bibnamefont {Kabel}}, \bibinfo {author} {\bibfnamefont {E.}~\bibnamefont {{Castro-Ruiz}}}, \ and\ \bibinfo {author} {\bibfnamefont {{\v C}.}~\bibnamefont {Brukner}},\ }\href {\doibase 10.1038/s42005-023-01344-4} {\bibfield  {journal} {\bibinfo  {journal} {Commun. Phys.}\ }\textbf {\bibinfo {volume} {6}},\ \bibinfo {pages} {231} (\bibinfo {year} {2023})},\ \Eprint {http://arxiv.org/abs/2112.11473} {arXiv:2112.11473 [quant-ph]} \BibitemShut {NoStop}%
\bibitem [{\citenamefont {de~la Hamette}\ and\ \citenamefont {Galley}(2020)}]{delaHamette:2020dyi}%
  \BibitemOpen
  \bibfield  {author} {\bibinfo {author} {\bibfnamefont {A.-C.}\ \bibnamefont {de~la Hamette}}\ and\ \bibinfo {author} {\bibfnamefont {T.~D.}\ \bibnamefont {Galley}},\ }\href {\doibase 10.22331/q-2020-11-30-367} {\bibfield  {journal} {\bibinfo  {journal} {Quantum}\ }\textbf {\bibinfo {volume} {4}},\ \bibinfo {pages} {367} (\bibinfo {year} {2020})},\ \Eprint {http://arxiv.org/abs/2004.14292} {arXiv:2004.14292} \BibitemShut {NoStop}%
\bibitem [{\citenamefont {Giacomini}\ \emph {et~al.}(2019{\natexlab{b}})\citenamefont {Giacomini}, \citenamefont {Castro-Ruiz},\ and\ \citenamefont {Brukner}}]{Giacomini:2018gxh}%
  \BibitemOpen
  \bibfield  {author} {\bibinfo {author} {\bibfnamefont {F.}~\bibnamefont {Giacomini}}, \bibinfo {author} {\bibfnamefont {E.}~\bibnamefont {Castro-Ruiz}}, \ and\ \bibinfo {author} {\bibfnamefont {{\v{C}}.}~\bibnamefont {Brukner}},\ }\href {\doibase 10.1103/PhysRevLett.123.090404} {\bibfield  {journal} {\bibinfo  {journal} {Phys. Rev. Lett.}\ }\textbf {\bibinfo {volume} {123}},\ \bibinfo {pages} {090404} (\bibinfo {year} {2019}{\natexlab{b}})},\ \Eprint {http://arxiv.org/abs/1811.08228} {arXiv:1811.08228 [quant-ph]} \BibitemShut {NoStop}%
\bibitem [{\citenamefont {Ballesteros}\ \emph {et~al.}(2021)\citenamefont {Ballesteros}, \citenamefont {Giacomini},\ and\ \citenamefont {Gubitosi}}]{Ballesteros:2020lgl}%
  \BibitemOpen
  \bibfield  {author} {\bibinfo {author} {\bibfnamefont {A.}~\bibnamefont {Ballesteros}}, \bibinfo {author} {\bibfnamefont {F.}~\bibnamefont {Giacomini}}, \ and\ \bibinfo {author} {\bibfnamefont {G.}~\bibnamefont {Gubitosi}},\ }\href {\doibase 10.22331/q-2021-06-08-470} {\bibfield  {journal} {\bibinfo  {journal} {Quantum}\ }\textbf {\bibinfo {volume} {5}},\ \bibinfo {pages} {470} (\bibinfo {year} {2021})},\ \Eprint {http://arxiv.org/abs/2012.15769} {arXiv:2012.15769 [quant-ph]} \BibitemShut {NoStop}%
\bibitem [{\citenamefont {Cepollaro}\ \emph {et~al.}(2024)\citenamefont {Cepollaro}, \citenamefont {Akil}, \citenamefont {Cie{\'s}li{\'n}ski}, \citenamefont {de~la Hamette},\ and\ \citenamefont {Brukner}}]{Cepollaro:2024rss}%
  \BibitemOpen
  \bibfield  {author} {\bibinfo {author} {\bibfnamefont {C.}~\bibnamefont {Cepollaro}}, \bibinfo {author} {\bibfnamefont {A.}~\bibnamefont {Akil}}, \bibinfo {author} {\bibfnamefont {P.}~\bibnamefont {Cie{\'s}li{\'n}ski}}, \bibinfo {author} {\bibfnamefont {A.-C.}\ \bibnamefont {de~la Hamette}}, \ and\ \bibinfo {author} {\bibfnamefont {{\v{C}}.}~\bibnamefont {Brukner}},\ }\href@noop {} {\  (\bibinfo {year} {2024})},\ \Eprint {http://arxiv.org/abs/2406.19448} {arXiv:2406.19448 [quant-ph]} \BibitemShut {NoStop}%
\bibitem [{\citenamefont {Kogut}\ and\ \citenamefont {Susskind}(1975)}]{kogut1975}%
  \BibitemOpen
  \bibfield  {author} {\bibinfo {author} {\bibfnamefont {J.}~\bibnamefont {Kogut}}\ and\ \bibinfo {author} {\bibfnamefont {L.}~\bibnamefont {Susskind}},\ }\href {\doibase 10.1103/PhysRevD.11.395} {\bibfield  {journal} {\bibinfo  {journal} {Physical Review D}\ }\textbf {\bibinfo {volume} {11}},\ \bibinfo {pages} {395} (\bibinfo {year} {1975})}\BibitemShut {NoStop}%
\bibitem [{\citenamefont {Kogut}\ and\ \citenamefont {Stephanov}(2003)}]{Kogut_Stephanov_2003}%
  \BibitemOpen
  \bibfield  {author} {\bibinfo {author} {\bibfnamefont {J.~B.}\ \bibnamefont {Kogut}}\ and\ \bibinfo {author} {\bibfnamefont {M.~A.}\ \bibnamefont {Stephanov}},\ }\enquote {\bibinfo {title} {The hamiltonian version of lattice-gauge theory},}\ in\ \href@noop {} {\emph {\bibinfo {booktitle} {The Phases of Quantum Chromodynamics: From Confinement to Extreme Environments}}},\ \bibinfo {series and number} {Cambridge Monographs on Particle Physics, Nuclear Physics and Cosmology}\ (\bibinfo  {publisher} {Cambridge University Press},\ \bibinfo {year} {2003})\ p.\ \bibinfo {pages} {136–157}\BibitemShut {NoStop}%
\bibitem [{\citenamefont {Kogut}(1983)}]{Kogut:1982ds}%
  \BibitemOpen
  \bibfield  {author} {\bibinfo {author} {\bibfnamefont {J.~B.}\ \bibnamefont {Kogut}},\ }\href {\doibase 10.1103/RevModPhys.55.775} {\bibfield  {journal} {\bibinfo  {journal} {Rev. Mod. Phys.}\ }\textbf {\bibinfo {volume} {55}},\ \bibinfo {pages} {775} (\bibinfo {year} {1983})}\BibitemShut {NoStop}%
\bibitem [{\citenamefont {Ali~Ahmad}\ \emph {et~al.}(2024)\citenamefont {Ali~Ahmad}, \citenamefont {Chemissany}, \citenamefont {Klinger},\ and\ \citenamefont {Leigh}}]{AliAhmad:2024wja}%
  \BibitemOpen
  \bibfield  {author} {\bibinfo {author} {\bibfnamefont {S.}~\bibnamefont {Ali~Ahmad}}, \bibinfo {author} {\bibfnamefont {W.}~\bibnamefont {Chemissany}}, \bibinfo {author} {\bibfnamefont {M.~S.}\ \bibnamefont {Klinger}}, \ and\ \bibinfo {author} {\bibfnamefont {R.~G.}\ \bibnamefont {Leigh}},\ }\href {\doibase 10.1103/PhysRevD.110.065003} {\bibfield  {journal} {\bibinfo  {journal} {Phys. Rev. D}\ }\textbf {\bibinfo {volume} {110}},\ \bibinfo {pages} {065003} (\bibinfo {year} {2024})},\ \Eprint {http://arxiv.org/abs/2405.13884} {arXiv:2405.13884 [hep-th]} \BibitemShut {NoStop}%
\bibitem [{\citenamefont {Chataignier}\ \emph {et~al.}(2024)\citenamefont {Chataignier}, \citenamefont {H\"ohn}, \citenamefont {Lock},\ and\ \citenamefont {Mele}}]{Chataignier:2024eil}%
  \BibitemOpen
  \bibfield  {author} {\bibinfo {author} {\bibfnamefont {L.}~\bibnamefont {Chataignier}}, \bibinfo {author} {\bibfnamefont {P.~A.}\ \bibnamefont {H\"ohn}}, \bibinfo {author} {\bibfnamefont {M.~P.~E.}\ \bibnamefont {Lock}}, \ and\ \bibinfo {author} {\bibfnamefont {F.~M.}\ \bibnamefont {Mele}},\ }\href@noop {} {\  (\bibinfo {year} {2024})},\ \Eprint {http://arxiv.org/abs/2409.06479} {arXiv:2409.06479 [quant-ph]} \BibitemShut {NoStop}%
\bibitem [{\citenamefont {Takesaki}(1979)}]{Takesaki1}%
  \BibitemOpen
  \bibfield  {author} {\bibinfo {author} {\bibfnamefont {M.}~\bibnamefont {Takesaki}},\ }\href {\doibase 10.1007/978-1-4612-6188-9} {\emph {\bibinfo {title} {Theory of Operator Algebras I}}}\ (\bibinfo  {publisher} {Springer New York, NY},\ \bibinfo {year} {1979})\BibitemShut {NoStop}%
\bibitem [{\citenamefont {Sorce}(2024)}]{Sorce:2023fdx}%
  \BibitemOpen
  \bibfield  {author} {\bibinfo {author} {\bibfnamefont {J.}~\bibnamefont {Sorce}},\ }\href {\doibase 10.1142/S0129055X24300024} {\bibfield  {journal} {\bibinfo  {journal} {Rev. Math. Phys.}\ }\textbf {\bibinfo {volume} {36}},\ \bibinfo {pages} {2430002} (\bibinfo {year} {2024})},\ \Eprint {http://arxiv.org/abs/2302.01958} {arXiv:2302.01958 [hep-th]} \BibitemShut {NoStop}%
\bibitem [{\citenamefont {Uhlmann}(1977)}]{Uhlmann}%
  \BibitemOpen
  \bibfield  {author} {\bibinfo {author} {\bibfnamefont {A.}~\bibnamefont {Uhlmann}},\ }\href {\doibase 10.1007/BF01609834} {\bibfield  {journal} {\bibinfo  {journal} {Communications in Mathematical Physics}\ }\textbf {\bibinfo {volume} {54}},\ \bibinfo {pages} {21} (\bibinfo {year} {1977})}\BibitemShut {NoStop}%
\bibitem [{\citenamefont {P{\'e}rez-Pardo}(2023)}]{perez2023uhlmann}%
  \BibitemOpen
  \bibfield  {author} {\bibinfo {author} {\bibfnamefont {J.~M.}\ \bibnamefont {P{\'e}rez-Pardo}},\ }in\ \href@noop {} {\emph {\bibinfo {booktitle} {Particles, Fields and Topology: Celebrating AP Balachandran}}}\ (\bibinfo  {publisher} {World Scientific},\ \bibinfo {year} {2023})\ pp.\ \bibinfo {pages} {145--155},\ \Eprint {http://arxiv.org/abs/2210.13093} {arXiv:2210.13093 [math-ph]} \BibitemShut {NoStop}%
\bibitem [{\citenamefont {Ara\'ujo-Regado}\ \emph {et~al.}()\citenamefont {Ara\'ujo-Regado}, \citenamefont {H\"ohn}, \citenamefont {Sartini},\ and\ \citenamefont {Woods}}]{jesse}%
  \BibitemOpen
  \bibfield  {author} {\bibinfo {author} {\bibfnamefont {G.}~\bibnamefont {Ara\'ujo-Regado}}, \bibinfo {author} {\bibfnamefont {P.~A.}\ \bibnamefont {H\"ohn}}, \bibinfo {author} {\bibfnamefont {F.}~\bibnamefont {Sartini}}, \ and\ \bibinfo {author} {\bibfnamefont {J.}~\bibnamefont {Woods}},\ }\href@noop {} {\bibinfo  {journal} {{to appear}}\ }\BibitemShut {NoStop}%
\bibitem [{\citenamefont {Zanardi}(2001)}]{Zanardi:2001zz}%
  \BibitemOpen
\bibfield  {journal} {  }\bibfield  {author} {\bibinfo {author} {\bibfnamefont {P.}~\bibnamefont {Zanardi}},\ }\href {\doibase 10.1103/PhysRevLett.87.077901} {\bibfield  {journal} {\bibinfo  {journal} {Phys. Rev. Lett.}\ }\textbf {\bibinfo {volume} {87}},\ \bibinfo {pages} {077901} (\bibinfo {year} {2001})},\ \Eprint {http://arxiv.org/abs/quant-ph/0103030} {arXiv:quant-ph/0103030} \BibitemShut {NoStop}%
\bibitem [{\citenamefont {Zanardi}\ \emph {et~al.}(2004)\citenamefont {Zanardi}, \citenamefont {Lidar},\ and\ \citenamefont {Lloyd}}]{Zanardi:2004zz}%
  \BibitemOpen
  \bibfield  {author} {\bibinfo {author} {\bibfnamefont {P.}~\bibnamefont {Zanardi}}, \bibinfo {author} {\bibfnamefont {D.~A.}\ \bibnamefont {Lidar}}, \ and\ \bibinfo {author} {\bibfnamefont {S.}~\bibnamefont {Lloyd}},\ }\href {\doibase 10.1103/PhysRevLett.92.060402} {\bibfield  {journal} {\bibinfo  {journal} {Phys. Rev. Lett.}\ }\textbf {\bibinfo {volume} {92}},\ \bibinfo {pages} {060402} (\bibinfo {year} {2004})},\ \Eprint {http://arxiv.org/abs/quant-ph/0308043} {arXiv:quant-ph/0308043} \BibitemShut {NoStop}%
\bibitem [{\citenamefont {Cotler}\ \emph {et~al.}(2019)\citenamefont {Cotler}, \citenamefont {Penington},\ and\ \citenamefont {Ranard}}]{Cotler:2017abq}%
  \BibitemOpen
  \bibfield  {author} {\bibinfo {author} {\bibfnamefont {J.~S.}\ \bibnamefont {Cotler}}, \bibinfo {author} {\bibfnamefont {G.~R.}\ \bibnamefont {Penington}}, \ and\ \bibinfo {author} {\bibfnamefont {D.~H.}\ \bibnamefont {Ranard}},\ }\href {\doibase 10.1007/s00220-019-03376-w} {\bibfield  {journal} {\bibinfo  {journal} {Commun. Math. Phys.}\ }\textbf {\bibinfo {volume} {368}},\ \bibinfo {pages} {1267} (\bibinfo {year} {2019})},\ \Eprint {http://arxiv.org/abs/1702.06142} {arXiv:1702.06142 [quant-ph]} \BibitemShut {NoStop}%
\bibitem [{\citenamefont {Giles}(1981)}]{Giles:1981ej}%
  \BibitemOpen
  \bibfield  {author} {\bibinfo {author} {\bibfnamefont {R.}~\bibnamefont {Giles}},\ }\href {\doibase 10.1103/PhysRevD.24.2160} {\bibfield  {journal} {\bibinfo  {journal} {Phys. Rev. D}\ }\textbf {\bibinfo {volume} {24}},\ \bibinfo {pages} {2160} (\bibinfo {year} {1981})}\BibitemShut {NoStop}%
\bibitem [{\citenamefont {Gambini}\ and\ \citenamefont {Trias}(1986)}]{Gambini:1986ew}%
  \BibitemOpen
  \bibfield  {author} {\bibinfo {author} {\bibfnamefont {R.}~\bibnamefont {Gambini}}\ and\ \bibinfo {author} {\bibfnamefont {A.}~\bibnamefont {Trias}},\ }\href {\doibase 10.1016/0550-3213(86)90221-X} {\bibfield  {journal} {\bibinfo  {journal} {Nucl. Phys. B}\ }\textbf {\bibinfo {volume} {278}},\ \bibinfo {pages} {436} (\bibinfo {year} {1986})}\BibitemShut {NoStop}%
\bibitem [{\citenamefont {Loll}(1992)}]{Loll:1991mh}%
  \BibitemOpen
  \bibfield  {author} {\bibinfo {author} {\bibfnamefont {R.}~\bibnamefont {Loll}},\ }\href {\doibase 10.1016/0550-3213(92)90200-U} {\bibfield  {journal} {\bibinfo  {journal} {Nucl. Phys. B}\ }\textbf {\bibinfo {volume} {368}},\ \bibinfo {pages} {121} (\bibinfo {year} {1992})}\BibitemShut {NoStop}%
\bibitem [{\citenamefont {Ben-Zvi}(2021)}]{Ben-Zvi}%
  \BibitemOpen
  \bibfield  {author} {\bibinfo {author} {\bibfnamefont {D.}~\bibnamefont {Ben-Zvi}},\ }\href@noop {} {\enquote {\bibinfo {title} {Lecture notes: Between electric-magnetic duality and the langlands program},}\ }\bibinfo {howpublished} {U.\ of Texas at Austin} (\bibinfo {year} {2021})\BibitemShut {NoStop}%
\end{thebibliography}%

\end{document}